 \newcommand{\nc}{\,\mid\!\sim}
\newcommand{\ind}[1]{\mathds{1}_{#1}}
\newcommand {\indT}[1] { \mathds{1}^{T}_{#1} }
\newcommand{\e}[1]{ [\![{#1}]\!]}
\newcommand{\bel}{ \mathsf{B} }
\newcommand{\sel}{ \mathsf{S} }
\newcommand{\kn}{ \mathsf{K} }
\newcommand{\sdom}{\rotatebox[origin=c]{-90}{$\succ$}}
\newcommand{\sdomeq}{\rotatebox[origin=c]{-90}{$\succeq$}}
\newcommand{\RNum}[1]{\uppercase\expandafter{\romannumeral #1\relax}}
\newcounter{Theorem}
\newtheorem{theorem}{Theorem}[section]
\newtheorem{cor}[Theorem]{Corollary}
\newtheorem{lem}[Theorem]{Lemma}
\newtheorem{prop}[Theorem]{Proposition}
\newtheorem{mydef}[Theorem]{Definition}
\newtheorem{observation}[Theorem]{Observation}
\newtheorem{ex}[Theorem]{Example}
\begin{document}

\title{Probabilistically stable revision and comparative probability: \\ a representation theorem and applications }
\author{Krzysztof Mierzewski \\ Carnegie Mellon University }
\date{August 2025}

\maketitle

\begin{abstract}
The stability rule for belief, advocated by \cite{LEI}, is a rule for rational acceptance that captures categorical belief in terms of \emph{probabilistically stable propositions}: propositions to which the agent assigns resiliently high credence. The stability rule generates a class of \emph{probabilistically stable belief revision} operators, which take as input (1) a belief set $B_{\mu}$, obtained through the stability rule from a prior probability measure $\mu$, and (2) new evidence $E$, and output a new all-or-nothing belief set $B_{\mu(\cdot\,|\,E)}$ induced by Bayesian conditioning. These operators capture the dynamics of belief that result from an agent updating their credences through Bayesian conditioning while complying with the stability rule for their all-or-nothing beliefs. In this paper, we prove a representation theorem that yields a complete characterisation of such probabilistically stable revision operators and provides a `qualitative' selection function semantics for the (non-monotonic) logic of probabilistically stable belief revision.
The theorem identifies exactly the selection functions that assign, to each proposition $E$, the logically strongest stable proposition after Bayesian conditioning on $E$. Drawing on the theory of comparative probability orders, this result gives necessary and sufficient conditions for a selection function to be representable as a strongest-stable-set operator on a finite probability space. We exhibit several unusual features of the resulting non-monotonic logic. The stability-induced belief revision operators do not satisfy the AGM belief revision postulates. The logic of probabilistically stable belief revision exhibits strong monotonicity properties while satisfying only very weak forms of case reasoning. In showing the representation theorem, we prove two results of independent interest to the theory of comparative probability: the first provides necessary and sufficient conditions for the joint representation of a pair of (respectively, strict and non-strict) comparative probability orders. The second result provides a method for axiomatising the logic of ratio comparisons of the form ``event $A$ is at least $k$ times more likely than event $B$''. In addition to these measurement-theoretic applications, we point out two applications of our main result to the theory of simple voting games and to revealed preference theory. \\

\noindent \emph{Key words and phrases:} probabilistic stability, acceptance rules, comparative probability, non-monotonic logic, belief revision, probability logics, measurement theory, selection function semantics, choice rules, voting games

\end{abstract}

\newpage 

\tableofcontents

\newpage 
\section{Introduction}

\cite{LEI} offers an acceptance rule based on the notion of \emph{probabilistically stable} hypotheses. Given a prior probability measure, a hypothesis is probabilistically stable if it retains high probability under conditioning on any information that is consistent with it. According to the stability rule, a Bayesian learner \emph{believes} a proposition simpliciter (or \emph{accepts} the proposition) whenever that proposition is logically entailed by the logically strongest stable hypothesis. This stability \emph{rule} for belief (to be distinguished from Leitgeb's later, `non-reductionist' \emph{Humean Thesis on Belief} \citep{LEIBOOK}) associates to each prior $\mu$ a unique consistent and logically closed belief set $B_{\mu}$.

What are the dynamics of categorical belief revision induced by the stability rule? Upon learning new evidence $E$, a Bayesian agent with prior $\mu$ and categorical belief set $B_{\mu}$ will update their prior to a posterior probability measure $\mu(\cdot\,|\, E)$ and, following the stability rule, adopt $B_{\mu(\cdot\,|\,E)}$ as their updated belief set. We thus get a belief revision operator mapping a belief set $B_{\mu}$ and a learned proposition $E$ to a revised belief set $B_{\mu} \ast E:=B_{\mu(\cdot\,|\,E)}$. The resulting class of \emph{probabilistically stable belief revision operators} is of particular interest because, by design, these operators constitute a belief revision policy that \emph{tracks} Bayesian conditioning in the sense of \cite{LIK}: for any probability measure $\mu$ representing the agent's credences, first applying the acceptance rule to obtain a belief set $B_{\mu}$, and then revising this belief set by new evidence $E$, gives the same result as first updating $\mu$ by Bayesian conditioning on $E$ and then applying the acceptance rule to obtain an updated belief set $B_{\mu(\cdot\,|\,E)}$. In a nutshell, tracking means that acceptance followed by qualitative revision agrees with Bayesian conditioning followed by acceptance. 

\cite{LIK} have shown that, under plausible assumptions, well-known belief revision operations, including all AGM belief revision policies, cannot track Bayesian conditioning. This leads to the \emph{tracking problem}, which asks for a minimally well-behaved acceptance rule and categorical belief revision policy which tracks Bayesian conditioning, allowing for a harmonious co-existence of credences with categorical beliefs. Given that probabilistically stable belief revision tracks Bayesian conditioning, an independent axiomatic characterization of these belief revision operators would naturally constitute one possible solution to the tracking problem. 

In this paper, we investigate the logic of probabilistically stable belief revision and provide such a solution. Probabilistically stable revision operators can be captured using selection function models, with corresponding non-monotonic consequence relations obtained through a probabilistic Ramsey test. The problem of characterising the class of probabilistically stable revision operators, and thus the appropriate selection function models, amounts to finding an axiomatic description of \emph{strongest-stable-set operators}, which send every event $E$ to the strongest stable set conditional on $E$: that is, they map the update input $E$ to the (unique) logically strongest stable event after updating the prior by $E$.  

Here we solve the characterisation problem by proving a probabilistic representation theorem for strongest-stable set operators. This result, which draws on the theory of comparative probability orders, gives necessary and sufficient conditions for a selection function to be representable as a strongest-stable-set operator on a finite probability space. It yields a complete characterisation of probabilistically stable revision operators, and thus provides qualitative semantics for the logic of probabilistically stable belief. Along the way, we exhibit some unusual features of the resulting non-monotonic logic: probabilistically stable belief revision validates strong monotonicity properties while failing even weak forms of case-reasoning (except for a very weak version of the \textsf{Or}-rule from non-monotonic logic). 

Our main results also admit a number of applications in other areas. The first application concerns a problem in the theory of comparative probability orders: in showing the representation theorem, we prove a general result giving necessary and sufficient conditions for the joint representation of a pair of (respectively, strict and non-strict) comparative probability orders, answering an open question raised by \cite{KON} (about conditions for the strong representability of comparative probability orders). The second application is a general method for axiomatising the logic of probability ratio comparisons of the form ``event $E$ is at least $q$ times more likely than event $F$'' for any fixed rational number $q$. The third application concerns the theory of revealed preference and rational choice: our representation theorem identifies the choice functions of cautious agents who consider an option acceptable only if it has higher utility than the cumulative utility of all unacceptable options combined (more generally: $q$ times higher utility for a fixed rational number $q$). The fourth application concerns the theory of simple voting games \citep{TAY}: our result gives necessary and sufficient conditions for the simultaneous numerical representation of a collection of simple voting games. Equivalently, it characterises the choice functions which pick out the smallest stably decisive coalitions in a weighted voting game.  

We conclude with some reflections on the `qualitative' nature of our semantics. In investigating bridge principles between probabilistic and categorical belief, one often aims for an account of categorical belief, and belief revision, that admits a `purely qualitative' characterisation---roughly, one that does not explicitly appeal to probabilistic or quantitative notions. In a sense, our selection-function characterisation of probabilistically stable belief revision fits the bill; yet, it is also of a much less `qualitative' flavour than usual accounts of logical belief revision, since it involves intricate combinatorial patterns of reasoning that can be seen as implicitly encoding some probabilistic constraints. For example, it is easy to see that probabilistically stable revision cannot be represented in terms of a plausibility order on states, as is usually done in belief revision and non-monotonic logic. Instead, its axiomatisation depends on a variant of Scott's cancellation axioms from measurement theory \citep{SCO}, capturing a non-trivial combinatorial property which guarantees the probabilistic representability of the underlying selection function. In the last section, we suggest that one can fruitfully apply the framework of probability logics to gauge how much quantitative/probabilistic structure an acceptance rule depends on, on the basis of its definitional complexity, and evaluate how much probabilistic structure is implicitly encoded in the behaviour of the resulting revision operator. 

\section{Probabilistic stability and belief revision}

Bayesian orthodoxy has it that the belief state of an agent can be represented as a probability measure which captures the agent's credences, or degrees of belief. Accordingly, we will model the subjective probabilities of an agent by a probability measure $\mu$ over a fixed finite algebra of propositions. We work with finite probability spaces $(\Omega,\mathfrak{A},\mu)$, with $\mathfrak{A}$ a finite set algebra over a sample space $\Omega$, and $\mu$ a probability measure on $\mathfrak{A}$. We represent propositions $A,B,..., X,Y,Z$ as elements of the set algebra $\mathfrak{A}$. We let $\Delta_{\mathfrak{A}}$ denote the set of all probability distributions on $\mathfrak{A}$. The \emph{conditional probability} of $A$ given $B$ is $\mu(A\,|\,B)=\frac{\mu(A\cap B)}{\mu(B)}$. In what follows, we will sometimes write the conditional probability measure $\mu(\cdot\,|\,B)$ in parametric form as $\mu_{B}$. 

How do a rational agent's categorical, all-or-nothing beliefs relate to the agent's subjective probabilities? We will call a rule for rational acceptance\textemdash or, more simply, an \emph{acceptance rule}\textemdash any function that specifies, given an agent's subjective probability function $\mu$, the agent's set of categorical, all-or-nothing beliefs. In other words, we take the term \emph{acceptance rule} as presupposing a functional dependence of belief on credence, possibly with other parameters specified in the background (the rule can be further parametrised, e.g. by a threshold parameter). 

A commonplace norm on categorical belief is that the belief set of a rational agent be logically closed. Assuming the logical closure of belief, we can simply identify the agent's belief set with the conjunction of all their believed propositions, which constitutes the logically strongest believed proposition. We can then recover the believed propositions as exactly those that are logically entailed by the logically strongest believed proposition. In this way, it will be useful to operate under the following general notion of an acceptance rule:

\begin{mydef}[\textbf{Acceptance rules}]
    An \emph{acceptance rule} $\alpha$ is a map $\alpha:\Delta_{\mathfrak{A}}\to \mathfrak{A}$. The interpretation is that $\alpha$ maps each probability distribution $\mu$ in $\Delta_{\mathfrak{A}}$ to the logically strongest accepted proposition $\alpha(\mu)\in\mathfrak{A}$; we then say an agent with credences $\mu$ accepts (or `believes') a proposition $X\in\mathfrak{A}$ if and only if $\alpha(\mu)\subseteq X$. 
\end{mydef}

One of the most discussed acceptance principles is the Lockean Thesis, which recommends the acceptance of all and only propositions that are above some threshold $t$ fixed in advance. The substantial cost of this principle, highlighted by the Lottery Paradox \citep{KYB}, is that one must give up on the general requirement to have logically closed and consistent belief sets. Leitgeb's stability theory of belief proposes some amendments to the Lockean Thesis. According to \cite{LEI}, the key to categorical belief is the property of \emph{probabilistic stability}: 

\begin{mydef}[\textbf{Stability}]\label{Stability}
Let $(\Omega,\mathfrak{A},\mu)$ a probability space and $t\in[0.5,1]$. A set $X\in\mathfrak{A}$ is $(\mu,t)$\emph{-stable} if and only if $\forall \,E\in\mathfrak{A}$ such that $X\cap E\neq\emptyset$ and $\mu(E)>0$, $\mu(X\,|\,E)> t$.
\end{mydef}

A hypothesis $H$ is probabilistically stable for a threshold $t\geq 1/2$ if it has resiliently high probability, in the sense that it retains above-threshold probability after updating by any new evidence, as long as that evidence does not logically exclude that hypothesis $H$. According to the \emph{stability rule for belief}, introduced by \cite{LEI}, an agent's logically strongest belief\textemdash the conjunction of all of their beliefs\textemdash should be resilient to new information in that sense. More precisely, the rule requires that a Bayesian agent believe all and only the logical consequences of the logically strongest stable proposition in the algebra of propositions over which they have credences.

\begin{mydef}[\textbf{The} $\tau$\textbf{-rule}]\label{tau rule}
Fix a finite probability space $(\Omega,\mathfrak{A},\mu)$ and a threshold $t\in(0.5,1]$. We define the \emph{stability rule for threshold} $t$, written $\tau_t$, as 
$$
\tau_{t}(\mu):=\min_{\subseteq}\{S\in\mathfrak{A}\,|\, S \text{ is $(\mu,t)$-stable}\}
$$
And the \emph{belief set generated by $\tau$} is given by: 
\begin{align*}
B_{\tau}(\mu)&:= \{X\in\mathfrak{A}\,|\,  \tau_t(\mu) \subseteq  X \}  \\
&\phantom{:}=\{X\in\mathfrak{A}\,|\,  A\subseteq X \text{ for some $(\mu,t)$-stable }A\in\mathfrak{A}\}  
\end{align*}
\end{mydef}

The key to understanding this definition is the following observation: one can show, given a (finite) probability space and a stability threshold, that the probabilistically stable propositions/events always form a nonempty, well-ordered collection under entailment/inclusion \citep{LEI}: in particular, on a finite probability space, there always exists a logically strongest probabilistically stable proposition $\tau(\mu)$. Given this, we can justify the second equality in the definition: a proposition being entailed by the strongest probabilistically stable proposition $\tau(\mu)$ is equivalent to its being entailed by \emph{some} probabilistically stable proposition. So we can equivalently characterise the $\tau$ rule as recommending that the agent believe exactly those propositions that admit a probabilistically stable deductive justification. Since the belief set thus defined is closed under logic, we can identify the belief set of the agent with the intersection of all accepted propositions $\bigcap B_{\tau}(\mu)$, which simply coincides with $\tau(\mu)$. 

The stability rule requires that the conjunction of the agent's beliefs be probabilistically stable, and therefore resilient in the face incoming information. Leitgeb offers several distinct arguments for this requirement, based on the intrinsic merits of probabilistic stability \citep{LEI2}, its connection to the Lockean Thesis for belief \citep{LEIBOOK} and with the AGM theory of belief revision \citep{LEI}, or its properties as an error-minimizing method of qualitatively approximating probabilistic credal states \citep{LEI3, LEIBOOK}. 
 
 This stability requirement forms the core of Leitgeb's  \emph{Humean Thesis} on belief \citep{LEIBOOK}. The Humean thesis is more permissive than the stability rule: it insists on the probabilistic stability requirement for belief, but without requiring one's belief set to be generated by the \emph{strongest} stable proposition. \cite{LEIBOOK} provides several equivalent characterizations of the Humean Thesis, one of which characterizes the belief sets satisfying the Humean Thesis as consisting of the propositions that have resiliently high probability upon conditioning on propositions that are \emph{not disbelieved}. Leitgeb then shows that this is equivalent to satisfying the Lockean Thesis with respect to a threshold that is equal to the measure of a probabilistically stable proposition. This is one sense in which the Humean Thesis is a refinement of the Lockean Thesis: the choice of \emph{which} stable set one picks as strongest accepted proposition amounts to a choice of Lockean threshold. On that account, stability provides a constraint on belief/credence pairs without reducing belief to credence (even once the stability parameter $t$ is fixed). The stability rule complies with the Humean Thesis, but is more restrictive: it uniquely specifies the agent's beliefs, given their credences and threshold $t$. Thus Leitgeb's theory of belief comes in two flavours: the more permissive Humean Thesis, and its `reductionist' variant captured by the stability rule \citep{LEI}. The stability rule requires in addition that the conjunction of the agent's beliefs be the \emph{logically strongest} probabilistically stable proposition. One way to motivate this requirement is to see it as an approximation to the Lockean thesis for the fixed threshold $t$. Suppose the threshold $t$ captures the agent's (contextually determined) standards for what counts as a high-probability proposition. We get the stability rule by requiring that one should satisfy as many instances of the Lockean thesis for threshold $t$ as is possible\textemdash that is, believe as many high-probability (above threshold $t$) propositions as possible\textemdash without violating the stability requirement \citep{LEI, MIE}. A slightly different way to arrive at the stability rule is as follows. A probabilistically stable event is one that has resiliently high probability: one that remains above the Lockean threshold $t$ even after learning any information that does not logically rule it out. If we take a proposition having resiliently high probability as sufficient for belief, and require closure of belief under logic, one gets the stability rule. Lastly, we may simply characterise the stability rule $\tau$ as capturing exactly the \emph{boldest} probabilistically stable belief set that an agent might adopt \citep{LEI}. Another motivation for investigating the stability rule, more germane to the main topic of this paper, comes from considerations having to do with the \emph{dynamics} of belief.

\subsection{The tracking problem}\label{The tracking problem}

Any satisfactory account of the relation between credences and plain beliefs must also account for belief dynamics. Given a bridge principle between belief and credence, does a rational agent's policy for belief revision harmonise with their policy for revising their credences? This is the question that \cite{LIK} raise. Here the objects of interest are \emph{belief revision policies}. A \emph{revision plan for a belief set $B$} is a map $E\mapsto B^{\ast}E$, mapping each proposition $E\in\mathfrak{A}$ (the revision input) to a revised belief set $B^{\ast}E$. (Given that we are representing belief sets by a single proposition, we take the $B^{\ast}E$ to be the strongest accepted proposition after revision by $E$). A belief revision policy \emph{based on} an acceptance rule $\alpha$ associates, for each probability measure $\mu$, the belief set $\alpha(\mu)$ together with a revision plan for $\alpha(\mu)$. \cite{LIK} propose the following criterion of compatibility between Bayesian conditioning and one's adopted belief revision policy:

\begin{mydef}[\textbf{Tracking}]\label{Tracking}
A belief revision policy $\ast$ based on an acceptance rule $\alpha$ \emph{tracks Bayesian conditioning} if and only if, for any finite probability space $(\Omega, \mathfrak{A}, \mu)$ and $E\in\mathfrak{A}$ such that $\mu(E)>0$, we have $\alpha(\mu_{E})=\alpha(\mu)^{\ast}E$. 
\end{mydef}

Tracking requires that, for any probability measure, Bayesian update followed by acceptance yields the same belief set as acceptance followed by (categorical) belief revision. In other words, tracking requires that the belief set obtained by updating one's prior by new evidence coincides with the result of revising the initial belief set obtained from one's prior by the same evidence. To put it in yet another way: if an agent has a prior $\mu$ and a belief set $B$ which are compatible with the (fixed) acceptance rule, in the sense that $B$ is the belief set obtained from $\mu$ via the acceptance rule, then for any new evidence $E$ the agent may condition on, the revised belief set $B^{*}{E}$ remains compatible with their posterior $\mu_E$. We may then say that, given the acceptance rule $\alpha$, the belief revision policy is a `qualitative' counterpart of Bayesian conditioning: it tracks exactly the effects of Bayesian conditioning on one's categorical beliefs. The \emph{tracking problem} is the problem of identifying a well-behaved acceptance rule and a belief revision policy based on it which tracks Bayesian conditioning. 

The question of how to track Bayesian conditioning using Leitgeb's stability rule is a particularly salient one, due to a close connection between the stability rule and one of the most canonical accounts of logical belief revision: the AGM theory of belief revision \citep{AGM}. One can very naturally associate to the stability rule a belief revision policy which complies with the well-known AGM postulates \citep[Chapter 4]{LEIBOOK}. The idea is simple: given the agent's prior probability measure, the well-ordered collection of probabilistically stable sets constitutes a system-of-spheres in the sense of \cite{GRO} (equivalently, a total preorder on states) which are well-known to generate AGM belief revision operators. We can think of the strongest stable proposition as containing the most plausible worlds, given the agent's prior: and the various stable sets as constituting a system of `fallback' belief states for the agent to land on in case they were to learn a proposition which contradicts some of their beliefs. Call this the stability-induced AGM revision policy.\footnote{Formally: given a prior probability $\mu$, the belief set of the agent is given by the logically strongest stable set $\tau(\mu)$ (equivalently, the $\subseteq$-minimal stable set) and the revision plan for it is given by $\tau(\mu)^*E:=S\cap E$ where $S$ is the logically strongest stable set consistent with $E$.}

This connection may inspire the hope that the stability rule gives us not only an account of the relation between credences and plain beliefs, but also one that yields a perfect dynamic compatibility between Bayesian conditioning and AGM belief revision. This is not so. The stability-induced belief revision policy fails to track Bayesian conditioning. This follows from a more general impossibility theorem by \cite{LIK}, which shows that no `sensible' acceptance rule (satisfying certain minimal conditions) can allow AGM belief revision to track Bayesian conditioning. 

In the light of this negative result, several responses have been offered. \cite{LIK} reject both the stability rule and the AGM belief revision postulates, and propose an alternative acceptance rule (the odds-threshold rule, inspired by \cite{Levi1996}) and an alternative belief revision policy (Shoham-driven revision) that together track Bayesian conditioning. Another approach is to retain the stability rule and show that there is nonetheless a weaker sense in which AGM revision is compatible with Bayesian conditioning: it is shown in \cite{MIE} that AGM reasoners can be rationalised as Bayesian agents who comply with the stability rule and who rely on the most equivocal (maximum entropy) probabilistic representation of their qualitative beliefs.\footnote{The idea in a nutshell: start with a qualitative characterisation of the agent's belief state (a belief set or a total preorder, corresponding to a system of spheres). Among all the probability measures that, according to the stability rule, are compatible with this belief state, take the (necessarily unique) maximum entropy probability measure. To revise, apply Bayesian conditioning to that maximum entropy representative and apply the stability rule to obtain a revised belief state. The resulting operation is always an AGM belief revision operator. In that sense AGM belief revision operators emerge, by an application of the \emph{principle of maximum entropy}, from Bayesian conditioning and Leitgeb's stability rule.} 

Leitgeb's own preferred approach in \cite{LEIBOOK} is to weaken the stability rule itself, and instead endorse the Humean Thesis, which only insists that the strongest belief of the agent be given by a probabilistically stable set (which need not be the logically strongest one). This version of the Humean Thesis turns out to render Bayesian conditioning compatible with AGM revision, albeit in a weaker sense: this is simply because the stability-induced AGM policy mentioned above always yields revised belief sets which are probabilistically stable with respect to the agent's posterior probability measure \textemdash satisfying the stability requirement imposed by the Humean thesis\textemdash even though the revised belief sets need not be the logically strongest ones, as required by the stability rule. The Humean Thesis thus can be seen as a weakening of the stability rule which allows for dynamic compatibility between AGM revision and Bayesian conditioning. Nonetheless, this comes at a certain cost: for one thing, this move requires the relationship between credence and belief to be ruled by \emph{two} distinct threshold parameters: one threshold for \emph{stability} (which Leitgeb suggests that we can always make equal to 1/2) and another \emph{Lockean} threshold which specifies a necessary and sufficient probability threshold for the acceptance of any proposition, and \emph{which one must allow to shift during the revision in order to make one's revised beliefs compatible with the result of AGM revision}.\footnote{One might take the decoupling of the stability and Lockean thresholds either as a feature of as a bug; and we shall not take position on this here. See a discussion of this point in \citep[Chapter 4]{LEIBOOK} and \citep{MIE}).} It is not clear how to motivate this shift in threshold as rational unless one independently endorses the AGM postulates as desiderata for rational belief revision, which is indeed the path recommended in \citep[Chapter 4]{LEIBOOK}. The Humean Thesis neither dictates what one's belief set should be, given one's credences (even once a stability threshold is fixed), nor does it dictate what one's revised belief set ought to be: but it certainly \emph{allows} the agent to adopt an AGM-compliant belief revision policy. (At one extreme, nothing in the Humean stability constraint on belief rules out the possibility of shifting to a Lockean threshold of 1 after an update, and adopt as strongest belief the strongest proposition of probability one). For better or worse, in the resulting picture, one leaves behind the tighter\textemdash but also more `reductionist'\textemdash connection required by tracking, under which one's belief set \emph{and} one's belief revision policy arises exclusively \emph{from} one's acceptance rule (and Bayesian conditioning). 

Another route suggests itself. It consists in retaining the stability rule and asking what belief revision policies the stability rule imposes, if one wants to track Bayesian conditioning. Taking that route amounts to addressing a question that, quite aside from its importance as an alternative solution to the tracking problem, is of independent logical interest: it is the question of characterizing the belief revision operators that are induced by stability rule and Bayesian conditioning. This is the main question we will answer in this paper. 

\subsection{Probabilistically stable belief revision}

In the discussion of the tracking problem for AGM belief revision, we were interested in bridging an independently 
characterised class of revision operators with Bayesian conditioning. Now, rather than asking which acceptance rules can bridge a chosen class of belief revision operators with Bayesian reasoning, one can instead study the qualitative revisions that emerge from Bayesian conditioning as a result of adopting a given acceptance principle. Indeed, any acceptance rule generates a belief revision policy that automatically tracks Bayesian conditioning: given your prior probability, simply define your revised belief set to be whatever the acceptance rules tells you to accept after Bayesian conditioning. 

\begin{mydef}[\textbf{Acceptance-induced belief revision policies}]
Let $(\Omega,\mathfrak{A},\mu)$ a probability space and $\alpha$ an acceptance rule.  The \emph{$\alpha$-induced belief revision plan generated by $\mu$} is a map $r^{\alpha}_{\mu}:\mathfrak{A}\to\mathfrak{A}$ defined by $r^{\alpha}_{\mu}(E):= \alpha(\mu_{E})$. The \emph{$\alpha$-induced belief revision policy} is given by the map $r^{\alpha}: \Delta_\mathfrak{A}\times \mathfrak{A}\to \mathfrak{A}$, defined as $r^{\alpha} (\mu, E):= r^{\alpha}_{\mu}(E)$.\footnote{I.e., $r^{\alpha}_{\mu}$ is the projection $r^{\alpha}(\mu,\cdot):\mathfrak{A}\to\mathfrak{A}$. Here we let the map be \emph{partial} and assume that $r^{\alpha}_{\mu}$ is defined only for those events $E\in\mathfrak{A}$ for which $\mu(E)>0$.}
\end{mydef}

Given a prior $\mu$ and an acceptance rule $\alpha$, the induced belief revision plan $r^{\alpha}_{\mu}$ captures an agent's belief set with a plan for revising it. The proposition $r^{\alpha}_{\mu}(E)$ captures the agent's revised belief state after learning $E$: it is given simply by the strongest believed proposition after updating by $E$ and applying the acceptance rule $\alpha$. We assume each such revision plan is defined exactly for those events that have positive $\mu$-measure. The initial beliefs of the agent are given by the belief set obtained from their prior: this is given by $r^{\alpha}_{\mu}(\Omega) = \alpha (\mu)$\textemdash their beliefs conditional on the tautologous proposition $\Omega$, which triggers no revision. The $\alpha$-induced belief revision policy is the policy that, given a prior $\mu$, implements the belief revision plan that $\alpha$ induces for $\mu$. We can also think of it as giving, for each probability measure $\mu$, the structure of the agent's \emph{conditional beliefs} induced by conditionalization, with the initial belief set $\alpha(\mu)$ capturing their unconditional prior beliefs. Any $\alpha$-induced belief revision policy automatically tracks Bayesian conditioning. 

In this vein, suppose we keep Leitgeb's stability rule fixed and ask: what is the qualitative revision generated by Leitgeb's rule and Bayesian conditioning? We employ Leitgeb's rule to obtain a belief revision policy \emph{from} Bayesian conditioning. Starting from a probability distribution $\mu$ and new update input $E$ (with $\mu(E)>0$), consider the (restricted) revision operator ${\ast}_{\tau}$ that takes as input the proposition $E$ and the current belief state $\tau(\mu)$, and outputs the revised belief state $\tau(\mu){^{\ast}}E := \tau(\mu_{E})$, as illustrated in Figure \ref{NONMON}. We call these induced operators \emph{probabilistically stable revision operators} (see Figure \ref{NONMON}). These qualitative revision operators capture exactly the belief revision plans that Bayesian conditioning generates via the stability rule. 

\begin{mydef}[\textbf{Probabilistically stable revision operators}]
Let $(\Omega,\mathfrak{A},\mu)$ a probability space and $t\in[0.5,1]$. The \emph{probabilistically stable revision operator} generated by $\mu$ and $t$ is a map $\sigma_{\mu,t}:\mathfrak{A}\to\mathfrak{A}$ defined by $\sigma_{\mu,t}(E):= \tau_{t}(\mu_{E})$. 
\end{mydef}

\begin{figure}[t]
\centering
\begin{tikzpicture}[scale=3]
\node (P) {$\mu$};
\node (B) [node distance=6cm, right of=P]{$\mu_{\scriptscriptstyle E}$};
\node (A) [node distance=2cm, below of=P] {$\tau(\mu)$};
\node (C) [node distance=6cm, right of=A] {$\tau(\mu_{\scriptscriptstyle E})$};
\draw[->] (P) to node [above]{$|_{\scriptscriptstyle E}$} (B);  
\draw[->] (P) to node [below]{Bayesian conditioning} (B);

\draw[->] (P) to node [left]{Leitgeb rule} (A);

\draw[->] (P) to node [right]{$\tau$} (A);

\draw[dashed, ->] (A) to node  [below]{generated revision} (C);
\draw[->] (B) to node [right]{$\tau$} (C);

\node(G)[node distance=3cm, right of=A]{};
\node(H)[node distance=1.5cm, below of=G]{Resulting consequence relation: $E\nc_{\tau}X$ iff $\tau(\mu_{E})\vdash X$};

\end{tikzpicture}
\caption{Probabilistically stable revision generated by Leitgeb's rule.}
\label{NONMON}
\end{figure}

The agent's (unconditional) beliefs are given by $\sigma_{\mu,t}(\Omega) =\tau_{t}(\mu_{\Omega})= \tau_{t}(\mu)$. 
Probabilistically stable revision is defined to track Bayesian conditioning. Keeping the threshold $t$ implicit, the $\tau$-generated revisions take the form $\tau(\mu)\mapsto\tau(\mu_{E})$, where $\mu$ is the agent's subjective probability distribution and $E$ a new revision input. Keeping the initial doxastic state $\tau(\mu)$ implicit in the background, we can fully characterise each such revision as a map $E\mapsto \tau(\mu_{E})$ (sending a proposition $E\in\mathfrak{A}$\textemdash the revision input\textemdash to another proposition $\tau(\mu_{E})$\textemdash the strongest accepted proposition, representing the updated belief state). Each such map can be seen as a \emph{strongest-stable-set operator}, sending each $E$ to the strongest ($\subseteq$-least) stable set given $E$. 

An important step towards solving the representation problem is to characterise those revisions in a purely qualitative way: that is, to describe all maps $E\mapsto \tau(\mu_{E})$ in a way that does not depend on the underlying probability measures $\mu$. By providing such a characterization of probabilistically stable revision operators, we will obtain a tracking result for Bayesian conditioning which relies on Leitgeb's stability-based acceptance.

\section{The logic of Leitgeb acceptance}\label{LogLei}

Before we turn the representation problem proper, it will first be useful to situate the problem by comparing it with the approach proposed by \cite{GEO, LIK}, which consists in characterizing the non-monotonic consequence relations (alternatively, the logic of flat conditionals) that are induced by an acceptance rule via a probabilistic version of the Ramsey test for conditionals (\S \ref{The Ramsey test and tau-models}). This perspective will offer a helpful starting point towards characterising probabilistically stable revision operators: we will compare the non-monotonic logic generated by the $\tau$-rule with well-known systems from the nonmonotonic logic literature (\S \ref{Some preliminary observations}). We note two properties of the $\tau$-rule which make the resulting logic rather unusual: the logic of probabilistically stable revision validates Rational Monotonicity, while it does not validate the \textsf{Or} rule. We then turn to a discussion of the representation problem. Our task is to find a class of purely `qualitative' (non-probabilistic) structures that capture the behaviour of probabilistically stable revision operators, so as to reveal the key structural properties of probabilistically stable reasoning. We will motivate the use of \emph{selection structures} from non-monotonic logic for this purpose (\S \ref{QualiMod}). The problem of characterising this class of models amounts to finding an axiomatic description of \emph{strongest-stable-set operators}, which send every event $X$ to the strongest stable set given $X$ (that is, they map the update input $X$ to the logically strongest event in the probability space that is probabilistically stable once the prior has been updated by $X$). In order to achieve this, we first discuss the geometry of the stability rule: a simple geometric analysis highlights some structural properties of the stability rule that play an important role in the axiomatic description of strongest-stable-set operators (\S \ref{The geometry of probabilistically stable revision}). This will pave the way for our main results: the representation theorems for probabilistically stable revision operators (\S \ref{secrep}).

\subsection{The Ramsey test and the non-monotonic logic of uncertain acceptance}\label{The Ramsey test and tau-models}

The question we are investigating here is a particular case of the following general problem: given an acceptance rule $\alpha$, what is the class of revision plans ${r}^{\alpha}:E\mapsto \alpha(\mu_{E})$ generated by it? A well-known approach to characterising structural features of belief revision operators is to ask what non-monotonic logic they generate. One way to investigate acceptance-induced belief revision plans is to study their associated nonmonotonic consequence relations $\nc_{\mathfrak{M}}$ understood via a Ramsey-test semantics. Fix a classical propositional language $\mathcal{L}$ in which we represent the propositions that the agent is reasoning about. For given formulas $\varphi$ and $\psi$, say that $\varphi\nc_{\mathfrak{M}}\psi$ holds if and only if, given the agent's (probabilistic) credal state given by a probabilistic structure $\mathfrak{M}$, conditioning on $\varphi$ leads, through the acceptance rule $\alpha$, to a new doxastic state where the agent believes $\psi$. Equivalently, this means that applying the generated revision $r^{\alpha}$ on input $\varphi$ leads the reasoner to accept $\psi$. More precisely, the models of such a logic are given by the following structures, introduced by \cite{GEO}: 

\begin{mydef}[\textbf{Probabilistic $\alpha$-models}]
Fix an acceptance rule $\alpha$.  Define an \emph{$\alpha$-model} as a structure $\mathfrak{M}:=(\Omega,\mathfrak{A},\mu, \e{\cdot}, \alpha)$, where $(\Omega,\mathfrak{A},\mu)$ is a probability space and $\e{\cdot}:\mathcal{L}\rightarrow\mathfrak{A}$ is a Boolean valuation. Set
\begin{center}
$\varphi\nc_{\mathfrak{M}}\psi$  if and only if $\alpha(\mu_{\e{\varphi}})\subseteq\e{\psi}$ or $\mu(\e{\varphi})=0$
\end{center}
\end{mydef}

\noindent Given an acceptance rule $\alpha$, the consequence relations of the form $\nc_{\mathfrak{M}}$ (where $\mathfrak{M}$ is an $\alpha$-model) provide a description of $\alpha$-generated revision in the sense above: they characterise conditional belief statements of the form $\varphi\nc\psi$, expressing that $\psi$ is believed after applying $\alpha$-generated revision by $\varphi$. In this sense, to capture the class of consequence relations of this form is to capture the logic of Bayesian conditional belief generated by $\alpha$. 

At this point, it is useful to recall some key elements of the nonmonotonic logic framework, as introduced by \cite{KLM}. Here, `logics' are identified with a class of consequence relations, and the idea is to classify consequence relations $\nc$ via the collection of inference rules under which $\nc$ is closed. 

\begin{mydef}[\textbf{KLM-style nonmonotonic logics}]\label{Non-monotonic logics DEF}
System \emph{\textsf{C}} consists of the rules \emph{(\textsf{Ref})}, \emph{(\textsf{Left Equivalence})}, \emph{(\textsf{Right Weakening})}, \emph{(\textsf{Cut})} and \emph{(\textsf{CM})} below. System \emph{\textsf{P}} is obtained from System \emph{\textsf{C}} by adding the \emph{(\textsf{Or})} rule. System \emph{\textsf{R}} is obtained from \emph{$\textsf{P}$} by adding \emph{(\textsf{RM})}. System $\textsf{O}$ consists of the rules (\textsf{Ref}), (\textsf{Left Equivalence}), (\textsf{Right Weakening}), (\textsf{WAnd}) and (\textsf{Wor}), as well as (\textsf{VCM}). (Note that $\vdash$ denotes classical entailment here).\\
$$
\infer[\emph{(\textsf{Ref})}] {\varphi\nc\varphi}{}
$$

$$
\infer[\emph{(\textsf{Left Equivalence})}] {\psi\nc\gamma}{\varphi\dashv\vdash\psi & \varphi\nc \gamma}
\qquad
\infer[\emph{(\textsf{Right Weakening})}] {\varphi\nc\gamma}{\varphi\nc\psi & \psi\vdash\gamma}
$$
$$
\infer[\emph{(\textsf{Cut})}] {\varphi\nc\gamma}{\varphi\wedge\beta\nc\gamma & \varphi\nc\beta}
$$
$$
\infer[\emph{(\textsf{And})}] {\varphi\nc\psi\wedge\gamma}{\varphi\nc \psi & \varphi\nc \gamma}
\qquad
\infer[\emph{(\textsf{Or})}] {\varphi\vee\psi\nc\gamma}{\varphi\nc\gamma & \psi\nc\gamma}
$$

$$
\infer[(\textsf{WAnd})] {\varphi\nc\psi\wedge\gamma}{\varphi\nc \psi & \varphi\wedge\neg\gamma\nc \gamma}
\qquad
\infer[(\textsf{WOr})] {\varphi\nc\gamma}{\varphi\wedge\psi\nc\gamma & \varphi\wedge\neg\psi\nc\gamma}
$$

$$
\infer[\emph{(\textsf{CM})}] {\varphi\wedge\psi\nc\gamma}{\varphi\nc \psi & \varphi\nc \gamma}
\qquad
\infer[\emph{(\textsf{RM})}] {\varphi\wedge\psi\nc\gamma}{\varphi\nc \gamma & \varphi\not\nc\neg\psi }
$$

$$
\infer[(\textsf{VCM})] {\varphi\wedge\psi\nc\gamma}{\varphi\nc \psi \wedge \gamma}
$$
\end{mydef}

In this setting, we say that an acceptance rule $\alpha$ \emph{validates} an inference rule if and only if, for any $\alpha$-model $\mathfrak{M}$, the consequence relation $\nc_{\mathfrak{M}}$ is closed under the inference rule. We have a basic notion of derivability: let $\Gamma\cup\{\Phi\}$ a finite set of flat conditionals of the form $\varphi\nc\psi$. Given a system of inference rules $\textsf{S}$, the notation $\Gamma \vdash_{\textsf{S}} \Phi$ means that the conditional $\Phi$ is derivable from $\Gamma$ using (finitely many applications of) the rules from the system $\textsf{S}$. Similarly, we say that a class of probabilistic models $\mathcal{M}$ validates this inference (written $\Gamma \vDash_{\mathcal{M}} \Phi$) whenever it is the case that, for any model $\mathfrak{M}$ in $\mathcal{M}$, if $\nc_{\mathfrak{M}}$ validates all conditionals in $\Gamma$, it also validates the conditional $\Phi$. In this framework, it is natural to ask what the logic of a fixed acceptance rule is. The non-monotonic logic generated by an acceptance rule $\alpha$ can be characterised through a completeness result: the completeness problem amounts to characterising the consequence relation  $\vDash_{\mathcal{M}}$, where $\mathcal{M}$ is the class of all $\alpha$-models. In other words, it consists in finding a system of inference rules $\mathsf{S}$ such that $\vdash_{\textsf{S}}$ and $\vDash_{\mathcal{M}}$ coincide.

For an example of such a completeness theorem, it is worth reminding a result by \cite{LIN} and  \cite{LIK, GEO} who, in addressing the tracking problem, provide a completeness theorem for their preferred acceptance rule. The rule in question is \emph{Shoham-driven acceptance}, which we will here denote by $\kappa$. Lin and Kelly define the $\kappa$-rule as follows. If $(\Omega, \mathcal{P}(\Omega),\mu)$ is a discrete probability space and $q\in\mathbb{R}$ ($q\geq 1$), we have: 
\begin{align*}
\kappa_{q}&:\Delta_{\mathfrak{A}}\rightarrow\mathfrak{A} ,\text{ defined as}\\
\kappa_{q}(\mu)&:=\Big\{\omega_{i}\in\Omega\,\, \biggr\rvert \,\, \frac{\mu(\omega_{i})}{\max_{\omega\in\Omega}\mu(\omega)}\geq \frac{1}{q}\Big\}   
\end{align*}
Alternatively, Lin and Kelly show we can characterise the acceptance rule and the resulting revision as an order-minimisation operation:
\begin{align*}
          \kappa_{q}(\mu)  &= \min(\prec_{\mu}),  \text{     where $\omega_{i}\prec_{\mu} \omega_{j}$ if and only if $\frac{\mu(\omega_{i})}{\mu(\omega_{j})}> q$} \\
          \kappa_{q}(\mu)^{\ast}X&:=\kappa_{q}(\mu_{X}) = \min(\prec_{\mu (\cdot\,|\,X)}) = \min(\prec_{\mu}\restriction X) \text{    for all $X$ with }\mu(X)>0  
\end{align*}

The general idea is that given an agent's prior, we can generate a partial order $\prec_{\mu}$ on states, whereby $\omega_i$ is more plausible than $\omega_j$ (written $\omega_{i}\prec_{\mu} \omega_{j}$) if the odds-ratio $\mu(\omega_i)/\mu(\omega_j)$ is sufficiently high.\footnote{In more sophisticated versions of the rule, one can allow the threshold for `sufficiently high' odds-ratios to depend on both states being compared, so that that for every pair of states $\omega_i, \omega_j$ there is a separate threshold $q_{i,j}$.} Then (1) the beliefs of an agent with credence function $\mu$ are always given by the disjunction of all states that are $\prec_{\mu}$-minimal and (2) the effect of using $\kappa$-induced belief revision can be described as simply restricting the ordering $\prec_{\mu}$ to those states consistent with the new evidence. Thus the belief revision operation induced by Bayesian conditioning can be described in purely qualitative terms as simply restricting a plausibility order to those basic outcomes not ruled out by the evidence. That is, the resulting logic of conditionals admits \emph{preferential semantics} \citep{KLM}, under which conditional judgments of the form $\varphi\nc\psi$ are true whenever the most preferred (order-minimal) $\varphi$-states are $\psi$-states. Lin and Kelly obtain the following completeness result for $\kappa$-models:

\begin{theorem}[\textbf{System \textsf{P} completeness}, \cite{LIN}, \cite{GEO}]\label{System P completeness}
Let $\Gamma\cup\{\Phi\}$ a finite set of flat conditionals of the form $\varphi\nc\psi$, and $\mathcal{K}$ the class of $\kappa$-models. Then,
\begin{center}
$\Gamma \vdash_{\mathsf{P}} \Phi$ if and only if $\Gamma \vDash_{\mathcal{K}} \Phi$.
\end{center}
\end{theorem}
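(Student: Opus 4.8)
The plan is to establish the two directions separately: soundness of $\vdash_{\mathsf{P}}$ with respect to $\vDash_{\mathcal{K}}$, and completeness.

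\emph{Soundness.} The idea is that every $\kappa$-model is, as far as its induced consequence relation is concerned, a finite smooth preferential model in the sense of \cite{KLM}. Given a $\kappa$-model $\mathfrak{M}=(\Omega,\mathfrak{A},\mu,\e{\cdot},\kappa_q)$, restrict to the positive-measure states $\Omega^{+}=\{\omega:\mu(\omega)>0\}$, order them by $\prec_{\mu}$, and label them by $\e{\cdot}$. By the order-minimisation characterisation of $\kappa_q$ recalled in the excerpt, $\kappa_q(\mu_{\e{\varphi}})=\min(\prec_{\mu}\!\restriction\e{\varphi})$ whenever $\mu(\e{\varphi})>0$, while $\mu(\e{\varphi})=0$ makes $\varphi\nc_{\mathfrak{M}}\psi$ vacuous and corresponds to $\varphi$ having no state in the preferential model; since $\Omega^{+}$ is finite, every nonempty definable set has a $\prec_{\mu}$-minimal element, so this preferential model is smooth. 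Hence $\nc_{\mathfrak{M}}$ coincides with the consequence relation of a finite smooth preferential model, which by the soundness half of the KLM representation theorem is closed under all rules of System $\mathsf{P}$. Consequently, if $\Gamma\vdash_{\mathsf{P}}\Phi$ and $\nc_{\mathfrak{M}}$ validates $\Gamma$, it validates $\Phi$; as $\mathfrak{M}$ was arbitrary, $\Gamma\vDash_{\mathcal{K}}\Phi$.

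\emph{Completeness.} I argue by contraposition: assuming $\Gamma\not\vdash_{\mathsf{P}}\Phi$, I must produce one $\kappa$-model validating $\Gamma$ and refuting $\Phi=(\varphi_0\nc\psi_0)$. Restrict to the finite fragment $\mathcal{L}_0$ generated by the atoms occurring in $\Gamma\cup\{\Phi\}$. By the completeness half of the KLM theorem, the $\mathsf{P}$-closure of $\Gamma$ is realised by a finite smooth preferential model $\mathfrak{W}=(S,l,\prec)$ over $\mathcal{L}_0$; since $\Phi$ is not in that closure, $\mathfrak{W}$ validates every conditional in $\Gamma$ and there is a $\prec$-minimal $\varphi_0$-state $s^{\ast}$ with $l(s^{\ast})\not\models\psi_0$. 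It remains to convert this into a $\kappa$-model, and this is the substantive step. One cannot simply transplant $\prec$: the plausibility order of a $\kappa$-model is always a semiorder numerically represented by a \emph{uniform} multiplicative threshold $q$, whereas $\prec$ may be an arbitrary finite strict order --- indeed there are preferential consequence relations (such as the one given by a ``$2+2$''-shaped order on four distinctly labelled states) that equal $\nc_{\mathfrak{M}}$ for no $\kappa$-model. What saves the day is that $\Gamma\cup\{\Phi\}$ involves only finitely many antecedents, hence only finitely many ``test sets'' of valuations, and the $\kappa$-model need only reproduce the behaviour of $\mathfrak{W}$ on those. Accordingly, I would take $\Omega$ to be the set of $\mathcal{L}_0$-valuations with the canonical valuation, and seek strictly positive weights $w(v)$ and a threshold $q$ such that (i) for each $\gamma\nc\delta$ in $\Gamma$ and each $v\models\gamma$ with $v\not\models\delta$ there is a competitor $u\models\gamma$ with $w(u)>q\,w(v)$, and (ii) $q\,w(l(s^{\ast}))\ge w(u)$ for every $u\models\varphi_0$. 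Then $\kappa_q$ applied to the probability measure obtained by normalising $w$ validates every conditional in $\Gamma$ while keeping $l(s^{\ast})$ in $\kappa_q(\mu_{\e{\varphi_0}})$, refuting $\Phi$.

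The crux, and the step I expect to be hardest, is showing that such $w$ and $q$ exist. In logarithmic coordinates $h(v)=-\log w(v)$ this is a finite system of strict and non-strict difference inequalities, and its feasibility must be extracted from $\mathfrak{W}$: one can read off suitable ``plausibility levels'' $h(v)$ from the $\prec$-heights in $\mathfrak{W}$ (after, if necessary, passing to a levelled or injective refinement of $\mathfrak{W}$ that agrees with it on the relevant test sets), or, alternatively, argue by a Farkas-type transposition argument that infeasibility would force a Scott-style cancellation cycle incompatible with $\mathfrak{W}$ validating $\Gamma$ while refuting $\Phi$. Everything else --- identifying $\kappa$-models with smooth preferential models, invoking KLM soundness and completeness, reducing to the finite fragment, and verifying the finitely many conditionals against the chosen $\mu$ and $q$ --- is routine bookkeeping.
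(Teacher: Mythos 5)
The paper itself does not prove this statement: Theorem \ref{System P completeness} is imported as background from \cite{LIN} and \cite{GEO}, so there is no in-paper argument to compare yours against; I can only assess your proposal on its own terms. Your soundness half is fine: $\prec_{\mu}$ is a strict partial order (transitivity uses $q\geq 1$), the restriction to positive-measure states is finite hence smooth, zero-measure antecedents are vacuous on both sides, so KLM soundness applies. Your completeness half has the right architecture (KLM completeness for $\mathsf{P}$, then convert the finite smooth preferential countermodel $\mathfrak{W}$ into a $\kappa$-model), but the step you yourself flag as the crux --- the existence of weights $w$ and a uniform threshold $q$ satisfying your conditions (i) and (ii) --- is not actually established, and the first route you sketch is misstated at exactly the delicate point. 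If you assign levels by $\prec$-height and geometric weights, the induced odds-ratio order is the height ranking, and rank-minimal sets are only a \emph{subset} of $\prec$-minimal sets: that inclusion preserves every conditional in $\Gamma$, but it can destroy the refutation of $\Phi$, since your witness $s^{\ast}$, though $\prec$-minimal in $\e{\varphi_0}$, need not be height-minimal there (an incomparable $\varphi_0$-state of lower height can crowd it out). ``Passing to a levelled refinement that agrees with $\mathfrak{W}$ on the relevant test sets'' is therefore not a free move --- that agreement is precisely what has to be proved; and the alternative Farkas-style route is only a gesture.

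The gap is repairable, and the repair is short, so you should include it. First extend $\prec$ to $\prec'$ by adding $s^{\ast}\prec' t$ for every other state $t$ of $\mathfrak{W}$ with $l(t)\models\varphi_0$; this union is acyclic precisely because $s^{\ast}$ is $\prec$-minimal among $\varphi_0$-states (any cycle would need a $\prec$-path from a $\varphi_0$-state down to $s^{\ast}$), so its transitive closure is a strict partial order extending $\prec$. Now take $\Omega:=S$ (the states of $\mathfrak{W}$ themselves, with $\e{p}:=\{s: l(s)\models p\}$ --- nothing in the definition of a $\kappa$-model forces distinct sample points to have distinct labels, and this avoids the spurious constraints your choice of ``all $\mathcal{L}_0$-valuations'' creates for valuations not realised in $\mathfrak{W}$), let $h(s)$ be the $\prec'$-height, set $w(s)=\epsilon^{h(s)}$ and pick any $q$ with $1<q<1/\epsilon$. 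Then $s\prec_{\mu}t$ iff $h(s)<h(t)$, so $\min_{\prec_\mu}(X)\subseteq\min_{\prec}(X)$ for every event $X$, whence every conditional in $\Gamma$ survives; and since $s^{\ast}\prec' t$ for all other $\varphi_0$-states, $s^{\ast}$ is height-minimal in $\e{\varphi_0}$ and witnesses the failure of $\Phi$. With that lemma in place your two directions assemble into a correct proof, essentially the construction used in the cited sources.
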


System \textsf{P} has been a long-time favourite amongst the systems of nonmonotonic logic. A probabilistic semantics for it was already present in Adams' early work deriving from his Ph.D. thesis \citep{ADA}. Lin's result provides a semantics for system \textsf{P} that is significantly less cumbersome than Adams' original `$\delta-\epsilon$' account, and more intuitive: it employs the Ramsey test to directly relate conditional probabilities to conditional beliefs. The result above shows that we can obtain simple probabilistic semantics for system \textsf{P} in a relatively natural way, via a well-chosen acceptance rule: thus one answer to the tracking problem yields an alternative way to arrive at System \textsf{P} . 

Naturally, an analogous question arises in the context of Leitgeb's acceptance principle. What is the logic of probabilistically stable revision? Is it as a known member of the well-studied family of KLM-style nonmonotonic logics \citep{KLM, SCHL}? Can we characterise these revision operators using preferential semantics?

As we shall see, the logic of stability-based acceptance is a rather unusual beast and, accordingly, the tools required for the representation result will also take us beyond the usual toolbox of belief revision theory and nonmonotonic logic. Firstly, as we will see next, probabilistically stable revision validates certain strong monotonicity principles, while failing even mild instances of case-reasoning. This already places the resulting belief revision operators outside the main classical systems of non-monotonic reasoning. Secondly, probabilistically stable revision does not admit preferential semantics (whereby the revision operation is represented as a minimisation operator for an underlying plausibility order). This is in sharp contrast with Lin and Kelly's $\kappa$-rule and Shoham-driven revision. As we cannot associate probabilistically stable revision operators with preferential models based on plausibility orders, preferential models are of no help for our main task of giving a qualitative description of probabilistically stable revision. In what follows, we shall instead solve the problem by appealing to selection function semantics and the theory of comparative probability orders.

\subsection{Rational Monotonicity, the Or rule, and the logic of conditional probability}\label{Some preliminary observations}

Let us identify some salient patterns of inference validated by probabilistically stable revision. Stripping the Ramsey-test definition above of its syntactic clothing, we define the following semantic consequence relation. Suppose we work with a finite sample space $\Omega$ and an algebra $\mathfrak{A}$ over it, which without loss of generality we can assume to be the whole powerset. (To set aside superfluous issues concerning valuations and the definability of sets of states in $\Omega$ via Boolean formulae, we simply treat propositions as subsets of $\Omega$.) Given a probability space of the form $(\Omega, \mathfrak{A},\mu)$, consider the relation $\nc_{\mu}$ defined directly on $\mathfrak{A}$ as:
\begin{center}
$A\nc_{\mu} B$ if and only if $\tau_{t}(\mu_{A})\subseteq B$ or $\mu(A)=0$, 
\end{center}
In words, $A\nc_{\mu} B$ holds whenever the strongest stable proposition conditional on $A$ entails $B$ or, in other words, the agent's belief set contains $B$ after learning $A$. Each relation $\nc_{\mu}$ (or, equivalently, each qualitative revision generated by $\tau$) represents a doxastic state together with `contingency plans'. The current unconditional beliefs $\tau(\mu)$ are given by all $A$ such that $\Omega\nc_{\mu}A$\textemdash that is, all propositions that $\nc_{\mu}$-follow from the tautology. All other entailments of the form $A\nc_{\mu}B$ specify the agent's contingency plan for revision. Note of course that $A\nc_{\mu} B$ simply amounts to the claim that either $\mu(A)=0$  or $\sigma_{\mu,t}(A)\subseteq B$, i.e., given evidence $A$, the belief revision plan generated by $\mu$ and $t$ yields a belief state where $B$ is believed.

We can immediately note some interesting features of the resulting consequence relation. First of all, it directly follows from this setup that (\textsf{Left Equivalence}) and (\textsf{Right Weakening}) hold, as the $\tau$-rule operates directly on an algebra of propositions. Since belief states are always closed under deduction, it follows that  (\textsf{And}) is validated, as well. Next, it follows from the definition of stability that $\tau(\mu_{A})\subseteq A$, and so (\textsf{Ref}) is validated, too. But now there are two particularly interesting aspects to this version of nonmonotonic consequence. Firstly, note that Rational Monotonicity is validated:

\begin{observation}\label{RATMON}
The $\tau$-rule satisfies \emph{\textsf{(RM)}}: that is, for any measure $\mu$, we have that 
\begin{center}
If $A\not\nc_{\mu}B^{c}$ and $A\nc_{\mu}C$, then $A\cap B\nc_{\mu}C$.
\end{center}
\end{observation}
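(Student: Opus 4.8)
The plan is to strip the Ramsey-test definition down to a single structural fact about probabilistic stability, namely that conditioning a stable set on a compatible positive-measure event produces a set that is stable for the conditional measure. Recall that $A\nc_{\mu}C$ abbreviates ``$\mu(A)=0$ or $\tau_{t}(\mu_{A})\subseteq C$'', and that $A\not\nc_{\mu}B^{c}$ unpacks to ``$\mu(A)>0$ and $\tau_{t}(\mu_{A})\cap B\neq\emptyset$''.

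First I would handle the degenerate case: if $\mu(A\cap B)=0$, then $A\cap B\nc_{\mu}C$ holds immediately by definition, so there is nothing to prove. Hence assume $\mu(A\cap B)>0$; then $\mu(A)>0$ and $\mu_{A}(B)=\mu(A\cap B)/\mu(A)>0$. Writing $X:=\tau_{t}(\mu_{A})$, the two hypotheses yield $X\cap B\neq\emptyset$ (from $A\not\nc_{\mu}B^{c}$, using $\mu(A)>0$) and $X\subseteq C$ (from $A\nc_{\mu}C$, again using $\mu(A)>0$).

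The key lemma I would prove is: if $X$ is $(\nu,t)$-stable, $\nu(B)>0$, and $X\cap B\neq\emptyset$, then $X\cap B$ is $(\nu_{B},t)$-stable. This is a short computation. Given any test event $E$ with $(X\cap B)\cap E\neq\emptyset$ and $\nu_{B}(E)>0$, put $E':=B\cap E$; then $X\cap E'\neq\emptyset$ and $\nu(E')>0$, so $(\nu,t)$-stability of $X$ gives $\nu(X\mid E')>t$, while $\nu_{B}(X\cap B\mid E)=\nu(X\mid B\cap E)=\nu(X\mid E')>t$. This ``commutation'' of conditioning with the restriction of stable sets is the real content of the argument; the rest is bookkeeping, and I expect no genuine obstacle beyond ensuring that the relevant conditional measures (and hence $\tau_{t}$ applied to them) are well-defined---which is exactly why the preliminary split on $\mu(A\cap B)$ is needed.

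To finish, apply the lemma with $\nu:=\mu_{A}$ (legitimate since $\mu_{A}(B)>0$ and $X\cap B\neq\emptyset$), and use the identity $(\mu_{A})_{B}=\mu_{A\cap B}$, valid because $\mu(A),\mu(A\cap B)>0$. This shows that $X\cap B$ is $(\mu_{A\cap B},t)$-stable. Since $\tau_{t}(\mu_{A\cap B})$ is by definition the $\subseteq$-least $(\mu_{A\cap B},t)$-stable set---the minimum existing by the well-ordering property of stable sets quoted above---we conclude $\tau_{t}(\mu_{A\cap B})\subseteq X\cap B\subseteq X\subseteq C$, that is, $A\cap B\nc_{\mu}C$, as required.
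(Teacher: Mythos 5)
Your proof is correct and follows essentially the same route as the paper's: both arguments reduce the claim to showing that $\tau_{t}(\mu_{A})\cap B$ is stable for $\mu_{A\cap B}$, by converting a test event $Y$ for the conditional measure into the test event $B\cap Y$ for $\mu_{A}$ and invoking the stability of $\tau_{t}(\mu_{A})$, then conclude by minimality of the strongest stable set. Your only departures are cosmetic: you package the core computation as a standalone lemma about an arbitrary measure $\nu$ and handle the $\mu(A\cap B)=0$ case explicitly, which the paper leaves implicit.
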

\begin{proof}
We need to show
\begin{center}
If $\tau(\mu_{A})\not\subseteq B^{c}$, and $\tau(\mu_{A})\subseteq C$, then 
$\tau(\mu_{A\cap B})\subseteq C$.
\end{center}
Assume $\tau(\mu_{A})\not\subseteq B^{c}$ and $\tau(\mu_{A})\subseteq C$. This entails $\tau(\mu_{A})\cap B\neq\emptyset$, and moreover $\tau(\mu_{A})\cap B\subseteq C$. 
We prove that $\tau(\mu_{A\cap B})\subseteq \tau(\mu_{A})\cap B$. It is enough to show that $\tau(\mu_{A})\cap B$ is stable with respect to $\mu_{A\cap B}$: the desired inclusion then follows since $\tau(\mu_{A\cap B})$ is the $\subseteq$-least $\mu_{A\cap B}$-stable set. 
So let $Y\in\mathfrak{A}$ such that $(\tau(\mu_{A})\cap B)\cap Y \neq\emptyset$ and $\mu_{A\cap B}(Y)>0$. We need to show that 
$$\mu_{A\cap B}\Big(\tau(\mu_{A})\cap B \,\big|\,Y\Big)>t.$$
Note that $\mu_{A\cap B}(Y)>0$ entails $\mu_{A}(B\cap Y)>0$, and since $\tau(\mu_{A})\cap (B\cap Y)\neq\emptyset$, the $\mu_{A}$-stability of $\tau(\mu_{A})$ entails $\mu_{A}(\tau(\mu_{A})\,|\,B\cap Y)>t$. But then

\begin{align*}
\mu_{A}\Big(\tau(\mu_{A})\,\big|\,B\cap Y\Big) &= \mu_{A}\Big(\tau(\mu_{A})\cap B\,\big|\,B\cap Y\Big)   \\
&= \mu_{A\cap B}\Big(\tau(\mu_{A})\cap B \,\big| \,Y\Big) > t,
\end{align*}
as desired. So the set $\tau(\mu_{A})\cap B$ is $\mu_{A\cap B}$-stable, and therefore $\tau(\mu_{A\cap B})\subseteq \tau(\mu_{A})\cap B \subseteq C$.
\end{proof}

\noindent Note that nothing in the verification of Observation \ref{RATMON} relies on our specific choice of threshold: indeed, (\textsf{RM}) holds for any choice of threshold. 

Secondly, $\tau$-acceptance does \emph{not} satisfy the (\textsf{Or}) rule. 

\begin{observation}\label{failure of Or}
For any discrete set algebra $(\Omega,\mathfrak{A})$ with $|\Omega|\geq 3$ and any threshold $t\in[0.5,1)$, there are measures $\mu$ on it such that $\nc_{\mu}$ fails the \emph{(\textsf{Or})} rule. 
\end{observation}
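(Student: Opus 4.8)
The plan is to exhibit, for each threshold $t\in[0.5,1)$, one probability measure $\mu$ together with events $A,B,C$ witnessing a failure of closure under (\textsf{Or}): we want $A\nc_{\mu}C$ and $B\nc_{\mu}C$ to hold while $A\cup B\nc_{\mu}C$ fails (recall that on the algebra side disjunction is union). Since ``discrete set algebra'' means $\mathfrak{A}=\mathcal{P}(\Omega)$, it suffices to treat the case $|\Omega|=3$: given three distinct states $\omega_1,\omega_2,\omega_3$ and any $\mu$ supported on $\{\omega_1,\omega_2,\omega_3\}$, the set $\{\omega_1,\omega_2,\omega_3\}$ is $(\mu,t)$-stable (it carries all the mass), so $\tau_t(\mu)\subseteq\{\omega_1,\omega_2,\omega_3\}$, and zero-probability states never enter any stability condition because conditioning events are required to have positive probability; hence the computation on the larger space coincides with the one on the three-state restriction. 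So I will assume $\Omega=\{\omega_1,\omega_2,\omega_3\}$.

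The measure I would use is $\mu(\omega_1)=t$ and $\mu(\omega_2)=\mu(\omega_3)=\tfrac{1-t}{2}$, with $A=\{\omega_1,\omega_2\}$, $B=\{\omega_1,\omega_3\}$, and $C=\{\omega_1\}$. The design is a deliberate balancing act. I would first record the elementary fact (the worst-case conditioning event for a set $X$ being a minimal-weight atom of $X$ together with all of $X^{c}$) that a set $X$ with full support and $0<\mu(X)<1$ is $(\mu,t)$-stable iff $\min_{\omega\in X}\mu(\omega)>\tfrac{t}{1-t}\mu(X^{c})$. The weight $\mu(\omega_1)=t$ is chosen just small enough that $\{\omega_1\}$ is \emph{not} $(\mu,t)$-stable (a singleton $\{\omega_i\}$ is stable iff $\mu(\omega_i)>t$), yet conditioning on $A$ deletes $\omega_3$'s mass and pushes $\mu_A(\omega_1)=\tfrac{t}{t+(1-t)/2}=\tfrac{2t}{t+1}$ strictly above $t$ (here one uses $t<1$), so $\{\omega_1\}$ \emph{becomes} $(\mu_A,t)$-stable; being the smallest nonempty set, $\tau_t(\mu_A)=\{\omega_1\}$. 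The identical computation with $\omega_2,\omega_3$ swapped gives $\tau_t(\mu_B)=\{\omega_1\}$. Hence $A\nc_{\mu}C$ and $B\nc_{\mu}C$.

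It then remains to verify $\tau_t(\mu)=\Omega$, which yields $\tau_t(\mu_{A\cup B})=\tau_t(\mu_{\Omega})=\Omega\not\subseteq\{\omega_1\}=C$ (with $\mu(A\cup B)=1>0$), so $A\cup B\nc_{\mu}C$ fails and $\nc_{\mu}$ is not closed under (\textsf{Or}). For $\tau_t(\mu)=\Omega$ I would run through the six proper nonempty subsets: each singleton fails since its $\mu$-mass is $\le t$; the pairs $\{\omega_1,\omega_2\}$ and $\{\omega_1,\omega_3\}$ fail via the conditioning event $\{\omega_2,\omega_3\}$, which gives conditional probability $\tfrac12\le t$; and $\{\omega_2,\omega_3\}$ fails via $\{\omega_1,\omega_2\}$, giving $\tfrac{1-t}{1+t}\le\tfrac13<\tfrac12\le t$. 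So $\Omega$ is the only $(\mu,t)$-stable set.

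The only genuine obstacle is locating a configuration that works: the obvious candidates fail because the stability inequalities interlock. For instance, taking $A,B$ to be the two pairs through $\omega_3$ and $C=\{\omega_1,\omega_2\}$ cannot work, since $\tau_t(\mu_A)=\{\omega_1\}$ forces $\mu(\omega_1)>\tfrac{t}{1-t}\mu(\omega_3)$ and $\tau_t(\mu_B)=\{\omega_2\}$ forces $\mu(\omega_2)>\tfrac{t}{1-t}\mu(\omega_3)$, and together these already make $\{\omega_1,\omega_2\}=C$ stable, so $\tau_t(\mu_{A\cup B})\subseteq C$. The configuration above sidesteps this by letting $A$ and $B$ collapse to the \emph{same} singleton $\{\omega_1\}$, with $C$ that singleton, so that the union $A\cup B=\Omega$ restores enough mass off $\omega_1$ that not even $\{\omega_1\}$ survives as a stable set. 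Pinning down the exact weight $\mu(\omega_1)=t$ that makes the conditional and unconditional computations fall on opposite sides of $t$, and checking there is no other small stable set to land on, is where the care is needed; the rest is a routine conditional-probability calculation.
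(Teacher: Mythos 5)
Your proof is correct and takes essentially the same route as the paper: an explicit three-state counterexample built from one heavy state of mass at most $t$ and two equal light states (extended to larger spaces via null states), with the stability computations for the antecedents, their union, and the unconditional measure checked directly. The only differences are cosmetic: the paper uses $\mu(\omega_1)=a$ for any $a\in\left(\tfrac{t}{2-t},\,t\right]$ (your choice is the endpoint $a=t$) together with the disjoint antecedents $A=\{\omega_3\}$, $B=\{\omega_1,\omega_2\}$ and $C=\{\omega_1,\omega_3\}$, whereas you use the overlapping pair $A=\{\omega_1,\omega_2\}$, $B=\{\omega_1,\omega_3\}$ with $C=\{\omega_1\}$.
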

\begin{proof}

Let $\Omega=\{\omega_{1},\omega_{2}, \omega_{3},\dots,\omega_{n}\}$. Fix a threshold $t\geq 1/2$. Now pick any $a$ such that $\frac{t}{2-t}<a\leq t$, and consider the measure $\mu$ such that $\mu(\omega_1)=a$ and $\mu(\omega_2)=\mu(\omega_3)= (1-a)/2$. Now set $A=\{\omega_{3}\}$, $B:=\{\omega_{1}, \omega_{2}\}$, and  $C:=\{\omega_{1}, \omega_{3}\}$.  We can easily compute $\tau(\mu_{A})= A$, $\tau(\mu_{B})= \{\omega_{1}\}$, and $\tau(\mu_{A\cup B})=A\cup B$. 
So we have $A\nc_{\mu} C$ and $B\nc_{\mu} C$, but $A\cup B\not\nc_{\mu} C$.
\end{proof}

Here is a simple, concrete counterexample. Your stability threshold is 1/2. A ball was drawn uniformly at random from an urn containing exactly 4 Red balls, 3 Green, and 3 Blue balls, and nothing else. (We assume you know the composition of the urn, and that the ball was drawn uniformly \emph{by your lights}, and so on). You judge that 
$$\textsf{Red} \vee \textsf{Blue} \nc_{\mu} \neg \textsf{Blue},$$
i.e., upon learning that the drawn ball is either \textsf{Red} or \textsf{Blue}, you accept $\neg \textsf{Blue}$ (\textsf{Red} is sufficiently more likely than \textsf{Blue} to be above threshold and, it being one of the most fine-grained possibilities that you entertain in this context, its probability cannot decrease upon learning anything that doesn't contradict it). Since \textsf{Red}, \textsf{Green}, and \textsf{Blue} are mutually exclusive, you also judge, by logic alone, that $$\textsf{Green} \nc_{\mu} \neg \textsf{Blue}.$$ But given your evidence right now, you do not already accept $\neg \textsf{Blue}$: 
$$(\textsf{Red} \vee \textsf{Blue}) \vee \textsf{Green} \not\nc_{\mu} \neg \textsf{Blue}$$
This is simply because no hypothesis stronger than $(\textsf{Red} \vee \textsf{Blue}) \vee \textsf{Green}$ is probabilistically stable: none has probability that is resiliently above $1/2$. In particular, \textsf{Red} is not above threshold: and, while every disjunct of two basic outcomes is above threshold, none of them is \emph{stably} above threshold.\footnote{Each of them admits a defeater: $\textsf{Red}\vee \textsf{Green}$ is defeated by $\textsf{Green}\vee\textsf{Blue}$ (and mutatis mutandis for $\textsf{Red}\vee \textsf{Blue}$), while $\textsf{Green}\vee \textsf{Blue}$ is defeated by $\textsf{Red}\vee \textsf{Green}$, for instance.}

In this example, the hypothesis $\neg\textsf{Blue}$ is $\nc_{\mu}$-entailed by both $\textsf{Green}$ and its negation $\neg\textsf{Green}$ (equivalent to $ (\textsf{Red}\vee\textsf{Blue}) $), but is not $\nc_{\mu}$-entailed by the tautological disjunction $\textsf{Green}\vee \neg\textsf{Green}$. In logical terms, this means that the $\tau$-rule fails an even weaker principle, capturing the restriction of (\textsf{Or}) to mutually exclusive and exhaustive propositions:
$$
\infer[(\textsf{exOr})] {\textsf{T}\nc \psi}{\varphi\nc \psi & \neg\varphi\nc\psi }
$$
 Given an algebra of propositions, how common are measures that fail the (\textsf{Or}) rule? Building on counterexamples like the above, one can rather easily show that for any discrete set algebra $(\Omega,\mathfrak{A})$ and any threshold $t$, there is an open neighbourhood of distributions $\mathcal{N}$ in the probability simplex such that, for any $\mu\in\mathcal{N}$, $\nc_{\mu}$ fails the (\textsf{Or}) rule. In this sense, the failures of (\textsf{Or}) for the $\tau$ rule are non-neglibible.\footnote{There is a natural connection between the  (\textsf{Or}) rule and the oft-discussed principle of \emph{conglomerability} (notably discussed\textemdash and rejected\textemdash by de Finetti \citep{DEF}). We say a probability measure $\mu$ is (countably) \emph{conglomerable} if, for any countable partition $\Pi$ of the underlying probability space, we have $$\mu(X)\in \big[\inf_{E\in\Pi}\mu(X\,|\,E), \sup_{E\in\Pi}\mu(X\,|\,E) \big]$$ for any $X\in\mathfrak{A}$. In other words, conglomerability requires that the unconditional probability of an event remains within the bounds fixed by the most extreme values its probability could take when conditioning on cells from the partition. That is, conglomerability requires the following: if learning any event $E\in\Pi$ yields a probability $\mu(\cdot\,|\,E)$ such that $\mu(X\,|\,E)\in [a,b]$ (for some fixed $a,b\in[0,1]$ with $a\leq b$), then we should already have $\mu(X)\in[a,b]$ unconditionally. The analogy with the (\textsf{Or}) rule is tempting (but see \cite{HOW} who defends the view that the analogy with an infinitary Or-introduction rule is `merely illusory' \cite[p. 12]{HOW}, if this Or-rule is considered as a purely deductive inference rule). Non-conglomerability can only occur if the measure $\mu$ fails countable additivity \citep{SEI} and, with a few extra assumptions, Seidenfeld et al. have shown that this phenomenon generalises to higher cardinalities \citep{SEI2} (that is, whenever the measure is not $\kappa$-additive, conglomerability fails for partitions of cardinality at most $\kappa$). At a very high level (and perhaps \emph{only} at a very high level) the failure of the $(\textsf{Or})$ rule can be seen as showing that the stability rule forces a finite, qualitative counterpart of non-conglomerability, even though the underlying measures are always assumed to be conglomerable; for we can have a partition $\Pi=\{A, A^{c}\}$ and an event $X\in\mathfrak{A}$ such that $X$ has a $\mu_{A}$-stable subset \emph{and} a $\mu_{A^{c}}$-stable subset, but \emph{no} $\mu$-stable subset.}

This marks a notable difference between probabilistically stable revision and AGM reasoning: unlike probabilistically stable revision, AGM revision always satisfies the (\textsf{Or}) rule.\footnote{Two key axioms of AGM belief revision are the axioms Inclusion (when revising, do not adopt any beliefs that are not logically entailed by your beliefs together with the new evidence) and Preservation (when revising by a proposition that is not disbelieved, all previously held beliefs are maintained). In the presence of the Success (that one should always believe the proposition being received as new evidence), Inclusion amounts to validating the \textsf{Or} rule. Probabilistically stable revision satisfies Preservation, but it does not satisfy Inclusion: see \citep[p. 562]{MIE} for some examples and a discussion of this point.}

The same example as above also shows that probabilistically stable revision also fails the rule (\textsf{WOr}) from Definition \ref{Non-monotonic logics DEF}. 
$$
\infer[(\textsf{WOr})] {\varphi\nc\gamma}{\varphi\wedge\psi\nc\gamma & \varphi\wedge\neg\psi\nc\gamma}
$$
(\textsf{WOr}) is a weakening of the (\textsf{Or}) rule tailored to logics of conditional probability, investigated in \citep{HAW, HAWMAK, PARSIM}. Given a measure $\mu$ and threshold parameter $t\in[0,1]$, define 
$$
A\nc_{\lambda} B \,\,\Longleftrightarrow \,\,\mu_{A}(B)>t \text{ or }\mu(A)=0 
$$
These are perhaps the simplest probabilistic semantics for a non-monotonic logic one could envisage: $A$ defeasibly entails $B$ whenever the conditional probability of $B$ \emph{given} $A$ is high. Note that, in our setting, we can view these semantics as the non-monotonic logic that results from a \emph{Lockean belief revision policy}: i.e. the belief revision policy that results from adopting a Lockean notion of acceptance, according to which you accept all and only propositions with probability above a fixed threshold parameter (which need not be above 1/2).  $A\nc_{\lambda} B$ means that $B$ is accepted after learning $A$ under the Lockean principle (provided $A$ has non-zero probability). Providing a completeness result for such Lockean revision policies is a notoriously involved problem (see \citep{PARSIM}).\footnote{\cite{HAWMAK} showed that System \textsf{O} does not capture exactly the probabilistically sound consequence relations (in our parlance, those generated by a Lockean belief revision policy): in fact, no set of finite-premised Horn rules can even derive all the Horn rules that are sound under these probabilistic semantics. They conjectured that System \textsf{O} is complete for finite-premised Horn rules, in the sense that it derives every probabilistically sound finite-premised Horn rule. This conjecture was settled in the negative, through subtle and surprisingly involved methods (including an excursion into non-standard analysis), by Paris and Simmons \citep{PARSIM}.} Since \cite{HAW}, the core system involved in this analysis has been System \textsf{O} (see Definition \ref{Non-monotonic logics DEF}), all of whose rules are all validated by these Lockean semantics. The rule (\textsf{WOr}) is particularly salient when comparing probabilistically stable revision to the logic of Lockean belief revision policies: it can easily be seen that probabilistically stable revision satisfies \emph{all the rules of System \emph{$\textsf{O}$} except for \emph{(\textsf{WOr})}}.

Some may see the failure of the (\textsf{Or}) as a damning feature of probabilistically stable revision. We will leave it to the reader to judge whether, and to what degree, one should be troubled by the failure of the \textsf{Or} rule (the reader is invited to revisit the counterexamples and ascertain how \emph{irrational} they deem these inferences). Here are three remarks to help contextualise that question.

First, abandoning the  (\textsf{Or}) rule may well be an inevitable cost of preserving Rational Monotonicity (\textsf{RM}). The proof of Lin \& Kelly's impossibility theorems \citep{LIK, GEO} highlights a general tension between the (\textsf{Or}) rule and (\textsf{RM}). Their impossibility result for tracking shows that, given some plausible assumptions on one's underlying acceptance rule, one's belief revision policy cannot validate both (\textsf{Or}) and (\textsf{RM}). Our observations here show that, unlike Lin \& Kelly's preferred Shoham-driven revision policy, probabilistically stable revision falls on the (\textsf{RM}) side of the dilemma. Second, as we will later show (\S\ref{MINIMOP}), it turns out that probabilistically stable belief revision does obey a certain cognate of the (\textsf{Or}) rule, albeit a much weaker one. 

Lastly, let us conclude with a word on interpreting the  (\textsf{Or}) rule. The usual arguments for the (\textsf{Or}) rule present it as a commonsense desideratum for handling \emph{case reasoning}. The rationale for case reasoning is best illustrated in the context of arguments for the (weaker) rule (\textsf{exOr}). Suppose we have $\varphi\nc\psi$ and $\neg\varphi\nc\psi$, while also having $\varphi\vee\neg\varphi\not\nc\psi$. This means that the agent believes $\psi$ conditionally on learning either $\varphi$ or $\neg\varphi$, but does not currently accept $\psi$. This seems to run counter to the following intuitive principle, expressed by Lin and Kelly:
\begin{quote}
if you know that you will accept a proposition regardless what you learn, you should accept it already. \cite[p. 960]{LIK} 
\end{quote} 
The rule can also be seen as analogous to Savage's decision-theoretic \emph{sure-thing principle} \cite{SAV}: if, in a decision problem, the agent knows she would select action $a$ conditional on event $X$ being true, \emph{and} she would select action $a$ conditional on its negation $X^{c}$, then she ought to select action $a$ outright. 

A natural way to excuse the failure of the (\textsf{Or}) rule consists in marking the distinction between \emph{hypothetical} conditionals and \emph{actual update} conditionals. On the first reading, the expression $\varphi\nc\psi$ captures a contingency plan that is transparent to the agent herself: under this reading, the agent knowingly commits to accepting $\psi$ in both cases $\varphi$ and $\neg\varphi$. Refusing to accept $\psi$ appears incongruous, since she also knows that one of $\varphi$ and $\neg\varphi$ already obtains (even though she may not know which). 

On the second reading, in the conditional $\varphi\nc\psi$, the antecedent $\varphi$ ought to be read as a \emph{truly dynamic} operator: in a spirit closer to dynamics logics \citep{PLA, BMS}, we could conceive of learning $\varphi$ as an \emph{event} taking place. Following the dynamic tradition, we can suggestively write this informational event as $[!\varphi]$ to distinguish it from $\varphi$, the mere proposition \emph{that $\varphi$ holds}. Then the formula $\varphi\nc\psi$ \textemdash properly read as $[!\varphi]\psi$\textemdash expresses that the informational state of the agent is such that, after \emph{actually, truthfully learning} $\varphi$, she accepts $\psi$. It may be the case that the information events $[!\varphi]$ and $[!\neg\varphi]$ both lead her to accept $\psi$; but, at the current stage, neither event actually happened, so nothing prompted the agent to adopt that belief. On this reading, there is no rationality failure on the part of the agent: for one thing, she need not believe that either event $[!\varphi]$ or $[!\neg\varphi]$ will occur (even though she knows that either $\varphi$ or $\neg\varphi$ is the case). For another, the outcomes of learning events need not be transparent to her.

We will remain neutral on this interpretative issue. In particular, we will not pursue in detail this second reading here: it is most fruitful to first understand the behaviour of basic stability-induced conditionals, regardless of matters of interpretation, before we can resort to more expressive logics that capture the distinction between propositions $\varphi$ and their corresponding learning events $[!\varphi]$. For our purposes, the failure of even weak versions of the (\textsf{Or}) rule is significant primarily because it renders the logic of probabilistic stability an unusual and interesting object of study: it is an indication that the logic of probabilistic stability differs in important respects from the ``plain'' logic of conditional probability embodied in the Lockean belief revision semantics). 

Let us take stock: the characteristics of $\tau$-generated consequence relations that we highlighted here are rather unusual, and they make it difficult to place, be it very approximately, the resulting logic of probabilistic stability on the family tree of known nonmonotonic logics. To begin with, most known nonmonotonic systems in the literature, like McCarthy's circumscription logics \citep{MAC}, non-monotonic inference via the \emph{closed world assumption on Horn clauses} \cite{MAKREI}, systems \textsf{P} and \textsf{R} \citep{KLM}, as well as most logics of conditionals, extend system \textsf{C}. But a small modification of the counterexample to (\textsf{Or}) used in the proof of Observation \ref{failure of Or} shows that the $\tau$-generated consequence relations do not in general satisfy (\textsf{Cut}); hence, we are not dealing with a $\textsf{C}$-complying class of consequence relations (and of course, our consequence relations of the form $\nc_{\mu}$ do not extend system \textsf{P}). Further, $\tau$-consequence is always closed under $\textsf{(RM)}$; but the $\textsf{(RM)}$-complying systems presented in the literature usually also satisfy the $(\textsf{Or})$ rule, which $\tau$-consequence violates. This means that we are dealing with a notion of consequence which is, in one sense, unusually weak, in that it does not validate very common (and, arguably, rather intuitive) rules like $(\textsf{Or})$ or $(\textsf{Cut)}$; nor does it validate even weaker principles of disjunctive reasoning $(\textsf{WOr})$ obeyed by the non-monotonic logic of conditional probability. In another sense, however, this notion of consequence is rather strong, for it validates $\textsf{(RM)}$, itself considered a strong requirement on nonmonotonic consequence\textemdash and one which very easily fails in probabilistic contexts, including Lockean conditional probability semantics, as well as under the acceptance policies put forward by Lin and Kelly. The logic of probabilistically stable revision thus bears some hallmarks of paradigmatically `qualitative' systems (typically equipped with plausibility order semantics), as well as markers of typically probabilistic behaviour, like the failure of the $(\textsf{Or})$ rule. All this indicates that probabilistically stable revision occupies an unusual position in the landscape, and we need to do some more work to isolate the correct axioms for stability-based reasoning. 

\subsection{Selection structures and the representation problem}\label{QualiMod} 

A prominent model for qualitative belief revision operators is given by \emph{order-based} revisions: these are revisions that are defined by order-minimisation and ``qualitative'' conditioning. Start a plausibility order $\leq$ defined on $\Omega$, where $w \leq v$ is interpreted as the statement that $w$ is \emph{at least as plausible} than $v$ (or: $w$ is \emph{at most as implausible as} $v$). The agent's strongest accepted proposition is given by the set $\min(\leq)$: intuitively, it is the disjunction of all the most plausible worlds (states) that the agent entertains. Revision by a proposition $E$ amounts to restricting $\leq$ to $E$ and taking the collection of all $\leq$-minimal elements of the restricted order. Thus the induced belief revision plan is defined as the map $E\mapsto \min(\leq \restriction \hspace{-0.2em} E)$: given new evidence $E$, the agent's strongest accepted proposition is the disjunction of all $\leq$-minimal states consistent with the evidence. 

Many qualitative revision operators admit characterisations in terms of order-based revisions\textemdash not least among which are AGM operators (involving total preorders) as well as Lin and Kelly's preferred Shoham-driven revision (involving partial orders). It is natural to ask if one could give such an order-based characterisation for the class $\tau$-generated revision operators: that is, whether each map taking $E$ to $\tau(\mu_{E})$ can be described as an operation of the form $E\mapsto\min(\leq\restriction  \hspace{-0.2em} E)$, for an appropriate choice of a binary relation $\leq$. Can probabilistically stable belief revision be given order-based semantics? Unfortunately, it is easy to see that this cannot be done:  

\begin{ex}\label{ExampleofNon-rationalizability}
Let $t=1/2$, $\Omega =\{\omega_{1}, \omega_{2}, \omega_{3}\}$ with $\mu$ given by $\mu(\omega_{1}) = 0.4$, $\mu(\omega_{2}) = 0.35$  and $\mu(\omega_{3}) = 0.25$. Let $E:=\{\omega_{1},\omega_{2}\}$. Note that $\tau(\mu)=E$: the hypothesis $E$ is already accepted by the agent. Now suppose the agent now learns $E$ with certainty: we have the updated probabilities given by $\mu_{E}(\omega_{1}) = 0.5\bar{3}$, $\mu_{E}(\omega_{2}) = 0.4\bar{6},$  and $\mu_{E}(\omega_{3}) = 0$. The new belief set is given by $\{\omega_{1}\}\neq E$. But any relation $\leq$ on $\Omega$ for which $\min(\leq)= E$ will also satisfy $\min(\leq\restriction  \hspace{-0.2em} E) = E$, and so the order-based revision by $E$ will not change the belief set. 
\end{ex}

What this example illustrates is that $\tau$-generated revision cannot be \emph{tracked} using order-based revision: that is, there is no way to translate each probability measure $\mu$ into an order $\leq$ on $\Omega$ so that each operation $\min(\leq)\mapsto\min(\leq\restriction  \hspace{-0.2em} E)$ coincides with the revision $\tau(\mu)\mapsto\tau(\mu_{E})$. 

A well-known generalisation of order-based semantics, stemming from the logic of conditionals \citep{STA}, consists in employing \emph{selection structures} (or \emph{selection function semantics}). In our semantic setting, a selection structure is a structure $(\Omega, \mathfrak{A}, \sigma)$ where $(\Omega, \mathfrak{A})$ is a set-algebra of propositions, equipped with a map $\sigma:\mathfrak{A}\rightarrow\mathfrak{A}$ called a \emph{selection} function. To each selection structure $\mathfrak{M}=(\Omega, \mathfrak{A}, \sigma)$ corresponds a consequence relation $\nc_{\sigma}$ on propositions, defined as:
\begin{center}
$E\nc_{\sigma}H$ if and only if $\sigma(E)\subseteq H$.
\end{center}
It is common to deal with selection functions that satisfy $\sigma(E)\subseteq E$ for any $E\in\mathfrak{A}$. With this in mind, the idea of the above semantics is simply that $E$ defeasibly entails $H$ whenever all the \emph{selected} $E$-worlds are $H$-worlds (rather than requiring that \emph{all} $E$-worlds be $H$-worlds, as classical entailment does). This is a strict generalisation of the order-minimisation semantics, in that we do not assume here that the selected worlds are those that are minimal under some underlying order. 

Of course, if we simply interpret the selected subset $\sigma(E)$ as the strongest accepted proposition upon learning $E$, a selection function is nothing but a belief revision plan; and the consequence generated from $\sigma$ is nothing but a consequence relation obtained via a Ramsey test for conditionals. 

Now, identifying the class of probabilistically stable revision plans simply amounts to identifying those selection structures that are probabilistically representable, in the following sense:

\begin{mydef}[\textbf{Representable selection structures}]
Let $(\Omega, \mathfrak{A}, \sigma)$ a selection structure. Given a threshold $t\in[0.5,1)$, we say that a measure $\mu$ on $\mathfrak{A}$ \emph{represents} $\sigma$ for threshold $t$ if and only if
\begin{align*}
\forall E\in\mathfrak{A}, \,\,\, &\sigma(E)=\tau_{t}(\mu_{E})\text{ if }\mu(E)\neq0,\\
\text{and }  &\sigma(E)=\emptyset \hspace{2.1em}\text{ if }\mu(E)=0.
\end{align*}
A selection structure is \emph{$t$-representable} if there exists a measure $\mu$ on $\mathfrak{A}$ that represents $\sigma$ for threshold $t$. 
\end{mydef}

 Recall our earlier notation: $\sigma_{\mu,t}$ denotes the revision plan induced from the measure $\mu$ by the stability rule with threshold $t$. Thus a structure $(\Omega, \mathfrak{A}, \sigma)$ is $t$-representable exactly if $\sigma=\sigma_{\mu,t}$ for some probability measure $\mu$. The representable selection structures are those for which $\sigma(E)$ represents the strongest probabilistically stable proposition after conditioning on $E$: we can say that $\sigma$ is a \emph{strongest-stable-set operator}. 

Our main task is to identify exactly the probabilistically stable revision plans. We may formulate this straightforwardly as a representation problem. The representation problem for probabilistically stable revision consists in imposing the right axioms on a selection function $\sigma$ so that it is representable as a probabilistically stable revision plan. In other words: we want to find some informative necessary and sufficient conditions that guarantee the representability of a selection function as a strongest-stable-set operator. 

Of course, in order for this to count as a informative characterisation, we want a reasonably `qualitative' axiomatisation of the resulting revision operators that does not explicitly appeal to probabilistic notions. This is a desideratum echoed by Lin and Kelly: 

\begin{quote}
It is easy to achieve perfect tracking: just define [\emph{the revised belief set to be $\alpha(\mu_{E})$, where $\alpha$ is a chosen acceptance rule}]. To avoid triviality, one must specify what would count as a propositional approach to belief revision that does not essentially peek at probabilities to decide what to do. \cite[p. 963]{LIK}
\end{quote}

In the remainder of this section, we will build our way towards a representation result for probabilistically stable revision. We begin by providing a geometric characterisation of probabilistically stable revision, which establishes that each probabilistically stable revision plan (equivalently, each $\tau$-generated consequence relation) can be uniquely identified by a specific system of linear inequalities. This allows to identify certain important properties that the selection function $\sigma$ must satisfy in order to be probabilistically representable. Next, we focus on a representation for the special case of Leitgeb's rule with strict threshold $t=1/2$. This case is a natural choice from the logical side, since it can be seen as the most `qualitative' version of stability-based acceptance (as advocated by \cite{LEI3}). We will see that the representation problem for the case $t=1/2$ bears a close connection with the theory of comparative probability orders. We will then exploit this connection to prove our main results in Section 4. There, we will first prove a representation theorem for $t=1/2$, after which we provide a general representation result for all rational threshold values: this constitutes our most general characterisation of probabilistically stable belief revision operators.

\subsection{The geometry of probabilistically stable revision.} \label{The geometry of probabilistically stable revision}

In what follows we work with probability spaces of the form $(\Omega, \mathcal{P}(\Omega),\mu)$, with $\Omega=\{\omega_{1},\dots,\omega_{n}\}$ a finite sample space. We fix a stability threshold $t\in[0.5,1)$. We provide a geometric characterisation of the representation problem, which helps isolate relevant structural properties of selection functions.    

Consider the simplex $\Delta^{n-1}$ of all probability distributions over $\Omega$, conceived of as the set of vectors $\{(x_1,\dots,x_n)\,|\, x_i\geq 0, \sum^{n}_{i=1} x_{i}=1\}$, each measure $\mu$ corresponding to the vector $x_i=\mu(\omega_{i})$. Say that two measures are equivalent if they generate the same revision plan:

\begin{mydef}[\textbf{Revision-equivalent measures}]
    Two probability measures $\mu$, $\rho\in\Delta^{n-1}$ are revision-equivalent (for threshold $t$), written $\mu\sim \rho$, if and only if $\sigma_{\mu,t}= \sigma_{\rho,t}$: equivalently, if and only if $\nc_{\mu}=\nc_{\rho}.$
\end{mydef}

Note that identifying the probabilistically stable revision plan generated by a probability measure is tantamount to identifying the consequence relation generated by that measure.\footnote{This follows from the general observation that, in our setting of finite algebras of propositions, we can evidently identify a selection function $\sigma$ with its consequence relation $\nc_{\sigma}$. First, two different selection functions manifestly give to different consequence relations: if $\sigma_{1}(A)\neq \sigma_{2}(A)$\textemdash without loss of generality, assume $\sigma_1(A)\nsubseteq\sigma_{2}(A)$ \textemdash then $A \not\nc_{\sigma_{1}} \sigma_{2}(A)$, while manifestly $A\nc_{\sigma_{2}}\sigma_{2}(A).$ In the other direction, it is enough to observe that, given a consequence relation $\nc$ generated by a selection function $\sigma$, we always have $\sigma(X)= \bigcap \{Y\in\mathfrak{A}\,|\, X\nc Y\}$. Since $\mathfrak{A}$ is a finite algebra, this event is in $\mathfrak{A}$. We have $A\nc \sigma(A)$ and so $\bigcap \{Y\in\mathfrak{A}\,|\, A\nc Y\}\subseteq \sigma(A)$. In the other direction, by definition $A\nc Y $ entails $\sigma(A)\subseteq Y$ so $\sigma(A)\subseteq \bigcap \{Y\in\mathfrak{A}\,|\, A\nc Y\}$. Thus two selection functions that give rise to the same consequence relation must be equal.} So switching between talk of revision operators $\sigma_{\mu}$ and talk of stability-generated consequence relations $\nc_{\mu}$ is an innocuous, and at times helpful, shift in perspective. 

Now, what information do we need to identify the revision plan generated by a probability measure? Since we want each selection function $\sigma$ to represent one particular revision plan (equivalently, consequence relation), this means that we want axioms for selection functions such that, for any $\sigma$ satisfying these axioms, the set of distributions in $\Delta^{n-1}$ representing $\sigma$ is an equivalence class of $\sim$. What do those equivalence classes look like, and how do we find the selection functions that pick out exactly those classes?
We first need a criterion for identifying when two distributions generate the same revision plan (i.e., when two distributions generate the same consequence relations). We begin with the following observation: 

\begin{prop}\label{prop:stab1}
Let $(\Omega, \mathcal{P}(\Omega),\mu)$ a finite probability space, and $t\in[0.5,1)$ a threshold. Then for any $A$ such that $\mu(A)>0$, we have that 
$\tau(\mu_{A})=B$ if and only if the following hold:
\begin{itemize}
\item[(i)] $\forall \omega\in B$, $\mu(\omega)> \frac{t}{1-t}\cdot\mu(A\setminus B)$
\item[(ii)] $\forall X\subset B$, $\exists \omega\in X$, $\mu(\omega)\leq \frac{t}{1-t}\cdot\mu(A\setminus X)$.
\end{itemize}
\end{prop}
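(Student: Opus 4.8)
The plan is to pass from $\mu$ to the conditional measure $\nu:=\mu_A$ (which lives on $A$) and then to extract a single structural lemma identifying the ``worst-case'' test event against a candidate set. First I would reduce: since $A$ is itself $(\mu_A,t)$-stable we have $\tau(\mu_A)\subseteq A$, so I restrict attention to $B\subseteq A$ (the case $B\not\subseteq A$, and the trivial case $B=\emptyset$, are read out of the statement: $\tau(\mu_A)$ is always a nonempty subset of $A$, and in (ii) one takes $X$ over nonempty proper subsets). Conditions (i) and (ii) only involve $\mu$ through its values on subsets of $A$, and scaling all those values by $1/\mu(A)$ turns them into exactly the same conditions for $\nu$; similarly, $(\nu,t)$-stability of $B\subseteq A$ only mentions $\nu$-probabilities, so the test events $E$ in the definition of stability may be taken inside $A$. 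I would also record the elementary rewrite, valid because $t\in[\tfrac12,1)$ and hence $\tfrac{t}{1-t}\in[1,\infty)$: for $E\subseteq A$, the inequality $\nu(B\mid E)>t$ is equivalent to $\nu(B\cap E)>\tfrac{t}{1-t}\,\nu(E\setminus B)$.

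The heart of the argument is the claim that, for nonempty $B\subseteq A$, the set $B$ is $(\nu,t)$-stable with every atom of $B$ being $\nu$-positive \emph{if and only if} $\nu(\omega)>\tfrac{t}{1-t}\,\nu(A\setminus B)$ for all $\omega\in B$ --- i.e.\ condition (i) written for $\nu$. For the ``if'' direction, this inequality forces positivity of the atoms of $B$ (the right-hand side is $\ge 0$), and for any $E$ with $B\cap E\neq\emptyset$ and $\nu(E)>0$, choosing $\omega_0\in B\cap E$ gives $\nu(B\cap E)\ge\nu(\omega_0)>\tfrac{t}{1-t}\nu(A\setminus B)\ge\tfrac{t}{1-t}\nu(E\setminus B)$, which is exactly $\nu(B\mid E)>t$. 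For ``only if'', I would feed the definition of stability the single test event $E_\omega:=\{\omega\}\cup(A\setminus B)$ for each $\omega\in B$: here $B\cap E_\omega=\{\omega\}$ and $E_\omega\setminus B=A\setminus B$, so $\nu(B\mid E_\omega)>t$ unpacks precisely to $\nu(\omega)>\tfrac{t}{1-t}\nu(A\setminus B)$. The one wrinkle is the degenerate case $\nu(A\setminus B)=0$, in which $E_\omega$ may be null; but then $\nu(B)=1$, so $B$ is automatically $(\nu,t)$-stable and (i) collapses to positivity of $B$'s atoms, and the stated equivalence still holds. The moral, which also explains the linear-inequality picture of the previous subsection, is that the hardest evidence for $B$ to survive sets one atom of $B$ against the whole of $A\setminus B$.

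Two more ingredients finish it. First, a $\subseteq$-minimal nonempty $(\nu,t)$-stable set has no null atoms: if $S$ is nonempty and stable and $\omega_0\in S$ has $\nu(\omega_0)=0$, then $S\neq\{\omega_0\}$ (a null singleton is not stable, since conditioning on any larger positive-measure subset of $A$ drops its probability to $0$) and $S\setminus\{\omega_0\}$ is again nonempty and $(\nu,t)$-stable, so $S$ is not minimal; hence $\tau(\nu)$ has all atoms $\nu$-positive. Second, by the recalled result of Leitgeb the $(\nu,t)$-stable sets form a chain under $\subseteq$ with least element $\tau(\nu)$, so a nonempty stable $B$ equals $\tau(\nu)$ iff no nonempty proper subset of $B$ is stable. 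Now assemble: if $\tau(\nu)=B$, then $B$ is stable with positive atoms, so the lemma yields (i), while each nonempty $X\subsetneq B$ fails stability (minimality) yet has no null atoms (being $\subseteq B$), so the contrapositive of the lemma produces some $\omega\in X$ with $\nu(\omega)\le\tfrac{t}{1-t}\nu(A\setminus X)$, which is (ii). Conversely, given (i) and (ii): (i) makes $B$ stable (with positive atoms) by the lemma, and for each nonempty $X\subsetneq B$, (ii) together with the lemma (again $X$ has no null atoms) shows $X$ is not stable; hence $B$ is the $\subseteq$-least stable set, i.e.\ $B=\tau(\nu)$. Undoing the homogeneity reduction restores the statement with $\mu$ and $A$.

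The only step with genuine content is the defeater lemma --- in particular, verifying that the single event $E_\omega$ is extremal, and correctly handling the degenerate sub-case $\nu(A\setminus B)=0$ (where stability comes for free but the positivity clause still does work). Everything else is short: the full-support fact is a one-line deletion argument, the chain structure of the stable family is quoted from Leitgeb, and the reduction to $\nu=\mu_A$ is routine.
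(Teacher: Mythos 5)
Your proof is correct and takes essentially the same route as the paper's: the paper's (deliberately sketchy) argument rests on exactly your ``defeater'' characterisation of stability, with the extremal test event $\{\omega\}\cup(A\setminus B)$ translating $\mu_A(\omega\mid B^{c}\cup\{\omega\})\leq t$ into the inequality $\mu(\omega)\leq\frac{t}{1-t}\mu(A\setminus B)$, and then reads (i) as ``no defeater in $B$'' and (ii) as ``every proper subset has a defeater.'' The only difference is that you prove in full what the paper cites from Leitgeb, and you make explicit the details its sketch leaves implicit (restricting test events to $A$, the null-atom/full-support point, and the $B\subseteq A$, $B\neq\emptyset$, nonempty-$X$ reading of the statement).
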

\begin{proof}
We sketch it only, as it easily follows from the definition of stability (a proof is implicit in Leitgeb's discussion of an algorithm for generating stable sets in \cite[p.1363]{LEI}). The key fact is that a proposition $X$ is $(\mu,t)$-stable exactly when there are no \emph{defeater states}\textemdash that is, $\omega\in X$ such that $\mu(\omega\,|\,X^{c}\cup\{\omega\})\leq t$. Then we use the fact that $\mu(\omega)\leq \frac{t}{1-t}\cdot\mu(A\setminus X)$ is equivalent to $\mu_{A}(\omega\,|\,X^{c}\cup\{\omega\})\leq t$. Condition (i) says that no $\omega\in B$ is a defeater with respect to $A$: so $B$ is indeed $\mu_{A}$-stable. Condition (ii) says that each strict subset of $B$ has at least one defeater with respect to $A$: so no strictly smaller subset is $\mu_{A}$-stable. The two conditions together mean that $B$ is indeed the $\subseteq$-least $\mu_{A}$-stable set.
\end{proof}

Note that either side of the biconditional entails $B\subseteq A$, so we can rewrite Proposition \ref{prop:stab1} in the following equivalent form: 
\begin{center}
$\tau(\mu_{A})=B$ if and only if $B\subseteq A$ and 
\begin{enumerate}[align=left]
\item[(i)] \hspace{0.3em}$\forall \omega\in B$, $\mu(\omega)> \frac{t}{1-t}\cdot\mu(A\setminus B)$;
\item[(ii)] $\forall X\subset B$, $\exists \omega\in X$, $\mu(\omega)\leq \frac{t}{1-t}\cdot\mu(A\setminus X)$.
\end{enumerate}
\end{center}

This very elementary observation points to an important fact about the behaviour of the $\tau$-rule. In essence, Proposition \ref{prop:stab1} reduces the problem of determining each selected subset $\sigma_{\mu,t}(A)=\tau(\mu_{A})$ to that of comparing inequalities of the form $\mu(\omega)>\frac{t}{1-t}\cdot\mu(X)$ and $\mu(\omega)\leq\frac{t}{1-t}\cdot\mu(X)$ for certain sets $X$. This allows to visualise each $\sim$-equivalence class as a certain convex polytope in $\Delta^{n-1}$. Define the following:

\begin{mydef}[\textbf{Fixed-odds hyperplanes}]
Let $\mathcal{H}_{n}$ be the collection of hyperplanes in $\mathbb{R}^{n}$ defined by equations of the form 
$$
x_{i}=\frac{t}{1-t}\Big(\sum_{x_{j}\in X} x_{j}\Big),  \text{  where $X$ is a set of variables such that $x_{i}\not\in X$}.
$$
We call $\mathcal{H}_{n}$ the collection of \emph{fixed-odds hyperplanes}. 
\end{mydef}

We are interested in the way that $\mathcal{H}_{n}$ partitions the probability simplex $\Delta^{n-1}$. For each $\mu\in\Delta^{n-1}$, it is enough to look at equalities of the form $\mu(\omega_{i})=\frac{t}{1-t}\cdot\mu(X)$ for sets $X\subseteq\Omega$ not containing $\omega_{i}$. We have:

\begin{observation}\label{GEOMETRY}
Let $\mu$, $\rho\in\Delta^{n-1}$. We have $\mu\sim \rho$ if and only if $\mu$ and $\rho$ lie in the same region of the hyperplane arrangement $\mathcal{H}_{n}$ and have the same support.\footnote{That is, $\mu$ and $\rho$ agree on which propositions have measure 0.}
\end{observation}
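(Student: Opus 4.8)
The plan is to prove both directions by relating the equivalence relation $\sim$ to the combinatorial data furnished by Proposition~\ref{prop:stab1}, which says that $\tau(\mu_A)$ is determined entirely by the truth values of the strict inequalities $\mu(\omega_i) > \frac{t}{1-t}\mu(X)$ (for $\omega_i \notin X$) together with the information about which singletons have measure zero. The key insight to exploit is that each such inequality $\mu(\omega_i) > \frac{t}{1-t}\mu(X)$ corresponds exactly to membership of $\mu$ on one side of a fixed-odds hyperplane $H \in \mathcal{H}_n$, and that the arrangement $\mathcal{H}_n$ partitions $\Delta^{n-1}$ into relatively open regions on each of which the sign of every defining linear functional $x_i - \frac{t}{1-t}\sum_{x_j \in X} x_j$ is constant.

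For the ``if'' direction: suppose $\mu$ and $\rho$ lie in the same region of $\mathcal{H}_n$ and have the same support. Then for every $i$ and every $X$ with $\omega_i \notin X$, the inequality $\mu(\omega_i) > \frac{t}{1-t}\mu(X)$ holds iff $\rho(\omega_i) > \frac{t}{1-t}\rho(X)$ (same region), and $\mu(\omega_i) = 0$ iff $\rho(\omega_i) = 0$ (same support); note $\mu(X) = \sum_{\omega_j \in X}\mu(\omega_j)$, so finite additivity lets us phrase everything in terms of the coordinate functionals. Now fix any event $A$ with $\mu(A) > 0$; same support ensures $\rho(A) > 0$ as well. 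Applying the rewritten form of Proposition~\ref{prop:stab1} to $A$ and any candidate $B \subseteq A$: condition (i) is a conjunction of strict inequalities of exactly the hyperplane-arrangement form (taking $X = A \setminus B$, resp. $A \setminus X$), and condition (ii) is a disjunction of their negations. Since $\mu$ and $\rho$ agree on all these inequalities, they satisfy (i) and (ii) for the same $B$; hence $\tau(\mu_A) = \tau(\rho_A)$. As this holds for every $A$ of positive measure (and both assign $\sigma(A) = \emptyset$ on the common set of null events), we get $\sigma_{\mu,t} = \sigma_{\rho,t}$, i.e. $\mu \sim \rho$.

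For the ``only if'' direction: suppose $\mu \sim \rho$. First, same support: if $\mu(\omega_i) = 0$ but $\rho(\omega_i) > 0$, then taking $A = \{\omega_i\}$ gives $\sigma_{\mu,t}(A) = \emptyset \neq \{\omega_i\} = \sigma_{\rho,t}(A)$, contradicting $\sim$; so the supports coincide. Next, same region: suppose toward a contradiction that $\mu$ and $\rho$ lie in different regions of $\mathcal{H}_n$, so there is some hyperplane $H \in \mathcal{H}_n$, say with equation $x_i = \frac{t}{1-t}\sum_{x_j \in X}x_j$, that separates them (strictly, or with one of them on $H$) — concretely, without loss of generality $\mu(\omega_i) > \frac{t}{1-t}\mu(X)$ while $\rho(\omega_i) \le \frac{t}{1-t}\rho(X)$. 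The task is to cook up an event $A$ (and rely on $\sim$) to derive a contradiction. The natural choice is $A = \{\omega_i\} \cup X$, intersected with the common support if needed; by Proposition~\ref{prop:stab1}, whether $\omega_i$ by itself is $\mu_A$-stable, and more generally the exact value of $\tau(\mu_A)$, will differ between $\mu$ and $\rho$ precisely because this one inequality flips. Making this last step fully rigorous — extracting, from a single flipped inequality, an event on which the strongest-stable-set operators genuinely disagree — is the main obstacle, since one must be slightly careful that flipping one inequality is not ``masked'' by other inequalities that determine $\tau(\mu_A)$; the remedy is to choose $A$ minimally (e.g. $A = \{\omega_i\} \cup X$ with $X$ chosen of smallest cardinality among separating hyperplanes, or to argue by considering the $\subseteq$-least stable subset of $A$ directly) so that the flipped inequality is exactly the one deciding membership of $\omega_i$ in $\tau(\mu_A)$. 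Once that is done, $\tau(\mu_A) \neq \tau(\rho_A)$ contradicts $\mu \sim \rho$, completing the proof.
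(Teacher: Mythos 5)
Your ``if'' direction and your treatment of supports are correct and essentially the paper's own argument (you even make explicit the same-region-but-different-support case, which the paper leaves implicit). The problem is the step you yourself flag as ``the main obstacle'': you never actually extract a disagreement of the revision plans from the flipped inequality, and the remedies you sketch (choosing $X$ of smallest cardinality among separating hyperplanes, etc.)\ are aimed at a non-existent difficulty. The missing idea is that there is no ``masking'' to worry about, because for $\omega_i\notin X$ the single strict inequality $\mu(\omega_i)>\frac{t}{1-t}\mu(X)$ is \emph{equivalent} to $\tau(\mu_{X\cup\{\omega_i\}})=\{\omega_i\}$ (this is the singleton instance of Proposition~\ref{prop:stab1}, and is stated explicitly in Section~\ref{SUBSECSEL}): the inequality says exactly that $\mu(\omega_i\,|\,X\cup\{\omega_i\})>t$; since $\{\omega_i\}$ is an atom, the only conditioning events relevant to its stability under $\mu_{X\cup\{\omega_i\}}$ are the $E\ni\omega_i$ with $E\subseteq X\cup\{\omega_i\}$, and among these the conditional probability of $\{\omega_i\}$ is minimised at $E=X\cup\{\omega_i\}$ itself; so the inequality holds iff $\{\omega_i\}$ is $(\mu_{X\cup\{\omega_i\}},t)$-stable, and then, the stable sets being linearly ordered by inclusion, iff it is the strongest one. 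Hence with $A:=X\cup\{\omega_i\}$ the separation $\mu(\omega_i)>\frac{t}{1-t}\mu(X)$, $\rho(\omega_i)\leq\frac{t}{1-t}\rho(X)$ immediately gives $\sigma_{\mu,t}(A)=\{\omega_i\}$ while $\sigma_{\rho,t}(A)\neq\{\omega_i\}$, i.e.\ $A\nc_{\mu}\{\omega_i\}$ but $A\not\nc_{\rho}\{\omega_i\}$, so $\mu\not\sim\rho$. This pointwise equivalence, hyperplane by hyperplane, is exactly how the paper argues, and it is what your write-up is missing; no minimality of $X$ and no global analysis of $\tau(\mu_A)$ is required.

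One residual edge case you should also close: the conclusion $A\not\nc_{\rho}\{\omega_i\}$ presupposes $\rho(A)>0$ (if $\rho(A)=0$ the conditional judgment holds vacuously). But in that case $\mu(A)>0=\rho(A)$, so the supports differ and your support argument already delivers $\mu\not\sim\rho$; alternatively note $A\nc_{\rho}\emptyset$ but $A\not\nc_{\mu}\emptyset$. With that observation and the equivalence above, your outline becomes a complete proof along the same lines as the paper's.
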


\begin{proof}
Suppose $\mu$ and $\rho$ do not lie in the same region of $\mathcal{H}_{n}$. Then there is some hyperplane with equation $x_{i}=\frac{t}{1-t}\Big(\sum_{x_{j}\in X} x_{j}\Big)$, $x_{i}\not\in X$, that separates them. In terms of probabilities, this means that there is some 
$\omega_{i}$ and set $X$ with $\omega_{i}\not\in X$ such that (wlog):
$$
\mu(\omega_{i})>\frac{t}{1-t}\mu(X)\,\,\,\text{ and }\,\,\,\rho(\omega_{i})\leq\frac{t}{1-t}\rho(X)
$$
This entails that $\mu(\omega_{i}\,|\,X\cup\{\omega_{i}\})> t$, which in turn entails that $X\cup\{\omega_{i}\}\nc_{\mu}\{\omega_{i}\}$, while by the same reasoning $X\cup\{\omega_{i}\}\not\nc_{\rho}\{\omega_{i}\}$, which means $\nc_{\mu}\neq\nc_{\rho}$.
Conversely, suppose $\mu$ and $\rho$ lie in the same region of $\mathcal{H}_{n}$. This means that they lie on the same side of every hyperplane in $\mathcal{H}_{n}$. We then have that for every state $\omega_{i}$ and set $X$ such that $\omega_{i}\not\in X$, we have 
\begin{equation}
\mu(\omega_{i})>\frac{t}{1-t}\mu(X)\,\,\,\text{ iff }\,\,\,\rho(\omega_{i})>\frac{t}{1-t}\rho(X)
\end{equation}
This just means that $\mu(\omega_{i}\,|\,X\cup\{\omega_{i}\})> t$ holds if and only if $\rho(\omega_{i}\,|\,X\cup\{\omega_{i}\})> t$. Now let propositions $A$, $B\subseteq\Omega$ such that $A\nc_{\mu}B$. We assume $\mu(A)\neq0$ (otherwise, we immediately have $A\nc_{\rho}B$\textemdash this is because  $\mu$ and $\rho$ have the same support, and therefore if $\mu(A)=0$ entails $\rho(A)=0$). Thus $A\nc_{\mu}B$ means $\tau(\mu_{A})\subseteq B$. Suppose towards a contradiction that $\tau(\rho_{A})\not\subseteq B$. Then, by Proposition \ref{prop:stab1}, either property (i) or (ii) must fail for $\rho$. If (i) fails, then 
$$\exists\omega\in B,\,\,\rho(\omega)\leq\frac{t}{1-t}\cdot\rho(A\setminus B).$$
But, by (1), this means that $\mu(\omega)\leq\frac{t}{1-t}\cdot\mu(A\setminus B)$, and so (i) does not hold for $\mu$, contradicting the fact that $\tau(\mu_{A})\subseteq B$. Similarly, if (ii) fails, then we have
$$
\exists X\subset B,\,\,\forall \omega\in X,\,\, \rho(\omega)> \frac{t}{1-t}\cdot\rho(A\setminus X).
$$
But then (1) entails that, for all $\forall \omega\in X,\,\, \mu(\omega)> \frac{t}{1-t}\cdot\mu(A\setminus X)$, which by Proposition \ref{prop:stab1} again contradicts the fact that $\tau(\mu_{A})\subseteq B$. So both (i) and (ii) hold for $\rho$, and therefore we have $\tau(\rho_{A})\subseteq B$, hence $A\nc_{\rho}B$ as required. This concludes the proof.
\end{proof}

Thus we have characterised the way in which the probabilistically stable revision plans divide the probability simplex: each is determined by hyperplane equations of the form given above.\footnote{Regions of hyperplane arrangements (chambers) do not contain points on the hyperplanes themselves (they are open regions). What of distributions that lie on the hyperplanes? We simply add all distributions $\mu$ satisfying $\mu(\omega_{i})=\frac{t}{1-t}\mu(X)$ to the adjacent regions satisfying  $\mu(\omega_{i})<\frac{t}{1-t}\mu(X)$. Then Observation \ref{GEOMETRY} still holds for this extended notion of `regions' (via the same argument). This takes care of all distributions: we have fully partitioned the simplex by $\sim$-equivalence classes.} That is: 
\begin{center}
\begin{itemize}
\item For each sample space of size $n$, we get a hyperplane arrangement $\mathcal{H}_{n}$, given by $n\times (2^{n-1}-1)$ equations of the form $\mu(\omega)=\frac{t}{1-t}\cdot\mu(X)$ for pairs $(\omega,X)$ such that $\omega\not\in X$.\\
\item Each revision plan (equivalently, each relation $\nc_{\mu}$) can be identified uniquely by checking in which region of $\mathcal{H}_{n}$ the point $\mu$ is.
\end{itemize}
\end{center}
The significance of this characterisation is that it gives us a strategy for finding the right axioms for selection functions. Figure \ref{Fixed-odds lines for Leitgeb's tau-rule} represents an example of a hyperplane arrangement whose cells correspond to a particular probabilistically stable revision plan/consequence relation. 

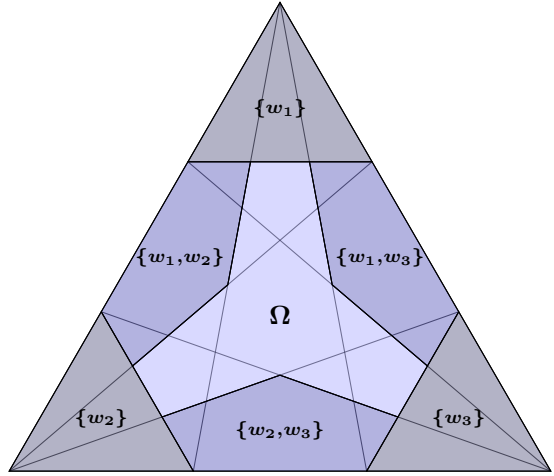
\begin{figure}[h!]
\centering
\begin{tikzpicture}[scale=0.6]
\coordinate (A) at (0:0);
\coordinate (C) at (-60:12cm);
\coordinate (B) at (240:12cm);
\draw (A) -- (C) -- (B) -- cycle;
\path (A) edge[draw opacity=0] (B) [right, pos=0.20] (B) (A|- B);
\draw [name path=lAC, opacity=0.6] (A) -- ($(B)!.66!(C)$);
\draw [name path=lAB, opacity=0.6] (A) -- ($(C)!.66!(B)$);
\draw [name path=lBC, opacity=0.6] (B) -- ($(A)!.66!(C)$);
\draw [name path=lBA, opacity=0.6] (B) -- ($(C)!.66!(A)$);
\draw [name path=lCB, opacity=0.6] (C) -- ($(A)!.66!(B)$);
\draw [name path=lCA, opacity=0.6] (C) -- ($(B)!.66!(A)$);
\draw [name path=tC, opacity=0.6]($(B)!.66!(C)$) -- ($(A)!.66!(C)$);
\draw [name path=tB, opacity=0.6]($(A)!.66!(B)$) -- ($(C)!.66!(B)$);
\draw [name path=tA, opacity=0.6]($(B)!.66!(A)$) -- ($(C)!.66!(A)$);

\path [name intersections={of=lBC and lCB,by=mBC}];
\path [name intersections={of=lAB and lBA,by=mAB}];
\path [name intersections={of=lAC and lCA,by=mAC}];
\path [name intersections={of=lBA and tB,by=tBA}];
\path [name intersections={of=lBC and tB,by=tBC}];
\path [name intersections={of=lCA and tC,by=tCA}];
\path [name intersections={of=lCB and tC,by=tCB}];
\path [name intersections={of=lAB and tA,by=tAB}];
\path [name intersections={of=lAC and tA,by=tAC}];
\coordinate (BC1) at ($(C)!.66!(B)$);
\coordinate (BC2) at ($(B)!.66!(C)$);
\coordinate (AC1) at ($(C)!.66!(A)$);
\coordinate (AC2) at ($(A)!.66!(C)$);
\coordinate (AB1) at ($(B)!.66!(A)$);
\coordinate (AB2) at ($(A)!.66!(B)$);
\coordinate (pT) at (barycentric cs:A=1 ,B=1,C=1);
\coordinate (pAB) at (barycentric cs:AB1=1,tAB=1,mAB=1,tBA=1,AB2=1);
\coordinate (pAC) at (barycentric cs:AC1=1,tAC=1,mAC=1,tCA=1,AC2=1);
\coordinate (pBC) at (barycentric cs:BC1=1,tBC=1,mBC=1,tCB=1,BC2=1);
\coordinate (pA) at (barycentric cs:A=1,tAC=1,tAB=1);
\coordinate (pB) at (barycentric cs:B=1,tBA=1,tBC=1);
\coordinate (pC) at (barycentric cs:C=1,tCB=1,tCA=1);

\draw [fill=blue!50!white, fill opacity=0.3, name path=zT] (tBA) -- (mAB) -- (tAB) -- (tAC) -- (mAC) -- (tCA) -- (tCB) -- (mBC) -- (tBC) -- cycle;

\draw [fill=blue!60!black, fill opacity=0.3, name path=zAvB] ($(B)!.66!(A)$) -- (tAB) -- (mAB) -- (tBA)-- ($(A)!.66!(B)$) -- cycle;
\draw [fill=blue!60!black, fill opacity=0.3, name path=zBvC] ($(C)!.66!(B)$) -- (tBC) -- (mBC) -- (tCB)-- ($(B)!.66!(C)$) -- cycle;
\draw [fill=blue!60!black, fill opacity=0.3, name path=zAvC] ($(C)!.66!(A)$) -- (tAC) -- (mAC) -- (tCA)-- ($(A)!.66!(C)$) -- cycle;

\draw [fill=blue!25!black, fill opacity=0.3, name path=zA] (A) -- ($(C)!.66!(A)$) -- ($(B)!.66!(A)$)-- cycle;
\draw [fill=blue!25!black, fill opacity=0.3, name path=zB] (B) -- ($(C)!.66!(B)$) -- ($(A)!.66!(B)$)-- cycle;
\draw [fill=blue!25!black, fill opacity=0.3, name path=zC] (C) -- ($(B)!.66!(C)$) -- ($(A)!.66!(C)$)-- cycle;

\node at (pT) {$\bm{\Omega}$};
\node at (pAB){$\scriptstyle\bm{\{w_{1},w_{2}\}}$};
\node at (pBC) {$\scriptstyle\bm{\{w_{2},w_{3}\}}$};
\node at (pAC) {$\scriptstyle\bm{\{w_{1},w_{3}\}}$};
\node at (pA) {$\scriptstyle\bm{\{w_{1}\}}$};
\node at (pB) {$\scriptstyle\bm{\{w_{2}\}}$};
\node at (pC) {$\scriptstyle\bm{\{w_{3}\}}$};

\end{tikzpicture}
\caption[Fixed-odds lines for Leitgeb's $\tau$-rule]{Fixed-odds lines for Leitgeb's $\tau$-rule with $t=2/3$ and $|\Omega|=3$. Each cell of the hyperplane arrangement determines a specific consequence relation $\nc$. The colored regions represent \emph{acceptance zones}: they correspond to sets of probability distributions which agree on the unconditional strongest stable set.}\label{Fixed-odds lines for Leitgeb's tau-rule}
\end{figure}

\subsection{Representing selections through systems of linear inequalities}\label{SUBSECSEL} 
Knowing what the revision plan $\sigma_{\mu,t}$ is amounts to knowing, for each pair $(\omega,X)$ with $\omega\not\in X$, which of $\mu (\omega)> \frac{t}{1-t}\cdot\mu(X)$ or $\mu(\omega)\leq\frac{t}{1-t}\cdot\mu(X)$ holds. But this simply corresponds to checking, for each such pair, whether ${\mu(\omega\,|\,X\cup\{\omega\})>t}$ or not. In turn, it is straightforward to see that $\mu(\omega\,|\,X\cup\{\omega\})>t$ if and only if $\tau(\mu_{X\cup\{\omega\}})=\{\omega\}$. That is, $\{\omega\}$ is the strongest stable proposition with respect to the measure $\mu_{X\cup\{\omega\}}$: this fact can simply be written as $\sigma_{\mu, t} (X\cup \{\omega\}) = \{\omega\}$. This immediately suggests the following approach.
\begin{itemize}
\item \textbf{Step I}: First, we want to impose axioms on selection functions which will ensure that the selection function mirrors the behaviour of the $\tau$-rule shown in Proposition \ref{prop:stab1}. That is, our axioms must ensure that $\sigma(A)=B$ holds if and only if $B\subseteq A$ and
\begin{enumerate}[align=left]
\item[(i$^{\prime}$)] $\forall b\in B$, $\sigma((A\setminus B)\cup\{b\})=\{b\}$;
\item[(ii$^{\prime}$)] $\forall X\subset B$, $\exists x\in X$, $\sigma((A\setminus X)\cup\{x\})\neq\{x\}$.
\end{enumerate}

\item \textbf{Step II}: Given $\sigma$, we want to be able to translate all statements of the form $\sigma((A\setminus B)\cup\{b\})=\{b\}$ into a set of linear inequalities that admits a solution in $\Delta^{n-1}$. We construct a system of linear inequalities $L_{\sigma}$ as follows: for each pair $(\omega, X)$ with $\omega\not\in X$,
\begin{itemize}
\item whenever $\sigma(X\cup\{\omega\})=\{\omega\}$, add the constraint $\mu(\omega)> \frac{t}{1-t}\cdot\mu(X)$,
\item otherwise, add the constraint $\mu(\omega)\leq \frac{t}{1-t}\cdot\mu(X)$ .
\end{itemize}
Our axioms need to ensure that the resulting system of linear inequalities $L_{\sigma}$, together with the constraint that $\mu$ lie in the simplex $\Delta^{n-1}$, admits a solution.
\end{itemize}

If these desiderata are satisfied, then we have fully characterised the class of $\tau$-generated revision maps, and therefore the class of consequence relations $\nc_{\mu}$. We can argue as follows: suppose a measure $\mu$ satisfies all linear inequalities in the resulting system $L_{\sigma}$. Then, in particular, whenever $\sigma(A)=B$, the properties (i$^{\prime}$) and (ii$^{\prime}$) hold, and since $\mu$ satisfies the resulting inequalities, the corresponding properties (i) and (ii) from Proposition \ref{prop:stab1} hold as well on the probabilistic side; hence, $\tau(\mu_{A})=B$. Conversely, if $\tau(\mu_{A})=B$, then $B\subseteq A$, and both (i) and (ii) hold. Via our translation, this means that properties (i$^{\prime}$) and (ii$^{\prime}$) hold as well, which yields $\sigma(A)=B$. Thus, if we can show that $L_{\sigma}$ always admits a solution for any $\sigma$ satisfying our axioms, we have guaranteed the existence of some probability measure $\mu$ such that $\sigma_{\mu,t}=\sigma$, and we are done.

Naturally, in order to capture all and only probabilistically stable operators, our axioms also must be sound with respect to our probabilistic interpretation: that is, they must hold of every probabilistically stable revision plan (for a fixed threshold). Here are some axioms for selection functions which are sound in this sense: 
\begin{enumerate}[align=left] \label{SLIST}
\item [\textsf{(S1)}] \hspace{0.5em}$\sigma(X)=\emptyset$ only if $X=\emptyset$;
\item [\textsf{(S2)}] \hspace{0.5em}$\sigma(X)\subseteq X$;
\item [\textsf{(S3)}] \hspace{0.18em} If $\sigma(A)\cap B\neq\emptyset$, then $\sigma(A\cap B)\subseteq \sigma(A)\cap B$;
\item [\textsf{(S4$_{n}$)}] For any $n$: if $\sigma(A\cup X_{i})=X_{i}$ for all $i\leq n$, then $\sigma(A\cup\bigcup_{i\leq n}X_{i})\subseteq \bigcup_{i\leq n}X_{i}$;
\item [\textsf{(S5)}] \hspace{0.4em}For any $A, B, C$ which are pairwise disjoint, if $\sigma(A\cup B)\subseteq A$ and $\sigma(B\cup C)\subseteq B$, then\vspace{-0.4em}
\item[]  \hspace{0.5em}$\sigma(A\cup C)\subseteq A$.
\end{enumerate}

Axiom \textsf{(S1)} may look surprising here, since at first sight nothing excludes the possibility of `negligible' propositions, i.e. nonempty propositions $X$ such that $\sigma(X)=\emptyset$. By imposing this axiom, in effect we restrict attention to all selection functions that are probabilistically representable by \emph{regular measures}: i.e. measures for which only the empty event has probability zero. This restriction is not substantial, but a mere technical convenience: for any representable selection function, we can exclude all propositions with $\sigma(X)=\emptyset$ from the domain of $\sigma$, restrict attention to what $\sigma$ does on this restricted domain, represent it via a \emph{regular} probability measure $\mu$ on this smaller algebra, and then extent the measure assigning all these `negligible' propositions probability zero.\footnote{On the axiomatic side, one simple way of extending our results to the case of non-regular measures is to (1) add an additional axiom stating the family of negligible subsets $\{X\in \mathcal{P}(\Omega)\,|\,\sigma(X)=\emptyset\}$ forms an ideal: i.e., it is closed under subsets and unions, and (2) restrict all the axioms from our representation result to non-neglibible propositions. The restriction to the regular case allows us to avoid the additional bookkeeping without losing any generality.} So, without loss of generality, we can focus on representing probabilistically stable revision operators for regular measures $\mu$. 

We omit verifications for soundness as they are straightforward. Note that the axioms \textsf{(S2)} and \textsf{(S3)} suffice for $\sigma$ to validate (\textsf{Ref}) and (\textsf{RM}), respectively. The soundness of \textsf{(S2)} is immediate, and that of \textsf{(S3)} essentially amounts to our earlier verification of (\textsf{RM}). Now, the axioms \textsf{(S1)-(S4)} suffice for selection functions to mirror Proposition \ref{prop:stab1} on the qualitative side:

\begin{prop}\label{prop:stab2}
If a selection function $\sigma:\mathfrak{A}\rightarrow\mathfrak{A}$ satisfies \emph{\textsf{(S1)-(S4)}}, then $\sigma(A)=B$ if and only if $B\subseteq A$ and 
\begin{itemize}
\item[(i$^{\prime}$)] $\forall b\in B$, $\sigma((A\setminus B)\cup\{b\})=\{b\}$;
\item[(ii$^{\prime}$)] $\forall X\subset B$, $\exists x\in X$, $\sigma((A\setminus X)\cup\{x\})\neq\{x\}$.
\end{itemize}
\end{prop}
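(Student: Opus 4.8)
The plan is to prove the two directions of the biconditional separately, using only \textsf{(S3)} (the selection-function form of Rational Monotonicity) and \textsf{(S4$_n$)} as the workhorses, with \textsf{(S1)}--\textsf{(S2)} for nonemptiness bookkeeping. Throughout I read (ii$^{\prime}$) --- as with (ii) in Proposition~\ref{prop:stab1} --- as quantifying over \emph{nonempty} proper subsets $X\subsetneq B$, and I treat the nondegenerate case $A\neq\emptyset$ (so $\sigma(A)\neq\emptyset$ by \textsf{(S1)}, matching the $\mu(A)>0$ proviso of Proposition~\ref{prop:stab1}); the case $A=\emptyset$ is immediate from \textsf{(S1)}--\textsf{(S2)}.

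For the forward direction I would assume $\sigma(A)=B$. First, $B\subseteq A$ is immediate from \textsf{(S2)}. To get (i$^{\prime}$), fix $b\in B$ and put $C:=(A\setminus B)\cup\{b\}$; since $b\in B\subseteq A$ one has $C\subseteq A$, hence $A\cap C=C$, and an elementary computation gives $\sigma(A)\cap C=B\cap C=\{b\}\neq\emptyset$. Applying \textsf{(S3)} to $A$ and $C$ then yields $\sigma(C)\subseteq\{b\}$, and \textsf{(S1)} promotes this to $\sigma(C)=\{b\}$ because $b\in C$. For (ii$^{\prime}$) I would argue by contradiction: if some nonempty $X\subsetneq B$ had $\sigma((A\setminus X)\cup\{x\})=\{x\}$ for every $x\in X$, then applying \textsf{(S4$_n$)} with $n=|X|$, base set $A\setminus X$, and singleton family $\{\{x\}\}_{x\in X}$ would give $\sigma\big((A\setminus X)\cup\bigcup_{x\in X}\{x\}\big)\subseteq X$; but $(A\setminus X)\cup\bigcup_{x\in X}\{x\}=A$ since $X\subseteq A$, so $\sigma(A)\subseteq X\subsetneq B$, contradicting $\sigma(A)=B$.

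For the converse I would assume $B\subseteq A$ together with (i$^{\prime}$) and (ii$^{\prime}$). I would first obtain $\sigma(A)\subseteq B$ by running \textsf{(S4$_n$)} positively: with $n=|B|$, base set $A\setminus B$, and singleton family $\{\{b\}\}_{b\in B}$, the hypotheses $\sigma((A\setminus B)\cup\{b\})=\{b\}$ are exactly (i$^{\prime}$), and the conclusion is $\sigma(A)=\sigma((A\setminus B)\cup B)\subseteq B$. To finish I would rule out $\sigma(A)\subsetneq B$: if it held, set $X:=\sigma(A)$, which is nonempty by \textsf{(S1)} and a proper subset of $B$ by supposition; by (ii$^{\prime}$) there is $x\in X$ with $\sigma((A\setminus X)\cup\{x\})\neq\{x\}$, whereas taking $C:=(A\setminus X)\cup\{x\}$ --- which satisfies $C\subseteq A$ and $\sigma(A)\cap C=X\cap C=\{x\}\neq\emptyset$ --- \textsf{(S3)} and \textsf{(S1)} force $\sigma(C)=\{x\}$, a contradiction. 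Hence $\sigma(A)=B$.

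I do not expect a genuine obstacle: the axioms \textsf{(S1)}--\textsf{(S4)} were designed precisely to mirror, clause by clause, the probabilistic characterisation of Proposition~\ref{prop:stab1}, and the proof is essentially its qualitative transcription via the equivalence ``$\mu(\omega\mid X\cup\{\omega\})>t$ iff $\tau(\mu_{X\cup\{\omega\}})=\{\omega\}$''. The only points requiring care are the handful of elementary set identities ($A\cap C=C$, $B\cap C=\{b\}$, $X\cap C=\{x\}$, $(A\setminus X)\cup\bigcup_{x}\{x\}=A$) and choosing the correct base set and index family in each application of \textsf{(S4$_n$)}, together with the bookkeeping around the degenerate cases and the intended reading of the quantifier in (ii$^{\prime}$).
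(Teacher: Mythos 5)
Your proof is correct and follows essentially the same route as the paper's: \textsf{(S2)} for $B\subseteq A$, \textsf{(S3)} plus \textsf{(S1)} for (i$^{\prime}$) and for the "$\sigma(A)=S$ forces $\sigma((A\setminus S)\cup\{s\})=\{s\}$" step in the converse, and \textsf{(S4$_n$)} applied to the singleton family over the base set $A\setminus X$ (resp.\ $A\setminus B$) for (ii$^{\prime}$) and for $\sigma(A)\subseteq B$. Your explicit handling of the degenerate case $A=\emptyset$ and of the nonempty-subset reading of (ii$^{\prime}$) only makes explicit what the paper leaves implicit.
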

\begin{proof}
Suppose $\sigma(A)=B$. Then $B\subseteq A$ because of \textsf{(S2)}. If $A=\emptyset$, then by \textsf{(S1)} so is $B$, and then (i$^{\prime}$)
and (ii$^{\prime}$) hold vacuously as $B$ contains neither elements nor strict subsets. 
So we can assume $A$, $B \neq\emptyset$. We show (i$^{\prime}$).
Take $b\in B$.
Now we have $\sigma(A)\cap\big((A\setminus B) \cup \{b\}\big)\neq\emptyset$, since it is equal to $B \cap\big((A\setminus B) \cup \{b\}\big) = \{b\}$. So by \textsf{(S3)}, we must have 
\begin{align*}
\sigma\Big(A\cap \big((A\setminus B)\cup\{b\}\big)\Big)&\subseteq \sigma(A)\cap((A\setminus B) \cup \{b\}) \\
\sigma\Big((A\setminus B)\cup\{b\}\Big) &\subseteq B\,\,\,\cap((A\setminus B) \cup \{b\}),\\
\text{ so  }\hspace{2em} \sigma\Big((A\setminus B)\cup\{b\}\Big) & \subseteq\{b\}.
\end{align*}
Now by \textsf{(S1)}, $\sigma\big((A\setminus B)\cup\{b\}\big)\neq\emptyset$, so it is equal to $\{b\}$, as desired. 

To show that (ii$^{\prime}$) holds, proceed by contradiction. Suppose that $\exists X\subset B$, $\forall x\in X$, $\sigma((A\setminus X)\cup \{x\})=x$. $X$ is a finite set, so write $X=\{x_{1},\dots, x_{k}\}$. We then have $\forall i\leq k$, $\sigma((A\setminus X)\cup \{x_{i}\})=\{x_{i}\}$. So, by \textsf{(S4$_{k}$)}, we have that 
$$\sigma((A\setminus X)\cup \bigcup_{i\leq k}\{x_{i}\})\subseteq \bigcup_{i\leq k}\{x_{i}\}.$$ 
The left-hand side then is simply $\sigma((A\setminus X)\cup X)=\sigma(A)$(by \textsf{(S2)} we have $B\subseteq A$, hence $X\subset A$), while the right-hand side is $X$. So we have $\sigma(A)\subseteq X$. But we know $\sigma(A)=B$ and $X\subset B$, which contradicts $\sigma(A)\subseteq X$. So (ii$^{\prime}$) holds after all.

For the other direction, suppose (i$^{\prime}$) and (ii$^{\prime}$) hold. We show this entails $\sigma(A)=B$. First, list elements $b_{1},...,b_{k}$ of $B$ and use \textsf{(S4$_{k}$)} as before to conclude $\sigma(A\cup B)\subseteq B$, and since $B\subseteq A$ we have $\sigma(A)\subseteq B$.
Now assume $B\not\subseteq\sigma(A)$.  Since we already know $\sigma(A)\subseteq B$, this means $\sigma(A)\subset B$. By (ii$^{\prime}$), we have that 
$$
\exists a\in\sigma(A),\,\,\,\,\sigma\Big((A\setminus \sigma(A))\cup\{a\}\Big)\neq\{a\}
$$
But this is impossible, since\textemdash  as we have just shown above using \textsf{(S3)}\textemdash in general $\sigma(A)=S$ entails $\forall s\in S$, $\sigma((A\setminus S)\cup\{s\})=\{s\}$. So $B\subseteq\sigma(A)$ after all, and since we know $\sigma(A)\subseteq B$ we can conclude  $\sigma(A)=B$.
\end{proof}

Thus axioms \textsf{(S1)-(S4)} suffice to complete Step I of our plan above: we have sound axioms for selection functions which ensure the key properties (i$^{\prime}$) and (ii$^{\prime}$). The only remaining task is that of carrying out Step II: that is, imposing axioms on $\sigma$ that will ensure that the solvability of the resulting system of linear inequalities $L_{\sigma}$ obtained from $\sigma$. The question of ensuring the consistency of $L_{\sigma}$ is more involved. It can be seen as a special and somewhat more intricate case of a representation problem for comparative probability orders. To build our way towards a solution, we first consider the particular case for a threshold $t=1/2$, before addressing representability for different thresholds.

\subsection{Connection with comparative probability orders}\label{Connection with comparative probability orders}

The next step involves a direct connection to the theory of comparative probability. From now on, we deal with the case of the $\tau$-rule with threshold $t=1/2$. To make the comparison with comparative probability salient, we employ here a more suggestive notation. Define the relation $\succ_{\sigma}$ between states $\omega\in\Omega$ and sets $X\subseteq\Omega$ as follows: for each pair $(\omega,X)$ with $\omega\not\in X$, let 
\begin{center}
$\omega\succ_{\sigma} X$ if and only if $\sigma(X\cup\{\omega\})=\omega$.
\end{center}
(When the selection function $\sigma$ is clear form the context, we will omit the subscript). 
We are given a system of inequalities as follows: for each pair $(\omega,X)$, we have that either $\omega\succ_{\sigma} X$ or $\neg(\omega\succ_{\sigma} X)$. Each expression $\omega\succ_{\sigma} X$ translates into the constraint $\mu(\omega)>\mu(X)$, while each expression $\neg(\omega\succ_{\sigma} X)$ translates into $\mu(\omega)\leq\mu(X)$. The question is: what axioms do we need to impose on $\succ_{\sigma}$ for it to be probabilistically representable, in the sense of guaranteeing the existence of a probability measure meeting these constraints?

The problem we have to solve is one of representation of a partial comparative probability order. In the theory of comparative probability, one usually starts with a full ordering $\preceq$ on an algebra of events $\mathcal{P}(\Omega)$, and the task consists in finding a probability measure which represents $\preceq$. We say that a measure $\mu$ represents $\preceq$ if and only if, for any $A, B\subseteq\Omega$,
$$
A\preceq B\Leftrightarrow \mu(A)\leq\mu(B).
$$ 
Under what circumstances is such an order representable? This is a question going back to de Finetti \cite{DEF1}. One of the most important classical results in the early theory of comparative probability is a general answer to this question. 

\begin{theorem}[\cite{KPS, SCO}]\label{Scott's Theorem}
Let $(\Omega, \mathcal{P}(\Omega))$ a finite set algebra, and $\preceq$ a reflexive total order on $\mathcal{P}(\Omega)$. There is a probability measure $\mu$ on $(\Omega, \mathcal{P}(\Omega))$ representing $\preceq$ if and only if the following hold for all $A$, $B$, $C\subseteq\Omega$:
\begin{itemize}
\item[\emph{\textsf{(Q1)}}] $\Omega\not\preceq\emptyset$;
\item[\emph{\textsf{(Q2)}}] $\emptyset \preceq A$;
\item[\emph{\textsf{(Q3)}}] If $(A\cup B)\cap C = \emptyset$, then $(A\preceq B \Leftrightarrow A\cup C \preceq B \cup C)$;
\item [\emph{\textsf{(Q4)}}] If $(A_{i})_{i\leq n}$ and $(B_{i})_{i\leq n}$ are balanced sequences and $\forall i< n$, $A_{i}\preceq B_{i}$, then $A_{n}\succeq B_{n}$.
\end{itemize}
\end{theorem}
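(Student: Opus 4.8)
This is the classical representability theorem (Kraft--Pratt--Seidenberg, Scott), and I would prove it by Scott's separating-hyperplane method. Throughout, I read a pair of sequences $(A_i)_{i\le n}$, $(B_i)_{i\le n}$ as \emph{balanced} when $\sum_{i\le n}\ind{A_i}=\sum_{i\le n}\ind{B_i}$ as functions on $\Omega$, and I identify a prospective measure $\mu$ with its vector $(\mu_i)_{i\le n}$ of singleton masses, extended additively so that $\mu(B)-\mu(A)=\langle\ind{B}-\ind{A},\mu\rangle$. The necessity direction is routine and I would dispatch it first: for any probability measure, \textsf{(Q1)} and \textsf{(Q2)} are immediate, \textsf{(Q3)} is just finite additivity, and \textsf{(Q4)} holds because balancedness gives $\sum_{i\le n}\mu(A_i)=\sum_{i\le n}\mu(B_i)$, so $\mu(A_i)\le\mu(B_i)$ for $i<n$ forces $\mu(A_n)\ge\mu(B_n)$, i.e.\ $A_n\succeq B_n$.

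For sufficiency, assume \textsf{(Q1)}--\textsf{(Q4)}. My first move would be to reformulate the goal: it is enough to produce some $\mu\in\mathbb{R}^n$ with $\langle\ind{B}-\ind{A},\mu\rangle\ge0$ for every weak comparison $A\preceq B$ and $\langle\ind{B}-\ind{A},\mu\rangle>0$ for every strict comparison $A\prec B$. Indeed, the instances $\emptyset\preceq\{\omega_i\}$ (from \textsf{(Q2)}) then force $\mu\ge0$ coordinatewise; the instance $\emptyset\prec\Omega$ (from \textsf{(Q1)}, using totality) forces $\sum_i\mu_i>0$, so $\mu$ can be rescaled to a probability measure; and totality of $\preceq$ upgrades the two implications to the biconditional $A\preceq B\Leftrightarrow\mu(A)\le\mu(B)$, since if $\mu(A)\le\mu(B)$ but $A\not\preceq B$ then $B\prec A$ and $\mu(B)<\mu(A)$. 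So the whole problem becomes the solvability of a finite system of weak and strict homogeneous linear inequalities in $\mu$.

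Next I would invoke a theorem of the alternative (Motzkin's transposition theorem, equivalently LP duality): either the system above is solvable, or there exist nonnegative coefficients $\lambda_j$, not all those attached to strict comparisons being zero, with $\sum_j\lambda_j(\ind{B_j}-\ind{A_j})=0$ where every $A_j\preceq B_j$. Since all the vectors $\ind{B_j}-\ind{A_j}$ are integral, such a dependence can be taken with rational, hence (after clearing a common denominator) positive integer, coefficients; repeating each comparison $(A_j,B_j)$ with multiplicity $\lambda_j$ yields finite sequences $(A_k)_{k\le M}$, $(B_k)_{k\le M}$ with $\sum_k\ind{A_k}=\sum_k\ind{B_k}$ --- balanced --- such that $A_k\preceq B_k$ for all $k$ while $A_{k_0}\prec B_{k_0}$ for at least one $k_0$. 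After reindexing so that $k_0=M$, \textsf{(Q4)} gives $A_M\succeq B_M$, contradicting $A_M\prec B_M$. Hence the second alternative is impossible, the system is solvable, and rescaling the solution yields a representing measure.

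The crux --- and where I expect the real work to be --- is this last step: matching the algebraic dependence furnished by duality to the combinatorial hypothesis of \textsf{(Q4)}. One must use the version of the alternative theorem that guarantees the dual certificate puts positive weight on at least one \emph{strict} comparison (otherwise it yields nothing), check that the coefficients can be cleared to integers, and recognize that ``take each comparison with its multiplicity'' turns the identity $\sum_j\lambda_j(\ind{B_j}-\ind{A_j})=0$ into exactly a balanced pair of sequences in which a strict comparison survives. Everything else --- necessity, the reformulation, the final rescaling, and the observation that \textsf{(Q3)} is in fact used only for necessity --- is bookkeeping.
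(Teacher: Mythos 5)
Your proof is correct, and it takes essentially the same route the paper indicates for this classical result: a theorem of the alternative (Motzkin transposition / hyperplane separation) applied to the full system of weak and strict constraints, with the dual certificate cleared to integer multiplicities and turned into a balanced pair of sequences that \textsf{(Q4)} refutes — exactly the strategy the paper sketches for Scott's argument and itself deploys in proving Theorem \ref{strongrep}. Your supporting observations also check out: the \textsf{(Q2)} instances give coordinatewise nonnegativity, \textsf{(Q1)} plus totality yields the strict pair $\emptyset \prec \Omega$ needed both for a nonempty strict block and for normalization, and \textsf{(Q3)} is indeed used only for necessity, since it is derivable from \textsf{(Q4)}.
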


The last requirement \textsf{(Q4)}, sometimes called the \emph{Finite Cancellation Axiom}, plays a crucial role, and it is worth taking a moment here to explain its meaning. The notion of two sequences of events being \emph{balanced} is to be understood as follows: given two such sequences $(A_{i})_{i\leq n}$ and $(B_{i})_{i\leq n}$, we write $(A_{i})_{i\leq n} \geq_{0} (B_{i})_{i\leq n}$ whenever 
$$
\sum_{i\leq n}\ind{{A_{i}}}\geq \sum_{i\leq n}\ind{{B_{i}}}.
$$
This vector inequality simply means that for each $\omega\in\Omega$, $\omega$ is in at least as many $A_{i}$'s as $B_{i}$'s: in other words, we have $\big|\{i\leq n\,|\, \omega\in A_{i}\}\big|\geq \big|\{i\leq n\,|\, \omega\in B_{i}\}\big|$. When $(A_{i})_{i\leq n} \geq_{0} (B_{i})_{i\leq n}$ and $(A_{i})_{i\leq  n} \leq_{0} (B_{i})_{i\leq n}$, we write 
$$
(A_{i})_{i\leq n} \equiv_{0} (B_{i})_{i\leq n}
$$
and we say that $(A_{i})_{i\leq n}$ and $(B_{i})_{i\leq n}$ are \emph{balanced} sequences. Thus:

\begin{mydef}[\textbf{Balanced sequences}]
Let $(\Omega, \mathfrak{A})$ a finite set algebra, and $(A_{i})_{i\leq n}$ and $(B_{i})_{i\leq s}$ two sequences of events from $\mathfrak{A}$. The sequences are \emph{balanced} if and only if 
$$\sum_{i\leq n} \mathds{1}_{A_{i}} = \sum_{i\leq n} \mathds{1}_{B_{i}}.$$
We then write $(A_{i})_{i\leq n} \equiv_{0} (B_{i})_{i\leq n}$.
\end{mydef}

We can of course express this property directly in terms of \emph{occurrence counts}. Given $\omega\in\Omega$ and a sequence of sets $S = (S_{i})_{i\leq n}$, let $\eta_{S}(\omega):=\big| \{ i\leq n \,|\, \omega\in S_{i}\} \big| $ count the number of occurrences of $\omega$ in $S$. Then $(A_{i})_{i\leq n}$ and $(B_{i})_{i\leq n}$ are balanced whenever $\eta_{A}(\omega)=\eta_{B}(\omega)$ for every $\omega\in\Omega$. This means that for each state $\omega$, the number of occurrences of $\omega$ in the $A_{i}$ sets is the same as the number of occurrences of $\omega$ in the $B_{i}$ sets: and similarly, to say that  $(A_{i})_{i\leq  n} \leq_{0} (B_{i})_{i\leq n}$ is simply to say that for every $\omega\in \Omega$, we have $\eta_{A}(\omega)\leq \eta_{B}(\omega)$.

Scott's well-known proof of Theorem \ref{Scott's Theorem} \citep{SCO} appeals to a hyperplane-separation theorem. Scott identifies each set $X$ with its characteristic vector $\mathds{1}_{X}$ and shows, via a crucial application of Scott's axiom \textsf{(Q4)}, that the sets $\{\mathds{1}_{A} - \mathds{1}_{B} \,|\,A\succ B\}$ and $\{\mathds{1}_{A} - \mathds{1}_{B} \,|\,A \preceq B\text{ and }B\preceq A\}$ can be separated by a hyperplane with equation $\overrightarrow{v}\cdot \mathbf{x}=0$. Then, letting $$\mu(X):=\frac{\overrightarrow{v}\cdot\mathds{1}_{X}}{\overrightarrow{v}\cdot\mathds{1}_{\Omega}}$$ gives the desired probability representation. 

Our case, however, is a little more intricate. We have an ordering $\succ_{\sigma}$ that is of a very specific kind: it is non-total, and its domain is restricted: it relates singletons (more generally, atoms in the event algebra) on one side to sets on the other. Furthermore, we need to deal with strict and non-strict constraints simultaneously\textemdash both of which are given as primitives\textemdash rather than deriving one set of constraints from the other (as it is usually done for total probability orders). 

We can begin by observing some plausible candidates for representability conditions. The following properties of the relation $\succ_{\sigma}$ are sound with respect to our desired probabilistic interpretation. 

\begin{itemize}\label{SLISTb}
\item[(M1)] If $\omega\succ_{\sigma} X$ and $X\supseteq Y$, then $\omega\succ_{\sigma} Y$.
\item[(M2)] If $\omega\succ_{\sigma} X$ and $v\in X$, $v\succ_{\sigma} Y$ then $\omega\succ_{\sigma} (X\setminus\{v\})\cup Y$.
\item[(M3)] If $\omega_{1}\not\succ_{\sigma} B_{1}$ and $\omega_{2}\not\succ_{\sigma} B_{2}$ for $B_{1}\cap B_{2}=\emptyset$, then $\forall \omega\,(\omega\succ_{\sigma} B_{1}\cup B_{2}\rightarrow\omega\succ_{\sigma} \{\omega_{1},\omega_{2}\})$.
\item[(M4)]  If $\exists X, \exists \omega$ such that $\omega\not\succ_{\sigma} X\cup\{v_{1}\}$ and $\omega\succ_{\sigma} X\cup\{v_{2}\}$, then $v_{1}\succ_{\sigma} \{v_{2}\}$. 
\item[(Sc)] If $(A_{i})_{i\leq m}$ and $(B_{i})_{i\leq m}$ are balanced sequences and $(\omega_{i})_{i\leq m}$ a sequence of states, then (${\forall i\leq m,\, \omega_{i}\succ_{\sigma} A_{i})} \rightarrow (\exists i\leq m,\,\omega_{i}\succ_{\sigma} B_{i})$. 
\end{itemize}

In what follows, we apply the hyperplane separation techniques form the theory of comparative probability orders, and rely a method analogous to the proof of \cite{SCO}. Given that we are dealing with a mixed-constraints case, the relevant hyperplane-separation result to be employed here is the Motzkin Transposition Theorem\footnote{See, for instance, \cite[p. 33]{SCH} or \cite{SW70}.} \citep{MOT}. Here, one should also expect a Scott-like cancellation axiom to do most of the work: as it turns out, this is indeed the case, but what is needed is an axiom stronger than the property (Sc) listed above. In what follows, we employ this strategy to prove a probabilistic representation theorem for selection structures. 

\section{A representation theorem}\label{secrep}

In this section, we solve the representation problem for selection structures: we give necessary and sufficient conditions for a selection function to be representable as a probabilistically stable revision operator.

We begin by proving a result on comparative probability orders: it gives necessary and sufficient conditions for two partial orders on a finite algebra (one strict and one non-strict) to be jointly weakly representable by a probability measure (Theorem  \ref{strongrep}). We then discuss the relation between our representation result and previous results in the comparative probability literature on other types of probabilistic representability. In particular, our result answers the (hitherto open) problem of providing necessary and sufficient conditions for the \emph{strong representability} of comparative probability orders, generalising a result by \cite{KON} on sufficient conditions for strong representation. Our characterisation also subsumes as a special case the characterisation of \emph{partial representability} due to \cite{FISH} and that of \emph{almost representability} by \cite{KPS}. We then use our result to prove our representation theorem for probabilistically stable revision with threshold $t=1/2$ (Theorem \ref{REPSEL}).  

How sensitive is the structure of probabilistically stable revision to the choice of a stability threshold? To answer this, we next investigate the relations between the classes probabilistically stable revision operators generated by different thresholds. We show, via a simple construction, that each value of the threshold generates a different class of representable selection functions (Proposition \ref{IncreasingThresholds}). Finally, we provide a general representation result for all rational threshold values, which yields our most general characterisation of probabilistically stable belief revision operators (Theorem \ref{GENREPSEL}).

\subsection{Joint representation of comparative probability orders}

Suppose we are given two binary relations $<$ and $\leqslant$ on an algebra of events and we want to know whether we can find a probability measure that jointly represents them, in the sense that $<$ entails \emph{strictly smaller probability} and $\leqslant$ entails \emph{smaller or equal probability}. In other words, we want to know which pairs of relation satisfy the following condition:

\begin{mydef}[Joint weak representability]
Let $\Omega$ a finite set, and let $<$ and $\leqslant$ each a binary relation on 
$\mathcal{P}(\Omega)$. We say that a pair of relations $(<, \leqslant)$ is jointly weakly representable whenever there is a probability measure $\mu$ on $\mathcal{P}(\Omega)$ such that, for all $A$, $B\subseteq\Omega$,
\end{mydef}

\begin{itemize}
\item $A<B \Rightarrow \mu(A)<\mu(B)$ ($\mu$ \emph{partially represents} $<$), and 
\item $A\leqslant B \Rightarrow \mu(A)\leq \mu(B)$ ($\mu$ \emph{almost represents} $\leqslant$)
\end{itemize}

Theorem \ref{strongrep} below gives necessary and sufficient conditions for the joint weak representability of two relations on a finite algebra (one strict, one non-strict). Our criterion for jointly weakly representable (pairs of) relations will play a key role in the representation result for probabilistically stable revision. But besides this particular application, this is a rather natural notion of representability of intrinsic measurement-theoretic interest. For this reason, 
we will first give a general solution to the problem of characterising orders that admit joint weak representability, before applying this result to probabilistically stable revision. 

Joint weak representability is related to the notion of \emph{strong agreement} discussed in the measurement-theoretic literature \citep{KON, REG}. The only difference, but it is a substantial one, is that strong agreement assumes that the $\leqslant$ relation \emph{defines} the $<$ relation, by the following equivalence: 
\begin{equation}\label{orderequivalence}\tag{EQ}
    A<B \Leftrightarrow (A\leqslant B\text{ and }B\not\leqslant A) 
\end{equation}
That is, we say that $\mu$ strongly agrees with $\leq$ whenever $\mu$ is a joint weak representation of $(<,\leqslant)$ and the two relations satisfy \eqref{orderequivalence}. By contrast, the notion of joint weak representation is quite a bit weaker, and allows for a good deal of independence between the two relations. In fact, in the case that interests us \textemdash probabilistically stable revision\textemdash we will apply the notion of joint weak representability to a pair of relations where \eqref{orderequivalence} fails: this is another reason why a representation result is needed for that specific notion of representability.

Now, it is clear that our necessary and sufficient conditions for joint weak representability, coupled with the property \eqref{orderequivalence} as an additional axiom, will directly yield necessary and sufficient conditions for strong representability. This gives an answer to the problem, highlighted by \cite{KON}, of finding necessary and sufficient conditions for strong agreement without relying on auxiliary conditions such as `non-triviality' ($\emptyset < \Omega$) and `non-negativity' ($\emptyset \leqslant X$ for all $X\subseteq \Omega$).\footnote{\cite{KON} presents this as an open problem. In a sense, this characterisation is perhaps a little generous: as this author realised upon completion of the present work, a necessary and sufficient condition similar to ours can be extracted, with only a little bit of work, from \citep{KPS}, although Kraft et al. neither explicitly state the relevant condition nor prove a general representation result for joint weak representability.}

Let us now turn to our first representation theorem. We shall appeal to the rational version of a classic result by Motzkin\footnote{Here, for vectors $\vec{w}=(w_1,...,w_n)$ and $\vec{v}=(v_1,...,v_n)\in\mathbb{R}^n$, the notation $\vec{w}\geq \vec{v}$ means that $w_i \geq v_{i}$ for all $i\leq n$, and similarly for $\vec{w}> \vec{v}$. Given a vector $\alpha\in \mathbb{R}^{k}$ and $k\times n$ matrix $\mathbf{M}$, the product $\alpha^{T}\mathbf{M}$ is the vector in $\mathbb{R}^{n}$ whose $j$-th entry is $\sum_{i\leq k} \alpha_{i} m_{ij}$, with $m_{ij}$ the $(i,j)$-th entry of $\mathbf{M}$.}:

\begin{theorem}[\textbf{Motzkin Transposition Theorem}, \cite{MOT}]
Let $\mathbf{M}_{1}$ be a matrix in $\mathbb{Q}^{k\times n}$, and $\mathbf{M}_{2}$ a matrix in $\mathbb{Q}^{p\times n}$. Then, either there is a vector $\vec{\mu}\in \mathbb{R}^{n}$ such that 
\begin{center}
$\mathbf{M}_{1}\cdot\vec{\mu} \geq \vec{0}$ \\
$\mathbf{M}_{2}\cdot\vec{\mu} > \vec{0}$
\end{center}
or else there exist vectors $\alpha\in\mathbb{Q}^{k}$, $\beta\in\mathbb{Q}^{p}$ such that 
\begin{itemize} 
\item[(a)] $\alpha^{T}\mathbf{M}_{1} + \beta^{T}\mathbf{M}_{2}=\vec{0}$;
\item[(b)] $\alpha\geq\vec{0}$, $\beta\geq\vec{0}$ and $\beta_{i}>0$ for some coordinate $i\leq p$.
\end{itemize}
\end{theorem}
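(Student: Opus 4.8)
The plan is to derive the statement from the affine (inhomogeneous) form of Farkas's lemma, securing rationality of the dual certificate either by the general fact that a nonempty rational polyhedron contains a rational point, or — if a self-contained argument is wanted — by running Fourier--Motzkin elimination, which only ever forms nonnegative rational combinations of rows. Before that, the \emph{exclusivity} of the two alternatives is immediate: if $\vec\mu$ satisfies $\mathbf{M}_1\vec\mu\geq\vec0$ and $\mathbf{M}_2\vec\mu>\vec0$ while $(\alpha,\beta)$ satisfies (a)--(b), then $0=(\alpha^{T}\mathbf{M}_1+\beta^{T}\mathbf{M}_2)\vec\mu=\alpha^{T}(\mathbf{M}_1\vec\mu)+\beta^{T}(\mathbf{M}_2\vec\mu)\geq\beta^{T}(\mathbf{M}_2\vec\mu)>0$, where the last strict inequality uses $\beta\geq\vec0$ together with $\beta_i>0$ for some coordinate $i$ with $(\mathbf{M}_2\vec\mu)_i>0$. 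So it remains to show that whenever the first alternative fails, the second holds.

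The first move is to eliminate the strict inequalities. The system ``$\mathbf{M}_1\vec\mu\geq\vec0$ and $\mathbf{M}_2\vec\mu>\vec0$'' is solvable over $\mathbb{R}^{n}$ if and only if the purely non-strict system ``$\mathbf{M}_1\vec\mu\geq\vec0$ and $\mathbf{M}_2\vec\mu\geq\vec1$'' is solvable: any solution of the latter trivially solves the former, and conversely, given $\vec\mu_0$ with $\mathbf{M}_2\vec\mu_0>\vec0$, dividing $\vec\mu_0$ by the smallest coordinate of $\mathbf{M}_2\vec\mu_0$ (a positive rational if $\vec\mu_0$ is rational, but positive in any case) produces a solution of the latter while preserving $\mathbf{M}_1\vec\mu\geq\vec0$. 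Let $\mathbf{N}$ be the matrix obtained by stacking $\mathbf{M}_1$ on top of $\mathbf{M}_2$, and let $\vec c$ be the vector obtained by stacking $\vec0\in\mathbb{R}^{k}$ on top of $\vec1\in\mathbb{R}^{p}$. Then the first alternative fails precisely when the rational system $\mathbf{N}\vec\mu\geq\vec c$ is infeasible.

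Now I apply the affine Farkas lemma: $\mathbf{N}\vec\mu\geq\vec c$ is infeasible if and only if there is a vector $\vec y\geq\vec0$ with $\vec y^{T}\mathbf{N}=\vec0$ and $\vec y^{T}\vec c>0$; and since $\mathbf{N},\vec c$ have rational entries, such a $\vec y$ may be taken rational, because the set of these $\vec y$ is a rational polyhedron and a nonempty rational polyhedron contains a rational point (equivalently, in the Fourier--Motzkin proof of Farkas each derived inequality comes tagged with the nonnegative rational combination of original rows that yields it, so a derived contradiction $0\geq d$ with $d>0$ directly exhibits a rational $\vec y$). Splitting $\vec y=(\alpha,\beta)$ along the block structure of $\mathbf{N}$, the equation $\vec y^{T}\mathbf{N}=\vec0$ is exactly $\alpha^{T}\mathbf{M}_1+\beta^{T}\mathbf{M}_2=\vec0$, i.e.\ condition (a); and $\vec y^{T}\vec c=\beta^{T}\vec1=\sum_{i}\beta_i$, so the conjunction ``$\vec y\geq\vec0$ and $\vec y^{T}\vec c>0$'' is precisely ``$\alpha\geq\vec0$, $\beta\geq\vec0$, and $\beta_i>0$ for some $i\leq p$'', i.e.\ condition (b). This gives the second alternative and completes the dichotomy.

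The only delicate point — and the reason the reduction in the second paragraph places $\vec1$, not $\vec0$, on the right of the strict block — is ensuring the certificate genuinely ``charges'' at least one row of $\mathbf{M}_2$, i.e.\ that some $\beta_i>0$ rather than merely $\beta\geq\vec0$. Had the strict inequalities been homogenised to $\mathbf{M}_2\vec\mu\geq\vec0$, Farkas would only yield $\vec y^{T}\vec c=0$ and the strictness in (b) would be lost; the nonzero right-hand side is exactly what forces the contradiction to pass through the $\beta$-coordinates. Aside from this strict/non-strict bookkeeping, the argument is the standard theorem-of-the-alternative machinery, so I do not anticipate any genuine difficulty beyond citing or proving Farkas itself.
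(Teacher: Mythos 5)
Your proposal is correct, but there is nothing in the paper to compare it against: the paper imports the Motzkin Transposition Theorem as a black box, citing Motzkin and pointing to Schrijver's book, and never proves it. So what you have produced is a self-contained derivation that the paper simply omits. Your route — prove exclusivity directly, replace the strict block $\mathbf{M}_{2}\vec\mu>\vec{0}$ by $\mathbf{M}_{2}\vec\mu\geq\vec{1}$ via scaling by the minimum coordinate, then apply the affine (Gale-type) Farkas lemma to the stacked system and split the multiplier $\vec y=(\alpha,\beta)$ — is a standard and clean way to get exactly the statement used in the paper, and your observation that the inhomogeneous right-hand side $\vec 1$ is what forces $\beta\neq\vec 0$ is precisely the point that distinguishes Motzkin's theorem from the plain homogeneous Farkas/Stiemke alternatives. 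All the steps check out: exclusivity, the equivalence of the two feasibility questions, the translation of $\vec y^{T}\mathbf{N}=\vec 0$ and $\vec y^{T}\vec c>0$ into conditions (a) and (b).

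One small imprecision worth fixing: the set $\{\vec y\geq\vec 0:\ \vec y^{T}\mathbf{N}=\vec 0,\ \vec y^{T}\vec c>0\}$ is not literally a polyhedron because of the strict inequality, so "a nonempty rational polyhedron contains a rational point" does not apply to it as written. The repair is immediate — scale $\vec y$ by $1/(\vec y^{T}\vec c)$ and argue instead about the rational polyhedron $\{\vec y\geq\vec 0:\ \vec y^{T}\mathbf{N}=\vec 0,\ \vec y^{T}\vec c=1\}$ — and your parenthetical Fourier--Motzkin argument already gives rational multipliers directly, so this is a matter of wording rather than a gap in the argument.
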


We now prove our first representation result:

\begin{theorem}[\textbf{Representation theorem for joint weak representability}]\label{strongrep}
Let $\Omega$ a finite set, and let $<$ and $\leqslant$ be two binary relations on $\mathcal{P}(\Omega)$ with $<$ nonempty. Then the pair  $(<, \leqslant)$ is jointly weakly representable if and only if the following condition holds:
\begin{itemize}
\item [\textsf{(GS)}] If $(A_{i})_{i\leq n}\leq_{0}(B_{i})_{i\leq n}$ and, for all $i<n$, either $A_{i}\geqslant B_{i}$ or $A_{i}> B_{i}$, then $A_{n}\not>B_{n}$.
\end{itemize}
\end{theorem}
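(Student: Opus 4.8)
The plan is to use the Motzkin Transposition Theorem, mirroring Scott's hyperplane-separation argument but adapted to the mixed strict/non-strict setting. First I would verify the easy direction: if $\mu$ is a joint weak representation of $(<,\leqslant)$, then \textsf{(GS)} holds. Indeed, suppose $(A_i)_{i\leq n}\leq_0 (B_i)_{i\leq n}$ and for each $i<n$ either $A_i\geqslant B_i$ or $A_i> B_i$, while (towards a contradiction) $A_n> B_n$. Summing the representing inequalities $\mu(B_i)\leq\mu(A_i)$ for $i<n$ (strict wherever $A_i> B_i$) together with the strict inequality $\mu(B_n)<\mu(A_n)$, and using $\sum_i\mathds{1}_{A_i}\leq \sum_i\mathds{1}_{B_i}$ pointwise (which gives $\sum_i\mu(A_i)\leq\sum_i\mu(B_i)$), yields $\sum_i\mu(A_i)<\sum_i\mu(A_i)$, a contradiction. (A minor point: if none of the pairs $i<n$ is involved, the $n=1$ case is just: $A_1\leq_0 B_1$ pointwise and $A_1> B_1$ is impossible, since pointwise domination forces $\mu(A_1)\leq\mu(B_1)$.)

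For the converse, assume \textsf{(GS)} and build a representing measure. Identify each $X\subseteq\Omega$ with its characteristic vector $\mathds{1}_X\in\{0,1\}^{\Omega}$. Form two matrices: let $\mathbf{M}_1$ collect, as rows, the vectors $\mathds{1}_A-\mathds{1}_B$ for each pair with $A\leqslant B$ reversed appropriately — i.e. rows $\mathds{1}_B-\mathds{1}_A$ for $A\leqslant B$ — together with rows encoding non-negativity of coordinates; and let $\mathbf{M}_2$ collect the rows $\mathds{1}_B-\mathds{1}_A$ for each pair with $A< B$, which we want to come out strictly positive under the sought vector. We seek $\vec\mu\in\mathbb{R}^{\Omega}$ with $\mathbf{M}_1\vec\mu\geq\vec 0$ and $\mathbf{M}_2\vec\mu>\vec 0$; any such $\vec\mu$ has non-negative entries and, since $<$ is nonempty, has a strictly positive entry, so after normalising by the sum of its coordinates we get a probability measure that partially represents $<$ and almost represents $\leqslant$. (One should be careful to also throw in the row $\mathds{1}_\Omega-\mathds{1}_\emptyset = \mathds{1}_\Omega$ into $\mathbf{M}_2$ or argue separately that the normalising sum is positive; since some strict inequality is present and all entries are $\geq 0$, positivity of the total is automatic.)

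If no such $\vec\mu$ exists, Motzkin gives rational vectors $\alpha\geq\vec 0$, $\beta\geq\vec 0$ with $\beta$ having a strictly positive coordinate and $\alpha^T\mathbf{M}_1+\beta^T\mathbf{M}_2=\vec 0$. Clearing denominators, we may take $\alpha,\beta$ to have non-negative integer entries; the coordinates then count multiplicities, so we can read off finite sequences $(A_i),(B_i)$ — repeating each pair $A< B$ with multiplicity $\beta$-entry and each pair $A\leqslant B$ with multiplicity $\alpha$-entry (the non-negativity rows of $\mathbf{M}_1$ contribute pairs of the form $\emptyset\leqslant\{\omega\}$, which is harmless). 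The identity $\alpha^T\mathbf{M}_1+\beta^T\mathbf{M}_2=\vec 0$ says exactly that $\sum(\mathds{1}_{B_i}-\mathds{1}_{A_i})=\vec 0$, i.e. the combined sequences are balanced, hence in particular $(A_i)\leq_0(B_i)$; and the positivity of some $\beta$-coordinate guarantees at least one pair is a genuine strict $A_n< B_n$. Reindex so that this strict pair is the last one, $n$: then $(A_i)_{i\leq n}\leq_0(B_i)_{i\leq n}$, for all $i<n$ we have $A_i\geqslant B_i$ or $A_i> B_i$ (reading the sequence entries as the appropriate comparisons — note the roles of $A$ and $B$ have to be tracked carefully against the sign conventions above), and $A_n< B_n$, i.e. $B_n> A_n$, contradicting \textsf{(GS)}. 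This contradiction forces the first alternative, completing the proof.

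The main obstacle, and the step requiring genuine care, is the bookkeeping in the last paragraph: getting the orientation of every row vector consistent, ensuring the non-negativity rows of $\mathbf{M}_1$ translate into legitimate instances of the $\leqslant$-hypothesis in \textsf{(GS)} (here one uses that $\emptyset\leqslant X$ is implied — or rather, one should check whether \textsf{(GS)} as stated already absorbs such auxiliary pairs, which is precisely the point the paper stresses in dispensing with the separate non-triviality and non-negativity axioms), and correctly matching Motzkin's integer multipliers to the lengths and repetitions of the balanced sequences. A secondary subtlety is making sure that the strict coordinate of $\beta$ really corresponds to a pair from $<$ and not from an auxiliary row, so that the final appeal to \textsf{(GS)} has a legitimate strict conclusion $A_n> B_n$ to contradict.
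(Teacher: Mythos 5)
Your overall route is the same as the paper's: encode the weak and strict comparisons as rows $\mathds{1}_{B}-\mathds{1}_{A}$, add non-negativity constraints, apply Motzkin's Transposition Theorem, clear denominators, and read the integer certificate off as a pair of sequences violating \textsf{(GS)}; your necessity argument is also fine. The one step you leave open\textemdash and it is the step that carries the real weight of the theorem\textemdash is the treatment of the non-negativity rows. Your main text converts them into ``pairs of the form $\emptyset\leqslant\{\omega\}$'' and your fallback remark that ``$\emptyset\leqslant X$ is implied'' is not available: $\leqslant$ is an arbitrary relation, and dispensing with Non-negativity is exactly the point of the result. So these auxiliary rows are not legitimate entries in the \textsf{(GS)} hypothesis, which requires that $A_{i}\geqslant B_{i}$ or $A_{i}>B_{i}$ actually hold in the given relations for every $i<n$; as written, your appeal to \textsf{(GS)} is applied to a sequence containing pairs for which neither holds.

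The resolution (and the paper's move) is to drop the auxiliary pairs rather than absorb them. Writing the Motzkin identity as $\sum_{j}\alpha_{j}(\mathds{1}_{B_{j}}-\mathds{1}_{A_{j}})+\sum_{\omega}\gamma_{\omega}\mathds{1}_{\{\omega\}}+\sum_{l}\beta_{l}(\mathds{1}_{B_{l}}-\mathds{1}_{A_{l}})=\vec{0}$, exact balance holds only for the sequences \emph{including} the $\gamma_{\omega}$ copies of $(\{\omega\},\emptyset)$; deleting those copies leaves the one-sided domination: the multiset of dominating sets (the $B$'s in your orientation) is $\leq_{0}$ the multiset of dominated sets. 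This is precisely why \textsf{(GS)} is formulated with $\leq_{0}$ rather than $\equiv_{0}$\textemdash the weakened premise is what lets the axiom absorb the non-negativity rows without any assumption like $\emptyset\leqslant X$. With this in place the rest of your argument closes as you intend: every remaining pair satisfies the $\geqslant$-or-$>$ hypothesis by construction of the two matrices, and your secondary worry evaporates, since Motzkin's strictly positive coordinate of $\beta$ is attached to the strict block, which contains only genuine $<$-pairs (the auxiliary rows sit in the weak block); reindexing that strict pair to position $n$ gives the contradiction with \textsf{(GS)}.
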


\begin{proof}
Necessity is straightforward: we only prove sufficiency. Let $\Omega=\{\omega_{1},\dots,\omega_{n}\}$. Consider the following two sets of vectors representing inequalities: 
\begin{center}
$\Gamma :=\{\ind{A}-\ind{B}\,|\,A \geqslant B\}$ \\
$\Sigma :=\{\ind{A} - \ind{B}\,|\, A> B \}$,
\end{center}
and we write $|\Gamma|=k$ and $|\Sigma|=p$. Take the following two matrices: $\mathbf{M}_{\Gamma}$ has as rows (transposes of) vectors in $\Gamma$, while the rows of matrix $\mathbf{M}_{\Sigma}$ are (transposes of) vectors in $\Sigma$.
We write them as
$$
\mathbf{M}_{\Gamma}=
\begin{pmatrix}
(\ind{A_{1}}- \ind{B_{1}})^{T}\\
\vdots  \\
(\ind{A_{k}}- \ind{B_{k}} )^{T} 
\end{pmatrix}
\,\,\,\,\text{ and }\,\,\,\,
\mathbf{M}_{\Sigma}=
\begin{pmatrix}
(\ind{A_{k+1}}- \ind{B_{k+1}})^{T}\\
\vdots  \\
(\ind{A_{k+p}}- \ind{B_{k+p}} )^{T} 
\end{pmatrix}
$$
Let $\mathbf{I}_{n}$ the $(n\times n)$ identity matrix. We now prove that there is a vector $\vec{\mu}\in \mathbb{R}^{n}$ such that 
\begin{equation}\label{repsol}
\begin{split}
\mathbf{I}_{n}\cdot\vec{\mu} \geq \vec{0} \\
\mathbf{M}_{\Gamma}\cdot\vec{\mu} \geq \vec{0} \\
\mathbf{M}_{\Sigma}\cdot\vec{\mu} > \vec{0}
\end{split}
\end{equation}
Assume towards a contradiction that there is no such $\vec{\mu}$. The matrices $\mathbf{I}_{n}, \mathbf{M}_{\Gamma}$ and $\mathbf{M}_{\Sigma}$ are rational valued, since they only contain entries in $\{-1,0,1\}$. 
Then, by Motzkin's Transposition Theorem,\footnote{Apply Motzkin's theorem to the system 
\begin{equation*}
\begin{split}
\mathbf{M} \cdot\vec{\mu} \geq \vec{0} \\
\mathbf{M}_{\Sigma}\cdot\vec{\mu} > \vec{0}
\end{split}
\end{equation*}
where $\mathbf{M}$ is the concatenated matrix $\mathbf{M} =\begin{bmatrix}
\mathbf{I}_{n}\\
\mathbf{M}_{\Gamma}
\end{bmatrix}$.} there exists vectors $\gamma$, $\alpha$, and $\beta$ with non-negative rational entries such that
\begin{equation}\label{transp}
\gamma^{T}\mathbf{I}_{n}+ \alpha^{T}\mathbf{M}_{\Gamma} + \beta^{T}\mathbf{M}_{\Sigma}=\vec{0}
\end{equation}
and $\beta$ has at least one positive coordinate. Now, we can in fact assume those are vectors in $\mathbb{N}$, by multiplying the entries by a common denominator. We can then rewrite the above in full as 
$$
\big(\gamma_{1},\dots,\gamma_{n}\big)
\begin{pmatrix}
  1 & 0 & \dots & 0 \\
  \vdots & \vdots & \ddots & \vdots \\
  0 & 0 & \dots & 1 \\
\end{pmatrix}
+
\big(\alpha_{1},\dots,\alpha_{k}\big)
\begin{pmatrix}
(\ind{A_{1}}- \ind{B_{1}})^{T}\\
\vdots  \\
(\ind{A_{k}}- \ind{B_{k}} )^{T} 
\end{pmatrix}
+
\big(\beta_{1},\dots,\beta_{p}\big)
\begin{pmatrix}
(\ind{A_{k+1}}- \ind{B_{k+1}})^{T}\\
\vdots  \\
(\ind{A_{k+p}}- \ind{B_{k+p}} )^{T} 
\end{pmatrix}
= \vec{0}
$$

A piece of notation: given an integer $m$, let us write $mA$ to denote the sequence consisting of the set $A$ repeated $m$ times.  Given this, the sequence of sets $$(\gamma_{1}\{\omega_{1}\}\dots, \gamma_{n}\{\omega_{n}\},\alpha_{1}A_{1},\dots, \alpha_{k}A_{k}, \beta_{1}A_{k+1},\dots \beta_{p}A_{k+p})$$ contains exactly $\gamma_{1}$ copies of the set $\{\omega_{1}\}$ etc., followed by $\alpha_{1}$ copies of the set $A_{1}$, followed by $\alpha_{2}$ copies of the set $A_{2}$, etc. (respectively, $\beta_{j}$ copies of $A_{k+j}$). Now we claim that
\begin{align*}\label{bal}\tag{$\ast$}
\big(\gamma_{1}\{\omega_{1}\},\dots, \gamma_{n}\{\omega_{n}\}, \alpha_{1}A_{1},&\dots, \alpha_{k}A_{k}, \beta_{1}A_{k+1},\dots \beta_{p}A_{k+p}\big)\\
&\equiv_{0} \\
\big(\gamma_{1} \emptyset,\dots,\gamma_{n} \emptyset, \alpha_{1}B_{1},&\dots, \alpha_{k}B_{k}, \beta_{1}B_{k+1},\dots \beta_{p}B_{k+p}\big).
\end{align*}
The two sequences in (\ref{bal}) both have length $l=\sum_{i} \gamma_{i}+\sum_{i} \alpha_{i} + \sum_{j} \beta_{j}$. So let us write the two sequences in (\ref{bal}) as $(A^{\ast}_{i})_{i\leq l}$ and $(B^{\ast}_{i})_{i\leq l}$. 

The fact that those two sequences are balanced follows from the equality (\ref{transp}).
Note that 
\begin{align*}
\gamma^{T}\mathbf{I}_{n}+\alpha^{T}\mathbf{M}_{\Gamma}\, +\, \beta^{T}\mathbf{M}_{\Sigma} &= \sum_{i\leq n} \gamma_{i}\ind{\{\omega_{i}\}}+ \sum_{i\leq k}\alpha_{i} (\ind{A_{i}}-\ind{B_{i}})^{T} + \sum^{p}_{j= k+1}\beta_{j} (\ind{A_{j}}-\ind{B_{j}})^{T} \\
&= \big( \sum_{i\leq n} \gamma_{i}\ind{\{\omega_{i}\}}\, +\,  \sum_{i\leq k} \alpha_{i}\indT{A_{i}} + \sum^{p}_{j= k+1} \beta_{j}\indT{A_{j}} \big) - \big( \sum_{i\leq k} \alpha_{j} \indT{B_{j}}+  \sum^{p}_{j= k+1}  \beta_{j} \indT{B_{j}} \big) = \vec{0}
\end{align*}
In this last equality, the vector $\big( \sum_{i\leq n} \gamma_{i}\ind{\{\omega_{i}\}}+\sum_{i\leq k} \alpha_{i}\indT{A_{i}} + \sum^{p}_{j= k+1} \beta_{j}\indT{A_{j}} \big)$ is a $(1\times n)$ vector $\mathbf{v}=(v_{1},\dots,v_{n})$  where $v_{j}= | \{i\leq l \,|\,\omega_{j}\in A^{\ast}_{i}  \}|$: that is, the $j$-th entry of $\mathbf{v}$ counts the number of $A^{\ast}_{i}$'s in which the state $\omega_{j}$ occurs. By this reasoning, the last line above states that, for any $\omega\in\Omega$, we have $| \{i\leq l\,|\,\omega\in A^{\ast}_{i}  \}| = | \{i\leq l \,|\,\omega\in B^{\ast}_{i}  \}|$. So the sequences in (\ref{bal}) are indeed balanced. From this, it immediately follows that
\begin{equation}\label{dom}\tag{D}
(\alpha_{1}A_{1},\dots, \alpha_{k}A_{k}, \beta_{1}A_{k+1},\dots \beta_{p}A_{k+p}) \leq_{0} (\alpha_{1}B_{1},\dots, \alpha_{k}B_{k}, \beta_{1}B_{k+1},\dots \beta_{p}B_{k+p}) 
\end{equation}
where these two sequences are balanced if and only if all $\gamma_{n}=0$. Now we show that these two sequences constitute a failure of the \textsf{(GS)} axiom. By design of the matrices, we also have that for any $A_{i}$ in the sequence, either $A_{i}\geqslant B_{i}$ or $A_{i} > B_{i}$ obtains. So we have:
\begin{center}
For any $A_{i}$ occuring in the left-hand side sequence in  (\ref{dom}), the corresponding $B_{i}$ on the right-hand side is such that either $ A_{i}\geqslant B_{i}$ or $ A_{i}> B_{i}$.
\end{center}

Next, observe that $\mathbf{M}_{\Sigma}$ is non-empty, since $>$ is a nonempty relation. So at least one strict inequality must occur on the right-hand side in (\ref{dom}) since we know that $\beta$ admits at least one strictly positive coordinate (by $(b)$ of Motzkin's Transposition Theorem). This means that there is a pair $A$, $B$ such $A> B$, the set $A$ occurs somewhere on the left-hand side of (\ref{dom}), and $B$ occurs at the same coordinate on the right-hand-side. Without loss of generality, label all events in the list (\ref{dom}) in such a way that $A^{\ast}_{k+p}:=A$ and $B^{\ast}_{k+p}:=B$. Since for all $i\leq k+p$ on that list, it holds that either $ A_{i}\geqslant B_{i}$ or $ A_{i}> B_{i}$, we can now appeal to \textsf{(GS)} to conclude that we also must have $A^{\ast}_{k+p}\not > B^{\ast}_{k+p}$, i.e. $A\not > B$, which directly contradicts the choice of $A,B$. Thus, the assumption that such vectors $\alpha$ and $\beta$ exist leads to contradiction. By Motzkin's theorem, we can conclude that there is some vector $\vec{\mu}\in \mathbb{R}^{n}$ simultaneously solving $\mathbf{I}_{n}\cdot\vec{\mu} \geq \vec{0} $, $\mathbf{M}_{\Gamma}\cdot\vec{\mu} \geq \vec{0}$ and $\mathbf{M}_{\Sigma}\cdot\vec{\mu} > \vec{0}$. Since $\mathbf{I}_{n}\cdot\vec{\mu}\geq \vec{0}$, we know that the vector $\vec{\mu}$ satisfies
$$
\vec{\mu}\cdot\mathbf{e}_{i} = \vec{\mu}\cdot(\ind{\omega}-\ind{\emptyset})\geq 0
$$
for any $\omega\in\Omega$; but this simply means $\vec{\mu}_{i}\geq 0$ for all $i$. So all coordinates of $\vec{\mu}$ are non-negative. Moreover, the vector $\vec{\mu}$ is not identically zero. Since $<$ is nonempty, there exists at least one strict comparison $A>B$. Note that \textsf{(GS)} then forces $A\neq\emptyset$: no strict comparison can be of the form $\emptyset > B$. To see this, observe the general fact that $A\subseteq B$ entails $A\not >B$: simply note that $(A)\leq_{0} (B)$ and so, according to \textsf{(GS)}, $A\not>B$. This entails that
$
\vec{\mu}\cdot(\ind{A}- \ind{B}) = \sum_{\omega_{i}\in A\setminus B} \mu_i - \sum_{\omega_{j}\in B\setminus A}\mu_{j} > 0 $
where each sum is non-negative, and so there is some $\omega_{i}\in A\setminus B$ with $\mu_i>0$.
The vector $\vec{\mu}$ thus a non-negative, not identically zero weight function $\omega_{i}\mapsto \mu_i$. We can then define the following function $\mu^{\ast}$ on $\mathfrak{A}$:
\begin{center}
for any $X\subseteq\Omega$, let $\mu^{\ast}(X):=\dfrac{\vec{\mu}\cdot\ind{X}}{\vec{\mu}\cdot\ind{\Omega}}$.
\end{center}
Then $\mu^{\ast}$ is the desired probability distribution. It is a function $\mu^{\ast}:\mathfrak{A}\rightarrow [0,1]$ since we have re-normalised by dividing by $||\vec{\mu}||$. Additivity follows from the definition: for disjoint sets $A$, $B$ we have 
$$
\mu^{\ast}(A\cup B) = \dfrac{\vec{\mu}\cdot\ind{A\cup B}}{\vec{\mu}\cdot\ind{\Omega}} = \dfrac{\vec{\mu}\cdot(\ind{A}+\ind{B})}{\vec{\mu}\cdot\ind{\Omega}} = \dfrac{\vec{\mu}\cdot\ind{A}+ \vec{\mu}\cdot\ind{B}}{\vec{\mu}\cdot\ind{\Omega}}= \mu^{\ast}(A)+\mu^{\ast}(B).$$ 
That $\mu^{\ast}$ respects both the strict and non-strict inequalities imposed by the orderings $>$ and $\geqslant$ follows immediately from ($\ref{repsol}$). 
\end{proof}

We can also reformulate the theorem in the following way:

\begin{prop}[\textbf{Representation theorem, second formulation}]\label{strongrep2}
Let $\Omega$ a finite set, and let $<$ and $\leqslant$ be two nonempty binary relations on $\mathcal{P}(\Omega)$ such that $<\,\subseteq \, \leqslant$. The pair  $(<, \leqslant)$ is jointly weakly representable if and only if the following condition holds for all $n>0$:
\begin{itemize}
\item [\textsf{(S+)}] If $(A_{i})_{i\leq n}\leq_{0}(B_{i})_{i\leq n}$ and $A_{i}\geqslant B_{i}$ for all $i<n$, then $A_{n}\not>B_{n}$. 
\end{itemize}
\end{prop}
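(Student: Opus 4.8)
The plan is to reduce Proposition \ref{strongrep2} to Theorem \ref{strongrep} by showing that, under the standing assumption $< \subseteq \leqslant$, the condition \textsf{(S+)} is equivalent to the condition \textsf{(GS)} appearing in Theorem \ref{strongrep}. Note first that the hypotheses of Proposition \ref{strongrep2}---that $<$ and $\leqslant$ are nonempty and $< \subseteq \leqslant$---entail the hypothesis of Theorem \ref{strongrep} that $<$ is nonempty; so once the equivalence \textsf{(GS)} $\Leftrightarrow$ \textsf{(S+)} is established, Theorem \ref{strongrep} immediately yields that $(<,\leqslant)$ is jointly weakly representable if and only if \textsf{(S+)} holds.

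For the direction \textsf{(GS)} $\Rightarrow$ \textsf{(S+)}: any instance of the antecedent of \textsf{(S+)}, i.e.\ sequences with $(A_i)_{i\leq n} \leq_0 (B_i)_{i\leq n}$ and $A_i \geqslant B_i$ for every $i < n$, is also an instance of the antecedent of \textsf{(GS)}, since for each such $i$ the disjunction ``$A_i \geqslant B_i$ or $A_i > B_i$'' holds (its left disjunct does); hence \textsf{(GS)} gives $A_n \not> B_n$. For the converse \textsf{(S+)} $\Rightarrow$ \textsf{(GS)}, this is where $< \subseteq \leqslant$ enters: given sequences witnessing the antecedent of \textsf{(GS)}, so that for each $i < n$ either $A_i \geqslant B_i$ or $A_i > B_i$ holds, observe that in the latter case $A_i > B_i$ already forces $A_i \geqslant B_i$ because $<$ is included in $\leqslant$; thus $A_i \geqslant B_i$ holds for all $i < n$, which is exactly the antecedent of \textsf{(S+)}, and \textsf{(S+)} then gives $A_n \not> B_n$. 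Combining the two directions gives \textsf{(GS)} $\Leftrightarrow$ \textsf{(S+)}, which with Theorem \ref{strongrep} completes the proof.

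There is no real obstacle here: the single point of the argument is that a strict comparison is, under $< \subseteq \leqslant$, also a non-strict one, so the disjunction in \textsf{(GS)} collapses to its non-strict disjunct and the two cancellation conditions become identical as predicates on pairs of sequences. (Alternatively, one could re-run the Motzkin Transposition argument used to prove Theorem \ref{strongrep}, since when $< \subseteq \leqslant$ each row of $\mathbf{M}_{\Sigma}$ may be freely copied into $\mathbf{M}_{\Gamma}$; but the reduction above avoids repeating that computation.)
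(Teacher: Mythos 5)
Your proposal is correct and follows essentially the same route as the paper: under $<\,\subseteq\,\leqslant$ the disjunction ``$A_i\geqslant B_i$ or $A_i>B_i$'' in \textsf{(GS)} collapses to $A_i\geqslant B_i$, so \textsf{(S+)} and \textsf{(GS)} coincide and Theorem \ref{strongrep} applies directly. Your additional remark that the nonemptiness of $<$ supplies the hypothesis of Theorem \ref{strongrep} is a point the paper leaves implicit, but there is no substantive difference.
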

\begin{proof}
    Assuming $<\subseteq\leqslant$, the axiom \textsf{(S+)} is equivalent to \textsf{(GS)}. Simply observe that [$A_i\geqslant B_i$ or $A_i>B_i$] in the antecedent of \textsf{(GS)} becomes equivalent to $A\geqslant B$.
\end{proof}

Lastly, we can immediately get the following sufficient conditions for representability, which will be more directly applicable to the case of probabilistically stable revision:

\begin{prop}[\textbf{Sufficient conditions for joint representation}]\label{repr}
Let $<$ and $\leqslant$ be two partial relations on $\mathcal{P}(\Omega)$ such that for any $A$, $B\subseteq\Omega$:
\begin{itemize}
\item [\textsf{(A0)}] \hspace{1.1em}$\forall\omega\in\Omega$, $\{\omega\}>\emptyset$;
\item [\textsf{(A1)}] \hspace{1.1em}$A\leqslant B \Rightarrow A\not>B$;
\item [\textsf{(A2)}] \hspace{1.1em}$A<B \Rightarrow A\leqslant B$;
\item [\textsf{(Scott)}] If $(A_{i})_{i\leq n}\equiv_{0}(B_{i})_{i\leq n}$ and $\forall i\leq n$, $A_{i}\geqslant B_{i}$, then $\forall i\leq n$, $A_{i}\leqslant B_{i}$. 
\end{itemize}
Then, there is a regular probability measure $\mu$ on $\mathcal{P}(\Omega)$ such that, for all $A$, $B\subseteq\Omega$,
\begin{itemize}
\item $A<B \Rightarrow \mu(A)<\mu(B)$;
\item $A\leqslant B \Rightarrow \mu(A)\leq \mu(B)$.
\end{itemize}
\end{prop}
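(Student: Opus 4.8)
The plan is to deduce the condition \textsf{(S+)} of Proposition \ref{strongrep2} (equivalently \textsf{(GS)} of Theorem \ref{strongrep}) from the four axioms \textsf{(A0)}--\textsf{(Scott)}, and then read off both the existence of a joint weak representation and its regularity. First I would observe that \textsf{(A2)} is exactly the inclusion $<\,\subseteq\,\leqslant$ (hence also $>\,\subseteq\,\geqslant$), so Proposition \ref{strongrep2} applies and it suffices to verify \textsf{(S+)}: for every $n>0$, if $(A_i)_{i\leq n}\leq_{0}(B_i)_{i\leq n}$ and $A_i\geqslant B_i$ for all $i<n$, then $A_n\not>B_n$. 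Note also that, $\Omega$ being nonempty, \textsf{(A0)} makes $>$ (hence $<$) nonempty, which is the side condition needed in Theorem \ref{strongrep}; the case $\Omega=\emptyset$ is trivial.

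The key step, and the place where the real work happens, is a padding argument that converts the coordinatewise domination $\leq_{0}$ into an honest balance $\equiv_{0}$ so that \textsf{(Scott)} becomes applicable. Suppose, for contradiction, that \textsf{(S+)} fails: we have $(A_i)_{i\leq n}\leq_{0}(B_i)_{i\leq n}$, $A_i\geqslant B_i$ for $i<n$, and $A_n>B_n$. Let $\vec{d}:=\sum_{i\leq n}\ind{B_i}-\sum_{i\leq n}\ind{A_i}$, a vector of nonnegative integers $d_\omega$ recording the ``deficit'' of the $A$-side at each $\omega\in\Omega$. Form the extended sequence $\mathcal{A}$ by appending to $(A_i)_{i\leq n}$, for each $\omega$, exactly $d_\omega$ copies of the singleton $\{\omega\}$, and form $\mathcal{B}$ by appending to $(B_i)_{i\leq n}$ the corresponding $\sum_\omega d_\omega$ copies of $\emptyset$, arranged so that each appended $\{\omega\}$ in $\mathcal{A}$ sits opposite an $\emptyset$ in $\mathcal{B}$. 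By construction the two sequences have equal length and $\sum\ind{}$ over $\mathcal{A}$ equals $\sum\ind{}$ over $\mathcal{B}$, so $\mathcal{A}\equiv_{0}\mathcal{B}$. Every matched pair satisfies the $\geqslant$ relation: for $i<n$ by hypothesis; for $i=n$ since $A_n>B_n$ and $>\,\subseteq\,\geqslant$; and for each appended pair since \textsf{(A0)} gives $\{\omega\}>\emptyset$, hence $\{\omega\}\geqslant\emptyset$. Applying \textsf{(Scott)} yields $A^{*}_{j}\leqslant B^{*}_{j}$ for every matched pair, in particular $A_n\leqslant B_n$; but then \textsf{(A1)} forces $A_n\not>B_n$, contradicting our assumption. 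This establishes \textsf{(S+)}, hence \textsf{(GS)}.

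Finally, Theorem \ref{strongrep} then provides a probability measure $\mu$ on $\mathcal{P}(\Omega)$ with $A<B\Rightarrow\mu(A)<\mu(B)$ and $A\leqslant B\Rightarrow\mu(A)\leq\mu(B)$. For regularity, note that $\mu(\emptyset)=0$ while \textsf{(A0)} gives $\{\omega\}>\emptyset$ for every $\omega$, so $\mu(\{\omega\})>0$ for all $\omega$, whence $\mu(X)>0$ for every nonempty $X$; thus $\mu$ is regular, as claimed. I expect the only delicate point to be the bookkeeping in the padding step---checking that the extended sequences are genuinely balanced and of equal length, and that the direction of each comparison ($\leqslant$ versus $\geqslant$, strict versus non-strict) matches what \textsf{(Scott)} and \textsf{(A1)} require; everything else is a direct appeal to the already-established Theorem \ref{strongrep} and Proposition \ref{strongrep2}.
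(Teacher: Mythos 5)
Your proof is correct and follows essentially the same route as the paper: reduce to the condition \textsf{(S+)} of Proposition \ref{strongrep2} via \textsf{(A2)}, and derive \textsf{(S+)} by padding the $A$-sequence with singletons (and the $B$-sequence with copies of $\emptyset$) to obtain a balanced pair, then apply \textsf{(Scott)} and contradict \textsf{(A1)}. Your explicit checks of the nonemptiness side condition and of regularity via \textsf{(A0)} are welcome details that the paper's own proof leaves implicit.
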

\begin{proof}
The property \textsf{(A2)} simply means that $<\,\subseteq\,\leqslant$: we apply the preceding proposition. We show that the premises $\textsf{(A0)}-\textsf{(A2)}$ and \textsf{(Scott)} together entail the condition for sufficiency $\textsf{(S+)}$. Suppose towards a contradiction that $\textsf{(S+)}$ fails. Then there exist two sequences of events $A=(A_{i})_{i\leq n}$  and $B=(B_{i})_{i\leq n}$ such that $(A_{i})_{i\leq n}\leq_{0}(B_{i})_{i\leq n}$, and  $A_{i}\geqslant B_{i}$ for all $i<n$, and $A_{n}>B_{n}$. We construct two balanced sequences of sets by adding to the sequence of $B_i$'s exactly as many instances of each singleton set $\{\omega\}$ as needed to make the sequences balanced. Recall that, given $\omega\in\Omega$ and a sequence of sets $S = (S_{i})_{i\leq n}$, the notation $\eta_{S}(\omega):=\big| \{ i\leq n \,|\, \omega\in S_{i}\} \big| $ denotes the number of occurrences of $\omega$ in $S$. Since $(A_{i})_{i\leq n}\leq_{0}(B_{i})_{i\leq n}$, for each element $\omega_i$ in the finite set $\Omega =\{\omega_i \,|\, i\leq k\}$, we know $\eta_A (\omega_i)\leq \eta_{B}(\omega_i)$. Then $\alpha_{i}:=\eta_{B}(\omega_{i})-\eta_{A}(\omega_{i})$ is the excess occurrences of $\omega_{i}$ in the sequence $B$.   So it is immediate that 
$$
    \big(A_{1},\dots, A_n, \alpha_{1}\{\omega_{1}\}\, ,\dots, \alpha_{k}\{\omega_{k}\}\big) \equiv_{0} \big(B_{1},\dots, B_n, \emptyset, \dots, \emptyset\big)
    $$
    (Here $\alpha_{i}\{\omega_{i}\}$ symbolises a sequence of $\alpha_i$ many copies of the singleton set $\{\omega_i\}$).  Call the sequence on the left $A^{\ast}=(A^{\ast}_{i})_{i\leq l}$ and the sequence on the right $B^{\ast}=(B^{\ast}_{i})_{i\leq l}$, each of length $l=n+\sum_{i\leq k}\alpha_{i}$. We know $A_{i}\geqslant B_{i}$ for all $i< n$: and we have $A_n > B_n$, so by \textsf{(A2)} we have $A_n \geqslant B_n$. By \textsf{(A0)} and \textsf{(A2)}, we have $\{\omega_i\} \geqslant \emptyset$ for all $\omega_{i}$. So we have $A^{\ast}_{i} \geqslant B^{\ast}_{i}$ for all $i\leq l$. By \textsf{(Scott)}, then, we must have $A^{\ast}_{i}\leqslant B^{\ast}_{i}$ for all $i\leq l$, and $A_n \leqslant B_{n}$ in particular. Yet, since we have $A_{n}>B_{n}$, from \textsf{(A1)} we conclude $A_{n}\not\leqslant B_{n}$. So, given axioms $\textsf{(A0)}-\textsf{(A2)}$, a failure of $\textsf{(S+)}$ entails a failure of the $\textsf{(Scott)}$ axiom.
\end{proof}
These conditions are manifestly not necessary. Our discussion in Section \ref{SectionRepAtDiffThresh} contains an example which shows that $\textsf{(S+)}$ is strictly weaker than \textsf{(Scott)}. Proposition \ref{repr} will be the key step in our representation theorem for selection functions for $t=1/2$.

\subsection{Relation to other notions of representability: strong, partial, and almost-representation}

Theorem \ref{strongrep} gives necessary and sufficient conditions for a pair of orders to be $(<,\leqslant)$ to be jointly represented by a single probability measure $\mu$, in the sense that $\mu$ \emph{partially represents} the $<$ order (i.e., $A<B \Rightarrow \mu(A)<\mu(B)$) and simultaneously \emph{almost represents} the $\leqslant$ order (i.e., $A\leqslant B \Rightarrow \mu(A)\leq \mu(B)$). Before we move on to the representation problem for selection structures, it is instructive to compare Theorem \ref{strongrep} to previous results in the comparative probability literature.

Our result gives necessary sufficient conditions for joint weak representability when the first order $<$ (the one to be represented in terms of strict probability comparisons) is nonempty. What about jointly representing a pair $(<,\leqslant)$ when $<$ is empty?\footnote{In this case the \textsf{(GS)} axiom is not sufficient: it holds vacuously, regardless of the almost-representability of $\leqslant$, since the conclusion $A_n\not>B_{n}$ always obtains.} This simply amounts to asking for a necessary and sufficient condition for almost representability. A necessary and sufficient condition for almost representability was given by \cite{KPS}, and it amounts to the following axiom: 
\begin{center}
\textsf{(AR)} \qquad If $(A_{i})_{i\leq n}<_{0} (B_{i})_{i\leq n}$ and $A_{i}\geqslant B_{i}$ for all $i<n$, then $A_{n}\not\geqslant B_{n}$.
\end{center}
The condition $(A_{i})_{i\leq n}<_{0} (B_{i})_{i\leq n}$ means that
$\sum_{i\leq n}\ind{{A_{i}}}< \sum_{i\leq n}\ind{{B_{i}}}$. In other words, it simply means that \emph{every} state in $\Omega$ occurs in \emph{strictly} more $B_{i}$s than $A_{i}$s. 

Now, it is immediate that a pair $(<,\leqslant)$ being jointly weakly representable entails that $\leqslant$ must be almost-representable as well. In particular, whenever $<$ nonempty, the joint representability condition $\textsf{(GS)}$ must then entail the almost representability condition \textsf{(AR)}. Perhaps one might worry that it isn't clear how the $\textsf{(GS)}$ axiom places the requisite constraints on the relation $\leqslant$ itself, particularly since it only warrants conclusions of the form $A_{n}\not> B_n$, rather the intuitively stronger $A_n\not\geqslant B_n$ required by \textsf{(AR)}. But in fact we can directly verify the following equivalence, and it is an elementary but useful exercise:

\begin{prop}
Let $\Omega$ finite and $\leqslant$ a binary relation on $\mathcal{P}(\Omega)$. The following are equivalent:
\begin{itemize}
    \item[(1)] The relation $\leqslant$ satisfies \textsf{(AR)};
     \item[(2)] If $\big( \Omega, A_1,\dots,A_n\big) \leq_{0} \big(\emptyset, B_1,\dots, B_n\big)$  and $A_{i}\geqslant B_{i}$ for all $i< n$, then $A_{n}\not\geqslant B_{n}$;
    \item[(3)] There exists a nonempty binary relation $<$ on $\mathcal{P}(\Omega)$ such that $(<,\leqslant)$ satisfies $\textsf{(GS)}$.
\end{itemize}
\end{prop}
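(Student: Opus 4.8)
The plan is to prove the cycle $(1)\Leftrightarrow(2)$, $(3)\Rightarrow(1)$, $(1)\Rightarrow(3)$; the first two links are elementary bookkeeping with characteristic vectors, and only $(1)\Rightarrow(3)$ will invoke the earlier results. For $(1)\Leftrightarrow(2)$ I would work entirely with occurrence counts: for $\omega\in\Omega$ and a sequence $S$ of events write $\eta_S(\omega)$ for the number of terms of $S$ containing $\omega$. Then $(A_i)_{i\leq n}<_0(B_i)_{i\leq n}$ says exactly that $\eta_A(\omega)<\eta_B(\omega)$ for every $\omega$, and since these are nonnegative integers this is the same as $\eta_A(\omega)+1\leq\eta_B(\omega)$ for every $\omega$, that is, the same as $(\Omega,A_1,\dots,A_n)\leq_0(\emptyset,B_1,\dots,B_n)$. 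Since the remaining hypothesis (``$A_i\geqslant B_i$ for all $i<n$'') and the conclusion (``$A_n\not\geqslant B_n$'') are word for word the same in (1) and (2), the two conditions are literally equivalent.

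For $(3)\Rightarrow(1)$ I would argue as follows. Suppose $<$ is nonempty and $(<,\leqslant)$ satisfies \textsf{(GS)}; then the converse relation $>$ is nonempty, so fix a pair $C>D$. If \textsf{(AR)} failed for $\leqslant$ there would be sequences with $(A_i)_{i\leq n}<_0(B_i)_{i\leq n}$, with $A_i\geqslant B_i$ for all $i<n$, and---negating the conclusion---with $A_n\geqslant B_n$ as well, so that $A_i\geqslant B_i$ holds for every $i\leq n$. Form the length-$(n+1)$ sequences $\mathcal A=(A_1,\dots,A_n,C)$ and $\mathcal B=(B_1,\dots,B_n,D)$. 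For each $\omega$, appending $C$ raises the occurrence count on the left by at most one, and $\omega$ already occurs strictly more often among $B_1,\dots,B_n$ than among $A_1,\dots,A_n$, so the left count stays at most the right count (appending $D$ only helps); hence $\mathcal A\leq_0\mathcal B$. Every term before the last satisfies $\mathcal A_j=A_j\geqslant B_j=\mathcal B_j$, so the antecedent of \textsf{(GS)} is met, and \textsf{(GS)} forces $C=\mathcal A_{n+1}\not>\mathcal B_{n+1}=D$, contradicting $C>D$. Thus $\leqslant$ satisfies \textsf{(AR)}, which is (1).

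For $(1)\Rightarrow(3)$ I would invoke the two earlier results. Assuming $\leqslant$ satisfies \textsf{(AR)}, the characterisation of almost-representability due to \cite{KPS} yields a probability measure $\mu$ on $\mathcal P(\Omega)$ with $A\leqslant B\Rightarrow\mu(A)\leq\mu(B)$. Take $<$ to be the one-element relation $\{(\emptyset,\Omega)\}$, which is nonempty (using, as everywhere in the paper, that $\Omega\neq\emptyset$). Since $\mu(\emptyset)=0<1=\mu(\Omega)$, the implication $A<B\Rightarrow\mu(A)<\mu(B)$ holds---its only instance is $\emptyset<\Omega$. Hence $\mu$ is a joint weak representation of $(<,\leqslant)$ with $<$ nonempty, so Theorem \ref{strongrep} gives that $(<,\leqslant)$ satisfies \textsf{(GS)}, which is exactly (3).

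The step I expect to be the main obstacle is $(1)\Rightarrow(3)$: one must be confident that \textsf{(AR)}---a condition solely about $\leqslant$---already delivers a nonempty strict relation that meshes with $\leqslant$, rather than merely some pair that then conflicts with almost-representing $\leqslant$. Routing through \cite{KPS} and Theorem \ref{strongrep} disposes of this cleanly; for a self-contained argument one would instead verify \textsf{(GS)} directly for $(\{(\emptyset,\Omega)\},\leqslant)$, and there the delicate point is that the disjunct ``$A_i>B_i$'' in the antecedent of \textsf{(GS)} can---with this choice of $<$---only be witnessed by $(A_i,B_i)=(\Omega,\emptyset)$; one would strip out such pairs one at a time (each removal preserving $\leq_0$, since it only lowers occurrence counts on the left) until what remains is an instance of \textsf{(AR)} in the form (2).
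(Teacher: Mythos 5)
Your proof is correct, but one leg differs genuinely from the paper's. The paper proves the cycle $(1)\Rightarrow(2)\Rightarrow(3)\Rightarrow(1)$, and its $(2)\Rightarrow(3)$ step is a purely combinatorial argument: it also takes $<\,:=\{(\emptyset,\Omega)\}$, but then shows directly that any violation of \textsf{(GS)} can be massaged into a violation of (2) — noting that at least one comparison must be non-strict (else $(\Omega,\dots,\Omega)\leq_0(\emptyset,\dots,\emptyset)$), and collapsing the repeated $(\Omega,\emptyset)$ strict pairs into a single one while preserving $\leq_0$. That is exactly the ``delicate point'' you flag at the end and sketch as a fallback, so your instinct about where the real content lies is accurate. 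Your main route instead gets $(1)\Rightarrow(3)$ by citing the \cite{KPS} sufficiency of \textsf{(AR)} for almost-representability, partially representing $\{(\emptyset,\Omega)\}$ trivially, and then applying the (straightforward, unproven-in-the-paper) necessity direction of Theorem \ref{strongrep}. This is valid and not circular — the necessity of \textsf{(GS)} is just soundness of the representation and does not depend on this proposition — but it buys brevity at the cost of self-containedness: the paper frames the proposition as an elementary exercise verified directly from the combinatorics, whereas your route leans on two representation results. Your $(3)\Rightarrow(1)$ argument (appending a witnessing pair $C>D$ and invoking \textsf{(GS)}) coincides with the paper's, and your bidirectional counting argument for $(1)\Leftrightarrow(2)$ is the paper's $(1)\Rightarrow(2)$ observation sharpened into a literal equivalence, which is fine.
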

\begin{proof}
(1)$\Rightarrow$(2): Assume $\leqslant$ satisfies $\textsf{(AR)}$. Suppose $\big( \Omega, A_1,\dots,A_n\big) \leq_{0} \big(\emptyset,B_1,\dots, B_n\big)$. Letting $A^{\ast}:= \big( \Omega, A_1,\dots,A_n\big) $ and $B^{\ast}:=\big(\emptyset, B_1,\dots, B_n\big)$, this simply means we have $\eta_{A^{\ast}}(\omega)\leq \eta_{B^{\ast}}(\omega)$ for every $\omega$. Consider now $A:= (A_1,\dots,A_n\big)$ and $B:= (B_1,\dots,B_n\big)$. We have, for each $\omega$, $\eta_{A^{\ast}}(\omega) = \eta_{A}(\omega)+1 \leq  \eta_{B^{\ast}}(\omega) = \eta_{B}(\omega)$, and so $  \eta_{A}(\omega)<  \eta_{B}(\omega)$. So $(A_{i})_{i\leq n}<_{0} (B_{i})_{i\leq n}$. If we have $A_{i}\geqslant B_{i}$ for all $i< n$, we can conclude $A_n\not\geqslant B_n$ by $\textsf{(AR)}$, which establishes (2).  

(2)$\Rightarrow$(3): Assume $\leqslant$ satisfies (2). Let $<:=\{(\emptyset, \Omega)\}$. Now suppose that the pair $(<,\leqslant)$ violates $\textsf{(GS)}$. This means that there are sequences $\big(A_1,\dots,A_n\big) \leq_{0} \big(B_1,\dots, B_n\big)$ where either $A_{i}\geqslant B_{i}$ or $A_{i}>B_{i}$ for all $i< n$, and $A_n > B_n$.  Note that among these there must be some non-strict comparison $A_{i}\geqslant B_{i}$ also, since otherwise we have $\big(\Omega,\dots,\Omega\big) \leq_{0} \big(\emptyset,\dots, \emptyset)$, which is false for $\Omega\neq\emptyset$. Since all strict comparisons $A_i > B_i$ just amount to $\Omega > \emptyset$, we can rearrange the sequences of $A_i$s and $B_i$s as
$\big( \Omega,\dots,\Omega,A^{\ast}_1,\dots,A^{\ast}_n\big) \leq_{0} \big(\emptyset,\dots, \emptyset, B^{\ast}_1,\dots, B^{\ast}_n\big)$, with all the strict comparisons put first followed by non-strict comparisons $A^{\ast}_i\geqslant B^{\ast}_i$. Now observe that $ \big( \Omega,A^{\ast}_1,\dots,A^{\ast}_n\big)\leq_{0} \big( \Omega,\dots,\Omega,A^{\ast}_1,\dots,A^{\ast}_n\big)$, as a result of which $\big( \Omega,\dots,\Omega,A^{\ast}_1,\dots,A^{\ast}_n\big) \leq_{0} \big(\emptyset,\dots, \emptyset, B^{\ast}_1,\dots, B^{\ast}_n\big)$ entails $\big( \Omega,A^{\ast}_1,\dots,A^{\ast}_n\big) \leq_{0} \big(\emptyset, B^{\ast}_1,\dots, B^{\ast}_n\big)$ (the right-hand-side sequences having the same number of occurrences of each $\omega\in\Omega$). This is a violation of (2). 

(3)$\Rightarrow$(1): Suppose $(<,\leqslant)$ satisfies \textsf{(GS)} and $<$ is nonempty: say, we have $C>D$ for some $C,D\in\mathcal{P}(\Omega)$. We show that $\leqslant$ satisfies \textsf{(AR)}. 
Suppose we have $(A_{i})_{i\leq n}<_{0} (B_{i})_{i\leq n}$ and $A_{i}\geqslant B_{i}$ for all $i<n$. Towards a contradiction, suppose $A_{n}\geqslant B_{n}$. This would immediately entail a violation of \textsf{(GS)}. It is enough to note that $(A_{1},\cdots, A_n, C)\leq_{0} (B_{1}, \dots, B_n, D)$: because $(A_{i})_{i\leq n}<_{0} (B_{i})_{i\leq n}$, every state occurs in strictly more $B_i$s than $A_i$s, so adding $C$ to the $A_i$-sequence adds only one occurrence of each $\omega \in C$ on the left-hand side, which is still \emph{no greater} than the total number of occurrences in the $B_i$-sequence\textemdash and, a fortiori, no greater the number of occurrences in in $(B_{1}, \dots, B_n, D)$. Finally, since $A_i\geqslant B_{i}$ for all $i$, \textsf{(GS)} requires $C\not>D$, contradicting the choice of $C,D$ such that $C>D$. 
\end{proof}

\noindent So our condition $\textsf{(GS)}$ does indeed entail the required condition for almost representability. 

Similarly, we can readily verify that \textsf{(GS)} entails the axiom that characterises the \emph{partial representability} of the relation $<$. Here the relevant representation theorem is due to \cite{FISH}. It gives necessary and sufficient conditions for the partial representation of a strict qualitative probability order on a finite algebra.

\begin{theorem}[\cite{FISH}]\label{FISHTHM}
Let $\Omega$ a finite set, and $<$ a binary relation on $\mathcal{P}(\Omega)$. Then there is a measure $\mu$ on $\mathcal{P}(\Omega)$ which \emph{partially represents} the relation $<$, meaning
$$
\forall A, B\in\mathcal{P}(\Omega), \,\,A< B \Rightarrow \mu(A)<\mu(B)
$$
if and only if the following holds for any $n>0$:
\begin{itemize}
\item[\emph{\textsf{(F)}}] If $(A_{i})_{i\leq n} \leq_{0} (B_{i})_{i\leq n}$ and $A_{i}> B_{i}$ for all $i < n$, then $A_{n}\not> B_{n}$.
\end{itemize}
\end{theorem}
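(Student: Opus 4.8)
The plan is to obtain Fishburn's theorem as a direct corollary of our joint representation result, Theorem~\ref{strongrep}. The key observation is that \emph{partial representability} of a strict relation $<$ is precisely the special case of \emph{joint weak representability} of a pair $(<,\leqslant)$ in which the non-strict relation $\leqslant$ is taken to be empty: when $\leqslant\,=\,\emptyset$, the clause ``$A\leqslant B\Rightarrow\mu(A)\leq\mu(B)$'' holds vacuously, so a probability measure jointly weakly represents $(<,\emptyset)$ if and only if it partially represents $<$. The first step is therefore to dispose of the degenerate case in which $<$ is empty: then \textsf{(F)} holds vacuously, and any probability measure on $\mathcal{P}(\Omega)$ (say the uniform one) partially represents $<$ vacuously, so there is nothing to prove.

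For $<$ nonempty, I would apply Theorem~\ref{strongrep} to the pair $(<,\leqslant)$ with $\leqslant\,:=\,\emptyset$. The only thing that remains is to check that, in this case, the condition \textsf{(GS)} of Theorem~\ref{strongrep} collapses to Fishburn's condition \textsf{(F)}. This is immediate: since $\leqslant$ is empty, no comparison $A_i\geqslant B_i$ ever holds, so the disjunctive hypothesis ``for all $i<n$, either $A_i\geqslant B_i$ or $A_i> B_i$'' appearing in \textsf{(GS)} is equivalent to ``$A_i> B_i$ for all $i<n$'', while the conclusion $A_n\not> B_n$ is unchanged. Hence \textsf{(GS)} for the pair $(<,\emptyset)$ \emph{is} exactly \textsf{(F)}, and Theorem~\ref{strongrep} yields the stated equivalence.

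For completeness I would also record the easy direct verification of necessity, which does not need Theorem~\ref{strongrep} at all: if $\mu$ partially represents $<$ and the hypotheses of \textsf{(F)} held together with $A_n> B_n$, then $A_i> B_i$ for every $i\leq n$, whence $\mu(A_i)>\mu(B_i)$ for every $i\leq n$ and so $\sum_{i\leq n}\mu(A_i)>\sum_{i\leq n}\mu(B_i)$; but $(A_{i})_{i\leq n}\leq_{0}(B_{i})_{i\leq n}$ means $\sum_{i}\ind{A_{i}}\leq\sum_{i}\ind{B_{i}}$ pointwise, and integrating against $\mu$ gives $\sum_{i\leq n}\mu(A_i)\leq\sum_{i\leq n}\mu(B_i)$, a contradiction.

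There is essentially no obstacle once Theorem~\ref{strongrep} is in hand; the only points requiring care are the empty-$<$ edge case and the (routine) check that the disjunctive clause of \textsf{(GS)} degenerates correctly. An alternative, self-contained route would be to rerun the Motzkin-separation argument from the proof of Theorem~\ref{strongrep} using only the matrices $\mathbf{I}_{n}$ and $\mathbf{M}_{\Sigma}$ (there being no non-strict constraints), in which case one could even invoke a strict-only theorem of the alternative such as Gordan's theorem; the one nontrivial step there would again be reading off a violation of \textsf{(F)} from the dual certificate by padding the $B$-side with copies of $\emptyset$ to absorb the slack contributed by the non-negativity rows $\mathbf{I}_{n}$. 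But the corollary route above is the most economical.
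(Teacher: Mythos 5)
Your derivation is correct, and it takes a different route from the paper: the paper does not prove Theorem \ref{FISHTHM} at all but cites Fishburn's original result (whose proof is an independent separating-hyperplane argument), whereas you obtain it as a corollary of Theorem \ref{strongrep} by instantiating $\leqslant\,:=\,\emptyset$. This is legitimate and non-circular, since the proof of Theorem \ref{strongrep} goes through Motzkin's transposition theorem and nowhere uses Fishburn's theorem; with $\leqslant$ empty the matrix $\mathbf{M}_{\Gamma}$ is simply absent and the argument is unaffected, the joint-representability clause for $\leqslant$ holds vacuously so joint weak representability of $(<,\emptyset)$ is exactly partial representability of $<$, and the disjunctive antecedent of \textsf{(GS)} degenerates to that of \textsf{(F)}. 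Your separate treatment of the empty-$<$ case is needed (Theorem \ref{strongrep} assumes $<$ nonempty) and is handled correctly, as is the direct necessity check. In effect you make explicit what the paper only asserts in passing, namely that its joint representation theorem subsumes Fishburn's characterisation of partial representability; what the corollary route buys is economy and a single Motzkin-based proof covering both results, while the paper's citation route keeps Fishburn's theorem as an independent benchmark against which \textsf{(GS)} is compared. One small caveat on your aside: rerunning the separation argument with the rows $\mathbf{I}_{n}$ and $\mathbf{M}_{\Sigma}$ still yields a \emph{mixed} system (non-strict nonnegativity constraints alongside strict ones), so Gordan's theorem does not apply off the shelf; you would still want Motzkin (or must first eliminate the nonnegativity rows by some other device), though the padding-with-$\emptyset$ idea for reading off a violation of \textsf{(F)} from the dual certificate is sound.
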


 It is immediate that $\textsf{(GS)}$ entails Fishburn's axiom \textsf{(F)}: any sequences of sets meeting the antecedent of $\textsf{(F)}$ automatically meet the antecedent of $\textsf{(GS)}$. 

Recall now the question of providing necessary and sufficient conditions for the \emph{strong representability} of a relation $\leqslant$, which amounts to the existence of a measure that both (i) almost-represents the relation  $\leqslant$ and (ii) partially represents the derived `strict' relation $<$, where $A<B$ is \emph{defined} as [$A\leqslant B$ and $B\not\leqslant A$]. In particular, \cite{KON} asks how to characterise strong representability exactly, without assuming non-necessary conditions like Non-triviality $(\emptyset < \Omega)$ and Non-negativity ($\emptyset \leqslant A$ for all $A\in\mathcal{P}(\Omega)$). From our representation result, we directly obtain as corollary an answer to the question of strong representability:\footnote{\cite{KON} gives a strenghtening of Theorem \ref{Scott's Theorem} (due to Kraft, Pratt, Seidenberg and Scott). Konek's result provides a condition for strong representability under the additional assumptions of Non-triviality and Non-negativity: the condition in question simply amounts to the \textsf{(Scott)} axiom. Note that, in the case where the order $<$ is defined from $\leqslant$, $\textsf{(GS)}$ becomes: 
\begin{itemize}
    \item[] If $(A_{i})_{i\leq n}\leq_{0}(B_{i})_{i\leq n}$ and $A_{i}\geqslant B_{i}$ for all $i<n$,  then either $A_{n}\not\geqslant B_{n}$ or $B_{n}\geqslant A_{n}$.
\end{itemize}
Which is equivalent to: 
\begin{itemize}
    \item[] If $(A_{i})_{i\leq n}\leq_{0}(B_{i})_{i\leq n}$ and $A_{i}\geqslant B_{i}$ for all $i\leq n$, then $B_{n}\geqslant A_{n}$.
\end{itemize}
Here the antecedent is insensitive to the order of the pairs $(A_i, B_i)$, so that $\textsf{(GS)}$ just states:
\begin{itemize}
    \item[] If $(A_{i})_{i\leq n}\leq_{0}(B_{i})_{i\leq n}$ and $A_{i}\geqslant B_{i}$ for all $i\leq n$, then $B_{i}\geqslant A_{i}$ for all $i\leq n$.
\end{itemize}
It is worth noting that, for the defined order $<$, the condition \textsf{GS} then simply amounts to a strengthening of the \textsf{Scott} axiom where the balancedness condition $\equiv_0$ in the premises is strengthened to the $\leq_0$ relation. } 

\begin{cor}[\textbf{Strong representation}]
A binary relation $\leqslant$ on $\mathcal{P}(\Omega)$ is strongly representable if and only if either:
\begin{itemize}
 \item $\leqslant$ is symmetric and satisfies \textsf{(AR)}, or
    \item $\leqslant$ is not symmetric and satisfies \textsf{(GS)}, with $A<B$ defined as [$A\leqslant B$ and $B\not\leqslant A$]. Explicitly, this amounts to 
\begin{itemize}
\item [\textsf{(S+)}] If $(A_{i})_{i\leq n}\leq_{0}(B_{i})_{i\leq n}$ and $A_{i}\geqslant B_{i}$ for all $i<n$, then $A_{n}\not>B_{n}$. 
\end{itemize}
\end{itemize}
\end{cor}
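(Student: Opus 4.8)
The plan is to read the Corollary off Theorem \ref{strongrep} (equivalently Proposition \ref{strongrep2}) together with the Kraft--Pratt--Seidenberg characterisation of almost-representability by the axiom \textsf{(AR)}. By definition, $\leqslant$ is strongly representable exactly when the pair $(<_{\leqslant},\leqslant)$ is jointly weakly representable, where $<_{\leqslant}$ is the \emph{derived} strict order given by $A<_{\leqslant}B \iff (A\leqslant B \text{ and } B\not\leqslant A)$. Two elementary observations drive the argument: first, $<_{\leqslant}\subseteq\leqslant$ always holds by construction, so Proposition \ref{strongrep2} will be applicable whenever $<_{\leqslant}$ is nonempty; second, $<_{\leqslant}$ is empty if and only if $\leqslant$ is symmetric, since a witness of asymmetry ($A\leqslant B$, $B\not\leqslant A$) is precisely a witness of $A<_{\leqslant}B$. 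Hence the natural structure of the proof is a dichotomy on whether $\leqslant$ is symmetric, matching the two disjuncts in the statement.

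First I would treat the symmetric case, where $<_{\leqslant}=\emptyset$. Then the strict clause in the definition of joint weak representability is vacuous, so $(<_{\leqslant},\leqslant)$ is jointly weakly representable if and only if there is a measure $\mu$ with $A\leqslant B\Rightarrow\mu(A)\leq\mu(B)$, i.e.\ if and only if $\leqslant$ is almost-representable; by the Kraft--Pratt--Seidenberg theorem (the \textsf{(AR)} axiom recalled above, \cite{KPS}), this holds iff $\leqslant$ satisfies \textsf{(AR)}. This yields the first disjunct. The point requiring care here is that one cannot instead appeal to \textsf{(GS)}/\textsf{(S+)}: with $<_{\leqslant}=\emptyset$ these conditions hold vacuously no matter what $\leqslant$ is, so they carry no information, which is exactly why the symmetric case must be governed by a separate axiom.

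Next I would treat the non-symmetric case, where $<_{\leqslant}\neq\emptyset$; then $\leqslant$, containing $<_{\leqslant}$, is nonempty as well, and $<_{\leqslant}\subseteq\leqslant$. These are precisely the hypotheses of Proposition \ref{strongrep2}, which gives that $(<_{\leqslant},\leqslant)$ is jointly weakly representable if and only if \textsf{(S+)} holds (equivalently, since $<_{\leqslant}\subseteq\leqslant$, if and only if \textsf{(GS)} holds). This yields the second disjunct. Combining the two cases---which are exhaustive and mutually exclusive, as every relation is either symmetric or not---establishes the stated biconditional.

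I do not anticipate a genuine obstacle: all the real content sits in Theorem \ref{strongrep} and in the \textsf{(AR)} characterisation, and the Corollary is merely their specialisation to the derived strict order. The only delicate point is the bookkeeping around the emptiness dichotomy for $<_{\leqslant}$ and the verification that it coincides with symmetry of $\leqslant$, which is what forces (and justifies) the two-case formulation of the statement.
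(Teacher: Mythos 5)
Your proposal is correct and follows essentially the same route as the paper's proof: a dichotomy on the symmetry of $\leqslant$ (equivalently, emptiness of the derived strict order), reducing the symmetric case to almost-representability via \textsf{(AR)} and the non-symmetric case to Theorem \ref{strongrep}/Proposition \ref{strongrep2}, with the same observation that \textsf{(GS)} collapses to \textsf{(S+)} for the derived order. Your added remark that \textsf{(GS)} is vacuous when the strict order is empty is exactly the point the paper makes in motivating the separate symmetric clause.
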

\begin{proof}
If $\leqslant$ is symmetric, then $<$ defined as above is empty, and so strong representability just amounts to the almost-representability of $\leqslant$, which is captured by \textsf{(AR)}. If $\leqslant$ is not symmetric, then $<$ is nonempty, and strong representability amounts to the joint representability of the pair $(<, \leqslant)$, and so Theorem \ref{strongrep} applies. Note that the disjuction [$A\geqslant B$ or $A>B$] becomes equivalent to $A\geqslant B$, so \textsf{(GS)} can be expressed in the simpler form \textsf{(S+)}.
\end{proof}

 We can now move on to giving the solution to our representation problem for $t=1/2$.

\subsection{Representation theorem for selection structures}

We begin by defining an order relation induced by selection functions. 

\begin{mydef}[The $\succcurlyeq^{\ast}_{\sigma}$ order]
Given a selection structure $(\Omega,\mathfrak{A},\sigma)$ and $X\cup \{\omega\} \subseteq\Omega$, write $\omega\succ_{\sigma} X$ whenever $\sigma(X\cup\{\omega\})=\{\omega\} \cap X^{c}$, and $X\succeq_{\sigma} \omega$ whenever $\sigma(X\cup\{\omega\})\neq\{\omega\} \cap X^{c}$. Extend this to a relation $\succcurlyeq^{\ast}_{\sigma}$ given by the condition:
$$
A\succcurlyeq^{\ast}_{\sigma} B \Leftrightarrow  A\succ_{\sigma} B \text{   or   } A\succeq_{\sigma} B.
$$
\end{mydef}
A few words about this order. First note that we have 
$$\succcurlyeq^{\ast}_{\sigma}\,=
\big\{(\{\omega\}, X)\,\big|\, \omega\succ_{\sigma} X\big\} \cup \big\{(X,\{\omega\})\,\big|\, \omega\not\succ_{\sigma} X \big\}$$ with $\omega\in\Omega,\,X\subseteq\Omega$. This means that $A\succcurlyeq^{\ast}_{\sigma} B$ is defined exactly when at least one of $A$, $B$ is a singleton (atom) in the algebra. It is undefined otherwise. Intuitively, $A\succcurlyeq^{\ast}_{\sigma} B$ means that either $A$ is a singleton `dominating' the set $B$\footnote{In the sense that $\mu(A)>\mu(B)$ for any measure that represents $\sigma$.} or that $B$ is a singleton which fails to dominate $A$. An alternative way to define this ordering is by means of the following equivalences:\footnote{These equivalences are making use of the fact that we are dealing with selections functions meeting the regularity property (\textsf{S1}), namely, that $\sigma(X)=\emptyset$ only if $X=\emptyset$.}
\begin{align*}
\omega\succ_{\sigma} X &\Longleftrightarrow \omega\not\in X \text{ and } \sigma(X\cup\{\omega\})=\{\omega\}, \\
X\succeq_{\sigma} \omega &\Longleftrightarrow \sigma(X\cup\{\omega\})\neq\{\omega\}   \text{ or } \omega\in X 
\end{align*}
The following is immediate:
\begin{prop}\label{prop:A1A2}
Let $\sigma$ be a selection function. Then the following holds:
\begin{itemize} 
\item [\emph{\textbf{(A1)}}]$A\preccurlyeq^{\ast}_{\sigma}B \Rightarrow A\not\succ_{\sigma} B$.
\item [\emph{\textbf{(A2)}}] $A\prec_{\sigma} B \Rightarrow A\preccurlyeq^{\ast}_{\sigma} B$.
\end{itemize}
\end{prop}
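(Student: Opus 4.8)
The plan is to prove both implications by directly unwinding the definitions of $\succ_{\sigma}$, $\succeq_{\sigma}$ and $\succcurlyeq^{\ast}_{\sigma}$, the only substantive inputs being the regularity axiom \textsf{(S1)} (so that $\sigma(X)=\emptyset$ forces $X=\emptyset$) and the fact that $\sigma$ is single-valued. Part \textbf{(A2)} is essentially by inspection: reading $\prec_{\sigma}$ as the converse of $\succ_{\sigma}$, the hypothesis $A\prec_{\sigma}B$ says $B\succ_{\sigma}A$; since by construction $\succcurlyeq^{\ast}_{\sigma}$ contains every pair $(\{\omega\},X)$ with $\omega\succ_{\sigma}X$, this gives $B\succcurlyeq^{\ast}_{\sigma}A$, i.e.\ $A\preccurlyeq^{\ast}_{\sigma}B$. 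No appeal to \textsf{(S1)} is needed here.

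For \textbf{(A1)}, I would first record the elementary observation that, for a fixed state $\omega$ and set $X$, the clauses $\omega\succ_{\sigma}X$ (i.e.\ $\sigma(X\cup\{\omega\})=\{\omega\}\cap X^{c}$) and $X\succeq_{\sigma}\omega$ (i.e.\ $\sigma(X\cup\{\omega\})\neq\{\omega\}\cap X^{c}$) are literally negations of one another, so exactly one of them holds. Now assume $A\preccurlyeq^{\ast}_{\sigma}B$, i.e.\ $(B,A)\in\,\succcurlyeq^{\ast}_{\sigma}$, and split according to which of the two defining clauses of $\succcurlyeq^{\ast}_{\sigma}$ applies. If $(B,A)=(X,\{\omega\})$ with $A=\{\omega\}$, $B=X$ and $\omega\not\succ_{\sigma}X$, then $A\succ_{\sigma}B$ would be exactly $\omega\succ_{\sigma}X$, which fails, so $A\not\succ_{\sigma}B$. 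If instead $(B,A)=(\{\omega\},X)$ with $B=\{\omega\}$, $A=X$ and $\omega\succ_{\sigma}X$, then: if $X$ is not a singleton, $A\succ_{\sigma}B$ is not even of the admissible form and there is nothing to prove; if $X=\{\omega'\}$, suppose for contradiction that $A\succ_{\sigma}B$ also holds. Then $\sigma(\{\omega,\omega'\})$ must equal both $\{\omega\}\cap\{\omega'\}^{c}$ and $\{\omega'\}\cap\{\omega\}^{c}$; if $\omega=\omega'$ the first of these is $\emptyset$, contradicting \textsf{(S1)} since $\{\omega\}\neq\emptyset$, while if $\omega\neq\omega'$ the two values are $\{\omega\}$ and $\{\omega'\}$, contradicting that $\sigma$ is a function. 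Hence $A\not\succ_{\sigma}B$ in every case.

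The only place where anything beyond a mechanical reading of the definitions is used is the ``both singletons'' subcase of \textbf{(A1)}, where one invokes single-valuedness of $\sigma$ together with \textsf{(S1)} to rule out the contradictory configuration; I do not expect any genuine obstacle. The role of this proposition is purely preparatory: it verifies that the pair $(\prec_{\sigma},\preccurlyeq^{\ast}_{\sigma})$ already satisfies two of the four hypotheses of Proposition~\ref{repr}, so that the forthcoming representation theorem for $t=1/2$ can be obtained by checking only the remaining axioms \textsf{(A0)} and \textsf{(Scott)} for $\succcurlyeq^{\ast}_{\sigma}$.
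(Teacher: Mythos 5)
Your proof is correct and amounts to exactly the immediate definitional verification the paper intends (it states the proposition without proof as ``immediate''): \textbf{(A2)} is a direct reading of the displayed description of $\succcurlyeq^{\ast}_{\sigma}$, and \textbf{(A1)} follows by case-splitting on the two defining clauses. Your appeal to \textsf{(S1)} and single-valuedness in the both-singletons subcase is legitimate and consistent with the paper's own footnote that the unpacked equivalences for $\succ_{\sigma}$ and $\succeq_{\sigma}$ presuppose the regularity property \textsf{(S1)}, so nothing is missing.
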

Consider now the Scott axiom for the relation $\succcurlyeq^{\ast}_{\sigma}$:
\begin{center}
\textsf{(Scott)} \qquad If $(A_{i})_{i\leq n}\equiv_{0}(B_{i})_{i\leq n}$ and $\forall i\leq n$, $A_{i}\succcurlyeq^{\ast}_{\sigma}B_{i}$, then $\forall i\leq n$, $A_{i} \preccurlyeq^{\ast}_{\sigma} B_{i}$. 
\end{center}
The Scott axiom plays a key role in Proposition \ref{repr}. In the context of selection rules, it is indeed a very powerful property: firstly, we check that it is sound with respect to our probabilistic interpretation. 

\begin{prop}
The \emph{(\textsf{Scott})} axiom for $\succcurlyeq^{\ast}_{\sigma}$ is probabilistically sound. That is, given any finite probability space $(\Omega, \mathcal{P}(\Omega), \mu)$, define $\sigma:\mathcal{P}(\Omega)\rightarrow \mathcal{P}(\Omega)$ by $\sigma(A):=\tau(\mu_{\scriptscriptstyle{A}})$ for threshold $t=1/2$. Then the \emph{(\textsf{Scott})} property holds for $\sigma$. 
\end{prop}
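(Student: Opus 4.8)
The plan is to show that the generating measure $\mu$ itself witnesses the axiom, by first translating the relation $\succcurlyeq^{\ast}_{\sigma}$ into (strict and non-strict) probability inequalities and then extracting two numerical identities from the balancedness hypothesis. Throughout one may assume $\mu$ regular (the standing assumption behind \textsf{(S1)}; the general case reduces to this by restricting to the algebra of non-negligible events), so that $\sigma(A)=\tau(\mu_A)$ for every nonempty $A$.

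The first step is a dictionary lemma. Specialising Proposition \ref{prop:stab1} to $t=1/2$, where $\tfrac{t}{1-t}=1$, gives that for $\omega\notin X$ one has $\sigma(X\cup\{\omega\})=\{\omega\}$ if and only if $\mu(\{\omega\})>\mu(X)$. Hence $\omega\succ_{\sigma}X$ iff $\mu(\{\omega\})>\mu(X)$, and complementarily $X\succeq_{\sigma}\omega$ iff $\mu(X)\geq\mu(\{\omega\})$. Unwinding $\succcurlyeq^{\ast}_{\sigma}=\{(\{\omega\},X)\mid\omega\succ_{\sigma}X\}\cup\{(X,\{\omega\})\mid\omega\not\succ_{\sigma}X\}$, this yields: whenever $A\succcurlyeq^{\ast}_{\sigma}B$ (so at least one of $A,B$ is a singleton) we have $\mu(A)\geq\mu(B)$; and if moreover $A$ is a singleton while $B$ is not, then necessarily $\mu(A)>\mu(B)$, since such a pair can only belong to the first of the two displayed sets.

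Now suppose $(A_i)_{i\leq n}\equiv_{0}(B_i)_{i\leq n}$ and $A_i\succcurlyeq^{\ast}_{\sigma}B_i$ for all $i$. Each pair has a singleton on at least one side; call a pair Type I if both sides are singletons, Type II if only $A_i$ is a singleton, Type III if only $B_i$ is. Evaluating the vector identity $\sum_i\ind{A_i}=\sum_i\ind{B_i}$ against the weight vector $(\mu(\{\omega\}))_{\omega\in\Omega}$ gives $\sum_i\mu(A_i)=\sum_i\mu(B_i)$; evaluating it instead against the all-ones vector gives $\sum_i|A_i|=\sum_i|B_i|$. By the dictionary lemma $\mu(A_i)\geq\mu(B_i)$ for every $i$, so the first identity forces $\mu(A_i)=\mu(B_i)$ for every $i$; in particular there are no Type II pairs, since for those the inequality would be strict. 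For a Type III pair, $\mu(A_i)=\mu(B_i)$ is the mass of a singleton, hence positive (by regularity), so $A_i\neq\emptyset$ and, being a non-singleton, $|A_i|\geq 2$. Thus in $\sum_i(|A_i|-|B_i|)=0$ every Type I pair contributes $0$ and every Type III pair contributes at least $1$, forcing the number of Type III pairs to be $0$ as well. Hence every pair is Type I: $A_i=\{a_i\}$, $B_i=\{b_i\}$ with $\mu(a_i)=\mu(b_i)$; since then $a_i\not\succ_{\sigma}\{b_i\}$, the pair $(\{b_i\},\{a_i\})$ lies in the second set defining $\succcurlyeq^{\ast}_{\sigma}$, i.e.\ $B_i\succcurlyeq^{\ast}_{\sigma}A_i$, which is exactly $A_i\preccurlyeq^{\ast}_{\sigma}B_i$. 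As this holds for all $i$, \textsf{(Scott)} is verified.

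The only genuinely non-routine point is the exclusion of Type III pairs. A single pair $A\succcurlyeq^{\ast}_{\sigma}B$ with $A$ a non-singleton of the same $\mu$-mass as the singleton $B$ is perfectly possible, and for such a pair $A\preccurlyeq^{\ast}_{\sigma}B$ genuinely fails, so one cannot argue pair by pair; it is the cardinality identity $\sum_i|A_i|=\sum_i|B_i|$ — once Type II pairs have been eliminated via the mass identity — that rules such "top-heavy" pairs out of any balanced sequence. Everything else is bookkeeping with the definitions of $\succ_{\sigma}$, $\succeq_{\sigma}$ and $\succcurlyeq^{\ast}_{\sigma}$.
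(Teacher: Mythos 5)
Your proposal is correct and takes essentially the same route as the paper's own proof: the mass identity extracted from balancedness rules out all strict comparisons (your Type II elimination), the occurrence/cardinality count together with regularity rules out non-singleton $A_i$'s facing singleton $B_i$'s (your Type III elimination, which is exactly the paper's counting argument and the subtlety its proof flags in a footnote), leaving only pairs of singletons of equal $\mu$-mass, for which $A_i\preccurlyeq^{\ast}_{\sigma}B_i$ follows from the definition of the order. The only difference is organizational: the paper first excludes strict relations, then deduces that all $B_i$ and then all $A_i$ are singletons, whereas you classify the pairs by type and evaluate the balancedness identity against the mass vector and the all-ones vector.
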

\begin{proof}
Let $\succcurlyeq^{\ast}_{\sigma}$ the order obtained from the revison plan $\sigma$ obtained from $\mu$, so that $\mu$ jointly represents both $\succ_{\sigma}$  and $\succcurlyeq^{\ast}_{\sigma}$. We assume the measure $\mu$ is regular (recall we are only interested in representation by regular measures). Assume that $(A_{i})_{i\leq n}\equiv_{0}(B_{i})_{i\leq n}$ and $\forall i\leq n$, $A_{i}\succcurlyeq^{\ast}_{\sigma}B_{i}$. We show that $\forall i\leq n$, $A_{i}\preccurlyeq^{\ast}_{\sigma} B_{i}$\footnote{Here the only subtlety is the following: given the premises, it is immediate that we must have $\mu(A_{i})\leq \mu(B_{i})$ for all $i\leq n$. This in itself does not entail $A_{i}\preccurlyeq^{\ast}_{\sigma} B_{i}$, however: we only have that  $A\succcurlyeq^{\ast}_{\sigma}B$ entails $\mu(A)\geq \mu(B)$, but the converse direction may not hold. For instance, it could be that $\mu(A_{n})=\mu(B_{n})$ and $B_{n}$ is an atom in the algebra while $A_{n}$ is not, in which case we cannot have $A_{n}\preccurlyeq^{\ast}_{\sigma}  B_{n}$, as we can have neither $A_{n}\preceq_{\sigma} B_{n}$ (due to the domain restrictions of $\preceq$, which only allow this to hold when $A_{n}$ is a singleton) nor $A_{n}\prec_{\sigma} B_{n}$ (as this contradicts  $\mu(A_{n})=\mu(B_{n})$). Thus we must make sure that his never occurs in balanced sequences.}. 

First, note that we cannot have any strict relation $A_{i}\succ_{\sigma} B_{i}$, as it would immediately entail 
$$
\sum^{n}_{i=1} \mu(A_{i}) >\sum^{n}_{i=1}  \mu(B_{i}).
$$
But this would contradict the fact that $(A_{i})_{i\leq n}\equiv_{0}(B_{i})_{i\leq n}$, since the fact that these sequences are balanced entails
$$\sum^{n}_{i=1}\sum_{\omega\in A_{i}} \mu(\omega)=\sum^{n}_{i=1} \mu(A_{i}) =\sum^{n}_{i=1} \mu(B_{i})=\sum^{n}_{i=1}\sum_{\omega\in B_{i}} \mu(\omega).$$

By definition of $\succcurlyeq^{\ast}_{\sigma}$ above, we can have $A_{i}\not\succ_{\sigma} B_{i}$ and $A_{i}\succcurlyeq^{\ast}_{\sigma}B_{i}$ only if $B_{i}$ is a singleton $B_{i}=\{b_{i}\}$ such that $\sigma(A_{i}\cup\{b_{i}\})\neq\{b_{i}\}$. So all $B_{i}$'s must be singletons. 

Next we note that all $A_{i}$'s must be singletons as well. Firstly, all $A_{i}$'s must be nonempty, by regularity of $\mu$: for otherwise $\emptyset=A_{i}\succcurlyeq^{\ast}_{\sigma}\{b_{i}\}$ means that $\sigma(\emptyset\cup \{b_{i}\})\neq \{b_{i}\}$, while regularity enforces $\sigma(\{\omega\})=\{\omega \}$ for any state. Now suppose that some $A_{i}$ contains different states $\{a_{1},...,a_{k}\}$. We have $\{a_{1},...,a_{k}\}\succcurlyeq^{\ast}_{\sigma}b_{i}$ and since the sequences are balanced, each of those $a_{j}$'s must appear as a singleton $\{b_{j}\}$ in the sequence $(B_{i})_{i\leq n}$. Now a contradiction follows by a counting argument: count all \emph{occurrences} of elements in the sets $A_{j}$ and do the same the sets  $B_{j}$. Since all $B_{j}$'s are singletons we have 
$$\sum_{\omega\in\Omega}| \{j\leq n\,|\, \omega\in B_{j}\}| = \sum_{j\leq n} |B_{j}|= n.$$
Each occurrence of a state $\omega$ in any of the $A_{j}$ must be matched by an occurrence in one of the $B_{j}$. But there is at least one such occurrence for each $A_{j}$, since they are nonempty, and strictly more than one occurrence for $A_{i}=  \{a_{1},...,a_{k}\}$, so $\sum_{\omega\in\Omega}| \{j\leq n\,|\, \omega\in A_{j}\}|>n$. This entails 
$$
\sum_{i\leq n}\ind{A_{i}}>\sum_{i\leq n}\ind{B_{i}},
$$
contradicting the fact that $(A_{i})_{i\leq n}$ and $(B_{i})_{i\leq n}$ are balanced. So all $A_{i}$'s must indeed be singletons as well.

So the sequences $(A_{i})_{i\leq n}$ and $(B_{i})_{i\leq n}$ are really sequences of singletons $(a_{i})_{i\leq n}$ and $(b_{i})_{i\leq n}$ such that $\mu(a_{i})\geq \mu (b_{i})$ for all $i\geq n$, and since they are balanced it follows that $\mu(a_{i})\leq \mu (b_{i})$. Now note that, given the definition of $\succcurlyeq^{\ast}_{\sigma}$, whenever $a_{i}$ and $b_{i}$ are singletons then $\mu(a_{i})\leq \mu (b_{i})$ entails $a_{i}\preccurlyeq^{\ast}_{\sigma} b_{i}$: for if $a_{i}\not\preccurlyeq^{\ast}_{\sigma} b_{i}$, then $\sigma(\{a_{i},b_{i}\})= a_{i}\setminus\{b_{i}\}\neq\emptyset$, which means $a_{i}\succ_{\sigma} b_{i}$, which entails $\mu(a_{i})>\mu(b_{i})$.  So, we can conclude that $A_{i}\preccurlyeq^{\ast}_{\sigma} B_{i}$ holds for all $i\leq n$, as desired. \end{proof}

One may also verify, although we omit the argument here, that the (\textsf{Scott}) axiom imposed on $\succcurlyeq^{\ast}_{\sigma}$, together with properties (\textsf{S1}) and (\textsf{S2}) listed below (as introduced in Proposition \ref{prop:stab2}), entail \emph{all} of the desired properties for selection functions introduced in the previous section (page \pageref{SLISTb})\footnote{That is, if $\sigma$ satisfies (\textsf{S1}) and (\textsf{S2}), and the induced order $\succcurlyeq^{\ast}_{\sigma}$ satisfies the (\textsf{Scott}) axiom, then $\succ_{\sigma}$ satisfies the properties (M1)-(M5) and (Sc): further, the property (\textsf{S5}), introduced just before Proposition \ref{prop:stab2}, also follows.}. 

We can now prove the representation theorem:
\begin{theorem} [\textbf{Representation theorem for selection structures}] \label{REPSEL} 
Let $(\Omega, \mathcal{P}(\Omega), \sigma)$ be a selection structure satisfying the following:
\begin{itemize}
\item [($\textsf{S1}$)]$\sigma(X)=\emptyset$ only if $X=\emptyset$
\item [($\textsf{S2}$)]$\sigma(X)\subseteq X$
\item [($\textsf{S3}$)]If $\sigma(A)\cap B\neq\emptyset$, then $\sigma(A\cap B)\subseteq \sigma(A)\cap B$
\item [($\textsf{S4}_{n}$)]For any $n$: if $\sigma(A\cup X_{i})=X_{i}$ for all $i\leq n$, then $\sigma(A\cup\bigcup_{i\leq n}X_{i})\subseteq \bigcup_{i\leq n}X_{i}$
\item [\textsf{(Scott)}] If $(A_{i})_{i\leq n}\equiv_{0}(B_{i})_{i\leq n}$ and $\forall i\leq n$, $A_{i}\succcurlyeq^{\ast}_{\sigma}B_{i}$, then $\forall i\leq n$, $A_{i} \preccurlyeq^{\ast}_{\sigma} B_{i}$. 
\end{itemize}
Then there is a (regular) probability measure representing $\sigma$. Conversely, for any probability space $(\Omega, \mathcal{P}(\Omega), \mu)$ with $\mu$ a regular measure, the strongest stable set operator $\sigma_{\mu}: X \mapsto \tau(\mu_{\scriptscriptstyle X})$ for threshold $t=1/2$ satisfies axioms \emph{($\textsf{S1}$)}\textemdash \emph{($\textsf{S4}_{n}$)} and $\emph{\textsf{(Scott)}}$. 
\end{theorem}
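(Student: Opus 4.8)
The plan is to prove the two directions separately. For the converse (soundness) direction, I would note that soundness of \textsf{(Scott)} for $\succcurlyeq^{\ast}_{\sigma}$ is exactly the Proposition established immediately above, and that \textsf{(S1)}--\textsf{(S3)} are either immediate or already done: \textsf{(S2)} is built into the definition of stability; \textsf{(S1)} holds because any nonempty $X$ is $\mu_{X}$-stable (as $\mu_{X}(X\mid E)=1$ whenever $X\cap E\neq\emptyset$), so $\tau(\mu_{X})\neq\emptyset$; and \textsf{(S3)} is precisely the computation carried out in the proof of Observation \ref{RATMON} (the verification of \textsf{(RM)}: if $\tau(\mu_{A})\cap B\neq\emptyset$ then $\tau(\mu_{A})\cap B$ is $\mu_{A\cap B}$-stable, so $\tau(\mu_{A\cap B})\subseteq\tau(\mu_{A})\cap B$). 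For \textsf{(S4$_n$)} I would give a short direct argument: if $\tau(\mu_{A\cup X_{i}})=X_{i}$ for each $i$, Proposition \ref{prop:stab1}(i) gives $\mu(\omega)>\mu(A\setminus X_{i})$ for all $\omega\in X_{i}$; then for any $\omega\in\bigcup_{i}X_{i}$, picking $j$ with $\omega\in X_{j}$ yields $\mu(\omega)>\mu(A\setminus X_{j})\geq\mu\bigl(A\setminus\bigcup_{i}X_{i}\bigr)$, so $\bigcup_{i}X_{i}$ is $\mu_{A\cup\bigcup_{i}X_{i}}$-stable and hence contains $\tau(\mu_{A\cup\bigcup_{i}X_{i}})$.

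For the representability (sufficiency) direction, the strategy is to reduce it, via the results already in hand, to an application of Proposition \ref{repr}. First I would invoke Proposition \ref{prop:stab2}: axioms \textsf{(S1)}--\textsf{(S4$_n$)} already guarantee that $\sigma(A)=B$ is equivalent to $B\subseteq A$ together with (i$'$) $\sigma((A\setminus B)\cup\{b\})=\{b\}$ for all $b\in B$ and (ii$'$) for every $X\subset B$ some $x\in X$ has $\sigma((A\setminus X)\cup\{x\})\neq\{x\}$. So it would suffice to produce a regular probability measure $\mu$ such that, for each pair $(\omega,X)$ with $\omega\notin X$,
\[
\sigma(X\cup\{\omega\})=\{\omega\}\ \Longleftrightarrow\ \mu(\omega)>\mu(X).
\]
Given such a $\mu$, translating (i$'$) and (ii$'$) through this equivalence turns them into exactly conditions (i) and (ii) of Proposition \ref{prop:stab1} for threshold $t=1/2$ (where $\tfrac{t}{1-t}=1$), so $\sigma(A)=B\iff\tau(\mu_{A})=B$, i.e.\ $\sigma=\sigma_{\mu,1/2}$; the case $A=\emptyset$ is covered by \textsf{(S2)} and regularity.

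To build such a $\mu$, I would apply Proposition \ref{repr} to the pair $(\prec_{\sigma},\,\preccurlyeq^{\ast}_{\sigma})$. Its hypotheses are available: \textsf{(A1)} and \textsf{(A2)} are Proposition \ref{prop:A1A2}; \textsf{(A0)} reduces to $\sigma(\{\omega\})=\{\omega\}$, which follows from \textsf{(S1)} and \textsf{(S2)}; and \textsf{(Scott)} for $\succcurlyeq^{\ast}_{\sigma}$ is literally the fifth hypothesis of the theorem. Proposition \ref{repr} then gives a regular $\mu$ with $A\prec_{\sigma}B\Rightarrow\mu(A)<\mu(B)$ and $A\preccurlyeq^{\ast}_{\sigma}B\Rightarrow\mu(A)\leq\mu(B)$, which yields the displayed equivalence: if $\sigma(X\cup\{\omega\})=\{\omega\}$ then $\omega\succ_{\sigma}X$ and so $\mu(\omega)>\mu(X)$; if $\sigma(X\cup\{\omega\})\neq\{\omega\}$ then, by the definition of the order (using regularity), $X\succcurlyeq^{\ast}_{\sigma}\{\omega\}$ and so $\mu(\omega)\leq\mu(X)$---these two implications being mutual contrapositives, the biconditional follows. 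Feeding this back into the previous paragraph completes the argument.

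The genuinely hard combinatorial work is not in this theorem but inside Proposition \ref{repr} / Theorem \ref{strongrep} (the Motzkin-transposition cancellation argument), which I am free to cite. The points that will need care here are: (a) checking that the \textsf{(Scott)} axiom as stated for $\succcurlyeq^{\ast}_{\sigma}$ is precisely the instance of \textsf{(Scott)} consumed by Proposition \ref{repr}; and (b) the bookkeeping that makes the translation between the qualitative statements $\sigma(X\cup\{\omega\})=\{\omega\}$ and the inequalities $\mu(\omega)>\mu(X)$ faithful in both directions---in particular that the negative case lands on ``$\leq$'' rather than on a strict inequality, which is exactly where the difference between $\succ_{\sigma}$ and $\succcurlyeq^{\ast}_{\sigma}$ matters. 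Apart from that, only the soundness of \textsf{(S4$_n$)} in the converse direction calls for a fresh (very short) computation.
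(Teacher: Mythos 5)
Your proposal is correct and takes essentially the same route as the paper: the sufficiency direction applies Proposition \ref{repr} to the pair $(\prec_{\sigma},\preccurlyeq^{\ast}_{\sigma})$ (with \textbf{(A0)} from ($\textsf{S1}$)--($\textsf{S2}$), \textbf{(A1)}--\textbf{(A2)} from Proposition \ref{prop:A1A2}, and \textsf{(Scott)} given), and then translates the resulting inequalities back through Propositions \ref{prop:stab1} and \ref{prop:stab2}, exactly as the paper does. The only difference is cosmetic: you verify the soundness of ($\textsf{S4}_{n}$) by a short direct computation (which is valid), whereas the paper defers it to the weak-\textsf{Or} property established in Section \ref{MINIMOP}.
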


\begin{proof}
The second part\textemdash the probabilistic soundness of the axioms\textemdash is straightforward (and has, for the most part, been verified in the previous sections: the only remaining scheme to check, ($\textsf{S4}_{n}$), is indirectly shown to be sound in section \ref{MINIMOP}). We show that the axioms suffice for probabilistic representability. Let $(\Omega, \mathcal{P}(\Omega), \sigma)$ a selection structure as above. Observe that the relations $\succ_{\sigma}$  and $\succcurlyeq^{\ast}_{\sigma}$ satisfy all of the conditions in Proposition \ref{repr}: the (\textsf{Scott}) property is given. Next, ($\textsf{S1}$) ensures that $\sigma(\{\omega\})\neq\emptyset$ for all $\omega\in\Omega$, so that ($\textsf{S2}$) ensures $\sigma(\{\omega\})=\{\omega\}$, so the following holds:
$$\textbf{(A0)}\,\,\,\,\forall\omega\in\Omega,\, \omega\succ_{\sigma}\emptyset.$$
Further, by Proposition \ref{prop:A1A2}, we have 
$$\textbf{(A1)}\,\,\,\,  A\preccurlyeq^{\ast}_{\sigma}B \Rightarrow A\not\succ_{\sigma} B   $$
$$\textbf{(A2)}\,\,\,\,  A\prec_{\sigma} B \Rightarrow A\preccurlyeq^{\ast}_{\sigma} B          $$
Given this, Proposition \ref{repr} entails that there exists a regular probability measure $\mu$ such that for any $A$, $B\subseteq\Omega$: 
\begin{itemize}
\item $A\prec_{\sigma} B \Rightarrow \mu(A)<\mu(B)$
\item $A\preccurlyeq^{\ast}_{\sigma}B \Rightarrow \mu(A)\leq \mu(B)$
\end{itemize}
By definition of $\prec_{\sigma}$ and $\preceq_{\sigma}$, this entails that we have, for any $\omega\in\Omega$ and $X\subseteq \Omega$ with $\omega\not\in X$: 
\begin{itemize}
\item If $\sigma(X\cup\{\omega\})=\{\omega\}$ (equivalently, $X\prec_{\sigma} \omega$) then $\mu(X)<\mu(\omega)$
\item If $\sigma(X\cup\{\omega\})\neq\{\omega\}$, then $\omega \preccurlyeq^{\ast}_{\sigma} X$ and so $\mu(\omega)\leq \mu(X)$
\end{itemize}
This means that the measure $\mu$ agrees with $\sigma$ on all pairs $(\omega,X)\in\Omega\times\mathfrak{A}$, and so solves the system consisting of all $\sigma$-generated inequalities in $L_{\sigma}$. By Observation \ref{GEOMETRY}, the system $L_{\sigma}$ uniquely identifies a consequence relation $\nc_{\mu}$, or equivalently, a probabilistically stable revision plan. The selection function $\sigma$ satisfies all of $(\textsf{S1})-(\textsf{S4})$ so, by Proposition \ref{prop:stab2} (and the discussion immediately preceding it), we have that $\sigma(A)=B$ if and only if $\tau(\mu_{A})=B$, and thus $\mu$ represents the selection function $\sigma$. 
\end{proof}
This gives the solution to our representation problem: the selection function $\sigma$ is a strongest-stable-set operator (generated by some probability measure and threshold $t=1/2$) if and only if it satisfies the properties (\textsf{S1}), (\textsf{S2}), (\textsf{S3}), ($\textsf{S4}_{n}$) and  (\textsf{Scott}). Thus Theorem \ref{REPSEL} gives a full qualitative description of probabilistically stable revision plans on finite probability spaces. This concludes our discussion of the representation of strongest-stable-set operators by means of selection functions for threshold $t=1/2$. Before we move on to the case of other thresholds, we continue with a few remarks about the axioms.

\subsection{Scott axioms and Fishburn axioms for comparative probability}\label{FISHPAR}

Firstly, note that none of the results in this section section relied on the fact that we worked with full powerset algebras on $\Omega$: this simply made our notation more convenient, as we could refer to \emph{singleton sets} $\{\omega\}$ for $\omega\in\Omega$, instead of talking about \emph{atoms in the underlying algebra}. All of the above results \textemdash and Theorem \ref{REPSEL} in particular \textemdash hold for any pair $(\Omega,\mathfrak{A})$ where $\mathfrak{A}$ is a subalgebra of $\mathcal{P}(\Omega)$, as long we work in finite spaces and replace any mention of `singletons in $\mathcal{P}(\Omega)$' with `atoms in $\mathfrak{A}$'.

Secondly, in order to get a better grasp on the connection between selection functions and the theory of comparative probability orders, it will be useful to think about what selection functions can tell us about the underlying probability comparisons. Consider a representable selection structure $(\Omega,\mathfrak{A}, \sigma)$ (that is, a selection structure satisfying the necessary and sufficient conditions from the representation theorem given above). Given a representable selection function $\sigma$, which probabilistic inequalities of the form $\mu(A)>\mu(B)$ must hold for any measure $\mu$ representing $\sigma$? In what ways can we express the fact that $\mu(A)>\mu(B)$, using only the selection function $\sigma$? The following is immediate:
\begin{observation}
For any representable selection function $\sigma$ and measure $\mu$ representing $\sigma$, we have that  $\sigma(A\cup B)\subseteq A\setminus B$ entails $\mu(A)>\mu(B)$. 
\end{observation}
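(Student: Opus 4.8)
The plan is to unwind the hypothesis by means of the explicit description of the $\tau$-rule recorded in Proposition~\ref{prop:stab1}. Fix a representable $\sigma$ and a representing measure $\mu$ (threshold $t=1/2$ throughout), and write $S:=\sigma(A\cup B)$, so that the hypothesis reads $S\subseteq A\setminus B$. I would first dispose of the degenerate case: if $A\setminus B=\emptyset$ then $S=\emptyset$, and since a representable $\sigma$ satisfies \textsf{(S1)} this forces $A\cup B=\emptyset$, i.e. $A=B=\emptyset$, which is excluded (there is nothing to prove when both probabilities vanish). So assume $A\setminus B\neq\emptyset$; then $A\cup B\neq\emptyset$, and because $\mu$ represents a selection function satisfying \textsf{(S1)} the measure $\mu$ is regular, so $\mu(A\cup B)>0$. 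Hence $S=\sigma(A\cup B)=\tau(\mu_{A\cup B})$ and, again by \textsf{(S1)}, $S\neq\emptyset$.

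Next I would apply Proposition~\ref{prop:stab1} with conditioning event $A\cup B$ and value $S$. Since $t=1/2$ we have $\frac{t}{1-t}=1$, so clause (i) of that proposition says: for every $\omega\in S$, $\mu(\omega)>\mu\big((A\cup B)\setminus S\big)$. Now the bookkeeping step: $S\subseteq A\setminus B$ is disjoint from $B$, so $B\subseteq(A\cup B)\setminus S$, and therefore $\mu\big((A\cup B)\setminus S\big)\geq\mu(B)$ by monotonicity of $\mu$. Combining the two gives $\mu(\omega)>\mu(B)$ for every $\omega\in S$.

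To finish, pick any $\omega_0\in S$ (possible since $S\neq\emptyset$). Because $S\subseteq A\setminus B\subseteq A$, we have $\mu(A)\geq\mu(\{\omega_0\})=\mu(\omega_0)>\mu(B)$, which is exactly the conclusion $\mu(A)>\mu(B)$.

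I do not expect a substantive obstacle here: the two points that need care are (a) verifying that the representing measure is regular and puts positive mass on $A\cup B$ — needed so that Proposition~\ref{prop:stab1} is applicable and $S$ is nonempty — which follows from \textsf{(S1)} together with the definition of representation; and (b) the inclusion $B\subseteq(A\cup B)\setminus S$, which is immediate from $S\cap B=\emptyset$. One could equally bypass Proposition~\ref{prop:stab1} and argue directly from $(\mu_{A\cup B},\tfrac12)$-stability of $S$, testing it against the event $\big((A\cup B)\setminus S\big)\cup\{\omega_0\}$; but invoking the already-established Proposition~\ref{prop:stab1} keeps the argument short.
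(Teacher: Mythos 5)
Your proof is correct and follows essentially the same route as the paper's: apply clause (i) of Proposition~\ref{prop:stab1} to the conditioning event $A\cup B$, observe that $B\subseteq (A\cup B)\setminus\sigma(A\cup B)$ since $\sigma(A\cup B)\subseteq A\setminus B$, and conclude $\mu(A)\geq\mu(\omega)>\mu(B)$ for a selected $\omega$. The only difference is that you make explicit the regularity/nonemptiness bookkeeping via \textsf{(S1)} and the degenerate case $A=B=\emptyset$, which the paper leaves implicit.
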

\begin{proof}
For any $\omega\in\sigma(A\cup B)$, by Proposition \ref{prop:stab1} we have $\mu(\omega)>\mu \big((A\cup B)\setminus \sigma(A\cup B) \big)$. Since $\sigma(A\cup B)\subseteq A\cap B^{c}$, we have  $B\subseteq (A\cup B)\setminus \sigma(A\cup B) $ which entails $\mu(\omega)>\mu(B)$ for any $\omega\in\sigma(A\cup B)$. So $\mu\big(\sigma(A\cup B)\big)>\mu(B)$, and we have $\sigma(A\cup B)\subseteq A$, so $\mu(A)>\mu(B)$.
\end{proof}

Of course, we know that representable selection functions always satisfy $\sigma(A)\subseteq A$, and so the condition $\sigma(A\cup B)\subseteq A\cap B^{c}$ can be rewritten as $\sigma(A\cup B)\subseteq B^{c}$.

Now, in order to understand the relation between selection structures and their underlying comparative probability orders, we would like to express the fact that $\mu(A)>\mu(B)$ using only the language of selection functions\footnote{This is meant informally; we have not introduced a formal language. We will expand on this point in Section \ref{Logics of probabilistic stability}.}. The above gives us one sufficient condition; but we can say more. Consider the following case. 

\begin{mydef}[Separation order]\label{SEPDEF}
Let $(\Omega, \mathfrak{A}, \sigma)$ a selection structure with $A, B, D\in\mathfrak{A}$. We say that $D$ \emph{separates} $A$ \emph{from} $B$ (written $A\triangleright_{D} B$) whenever the following conditions hold:
\begin{itemize} 
\item $D\cap (A\cup B) =\emptyset$
\item $\sigma(D\cup B)\subseteq B^{c}$
\item  $\sigma(D\cup A)\not\subseteq A^{c}$
\end{itemize}
\end{mydef}

In other words, $D$ separates $A$ from $B$ whenever $D$ is disjoint form both $A$ and $B$, and in addition $D$ dominates $B$, but does not dominate $A$. It is immediate then that we have:

\begin{prop}
Let $(\Omega, \mathfrak{A}, \sigma)$ a representable selection structure and $\mu$ a probability measure representing $\sigma$. Then for any $A, B, D\in\mathfrak{A}$, if $A\triangleright_{D} B$ then $\mu(A)> \mu(B)$. 
\end{prop}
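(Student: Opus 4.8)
The plan is to argue by contradiction. Assume $\mu(A)\leq\mu(B)$; I will show that the event $Q:=\sigma(D\cup B)$ is probabilistically stable with respect to the conditional measure $\mu_{D\cup A}$, which forces $\tau(\mu_{D\cup A})\subseteq Q\subseteq D$ and hence $\sigma(D\cup A)\subseteq A^{c}$ --- contradicting the third clause of $A\triangleright_{D} B$.

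First I would unpack Definition \ref{SEPDEF} together with the fact that $\mu$ represents $\sigma$ (and is regular, which follows from \textsf{(S1)}). Since $\sigma(D\cup A)\not\subseteq A^{c}$, the set $\sigma(D\cup A)$ is nonempty and meets $A$, so by \textsf{(S1)} and \textsf{(S2)} both $D\cup A$ and $A$ are nonempty; regularity then gives $\mu(A)>0$, so under our assumption $\mu(B)\geq\mu(A)>0$ and $D\cup B\neq\emptyset$ as well. Consequently $\sigma(D\cup A)=\tau(\mu_{D\cup A})$ and $\sigma(D\cup B)=\tau(\mu_{D\cup B})$. From $\sigma(D\cup B)\subseteq B^{c}$ and $\sigma(D\cup B)\subseteq D\cup B$ we obtain $Q:=\sigma(D\cup B)\subseteq D$, and $Q\neq\emptyset$ by \textsf{(S1)}.

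The heart of the argument is transferring stability from the space $D\cup B$ to the space $D\cup A$. By Proposition \ref{prop:stab1}(i) applied to $\tau(\mu_{D\cup B})=Q$, every $\omega\in Q$ satisfies $\mu(\omega)>\tfrac{t}{1-t}\,\mu\big((D\cup B)\setminus Q\big)$. Here I use $D\cap(A\cup B)=\emptyset$ and $Q\cap B=\emptyset$ to write the disjoint decompositions $(D\cup B)\setminus Q=(D\setminus Q)\sqcup B$ and $(D\cup A)\setminus Q=(D\setminus Q)\sqcup A$; the assumption $\mu(A)\leq\mu(B)$ then yields $\mu\big((D\cup A)\setminus Q\big)\leq\mu\big((D\cup B)\setminus Q\big)$. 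Hence every $\omega\in Q$ also satisfies $\mu(\omega)>\tfrac{t}{1-t}\,\mu\big((D\cup A)\setminus Q\big)$, i.e. $Q$ has no defeater states relative to $\mu_{D\cup A}$ and so, by the reformulation of stability used in the proof of Proposition \ref{prop:stab1}, $Q$ is $(\mu_{D\cup A},t)$-stable.

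Since $\tau(\mu_{D\cup A})$ is by definition the $\subseteq$-least $(\mu_{D\cup A},t)$-stable set, it follows that $\sigma(D\cup A)=\tau(\mu_{D\cup A})\subseteq Q\subseteq D\subseteq A^{c}$, contradicting $A\triangleright_{D} B$; therefore $\mu(A)>\mu(B)$. The only place where real care is needed is the bookkeeping with the two disjoint decompositions of $(D\cup B)\setminus Q$ and $(D\cup A)\setminus Q$ --- in particular, the step uses that $Q$, being contained in $D$, contains no part of $B$, so that the whole mass $\mu(B)$ sits in the residue $(D\cup B)\setminus Q$; the remaining steps are routine. Note that the argument uses neither the \textsf{(Scott)} axiom nor \textsf{(S3)}, and it goes through verbatim for any threshold $t\in[1/2,1)$.
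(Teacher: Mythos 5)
Your proof is correct. The paper offers no explicit argument for this proposition\textemdash it is presented as ``immediate'' after Definition \ref{SEPDEF} and the observation that $\sigma(A\cup B)\subseteq A\setminus B$ forces $\mu(A)>\mu(B)$\textemdash so there is no official proof to diverge from, and your contrapositive stability-transfer argument is exactly the right way to fill in the omitted details. It is worth noting that some such argument really is needed: the superficially ``immediate'' route suggested by the paper's gloss (read the second clause as $\mu(D)>\mu(B)$ via the preceding observation and the third clause as $\mu(A)\geq\mu(D)$, then chain) does not work, because $\sigma(D\cup A)\not\subseteq A^{c}$ does not force $\mu(A)\geq\mu(D)$ (for instance, at $t=1/2$, take $D=\{d_{1},d_{2}\}$ with masses $0.2,0.2$ and $A=\{a\}$ with mass $0.3$: then $\tau(\mu_{D\cup A})=D\cup A$ meets $A$, yet $\mu(A)<\mu(D)$). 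Your mechanism\textemdash transferring the no-defeater inequalities of $Q=\sigma(D\cup B)$, via Proposition \ref{prop:stab1}, from the residue $(D\setminus Q)\sqcup B$ to the no-larger residue $(D\setminus Q)\sqcup A$ under the assumption $\mu(A)\leq\mu(B)$, concluding $\sigma(D\cup A)\subseteq Q\subseteq D\subseteq A^{c}$\textemdash avoids that trap, and the supporting bookkeeping is in order: $Q\subseteq D$ follows from $\sigma(D\cup B)\subseteq(D\cup B)\cap B^{c}$ and $D\cap B=\emptyset$, positivity of $\mu(D\cup A)$ follows from $\sigma(D\cup A)\neq\emptyset$, and the appeal to regularity and \textsf{(S1)} is consonant with the paper's standing restriction to regular representing measures. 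Your closing remarks\textemdash that neither \textsf{(Scott)} nor \textsf{(S3)} is used, and that the argument is uniform in the threshold $t\in[1/2,1)$\textemdash are also accurate.
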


We can combine the two observations above and define the following relation. 

\begin{mydef}[Dominance relation]\label{DOMDEF}
Let $(\Omega, \mathfrak{A}, \sigma)$ a selection structure. We define the relation $\triangleright\subseteq\mathfrak{A}\times\mathfrak{A}$ as follows:
\begin{center}
$A\triangleright B$ if and only if $\sigma(A\cup B)\subseteq B^{c}$ or $A\triangleright_{D} B $ for some $D\in\mathfrak{A}$.
\end{center}
\end{mydef}

This gives us another way to express the fact that $\mu(A)>\mu(B)$ holds for \emph{any} representing probability function: namely $A\triangleright B$ entails $\mu(A)>\mu(B)$ for any $\mu$ representing the selection function $\sigma$. Thus the extended ordering $\triangleright$ indeed entails that one event must have higher probability than another.

Since the Fishburn axiom \textsf{(F)} from Theorem \ref{FISHTHM} is necessary for any measure weakly representing the comparative ordering on propositions forced by $\sigma$, the resulting order relation must satisfy the corresponding form of the Fishburn axiom. In particular then, a generalised version of the Fishburn axiom must also hold for the order $\triangleright$.

\begin{mydef}[Generalised Fishburn Axiom]
A selection structure $(\Omega,\mathfrak{A}, \sigma)$ satisfies the \emph{Generalised Fishburn Axiom} if and only if, for any $A,B\in\mathfrak{A}$, we have:
\begin{center}
Whenever $(A_{i})_{i\leq n} \leq_{0} (B_{i})_{i\leq n}$ and $A_{i}\triangleright B_{i}$ for all $i\leq n-1$, then $\neg(A_{n}\triangleright B_{n})$.
\end{center}
\end{mydef}

The required properties for representation (namely ($\textsf{S1}$), ($\textsf{S3}$), ($\textsf{S4}_{n}$) and ($\textsf{Scott}$)) entail the Generalised Fishburn Axiom. 

\begin{prop}
Any representable selection structure $(\Omega,\mathfrak{A}, \sigma)$ satisfies the Generalised Fishburn Axiom. 
 \end{prop}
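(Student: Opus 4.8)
The plan is to reduce the combinatorial claim to the standard Fishburn-type inequality argument by routing everything through a probabilistic representation. Since $(\Omega,\mathfrak{A},\sigma)$ is representable, I would fix a regular probability measure $\mu$ representing it. The key input, already established just above (when the relation $\triangleright$ was introduced), is that the dominance relation is probabilistically sound: whenever $A\triangleright B$ --- be it because $\sigma(A\cup B)\subseteq B^{c}$ or because $A\triangleright_{D}B$ for some separating $D$ --- one has $\mu(A)>\mu(B)$ for every $\mu$ representing $\sigma$. This is the only point at which the internal structure of $\triangleright$ enters; the remainder of the argument concerns only inequalities between real numbers.

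Next I would assume the hypotheses of the Generalised Fishburn Axiom: $(A_{i})_{i\leq n}\leq_{0}(B_{i})_{i\leq n}$ and $A_{i}\triangleright B_{i}$ for all $i\leq n-1$. Summing the pointwise vector inequality $\sum_{i\leq n}\ind{A_{i}}\leq\sum_{i\leq n}\ind{B_{i}}$ against the weights $\mu(\omega)$ gives $\sum_{i\leq n}\mu(A_{i})\leq\sum_{i\leq n}\mu(B_{i})$. Soundness of $\triangleright$ yields $\mu(A_{i})>\mu(B_{i})$ for each $i\leq n-1$, hence $\sum_{i\leq n-1}\mu(A_{i})>\sum_{i\leq n-1}\mu(B_{i})$; subtracting this from the aggregate inequality forces $\mu(A_{n})<\mu(B_{n})$, so in particular $\mu(A_{n})\not>\mu(B_{n})$. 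If $A_{n}\triangleright B_{n}$ held, soundness of $\triangleright$ would give $\mu(A_{n})>\mu(B_{n})$, a contradiction; therefore $\neg(A_{n}\triangleright B_{n})$, as required. The base case $n=1$ is the degenerate instance in which the antecedent reduces to $A_{1}\subseteq B_{1}$, whence $\mu(A_{1})\leq\mu(B_{1})$ and the same conclusion follows.

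I do not expect a genuine obstacle here: the real work was done in the two preceding results, which show that both ways of witnessing $A\triangleright B$ certify a strict probability gap under every representing measure. The only point to watch is that passing from the bare comparison relation $\succ_{\sigma}$ to the enriched relation $\triangleright$ does not damage the Fishburn counting argument --- but it does not, precisely because $\triangleright$ remains probabilistically sound, and the argument uses only that each premise comparison translates into a strict numerical inequality while the balancedness ($\leq_{0}$) hypothesis translates into an aggregate non-strict one.
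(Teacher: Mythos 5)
Your proposal is correct and follows essentially the same route as the paper, which derives the proposition from the representation result together with Fishburn's theorem: representability plus the soundness of $\triangleright$ yields a measure partially representing $\triangleright$, and the necessity direction of Fishburn's condition (which you simply inline via the summation argument rather than citing Theorem \ref{FISHTHM}) gives the Generalised Fishburn Axiom. The only thing the paper mentions beyond this is the existence of a purely qualitative, representation-free proof, which it also omits, so no gap on your part.
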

 
This follows from our representation result and Fishburn's theorem \ref{FISHTHM}. It can also be verified by a direct, elementary proof, which we omit but encourage to reader to spell out: the argument, while somewhat tedious, is instructive in as much as it illustrates how the qualitative axioms for selection structures can be put to use to directly derive (without appealing to a geometric or algebraic argument) the condition on systems of linear inequalities that is implicitly captured by the Fishburn axiom. 

The Generalised Fishburn Axiom can be employed to characterise an interesting class of structures that approximate probabilistically stable revision.

\begin{mydef}[Fishburn structures]
A \emph{Fishburn} structure is a selection structure $(\Omega, \mathfrak{A}, \sigma)$ satisfying the following for all $X, A, B \in\mathfrak{A}$:
\begin{itemize}
\item [\emph{($\textsf{S1}$)}]$\sigma(X)=\emptyset$ only if $X=\emptyset$
\item [\emph{($\textsf{S2}$)}]$\sigma(X)\subseteq X$
\item [\emph{($\textsf{S3}$)}]If $\sigma(A)\cap B\neq\emptyset$, then $\sigma(A\cap B)\subseteq \sigma(A)\cap B$
\item [\emph{($\textsf{S4}_{n}$)}]For any $n$: if $\sigma(A\cup X_{i})=X_{i}$ for all $i\leq n$, then $\sigma(A\cup\bigcup_{i\leq n}X_{i})\subseteq \bigcup_{i\leq n}X_{i}$
\item [\emph{\textsf{(GFA)}}] Whenever $(A_{i})_{i\leq n} \leq_{0} (B_{i})_{i\leq n}$ and $A_{i}\triangleright B_{i}$ for all $i\leq n-1$, then $\neg(A_{n}\triangleright B_{n})$.
\end{itemize}
\end{mydef}

We can observe the following: 

\begin{observation}\label{OBSonFISH}
Let $(\Omega,\mathfrak{A},\sigma)$ a Fishburn structure. There exists a measure that partially represents the induced order $\triangleright$. If $\mu$ is any such measure, we have that for any $A\in\mathfrak{A}$, the event $\sigma(A)\in\mathfrak{A}$ is $\mu(\cdot\,|\,A)$-stable (for threshold $t=1/2$).
\end{observation}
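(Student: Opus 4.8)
The plan is to treat the two assertions of Observation \ref{OBSonFISH} in turn: first the existence of a measure partially representing $\triangleright$, then the stability claim for any such measure.

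For the first assertion, I would note that the axiom \textsf{(GFA)} of a Fishburn structure is nothing but Fishburn's cancellation condition \textsf{(F)} from Theorem \ref{FISHTHM}, applied to the relation $\triangleright$ read as a ``greater-than'' relation (equivalently, \textsf{(F)} for the converse relation $\triangleright^{-1}$, with the clause $i\leq n-1$ being $i<n$). Since a Fishburn structure satisfies \textsf{(GFA)} by definition, Theorem \ref{FISHTHM} immediately produces a probability measure $\mu$ on $\mathcal{P}(\Omega)$ with $A\triangleright B \Rightarrow \mu(A)>\mu(B)$ for all $A,B\in\mathfrak{A}$. This is exactly a partial representation of $\triangleright$, so the first claim is done.

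For the second assertion, fix any such $\mu$ and a nonempty $A\in\mathfrak{A}$ (the case $A=\emptyset$ being vacuous). The key step is to extract, purely from the selection-function axioms, the instances of $\triangleright$ that encode the inequalities needed for stability at threshold $1/2$. Since a Fishburn structure satisfies \textsf{(S1)}--\textsf{(S4)}, Proposition \ref{prop:stab2} applies with $B:=\sigma(A)$ (the biconditional's left-hand side, $\sigma(A)=\sigma(A)$, holding trivially), yielding property (i$'$): for every $\omega\in\sigma(A)$, $\sigma\big((A\setminus\sigma(A))\cup\{\omega\}\big)=\{\omega\}$. As $\omega\notin A\setminus\sigma(A)$, this says precisely that $\sigma\big(\{\omega\}\cup(A\setminus\sigma(A))\big)\subseteq(A\setminus\sigma(A))^{c}$, so Definition \ref{DOMDEF} gives $\{\omega\}\triangleright(A\setminus\sigma(A))$. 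Applying the partial representation, $\mu(\omega)>\mu(A\setminus\sigma(A))$ for every $\omega\in\sigma(A)$; in particular $\mu(A)\geq\mu(\sigma(A))\geq\mu(\omega)>0$, so $\mu_{A}=\mu(\cdot\,|\,A)$ is well-defined, and dividing by $\mu(A)$ gives $\mu_{A}(\omega)>\mu_{A}(A\setminus\sigma(A))$ for all $\omega\in\sigma(A)$. This is condition (i) of Proposition \ref{prop:stab1} for $t=1/2$ (where $\tfrac{t}{1-t}=1$), which, as the proof sketch of that proposition records, is equivalent to $\sigma(A)$ being $\mu_{A}$-stable (it states that no state of $\sigma(A)$ is a defeater with respect to $\mu_{A}$). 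Hence $\sigma(A)$ is $\mu(\cdot\,|\,A)$-stable.

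I do not anticipate any real obstacle here: the qualitative content is already packaged in Proposition \ref{prop:stab2} and the representation content in Fishburn's theorem, and the relation $\triangleright$ was designed precisely to capture the inequalities $\mu(\omega)>\mu(A\setminus\sigma(A))$ that characterise stability at threshold $1/2$. The only points needing care are the direction-of-inequality bookkeeping that identifies \textsf{(GFA)} with Fishburn's \textsf{(F)}, and checking $\mu(A)>0$ so that conditioning is legitimate — both handled above. As an alternative to citing Proposition \ref{prop:stab2}, one could derive $\sigma\big((A\setminus\sigma(A))\cup\{\omega\}\big)=\{\omega\}$ directly from \textsf{(S3)} and \textsf{(S1)}, replaying the relevant half of that proposition's proof.
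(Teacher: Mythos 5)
Your proposal is correct and follows essentially the same route as the paper: Fishburn's theorem applied to \textsf{(GFA)} for the representing measure, then the observation that for each $\omega\in\sigma(A)$ one has $\sigma\big((A\setminus\sigma(A))\cup\{\omega\}\big)=\{\omega\}$, hence $\{\omega\}\triangleright(A\setminus\sigma(A))$ and $\mu(\omega)>\mu(A\setminus\sigma(A))$, giving $\mu_A$-stability at threshold $1/2$. The only cosmetic difference is that you package the key step via Proposition \ref{prop:stab2} (which invokes \textsf{(S1)}--\textsf{(S4$_n$)}), whereas the paper derives it directly from \textsf{(S3)} and \textsf{(S1)}—precisely the alternative you mention—and thereby notes that \textsf{(S4$_n$)} is not needed.
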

\begin{proof}
Given the Generalised Fishburn Axiom \textsf{(GFA)}, by Theorem \ref{FISHTHM}, there exists a measure $\mu$ on $\mathfrak{A}$ such that for any $X$, $Y\in\mathfrak{A}$ we have that $X \triangleright Y$ entails $\mu(X)>\mu(Y)$. Let $A\in\mathfrak{A}$ with $A\neq\emptyset$. We have $\sigma(A)\neq\emptyset$ by ($\textsf{S1}$). We show that $\sigma(A)$ is $\mu_{A}$-stable. Firstly, $\sigma(A)\subseteq A$ by ($\textsf{S2}$). All we need to show is the following: for every $\omega\in \sigma(A)$, we have $\mu_{A}(\omega\,|\,A\setminus \sigma(A))>1/2$. Let $\omega\in\sigma(A)$. Write $B:=(A\setminus\sigma(A))\cup \{\omega\}$. Then $\sigma(A)\cap B = \{\omega\}\neq\emptyset$ so by ($\textsf{S3}$) we have $\sigma (A\cap B) \subseteq \sigma(A)\cap B$, which means 
$$\sigma\big( [A\setminus \sigma(A)] \cup \{\omega\}\big) = \{\omega\}.$$ This entails $\{\omega\}\triangleright [A\setminus \sigma(A)]$, so $\mu(\omega)> \mu(A\setminus \sigma(A))$. We get $\mu_{A}(\omega\,|\,A\setminus \sigma(A))>1/2$, as desired. Note that the argument does not rely on the axiom ($\textsf{S4}_{n}$). 
\end{proof}
The upshot is that Fishburn structures capture a class of revision operators that \emph{respect probabilistic stability}: given any (nonempty) revision input $A$, the strongest accepted proposition $\sigma(A)$ is stable with respect to the updated measure $\mu(\cdot\,|\,A)$. However, it need not be the \emph{logically strongest} stable set, and the selection functions is not guaranteed to capture a probabilistically stable revision plan (a strongest-stable-set operator). 

Fishburn structures thus comply with Leitgeb's `non-reductionistic' Humean thesis on Belief. There, the only criterion imposed on revisions was simply that the strongest accepted proposition be probabilistically stable with respect to the updated measure. In our discussion we noted the stability constraint \emph{alone} was too weak to identify interesting revision operators (and to rule out trivial revisions which always select the least set with probability $1$ after conditioning). By contrast, the revision operators captured by Fishburn structures constitute a relatively well-behaved family that complies with the stability requirement: Fishburn structures approximate probabilistically stable revision, in that they satisfy the strong monotonicity principle ($\textsf{S3}$) corresponding to Rational Monotonicity, as well as the axiom ($\textsf{S4}_{n}$). Moreover, they are partially representable by a probability measure, in that the strict dominance order generated by the selection function is numerically representable. This wider class of revision operators also admits a simpler axiomatisation, as the $\triangleright$ relation\textemdash as opposed to the relation $\succcurlyeq_{\sigma}$ employed in the (\textsf{Scott}) axiom\textemdash does not depend on the property of being an atom in the algebra. We discuss next the role of the scheme ($\textsf{S4}_{n}$) in our representation theorem and its relation to the (\textsf{Or}) rule.  

\subsection{Minimisation operators and the \textsf{Or} rule again} \label{MINIMOP}

As we pointed out in Section \ref{QualiMod}, strongest-stable-set operators cannot be represented as a map $X\mapsto\min(R\hspace{-0.25em}\restriction\hspace{-0.25em} X)$ for some order relation $R\subseteq\Omega^{2}$. In other words, probabilistically stable revision cannot be tracked using a \emph{minimisation operator} for a plausibility relation. We also observed that strongest-stable-set operators, treated as selection functions, do not validate the \textsf{(Or)} rule. The former fact easily follows from the latter. As noted by e.g. \cite{vBE3} and \cite{ROT} the following conditions are necessary and sufficient for representability as a minimisation operator. 

\begin{prop}[\cite{vBE3}]\label{MINIMOPPROP}
Let $\Omega$ a finite set. Given a function $\sigma:\mathcal{P}(\Omega)\rightarrow \mathcal{P}(\Omega)$, the following are equivalent:
\begin{itemize}
\item $\sigma$ satisfies the following properties for any $X_{i}\subseteq \Omega$:
	\begin{itemize}
	\item[(1)] $\sigma(X)\subseteq X$
	\item[(2)] $\sigma(\bigcup_{i\leq n}X_{i})\subseteq \bigcup_{i\leq n} \sigma(X_{i})$
	\item[(3)] $\bigcap_{i\leq n} \sigma(X_{i})\subseteq \sigma(\bigcup_{i\leq n}X_{i})$
	\end{itemize}
\item There is an asymmetric binary relation $R\subseteq\Omega^{2}$ such that for all $X$:
$$
\sigma(X)= \min_{R}(X) := \{\omega\in X\,|\, \neg\exists v \in X,\, R(v,\omega)\}
$$ 
\end{itemize}
\end{prop}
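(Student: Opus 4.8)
The plan is to prove the two implications separately and directly; since $\Omega$ is finite, every family $(X_i)$ below is finite and no limiting or compactness argument is needed.

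\textbf{The easy direction.} First I would check that $\min_R$, for any asymmetric $R\subseteq\Omega^2$, satisfies (1)--(3). Property (1) is immediate. For (2): if $\omega\in\min_R(\bigcup_i X_i)$ then $\omega$ lies in some $X_j$ and no element of $\bigcup_i X_i$ --- a fortiori no element of $X_j$ --- $R$-beats $\omega$, so $\omega\in\min_R(X_j)\subseteq\bigcup_i\min_R(X_i)$. For (3): if $\omega$ is $R$-minimal in each $X_i$, then $\omega\in\bigcup_i X_i$ and any $v\in\bigcup_i X_i$ with $R(v,\omega)$ would lie in some $X_j$, contradicting minimality of $\omega$ in $X_j$; hence $\omega$ is $R$-minimal in the union. (Asymmetry plays no role here; it is needed only to land in the stated class of relations, and the $R$ I build for the converse will be asymmetric by construction.)

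\textbf{The substantive direction.} Assume (1)--(3). The plan is: (a) derive a contraction-consistency lemma (essentially property~$\alpha$); (b) upgrade it to a pairwise characterisation of $\sigma$; (c) read off $R$. For (a): applying (2) to the family consisting of $Y$ together with the singletons $\{x\}$ for $x\in X\setminus Y$ (whose union is $X$), and then (1) on those singletons, gives $\sigma(X)\subseteq\sigma(Y)\cup(X\setminus Y)$ whenever $Y\subseteq X$, whence $\sigma(X)\cap Y\subseteq\sigma(Y)$. For (b): I claim that for $\omega\in X$ one has $\omega\in\sigma(X)$ iff $\omega\in\sigma(\{v,\omega\})$ for every $v\in X$. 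The forward direction is (a) applied to each pair $\{v,\omega\}\subseteq X$; the backward direction applies (3) to the family $\bigl(\{v,\omega\}\bigr)_{v\in X}$, whose union is $X$ (as $\omega\in X$), giving $\omega\in\bigcap_{v\in X}\sigma(\{v,\omega\})\subseteq\sigma(X)$. For (c): set $R(v,\omega)$ iff $v\neq\omega$ and $\sigma(\{v,\omega\})=\{v\}$; this is asymmetric since $v\neq\omega$ forces $\{v\}\neq\{\omega\}$. Using the pairwise characterisation, $\omega\in\sigma(X)$ iff $\omega\in\sigma(\{v,\omega\})$ for all $v\in X$ iff $\sigma(\{v,\omega\})\neq\{v\}$ for all $v\in X\setminus\{\omega\}$ iff $\omega\in\min_R(X)$, where the middle equivalence uses $\sigma(\{v,\omega\})\subseteq\{v,\omega\}$ and $\sigma(\{v,\omega\})\neq\emptyset$.

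\textbf{The main obstacle, and one caveat.} The only step with genuine content is the backward half of the pairwise characterisation, i.e.\ the use of (3); this is also the step that makes transparent \emph{why} a map failing the \textsf{Or} rule --- equivalently, failing property (2) --- cannot be a minimisation operator, so I would highlight it. The subtlety I would flag explicitly is that the equivalence as stated needs a regularity proviso ($\sigma(X)\neq\emptyset$ for $X\neq\emptyset$, equivalently $\sigma(\{\omega\})=\{\omega\}$): it is used in the last equivalence of (c), and without it the constant map $\sigma\equiv\emptyset$ satisfies (1)--(3) vacuously yet is representable by no asymmetric $R$ on a nonempty space. In the present setting this comes for free, since every selection function in play satisfies (\textsf{S1}) and (\textsf{S2}); I would note this at the outset and proceed.
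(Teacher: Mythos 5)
The paper does not prove this proposition\textemdash it is cited from van Benthem and Rott\textemdash so there is no in-text argument to compare against; judged on its own, your proof is correct and follows the standard route: derive Chernoff's contraction property ($\alpha$) from (1)+(2), use (3) to get the pairwise characterisation $\omega\in\sigma(X)\Leftrightarrow \omega\in\sigma(\{v,\omega\})$ for all $v\in X$, and read off the base relation $R$ from behaviour on pairs. Your regularity caveat is also genuinely needed: as you say, the constant-empty map satisfies (1)--(3) but is not $\min_R$ for any asymmetric (hence irreflexive) $R$, so the statement as written tacitly presupposes something like (\textsf{S1}), which the paper's selection functions satisfy.

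One small inaccuracy in the caveat itself: the proviso ``$\sigma(X)\neq\emptyset$ for $X\neq\emptyset$'' is \emph{not} equivalent to ``$\sigma(\{\omega\})=\{\omega\}$ for all $\omega$'' under (1)--(3). For example, on $\Omega=\{a,b\}$ the function with $\sigma(\{a\})=\{a\}$, $\sigma(\{b\})=\{b\}$, $\sigma(\{a,b\})=\emptyset$ satisfies (1)--(3) and fixes all singletons, yet is not representable, since asymmetry of $R$ forces $\min_R$ of a two-element set to be nonempty. What your step (c) actually needs is nonemptiness on sets of cardinality at most two: $\sigma(\{\omega\})=\{\omega\}$ (to handle $v=\omega$) and $\sigma(\{v,\omega\})\neq\emptyset$ (to pass from $\sigma(\{v,\omega\})\neq\{v\}$ to $\omega\in\sigma(\{v,\omega\})$). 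Note that full regularity is strictly stronger than what representability gives you back: $\min_R$ can be empty on sets of size three or more when $R$ has cycles. Since you assume the full proviso and it holds in the paper's setting via (\textsf{S1})--(\textsf{S2}), this does not affect the validity of your argument; just drop or correct the ``equivalently'' clause.
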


As a quick verification reveals, strongest-stable-set operators validate both (1) and (3), but fail (2). The failure of (2) is unsurprising: interpreting $X_{i}\nc X_{j}$ as $\sigma(X_{i})\subseteq X_{j}$, the property corresponds to the (\textsf{Or}) rule. 

It is worth noting, however that a weaker form of the (\textsf{Or}) rule does obtain for probabilistically stable revision:

\begin{observation}\label{weakOr}
Fix any threshold $t\in[0.5,1)$. For any $t$-representable selection function $\sigma$ on an algebra $\mathfrak{A}$, we have that the following holds for any finite collection of events $X_{i}$ $(i\leq n)$ in $\mathfrak{A}$: 
\begin{equation}\label{wOrsem}\tag{\textsf{wO}}
\text{If $X_{i}\setminus X_{j} \subseteq \sigma(X_{i})$ for all $i\neq j$, then $\sigma(\bigcup_{i\leq n}X_{i})\subseteq \bigcup_{i\leq n} \sigma(X_{i})$.} 
\end{equation}
\end{observation}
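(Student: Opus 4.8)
The plan is to find a single $\mu_U$-stable set sandwiched between $\sigma(\bigcup_{i\le n}X_i)$ and $\bigcup_{i\le n}\sigma(X_i)$, and then to invoke the fact that $\tau_t$ always returns the $\subseteq$-\emph{least} stable set. Fix a measure $\mu$ representing $\sigma$ for threshold $t$, and write $U=\bigcup_{i\le n}X_i$, $V=\bigcup_{i\le n}\sigma(X_i)$, and $r=\tfrac{t}{1-t}\ge 1$. If $\mu(U)=0$ then $\sigma(U)=\emptyset$ and the inclusion holds trivially, so assume $\mu(U)>0$, whence $\sigma(U)=\tau_t(\mu_U)$. Since $\sigma(X_i)\subseteq X_i$ for each $i$ (axiom \textsf{(S2)}, automatic from representability), $V\subseteq U$; and $V\ne\emptyset$, because $\sigma(E)=\emptyset$ exactly when $\mu(E)=0$, so $V=\emptyset$ would force $\mu(X_i)=0$ for all $i$ and hence $\mu(U)=0$. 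Everything then comes down to the following claim: $V$ is $\mu_U$-stable. Granting it, $V$ is a nonempty $\mu_U$-stable subset of $U$, so by the $\subseteq$-minimality of $\tau_t(\mu_U)$ among $\mu_U$-stable sets we get $\sigma(U)=\tau_t(\mu_U)\subseteq V=\bigcup_{i\le n}\sigma(X_i)$, which is the desired conclusion.

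To prove the claim I would use the ``no defeater state'' characterisation of stability that underlies Proposition \ref{prop:stab1}: a set $V\subseteq U$ is $\mu_U$-stable precisely when $\mu(\omega)>r\cdot\mu(U\setminus V)$ for every $\omega\in V$ --- this is condition (i) of Proposition \ref{prop:stab1} read with $A=U$ and $B=V$. So fix $\omega\in V$, choose $k\le n$ with $\omega\in\sigma(X_k)$ (then $\mu(X_k)>0$, so $\sigma(X_k)=\tau_t(\mu_{X_k})$), and apply condition (i) of Proposition \ref{prop:stab1} to $\mu_{X_k}$ to obtain $\mu(\omega)>r\cdot\mu(X_k\setminus\sigma(X_k))$. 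It therefore suffices to establish the set containment $U\setminus V\subseteq X_k\setminus\sigma(X_k)$, for then $\mu(U\setminus V)\le\mu(X_k\setminus\sigma(X_k))$ and the required strict inequality follows.

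This containment is the only place where the hypothesis ``$X_i\setminus X_j\subseteq\sigma(X_i)$ for all $i\ne j$'' is invoked, and it is the step I expect to carry the weight of the argument; the rest is routine. Given $b\in U\setminus V$, pick $\ell$ with $b\in X_\ell$. If $b\notin X_m$ for some $m\ne\ell$, then $b\in X_\ell\setminus X_m\subseteq\sigma(X_\ell)\subseteq V$, contradicting $b\notin V$; so $b$ lies in every $X_m$, i.e.\ $b\in\bigcap_{i\le n}X_i\subseteq X_k$, and since $b\notin V\supseteq\sigma(X_k)$ we conclude $b\in X_k\setminus\sigma(X_k)$. (The argument in fact gives $U\setminus V\subseteq X_k\setminus\sigma(X_k)$ for \emph{every} $k$, uniformly.) This finishes the verification that $V$ is $\mu_U$-stable, and hence the proof. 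The one genuine idea is to argue through stability of the \emph{union} $V$ rather than trying to trace individual states of $\sigma(U)$ back into some $\sigma(X_i)$: a direct chase cannot work, since $\sigma$ is not monotone and in general $\sigma(X_i)\not\subseteq\sigma(U)$, as the failure of the \textsf{(Or)} rule (Observation \ref{failure of Or}) already shows.
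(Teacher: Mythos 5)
Your proposal is correct and follows essentially the same route as the paper's proof: you show that $V=\bigcup_{i\le n}\sigma(X_i)$ is $\mu_U$-stable by reducing the containment $U\setminus V\subseteq X_k\setminus\sigma(X_k)$ (via the hypothesis forcing $U\setminus V\subseteq\bigcap_i X_i$) to the defeater-free condition of Proposition \ref{prop:stab1}, and then invoke the $\subseteq$-minimality of the strongest stable set. The only difference is cosmetic: you handle the $\mu(U)=0$ edge case explicitly, which the paper elides.
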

\begin{proof}
Let $\mu$ be a probability measure representing the selection function $\sigma$, so that $\sigma(X)=\tau_{t}(\mu_{X})$ for all $X\in\mathfrak{A}$. Assume $X_{i}\setminus X_{j} \subseteq \sigma(X_{i})$ for all $i,j \leq n$ with $i \neq j$. It is enough to show that  $\bigcup_{i\leq n} \sigma(X_{i})$ is stable with respect to the measure $\mu(\cdot\,|\, \bigcup_{i\leq n}X_{i} )$: this suffices, since $\sigma (\bigcup_{i\leq n}X_{i})$ is the \emph{strongest} stable set with respect to $\mu(\cdot\,|\, \bigcup_{i\leq n}X_{i} )$. Let $\omega\in \bigcup_{i\leq n} \sigma(X_{i})$,\footnote{Here again, we take a singleton to represent an atom for simplicity, but this is immaterial: the argument applies for any choice of $\mathfrak{A}$-atom in  $\bigcup_{i\leq n} \sigma(X_{i})$ instead of $\{\omega\}$.} and consider the relative complement  $(\bigcup_{i\leq n}X_{i} )\setminus  \bigcup_{i\leq n} \sigma(X_{i})$. Since $X_{i}\setminus X_{j}\subseteq \sigma(X_{i})$ for all distinct $i,j$, this means that $\bigcup_{i\neq j}(X_{i}\setminus X_{j}) \subseteq \bigcup_{i\leq n} \sigma(X_{i})$. So we get  $(\bigcup_{i\leq n}X_{i} )\setminus  \bigcup_{i\leq n} \sigma(X_{i}) \subseteq  (\bigcup_{i\leq n}X_{i}) \setminus  \bigcup_{i\neq j} (X_{i}\setminus X_{j}) = \bigcap_{i\leq n} X_{i}$. 

This establishes that $(\bigcup_{i\leq n}X_{i} )\setminus  \bigcup_{i\leq n} \sigma(X_{i}) \subseteq \bigcap_{i\leq n}X_{i}$; we can then also write 
$$
(\bigcup_{i\leq n}X_{i}) \setminus  \bigcup_{i\leq n} \sigma(X_{i}) \subseteq X_{j}\setminus \sigma(X_{j})
$$
for all $j\leq n$. But evidently, since $\omega\in \bigcup_{i\leq n} \sigma(X_{i})$ we have $\omega \in \sigma(X_{j})$ for some $j$: this means that $\mu(\omega) > \frac{t}{1-t}\cdot \mu (X_{j} \setminus \sigma(X_{j})$. In particular, 
$$
\mu(\omega) >\frac{t}{1-t}\cdot  \mu\Big(\big(\bigcup_{i\leq n}X_{i}\big) \setminus  \bigcup_{i\leq n} \sigma(X_{i}) \Big)
$$ 
where $\omega$ was arbitrary in $ \bigcup_{i\leq n} \sigma(X_{i})$. This establishes that $\bigcup_{i\leq n} \sigma(X_{i})$ is $t$-stable with respect to the measure $\mu(\cdot\,|\, \bigcup_{i\leq n}X_{i} )$.
\end{proof}

Since the condition $\sigma(\bigcup_{i\leq n}X_{i})\subseteq \bigcup_{i\leq n} \sigma(X_{i})$ captures the (\textsf{Or}) rule, we can see (\ref{wOrsem}) as a substantially weaker form of (\textsf{Or})-style reasoning: it specifies that the (\textsf{Or}) rule can be applied to a set of antecedents $X_{1},...,X_{n}$ provided that they satisfy the side condition $X_{i}\setminus X_{j} \subseteq \sigma(X_{i})$ for all $i\neq j$. This weak (\textsf{wO}) rule can be written in semantic form as follows (here we write its two-premise version): 
$$
\infer[(\mathsf{wO}_{2})] {X_{1}\cup X_{2}\nc A}{X_{i}\setminus X_{j} \subseteq \sigma(X_{i}) \text{ for }i\neq j& X_{1}\nc A & X_{2}\nc A}
$$
The side constraints $X_{i}\setminus X_{j} \subseteq \sigma(X_{i})$ give conditions under which \textsf{Or}-type inferences are valid: all of the $X_{i}\setminus X_{j}$ states must be \emph{typical} given $X_{i}$, in the sense of being in the selected subset. Alternatively, we can see this as a constraint on \emph{atypical} states: namely, all of the \emph{atypical} $X_{i}$ states\textemdash those in $X_{i}\setminus \sigma(X_{i})$\textemdash must be in $X_{j}$. Let $\textsf{atyp}(A):=A\setminus \sigma(A)$ denote the `\emph{atypical}' $A$-states: the possibilities consistent with $A$ that the agent is nonetheless prepared to rule out conditional on learning $A$. We can capture the (two-premise) rule $(\mathsf{wO}_{2})$ as follows: 
$$
\infer[\text{provided } \textsf{atyp}(A)\vdash B \text{ and } \textsf{atyp}(B)\vdash A ] {A\vee B \nc C}{ A\nc C & B\nc C}
$$
That is: if atypical $A$ outcomes are $B$ outcomes, and atypical $B$ outcomes are $A$ outcomes, then case reasoning applies for premises $A$ and $B$.  Here is a very simple example to illustrate this type of inference. 
\begin{ex}
A species is sampled from a population according to the following probabilities:
    \begin{center}
\normalfont
\begin{tabular}{|c|c|c|c|c|c|}
\hline
 \emph{Species} &  Sparrow  & Turtle & Pit Viper & King Cobra & Platypus  \\
 \hline
 \emph{Probability} & 40\% & 20\% & 20\% & 16\% & 4\% \\
 \hline
\end{tabular}
\end{center}
Consider the following propositions one might learn about the sample:
\begin{align*}
\mathsf{Oviparous}&:=\{\text{Sparrow, Turtle, King Cobra, Platypus} \} \\
\mathsf{Venomous}&:=\{\text{Pit Viper, King Cobra, Platypus} \} \\
\mathsf{Furry}&:=\{\text{Platypus} \}
\end{align*}
Given the probabilities above, for a threshold of $t=1/2$ we have
\begin{align*}
\sigma(\mathsf{Oviparous})&=\{\text{Sparrow, Turtle, King Cobra} \}   \\
\sigma(\mathsf{Venomous})&=\{\text{Pit Viper, King Cobra} \}
\end{align*}
Now we have $\mathsf{atyp}(\mathsf{Oviparous})=\mathsf{atyp}(\mathsf{Oviparous})=\{\text{Platypus}\}$. Note then that $\mathsf{atyp}(\mathsf{Oviparous}) \vdash \mathsf{Venomous}$ and  $\mathsf{atyp}(\mathsf{Venomous}) \vdash \mathsf{Oviparous}$, and so the following inference is validated
$$
\infer[] {\mathsf{Oviparous}\vee \mathsf{Venomous} \nc \neg \mathsf{Furry}}{ \mathsf{Oviparous}\nc \neg\mathsf{Furry} & \mathsf{Venomous} \nc \neg \mathsf{Furry}}
$$
\end{ex}
In addition to outlining a very modest extent to which probabilistic stability obeys a version of case reasoning (or the sure thing principle), this rule can be used to obtain an alternative characterisation of probabilistically stable revision plans: it is enough to replace the axiom $\mathsf{S4_{n}}$ by the property \eqref{wOrsem}. This is because the property already entails $\mathsf{S4_{n}}$. 

\begin{observation}\label{weakOr2}
Suppose a selection function $\sigma$ satisfies the property \eqref{wOrsem}. Then it also satisfies $\mathsf{S4_{n}}$. 
\end{observation}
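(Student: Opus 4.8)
The plan is to derive $\mathsf{S4_{n}}$ from \eqref{wOrsem} by a single well-chosen substitution: apply the weak \textsf{Or} property not to the family $(X_{i})_{i\le n}$ appearing in the statement of $\mathsf{S4_{n}}$, but to the shifted family $Y_{i}:=A\cup X_{i}$. First I would verify that this family meets the side condition of \eqref{wOrsem}. For any $i\neq j$ we have $Y_{i}\setminus Y_{j}=(A\cup X_{i})\setminus(A\cup X_{j})=X_{i}\setminus(A\cup X_{j})\subseteq X_{i}$, and by the hypothesis of $\mathsf{S4_{n}}$ we have $X_{i}=\sigma(A\cup X_{i})=\sigma(Y_{i})$; hence $Y_{i}\setminus Y_{j}\subseteq\sigma(Y_{i})$, which is exactly the antecedent of \eqref{wOrsem} for the collection $(Y_{i})_{i\le n}$.

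Next I would invoke \eqref{wOrsem} to obtain $\sigma\bigl(\bigcup_{i\le n}Y_{i}\bigr)\subseteq\bigcup_{i\le n}\sigma(Y_{i})$, and then rewrite both sides using the definition of $Y_{i}$. On the left, $\bigcup_{i\le n}Y_{i}=\bigcup_{i\le n}(A\cup X_{i})=A\cup\bigcup_{i\le n}X_{i}$; on the right, $\bigcup_{i\le n}\sigma(Y_{i})=\bigcup_{i\le n}\sigma(A\cup X_{i})=\bigcup_{i\le n}X_{i}$, using once more the hypothesis $\sigma(A\cup X_{i})=X_{i}$. Combining these, $\sigma\bigl(A\cup\bigcup_{i\le n}X_{i}\bigr)\subseteq\bigcup_{i\le n}X_{i}$, which is precisely the conclusion of $\mathsf{S4_{n}}$.

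There is essentially no obstacle here beyond spotting the right substitution and handling the elementary set arithmetic carefully; in particular the case $n=1$ is degenerate (the side condition of \eqref{wOrsem} holds vacuously, and both $\mathsf{S4_{1}}$ and the conclusion of \eqref{wOrsem} are then trivial), so one may assume $n\ge 2$ throughout, although the argument goes through uniformly. The only point worth flagging is that the derivation uses the full force of the hypothesis $\sigma(A\cup X_{i})=X_{i}$: the inclusion $\sigma(A\cup X_{i})\subseteq X_{i}$ alone would establish the side condition but would not let us identify $\bigcup_{i\le n}\sigma(Y_{i})$ with $\bigcup_{i\le n}X_{i}$, which is what turns the conclusion of \eqref{wOrsem} into the conclusion of $\mathsf{S4_{n}}$. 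This also shows that in Theorem \ref{REPSEL} one may legitimately replace the scheme $\mathsf{S4_{n}}$ by \eqref{wOrsem}, as claimed in the surrounding discussion.
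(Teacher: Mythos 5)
Your proof is correct and is essentially identical to the paper's: both define the shifted family $A\cup X_{i}$, use $\sigma(A\cup X_{i})=X_{i}$ to check the side condition $ (A\cup X_{i})\setminus(A\cup X_{j})\subseteq\sigma(A\cup X_{i})$, and then apply \eqref{wOrsem} and simplify the union. The extra remarks (the degenerate $n=1$ case, the need for equality rather than mere inclusion) are accurate but do not change the argument.
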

\begin{proof}
Suppose $\sigma$ has property \eqref{wOrsem}. Suppose we have sets $A$, $X_{i}$ $(i\leq n)$, such that $\forall i\leq n$, $\sigma(A\cup X_{i})=X_{i}$. Writing $D_{i}:= A\cup X_{i}$, we have, for each $i\neq j$, $D_{i}\setminus D_{j} = X_{i}\setminus (A\cup X_{j})$. Now $\sigma(D_{i}) = \sigma(A\cup X_{i})=X_{i}$, and so we can write  $D_{i}\setminus D_{j} \subseteq \sigma(D_{i})$. By  \eqref{wOrsem}, we can conclude $\sigma(\bigcup_{i\leq n}D_{i})\subseteq \bigcup_{i\leq n} \sigma(D_{i})$, which is equivalent to $\sigma(\bigcup_{i\leq n}A\cup X_{i})\subseteq \bigcup_{i\leq n} X_{i}$. We have shown ($\mathsf{S4_{n}}$). 
\end{proof}

This observation entails the following reformulation of our representation theorem (Theorem \ref{REPSEL}):

\begin{prop}
Let $(\Omega, \mathfrak{A}, \sigma)$  a selection structure. The following conditions are necessary and sufficient for $\sigma$ to be a strongest-stable-set operator for some underlying (regular) probability measure on $\mathfrak{A}$ and threshold $t=1/2$:
\begin{itemize}
\item [\emph{($\textsf{S1}$)}]$\sigma(X)=\emptyset$ only if $X=\emptyset$
\item [\emph{($\textsf{S2}$)}]$\sigma(X)\subseteq X$
\item [\emph{($\textsf{S3}$)}]If $\sigma(A)\cap B\neq\emptyset$, then $\sigma(A\cap B)\subseteq \sigma(A)\cap B$
\item  [\emph{($\textsf{wO}$)}]\text{If $X_{i}\setminus X_{j} \subseteq \sigma(X_{i})$ for all $i\neq j$, then $\sigma(\bigcup_{i\leq n}X_{i})\subseteq \bigcup_{i\leq n} \sigma(X_{i})$.} 
\item [\emph{\textsf{(Scott)}}] If $(A_{i})_{i\leq n}\equiv_{0}(B_{i})_{i\leq n}$ and $\forall i\leq n$, $A_{i}\succcurlyeq^{\ast}_{\sigma}B_{i}$, then $\forall i\leq n$, $A_{i} \preccurlyeq^{\ast}_{\sigma} B_{i}$. 
\end{itemize}
\end{prop}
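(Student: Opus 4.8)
The plan is to derive this reformulation directly from the representation theorem (Theorem \ref{REPSEL}) together with the two facts already established about the weak \textsf{Or} rule: its soundness (Observation \ref{weakOr}) and the implication $(\textsf{wO})\Rightarrow(\textsf{S4}_{n})$ (Observation \ref{weakOr2}). No new combinatorial or geometric work is needed; the argument is essentially one of substituting $(\textsf{wO})$ for $(\textsf{S4}_{n})$ in the list of axioms from Theorem \ref{REPSEL}, and checking that this substitution is legitimate in both directions.

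For \textbf{sufficiency}, I would assume that $\sigma$ satisfies $(\textsf{S1})$, $(\textsf{S2})$, $(\textsf{S3})$, $(\textsf{wO})$ and $(\textsf{Scott})$. By Observation \ref{weakOr2}, the property $(\textsf{wO})$ entails the scheme $(\textsf{S4}_{n})$, so in fact $\sigma$ satisfies all of $(\textsf{S1})$--$(\textsf{S4}_{n})$ together with $(\textsf{Scott})$. Theorem \ref{REPSEL} then furnishes a regular probability measure $\mu$ on $\mathfrak{A}$ representing $\sigma$ for threshold $t=1/2$, i.e.\ $\sigma(X)=\tau_{1/2}(\mu_{X})$ for every nonempty $X$ (and $\sigma(X)=\emptyset$ only for $X=\emptyset$, which cannot occur under $(\textsf{S1})$ except at $\emptyset$). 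That is precisely the statement that $\sigma$ is a strongest-stable-set operator for $\mu$.

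For \textbf{necessity}, I would start from an arbitrary regular probability measure $\mu$ on $\mathfrak{A}$ and the induced operator $\sigma_{\mu}:X\mapsto\tau_{1/2}(\mu_{X})$. The converse half of Theorem \ref{REPSEL} already gives that $\sigma_{\mu}$ satisfies $(\textsf{S1})$, $(\textsf{S2})$, $(\textsf{S3})$, $(\textsf{S4}_{n})$ and $(\textsf{Scott})$; in particular it satisfies the four axioms $(\textsf{S1})$, $(\textsf{S2})$, $(\textsf{S3})$, $(\textsf{Scott})$ appearing in the present list. The remaining axiom $(\textsf{wO})$ holds for $\sigma_{\mu}$ by Observation \ref{weakOr}, which establishes the soundness of $(\textsf{wO})$ for every $t$-representable selection function, in particular at $t=1/2$. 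Combining the two halves finishes the argument.

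The only point I would treat with some care in the write-up is that the passage between the two axiom lists is genuinely unconditional. Observation \ref{weakOr2} shows $(\textsf{wO})\Rightarrow(\textsf{S4}_{n})$ outright, invoking no other axiom, so sufficiency goes through; and the reverse implication $(\textsf{S4}_{n})\Rightarrow(\textsf{wO})$ is never used, since the membership of $(\textsf{wO})$ among the \emph{necessary} conditions comes from Observation \ref{weakOr} (soundness) rather than from $(\textsf{S4}_{n})$. Hence there is no circularity and no hidden reliance on the scheme that was dropped. I do not anticipate a substantive obstacle here: the proposition is a bookkeeping corollary of results already in hand, and the write-up can be as short as the two displayed implications above.
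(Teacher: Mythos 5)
Your proposal is correct and is exactly the paper's intended argument: the paper derives this reformulation as an immediate corollary of Theorem \ref{REPSEL}, using Observation \ref{weakOr2} ($(\textsf{wO})\Rightarrow(\textsf{S4}_{n})$) for sufficiency and Observation \ref{weakOr} (soundness of $(\textsf{wO})$ for representable selection functions) for necessity. Your remark that no circularity arises because the reverse implication $(\textsf{S4}_{n})\Rightarrow(\textsf{wO})$ is never needed is also the right point to flag.
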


This reformulation of the representation theorem is slightly more perspicuous: probabilistically stable revision operations are characterised by reflexivity, rational monotonicity, a weaker form of the (\textsf{Or}) rule, and the (\textsf{Scott})-type axiom for representability (which guarantees that the `dominance' ordering between atomic and other events implied by $\sigma$ can be captured quantitatively). 

\subsection{Representability at different thresholds}\label{SectionRepAtDiffThresh}

 Our representation theorem characterises exactly the probabilistically stable revision plans for threshold $t=1/2$: in other words, the 1/2-representable selections functions (those representable as strongest-stable stable set operators for a threshold of $1/2$). As we mentioned above, this is perhaps the most canonical choice for threshold, as it appears to involve little more than `qualitative' reasoning: we saw that $1/2$-representable operators can be fully characterised as involving no more than comparative judgments of the form ``the state $\{\omega\}$ is (or is not, as the case may be) more likely than event $B$''.

Nonetheless, we would like to be able to characterise the probabilistically stable operators for different choices of $t$, if only to gauge the extent to which a different stability threshold translates into different reasoning patterns for belief revision. How sensitive is the representation theorem to the choice of threshold? That is, how sensitive is the class of probabilistically stable revision operators to the choice of threshold? Do our axioms suffice to get a representation result for other thresholds? The (not entirely unexpected) answer is a resounding no. 
 
\begin{prop}\label{IncreasingThresholds}
   For any $t,q\in [1/2,1)$ with $t<q$, there exists a selection function that is $t$-representable but not $q$-representable. 
\end{prop}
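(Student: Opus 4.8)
The plan is to construct, for each pair $t<q$, an explicit regular probability measure $\mu$ on a finite sample space and to take $\sigma:=\sigma_{\mu,t}$; since $\sigma$ is then $t$-representable by construction, everything reduces to showing that $\sigma$ is \emph{not} $q$-representable. Write $r_s:=\tfrac{s}{1-s}$ for $s\in[1/2,1)$, so that $s\mapsto r_s$ is strictly increasing and $r_t<r_q$. Recall (Proposition~\ref{prop:stab1}, or the discussion in \S\ref{SUBSECSEL}) that for any regular measure $\nu$ and threshold $s$ one has $\tau_s(\nu_{X\cup\{\omega\}})=\{\omega\}$ exactly when $\nu(\omega)>r_s\,\nu(X)$, and that $\{a\}$ is $(\nu,s)$-stable relative to a set $A\ni a$ exactly when $\nu(a)>r_s\,\nu(A\setminus\{a\})$ --- so, since $\tau_s(\nu_A)$ is the $\subseteq$-least stable set, $\tau_s(\nu_A)=\{a\}$ iff that last inequality holds. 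By density of $\mathbb{Q}$, fix positive integers $k,l$ with $r_t<\tfrac{k}{l}<r_q$.

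For the construction I would take $\Omega=\{a,b_1,\dots,b_l,c_1,\dots,c_k\}$ and a measure with $\mu(b_1)=\dots=\mu(b_l)=\beta$, $\mu(c_1)=\dots=\mu(c_k)=\gamma$ and $\mu(a)=\alpha$, where $r_t\gamma<\beta<\tfrac{k}{l}\gamma$ and $r_t\,l\beta<\alpha\le r_t\,k\gamma$, rescaled to a probability measure. The inequality $r_t<\tfrac{k}{l}$ is precisely what makes both ranges (first for $\beta$ given $\gamma$, then for $\alpha$) nonempty, so such a regular $\mu$ exists. For this $\mu$ the operator $\sigma=\sigma_{\mu,t}$ then has three features: (a) $\sigma(\{a,b_1,\dots,b_l\})=\{a\}$, because $\mu(a)>r_t\sum_j\mu(b_j)$; (b) $\sigma(\{a,c_1,\dots,c_k\})\ne\{a\}$, because $\mu(a)\le r_t\sum_i\mu(c_i)$ makes $\{a\}$ non-stable relative to that set; (c) $\sigma(\{b_j,c_i\})=\{b_j\}$ for all $i,j$, because $\mu(b_j)>r_t\mu(c_i)$.

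To conclude, suppose $\sigma$ were $q$-representable via a regular measure $\rho$, i.e.\ $\sigma=\sigma_{\rho,q}$. Applying the threshold-$q$ characterisation to features (a)--(c) yields $\rho(a)>r_q\sum_j\rho(b_j)$, $\rho(a)\le r_q\sum_i\rho(c_i)$, and $\rho(b_j)>r_q\rho(c_i)$ for all $i,j$. From the last family, $\rho(c_i)<\tfrac{1}{r_q}\min_j\rho(b_j)\le\tfrac{1}{r_q l}\sum_j\rho(b_j)$ for each $i$, so $\sum_i\rho(c_i)<\tfrac{k}{r_q l}\sum_j\rho(b_j)$; feeding this into the second inequality gives $\rho(a)<\tfrac{k}{l}\sum_j\rho(b_j)$, which together with the first forces $r_q<\tfrac{k}{l}$ --- contradicting the choice of $k,l$. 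Hence no such $\rho$ exists, and $\sigma$ is $t$-representable but not $q$-representable.

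The construction itself is routine; the one point that takes a little care --- and the step I would single out as the crux --- is choosing the multiplicity ratio $\tfrac{k}{l}$ strictly between $r_t$ and $r_q$ and checking that the resulting mixed system of strict and non-strict inequalities is simultaneously satisfiable at the ratio $r_t$ (so that $\sigma$ really is a $t$-representable strongest-stable-set operator, not merely a formal selection function) yet forces the aggregate bound $\rho(a)<\tfrac{k}{l}\sum_j\rho(b_j)$ that is incompatible with $\rho(a)>r_q\sum_j\rho(b_j)$ at the higher ratio. One can also note that three atoms never suffice for such a separation --- a short case check --- which is why the gadget needs at least four atoms, and correspondingly more as $q-t\to 0$.
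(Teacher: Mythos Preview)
Your argument is correct and takes a genuinely different---and in fact more direct---route than the paper's.

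The paper builds a ``chain'' gadget $b_1\succ b_2\succ\cdots\succ b_m\succ c_i$ together with $b_1\preceq\{c_1,\dots,c_n\}$, shows that such constraints are $t$-satisfiable iff $k_t^{m}-nk_t<0$ (with $k_t=r_t$), and then invokes a density lemma for the roots of $x^{m}-nx$ in $[1,\infty)$ to find $m,n$ separating $r_t$ from $r_q$. Your construction avoids the chain and the auxiliary density lemma entirely: you pick a single rational $k/l\in(r_t,r_q)$ by density of $\mathbb{Q}$ and let $k$ and $l$ themselves become multiplicities of atoms, so that the separating inequality $r_q<k/l$ drops out of three elementary features of $\sigma$. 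This yields a shorter and more transparent contradiction, at the (harmless) cost of using one additional distinguished atom $a$. Conversely, the paper's chain gadget is useful in that it makes explicit \emph{which} polynomial relations between thresholds are being exploited, and it generalises more readily to the comparative-probability corollary about $k$-representability that the paper derives afterwards. Both approaches ultimately rest on the same basic mechanism: translating a rational between $r_t$ and $r_q$ into combinatorial multiplicities that make a fixed pattern of $\succ_\sigma$-constraints solvable at the lower ratio but infeasible at the higher one.
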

\begin{proof}
For each choice of $n,m\in\mathbb{N}$, take $\Omega=\{b_{1},...,b_{m},c_{1},...,c_{n}\}$, and consider the following constraints: 
    \begin{align}
         \sigma (\{b_{i}, b_{i+1}\}) &= \{ b_{i}\} \text{ for all }i\in\{1,...,m\}  \label{sigmaconstraintA} \\
         \sigma (\{b_{m},c_{i}\}) &= \{b_{m}\} \text{ for all }i\in\{1,...,n\} \label{sigmaconstraintB}  \\
           \sigma (\{b_{1},c_{1},...,c_{n}\}) &\neq \{b_{1}\}  \label{sigmaconstraintC}  
    \end{align}
These constraints impose that the induced order satisfy the following: $ b_{1} \succ b_{2} \succ...\succ b_{m} \succ c_{i}$ and $b_1\preceq \{c_{1},...,c_{n}\}$.\footnote{For notational convenience we omit the subscript $\sigma$ from the relations $\succ_{\sigma}$ and $\succeq_{\sigma}$.} For a threshold $t$, let  $k_{t}:=t/(1-t)$. We begin with the following: 

\begin{observation}\label{OBSTHRESHOLDREP}
    There exists a probabilistically stable revision operator $\sigma_{\mu,t}$ meeting the above constraints for threshold $t$ if and only if $k_{t}^{m}- nk_{t}< 0$. 
\end{observation}

\noindent To see this, suppose $\mu$ is a measure that represents such a selection function for threshold $t$. Let $C=\max\{\mu(c_{i})\,|\,1\leq  i\leq n\}$. Then the last constraint (\ref{sigmaconstraintC}) above entails 
  \begin{equation}\label{clast}
       \mu(b_1)\leq k_{t} n  C 
  \end{equation}
The first two constraints \eqref{sigmaconstraintA} and \eqref{sigmaconstraintB} above entail $\mu(b_1)> k^{m}_{t} C$. Evidently these constraints are unsatisfiable if $k^{m}_{t} \geq nk_{t} $. In the other direction, assume $k^{m}_{t} <  nk_{t} $: then there exists a measure generating a selection function with threshold $t$ which satisfies the above constraints. For example, consider the following (non-normalised) weights, for some real $\epsilon$:
      \begin{align*}
          w (b_{i}) & = k^{m-i+1}_{t} + \epsilon \cdot\sum^{m-i}_{j=0} k^{j}_{t}  \text{ for all }i\in\{1,...,m\}\\
              w (c_{i}) &= 1  \text{ for all }i,j\in\{1,...,n\}
    \end{align*}
These weights were simply chosen so that (I) all $c_{i}$ have an equal weight of 1; (II) $w(b_m) = k_{t}w(c_i)+\epsilon$, and (III) for every $i<m$, we have $b_{i} = k_{t}\cdot w(b_{i+1})+\epsilon$. The first two conditions \eqref{sigmaconstraintA} and \eqref{sigmaconstraintB} on $\sigma$ above require $\epsilon>0$. 
Given these weights, the condition $\sigma (\{b_{1},c_{1},...,c_{n}\}) \neq \{b_{1}\}$ imposes the constraint $ w(b_1) \leq n  k_{t}$. 
That is, we want  
\begin{align*}
  k^{m}_{t}+ \epsilon \cdot\sum^{m-1}_{j=0} k^{j}_{t} &\leq n k_{t}, \text{ or equivalently} \\  
\epsilon &\leq  \frac{nk_{t} - k^{m}_{t}}{\sum^{m-1}_{j=0} k^{j}_{t}}
\end{align*}

\noindent Now it is enough to show that there exist $\epsilon>0$ that satisfies this. This is immediately guaranteed by our assumption that $k^{m}_{t}<nk_{t}$: we know that both the numerator $nk_{t} - k^{m}_{t}$ and denominator $\sum^{m-1}_{j=0} k^{j}_{t}$ on the right-hand side are positive (recall $k_{t}\geq 1$), thus so is $R = (nk_{t} - k^{m}_{t})/\sum^{m-1}_{j=0} k^{j}_{t}$. So any positive choice of $\epsilon \in (0, R]$ will do. For any such choice of $\epsilon$, the weights $w$ give rise to a non-normalized measure on $\Omega = \{b_{1},...,b_{m-1}, c_{1},...,c_{n}\}$ that satisfies all the constraints \eqref{sigmaconstraintA}, \eqref{sigmaconstraintB} \eqref{sigmaconstraintC} for threshold $t$. The normalised measure $\mu(x) = w(x) /\sum_{\omega\in\Omega} w(\omega)$ then automatically satisfies the same constraints, as normalising preserves all ratios.

Next, we need to ensure that, for any two thresholds $t<q$, there exists some value of $m,n$ such that $k_{t}^{m}- nk_{t}< 0$ but $k_{q}^{m}- nk_{q}\geq 0$. This amounts to showing that for any $t<q$, there exist $m,n\in\mathbb{N}$ such  the solution set $\{x\in\mathbb{R}\,|\, x^{m}-nx< 0\}$ contains $k_t$ but not $k_q$.

\begin{lem}
   The set of solutions to $x^{m}-nx=0$ with $m,n\in\mathbb{N}$ is dense in $\mathbb{R}\cap [1,\infty)$.
\end{lem}
\begin{proof}
    Let $[a,b)\subset\mathbb{R}\cap [1,\infty)$. Positive solutions to $x^{m}-nx=0$ are of the form $x=n^{1/(m-1)}$. These are evidently dense: given $1\leq a<b$, we have $\lim_{m\to\infty} (b^{m}-a^{m})=\infty$, so there exists some $m\geq 2\in\mathbb{N}$ such that $b^{m-1} - a^{m-1} > 1 $. But then there exists some $n\in\mathbb{N}$ such that $a^{m-1} < n < b^{m-1}$, hence $a < n^{1/(m-1)} < b$. Since $1\leq a^{m-1}$, note that we can always take $n>1$.
\end{proof}
\noindent The desired result follows. Let $t<q$: we then have $k_t<k_q$. By the Lemma, there exists $m,n\in\mathbb{N}$ and some $k_t<x_{0}<k_q$ such that $x_{0}^{m}-nx_{0}=0$, i.e. $x_{0}=n^{1/(m-1)}$, and we can take $m\geq 2, n>1$. Since $t,q\geq 1/2$, we have $k_t, k_q\geq 1$.  Since the polynomial $x^{m}-nx$ is negative for $1< x < x_{0}$ and positive for $x> x_{0}$ this means that $k_{t}^m - nk_{t}<0$ but $k_{q}^m - nk_{q}> 0$. By Observation \ref{OBSTHRESHOLDREP}, there exists a $t$-representable selection function meeting the constraints \eqref{sigmaconstraintA}, \eqref{sigmaconstraintB}, \eqref{sigmaconstraintC} above, and that selection function is not $q$-representable. 
\end{proof}

The upshot is that distinct thresholds generate distinct classes of probabilistically stable revision operators. Thus any representation theorem for any particular threshold $t$ will require a distinct axiomatisation. On the other hand, it is very easy, but informative, to observe the following: 

\begin{observation}
     For any $t>\frac{1}{2}$, there exists a $t$-representable selection function that is not $\frac{1}{2}$-representable. 
\end{observation}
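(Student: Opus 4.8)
The plan is to isolate a phenomenon that is available only once the threshold exceeds $1/2$: at threshold $t>1/2$ the two-element comparison $\sigma(\{x,y\})=\{x,y\}$ no longer pins down $\mu(x)=\mu(y)$, but only forces the weight-ratio $\mu(x)/\mu(y)$ to lie in the interval $[(1-t)/t,\ t/(1-t)]$. This ``fat indifference'' can fail to be transitive, whereas a representation at threshold $1/2$ — where $\sigma(\{x,y\})=\{x,y\}$ genuinely means $\mu(x)=\mu(y)$ — cannot exhibit such intransitivity. So I would build a selection function realising an intransitive chain of ties, realisable at $t$ but not at $1/2$.

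Concretely, fix $t>\tfrac12$ and set $k:=\tfrac{t}{1-t}>1$. Take $\Omega=\{a,b,c\}$, let $\mu$ be the normalised probability measure with weights proportional to $k^{2}$, $k$, $1$ on $a$, $b$, $c$ respectively, and put $\sigma:=\sigma_{\mu,t}$; by construction $\sigma$ is $t$-representable. The first step is to read off the values of $\sigma$ on the three two-element sets. Directly from the definition of stability (the two-element case of Proposition \ref{prop:stab1}): for $\{x,y\}$ the whole set is always stable, and $\{x\}$ is $\mu_{\{x,y\}}$-stable precisely when $\mu(x)>k\,\mu(y)$; hence $\sigma_{\mu,t}(\{x,y\})=\{x\}$ if $\mu(x)>k\,\mu(y)$ and $\sigma_{\mu,t}(\{x,y\})=\{x,y\}$ otherwise. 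Since $\mu(a)/\mu(b)=\mu(b)/\mu(c)=k$ while $\mu(a)/\mu(c)=k^{2}>k$, this yields $\sigma(\{a,b\})=\{a,b\}$, $\sigma(\{b,c\})=\{b,c\}$, and $\sigma(\{a,c\})=\{a\}$.

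The second step is to show $\sigma$ is not $1/2$-representable. Suppose, for contradiction, that some measure $\nu$ represents $\sigma$ for threshold $1/2$. Since $\sigma(\{a,b\})=\{a,b\}\ne\emptyset$ we must have $\nu(\{a,b\})>0$ and $\tau_{1/2}(\nu_{\{a,b\}})=\{a,b\}$; because $\tfrac{t}{1-t}=1$ at $t=\tfrac12$, the failure of both $\{a\}$ and $\{b\}$ to be $\nu_{\{a,b\}}$-stable gives $\nu(a)\le\nu(b)$ and $\nu(b)\le\nu(a)$, i.e. $\nu(a)=\nu(b)$. Applying the same reasoning to $\{b,c\}$ gives $\nu(b)=\nu(c)$, hence $\nu(a)=\nu(c)$. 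But $\sigma(\{a,c\})=\{a\}$ forces $\{a\}$ to be $\nu_{\{a,c\}}$-stable, i.e. $\nu(a)>\nu(c)$ — a contradiction. Thus $\sigma$ is $t$-representable but not $1/2$-representable, which proves the observation for every $t>\tfrac12$.

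The argument is genuinely short, and the only point that needs a moment's care — the closest thing to an obstacle — is the boundary computation in the first step: one must check that placing the weights exactly on the fixed-odds hyperplanes (weight-ratio exactly $k$) produces the two-element selection $\{x,y\}$ rather than a singleton. This holds because $k/(k+1)=t$, so the conditional probability $\mu_{\{x,y\}}(x\mid\{x,y\})$ equals $t$ and hence fails the \emph{strict} inequality ``${}>t$'' demanded by stability; any strict inequality $\mu(x)>k\,\mu(y)$ would instead make $\{x\}$ the selected set, and it is precisely the simultaneous tightness of the two ties together with $k^{2}>k$ that produces the intransitivity exploited above.
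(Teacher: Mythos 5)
Your proposal is correct and is essentially the paper's own argument: the same three-state construction with geometric weights $1,k,k^{2}$ (yours relabels $a\leftrightarrow c$), exploiting that the resulting ties $\sigma(\{a,b\})=\{a,b\}$, $\sigma(\{b,c\})=\{b,c\}$ together with $\sigma(\{a,c\})=\{a\}$ encode an intransitive chain of $\preceq_{\sigma}$-comparisons that no threshold-$1/2$ measure can realise. The only difference is that you spell out explicitly the step the paper dismisses as ``evidently not satisfiable for $t=1/2$'' (forcing $\nu(a)=\nu(b)=\nu(c)$ and contradicting $\nu(a)>\nu(c)$), which is a welcome bit of added detail but not a different route.
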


\begin{proof}
    Take any $t\in (1/2, 1)$. Then $k= \frac{t}{1-t}>1$. Let $\Omega=\{a,b,c\}$ and take the weights $(m(a), m(b), m(c)) = (1,k,k^{2})$. The corresponding normalised measure $\mu(\omega):=\frac{m(\omega)}{k^{2}+k+1}$ gives rise to a selection function which satisfies $\sigma (\{b,c \}) \neq \{ c\}$, $\sigma (\{a,b \}) \neq \{ b\} $ and $ \sigma (\{a,c \}) = \{ c\}  $.
    Note that this simply amounts to a failure of transitivity of the $\preceq_{\sigma}$ relation: these constraints tell us that we have $ c \preceq_{\sigma} b $ and  $ b \preceq_{\sigma} a$, but $c\not\preceq_{\sigma} a$ since $c\succ_{\sigma} a$. These constraints are evidently not satisfiable for $t=1/2$. 
\end{proof}

This observation also illustrates that the transitivity of the $\preceq_{\sigma}$ relation, entailed by the \textsf{(Scott)}  axiom, is sound only for $t=1/2$. So the \textsf{(Scott)} axiom is not sound for any threshold $t>1/2$. Thus, if we hope to have a charaterisation of probabilistically stable revision for other thresholds, we will have to abandon the \textsf{(Scott)} axiom. Moreover, our observation above means that, whenever $q\neq t$, the axioms for $q$-representable operators will have to differ from the axioms for $t$-representable operators. In order to have a satisfactory characterization of probabilistically stable revision, perhaps the best we can hope for is a \emph{parametric} axiomatization specifying, given a threshold value $t$, a scheme of axioms $\Gamma_t$ that captures the $t$-representable selection functions. We turn to this question next. 

\subsection{General representation theorem}

Here we provide a general representation theorem which characterises, for every rational threshold $t\geq 1/2$, exactly the $t$-representable selection functions, or equivalently, the probabilistically stable revision operators for threshold $t$. We will then establish that there is a sense in which \emph{every} probabilistically stable belief revision operator is captured by our general representation theorem, because every representable selection function is representable by a rational threshold (Proposition \ref{QREP}). 

\begin{mydef}[$t$-balanced sequences]\label{TBalanced}
Let $t=p/(p+q)$ with $p,q \in \mathbb{N}$. Let $(\Omega, \mathcal{P}(\Omega))$ a finite set algebra, $\sigma$ a selection function on $\mathcal{P}(\Omega)$, and $(A_{i})_{i\leq n}$ and $(B_{i})_{i\leq n}$ two sequences of events from $\mathcal{P}(\Omega)$. Define the following $\{0,1\}$-valued functions:
\begin{align*}
 \mathds{1}^{\succ}_{A_{i}}(\omega) &:= 1 \,\,\,\text{ iff }\,\,\,[\omega \in A_i \text{ and } A_i \succ_{\sigma} B_{i} ] \\
 \mathds{1}^{\succeq}_{A_{i}}(\omega) &:= 1 \,\,\,\text{ iff }\,\,\,[\omega \in A_i \text{ and } A_i \not\succ_{\sigma} B_{i}\text{ and } A_i \succeq_{\sigma} B_{i}] \\
  \mathds{1}^{\succ}_{B_{i}}(\omega) &:= 1 \,\,\,\text{ iff }\,\,\,[\omega \in B_i \text{ and } A_i \succ_{\sigma} B_{i}] \\
   \mathds{1}^{\succeq}_{B_{i}}(\omega) &:= 1 \,\,\,\text{ iff }\,\,\,[\omega \in B_i \text{ and } A_i \not\succ_{\sigma} B_{i} \text{ and }A_i \succeq_{\sigma} B_{i}] 
\end{align*}
The sequences $(A_{i})_{i\leq n}$ and $(B_{i})_{i\leq n}$ are \emph{$t$-balanced} if and only if for every $\omega\in\Omega$,
$$q\cdot\sum_{i\leq n} \mathds{1}^{\succ}_{A_{i}}(\omega) + p\cdot\sum_{i\leq n} \mathds{1}^{\succeq}_{A_{i}} (\omega)= p\cdot \sum_{i\leq n} \mathds{1}^{\succ}_{B_{i}} (\omega)+ q\cdot\sum_{i\leq n} \mathds{1}^{\succeq}_{B_{i}}(\omega) .$$
We then write $(A_{i})_{i\leq n} \equiv^{t}_{0} (B_{i})_{i\leq n}$.
\end{mydef} 

The notion of $t$-balancedness, although notationally cumbersome, is a straightfoward generalization of the usual notion of a balanced sequence. A selection function $\sigma$ induces \emph{strict} inequalities of the form $A\succ_{\sigma}B$ as well as \emph{weak} inequalities of the form $A\succeq_{\sigma}B$ with $A\not\succ_{\sigma}B$. Suppose we have two sequences of sets $(A_{i})_{\leq n}$ and $(B_{i})_{i\leq n}$, and we fix a state $\omega\in\Omega$. Rather than simply counting the number of occurrences of $\omega$ among the sets in each family, the count is adjusted depending on which side of an inequality of the form $A\succcurlyeq^{\ast}_{\sigma} B$ the state $\omega$ occurs in. Each occurrence of $\omega$ is counted $q$ times whenever it occurs either on the dominating side of a strict inequality or on the dominated side of a weak inequality. Each occurrence of $\omega$ is counted $p$ times whenever it occurs either on the dominated side of a strict inequality or on the dominating side of a weak inequality. A balanced set of inequalities is one where, for every state, the total adjusted count of occurrences on the dominating side equals the total adjusted count on the dominated side. Here is another way to formulate the $t$-balancedness criterion. Given a selection function $\sigma$ and two sequences of events $A=(A_{i})_{i\leq n}$ and $B=(B_{i})_{i\leq n}$, for each state $\omega$, define
\begin{align*}
 [\succ, A](\omega) &:= |\{i\,|\,  \omega \in A_i \text{ and } A_i \succ_{\sigma} B_{i} \} |\\
  [\succeq, A](\omega)&:= |\{i\,|\,  \omega \in A_i \text{ and } A_i \not\succ_{\sigma} B_{i} \text{ and }A_i \succeq_{\sigma} B_{i} \} |\\
   [\succ, B](\omega)&:= |\{i\,|\,  \omega \in B_i \text{ and } A_i \succ_{\sigma}  B_{i}\} |\\
    [\succeq, B](\omega)&:= |\{i\,|\,  \omega \in B_i \text{ and } A_i \not\succ_{\sigma} B_{i} \text{ and } A_i \succeq_{\sigma}  B_{i} \} |
\end{align*} 
Each of these counts the number of occurrences of a state on the dominating (respectively, dominated) side of a strict (respectively, weak) $\sigma$-induced inequality between $A_i$ and $B_i$. Let the total adjusted \emph{left} count of $\omega$ be $L(\omega)= q \cdot  [\succ, A] + p \cdot [\succeq, A]$, and its total adjusted  \emph{right} count $R(\omega)=  p \cdot [\succ, B]+ q \cdot [\succeq, B]$. Then $t$-balancedness requires that, for every state $\omega$,
\begin{align*}
L(\omega)&= R(\omega),\\
\text{which means}\qquad q \cdot  [\succ, A] + p \cdot [\succeq, A] &=  p \cdot [\succ, B]+ q \cdot [\succeq, B]
\end{align*}

Two important remarks are in order about the notion $t$-balancedness and its relation to the ordinary notion of balancedness. Firstly, the notion of $t$-balancedness is defined in terms of a particular ratio representation of $t$ as $p/(p+q)$. We can conventionally fix $p,q$ so that $p/(p+q)$ is the irreducible fraction representation of $t$, but this is immaterial: whether two sequences are $t$-balanced does not depend on the particular choice of $p,q$, as long as they yield the same ratio.\footnote{Here is the argument in a nutshell. If $t=p/(p+q)$ is the irreducible fraction representation of $t$, then any other ratio representation of the form $t=a/(a+b)$ has the feature that $a=Np$ and $b=Nq$ for some $N\in\mathbb{N}$. The $t$-balancedness for $a,b$ means that, for every $\omega$,  we have $b \cdot  [\succ, A] + a \cdot [\succeq, A] =  a \cdot [\succ, B]+ b \cdot [\succeq, B]$. Since $(a,b)=(Np,Nq)$, this just means
$$
 Nq \cdot  [\succ, A] + Np \cdot [\succeq, A] =  Np \cdot [\succ, B]+ Nq \cdot [\succeq, B],
 $$
which holds if and only if
$$
q \cdot  [\succ, A] + p \cdot [\succeq, A] =  p \cdot [\succ, B]+ q \cdot [\succeq, B].
$$
So this is equivalent to the condition of $t$-balancedness for $p,q$.} 

Thirdly, note that the \textsf{(Scott)} axiom can be formulated in terms of $\frac{1}{2}$-balancedness. We can formulate \textsf{(Scott)} as follows: 
\begin{center}
\textsf{(Scott)} \qquad If $(A_{i})_{i\leq n}\equiv_{0}(B_{i})_{i\leq n}$ and $A_{i}\succcurlyeq^{\ast}_{\sigma}B_{i}$ for all $i<n$, then $A_{n} \not\succ_{\sigma} B_{n}$.
\end{center}
Now, in the antecedent, replace the statement $(A_{i})_{i\leq n}\equiv_{0}(B_{i})_{i\leq n}$ with the statement $(A_{i})_{i\leq n}\equiv^{t}_{0}(B_{i})_{i\leq n}$ for $t=1/2$. 
\begin{center}
\textsf{(Scott[1/2])} \qquad If $(A_{i})_{i\leq n}\equiv^{1/2}_{0}(B_{i})_{i\leq n}$ and $A_{i}\succcurlyeq^{\ast}_{\sigma}B_{i}$ for all $i<n$, then $A_{n} \not\succ_{\sigma} B_{n}$.
\end{center}
These two are equivalent. This is not to say that the ordinary notion of a balanced sequence here is equivalent to $\frac{1}{2}$-balancedness: this is \emph{not} the case, due to the fact that $\frac{1}{2}$-balancedness does not count any occurrences of a state within any $A_i$, $B_i$ for which $A_i \not\succcurlyeq^{\ast}_{\sigma}B_i$.\footnote{The adjusted count does not count any occurrences of a state within an incomparable pair. Suppose $\{a\} \succ \{b\}$, $\{d\} \succ \{c\}$, and the sets $\{b,c\}$ and $\{a,d\}$ are incomparable (that is, $\succcurlyeq^{\ast}_{\sigma}$ doesn't hold in either direction). Then $\big(\{b,c\},\{a\},\{d\}\big) \equiv_0 \big(\{a,d\}, \{b\}, \{c\}\big)$ but $\big(\{b,c\},\{a\},\{d\}\big) \ \not\equiv^{1/2}_0 \big(\{a,d\}, \{b\}, \{c\}\big)$: note that the adjusted left-count of $b$ is $L(b)=0$ while $R(b)=1$. The other implication also fails for a similar reason: $\frac{1}{2}$-balancedness does not entail balancedness in the usual sense. For example, if $\{a\}\succeq \{a\}$ while $\{b,c\}$ and $\{c,d\}$ are incomparable, then $\big(\{a\},\{b,c\}\big) \equiv^{1/2}_0 \big(\{a\}, \{c,d\}\big)$ but $\big(\{a\},\{b,c\}\big) \not\equiv_0 \big(\{a\}, \{c,d\}\big)$.} However, the two notions of balancedness $\equiv_0$ and $\equiv^{1/2}_0$ \emph{are} equivalent for sequences $A$ and $B$ where $A_i \succcurlyeq^{\ast}_{\sigma}B_i$ for each $i$. To see this, first note that, due to $A_i \succcurlyeq^{\ast}_{\sigma}B_i$, every occurrence of a state in any $A_i$ or $B_i$ gets counted towards the adjusted count $L(\omega), R(\omega)$, since each occurrence falls under one of the four cases ($[\succ, A]$, $[\succeq, A]$, $[\succ, B]$ or $[\succeq, B]$); second, the property of $\frac{1}{2}$-balancedness corresponds to the case where $t=p/(p+q)$ with $p=q=1$, so in the adjusted count each occurrence of a state is counted \emph{equally}, regardless on which side of an inequality it occurs.\footnote{We could of course easily extend the notion of $t$-balancedness so as to make $\equiv^{1/2}_{0}$ equivalent to plain balancedness for sequences $A$, $B$ in which every pair $(A_i, B_i)$ is comparable\textemdash that is, when either $B_i\succcurlyeq^{*}_{\sigma}A_i$ or $A_i\succcurlyeq^{*}_{\sigma}B_i$ holds for all $i$. But this additional complication is not needed: even the more restricted notion suffices to formulate a version of the \textsf{(Scott)} axiom that suffices for a representation theorem.} It immediately follows that any counterexample to \textsf{(Scott)} will in fact constitute a counterexample to \textsf{(Scott[1/2])}, and vice-versa, since any counterexample will yield a sequence of events where $A_i \succcurlyeq^{\ast}_{\sigma}B_i$.

This observation suggests an amendment to the \textsf{(Scott)} axiom tailored for each rational threshold $t$. We simply adjust our earlier axiom to apply to $t$-balanced sequences. This leads to our most general result:

\begin{theorem}\label{GENREPSEL}
\textbf{Representation theorem for rational thresholds}\\
Let $t\in[1/2,1)\cap\mathbb{Q}$. A selection structure $(\Omega, \mathcal{P}(\Omega), \sigma)$ is a strongest-stable-set operator for threshold $t$ if and only if it satisfies the following:
\begin{itemize}
\item [($\textsf{S1}$)]$\sigma(X)=\emptyset$ only if $X=\emptyset$
\item [($\textsf{S2}$)]$\sigma(X)\subseteq X$
\item [($\textsf{S3}$)]If $\sigma(A)\cap B\neq\emptyset$, then $\sigma(A\cap B)\subseteq \sigma(A)\cap B$
\item [($\textsf{S4}_{n}$)]For any $n$: if $\sigma(A\cup X_{i})=X_{i}$ for all $i\leq n$, then $\sigma(A\cup\bigcup_{i\leq n}X_{i})\subseteq \bigcup_{i\leq n}X_{i}$
\item [\textsf{(Scott$[t]$)}] If $(A_{i})_{i\leq n}\equiv^{t}_{0}(B_{i})_{i\leq n}$ and $A_{i}\succcurlyeq^{\ast}_{\sigma}B_{i}$ for all $i<n$, then $A_{n} \not\succ_{\sigma} B_{n}$.
\end{itemize}
\end{theorem}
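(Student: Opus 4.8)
The plan is to mimic, for a rational threshold $t=p/(p+q)$, the two–part strategy already used for Theorem \ref{REPSEL}, replacing \emph{balancedness} everywhere by \emph{$t$-balancedness}. The soundness direction is the easy half: \textsf{(S1)}--\textsf{(S3)} are immediate from the definition of $\sigma_{\mu,t}$, and \textsf{(S4}$_n$\textsf{)} follows from Observations \ref{weakOr} and \ref{weakOr2}, which hold at every threshold. For \textsf{(Scott}$[t]$\textsf{)}: given a measure $\mu$ representing $\sigma$, suppose $(A_i)_{i\leq n}\equiv^{t}_{0}(B_i)_{i\leq n}$, with $A_i\succcurlyeq^{\ast}_{\sigma}B_i$ for $i<n$, and suppose towards a contradiction that $A_n\succ_\sigma B_n$. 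Summing the $t$-balancedness identity $L(\omega)=R(\omega)$ against the weights $\mu(\omega)$ collapses occurrence counts into set-measures and gives $\sum_{i\ \mathrm{strict}}\bigl(q\,\mu(A_i)-p\,\mu(B_i)\bigr)+\sum_{i\ \mathrm{weak}}\bigl(p\,\mu(A_i)-q\,\mu(B_i)\bigr)=0$; but each strict summand is $>0$ (there is at least one, namely $i=n$, since $A_n\succ_\sigma B_n$ forces $\mu(A_n)>\tfrac{p}{q}\mu(B_n)$) and each weak summand is $\geq 0$ (since $A_i\succeq_\sigma B_i$ forces $\mu(A_i)\geq\tfrac{p}{q}\mu(B_i)$), a contradiction.

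For representability, assume $\sigma$ satisfies all five axioms. By Proposition \ref{prop:stab2}, \textsf{(S1)}--\textsf{(S4)} already force $\sigma$ to mirror the stability rule qualitatively, so by the reduction of Section \ref{SUBSECSEL} it suffices to show that the system $L_\sigma$ of linear inequalities — with $\mu(\omega)>\tfrac{p}{q}\mu(X)$ whenever $\omega\succ_\sigma X$ and $\mu(\omega)\leq\tfrac{p}{q}\mu(X)$ otherwise, plus the $n$ non-negativity constraints — has a solution. I would run the Motzkin-transposition argument of Theorem \ref{strongrep}, but with the \emph{scaled} characteristic vectors $q\mathds{1}_{\{\omega\}}-p\mathds{1}_{X}$ as strict rows and $p\mathds{1}_{X}-q\mathds{1}_{\{\omega\}}$ as non-strict rows, together with the block $q\mathbf{I}_n$; all entries lie in $\{-q,-p,0,p,q\}\subseteq\mathbb{Q}$, so Motzkin applies. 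If $L_\sigma$ had no solution, Motzkin yields non-negative integer multipliers $\gamma,\alpha,\beta$ (clearing denominators), with $\beta$ having a positive coordinate, such that the resulting integer combination of these scaled vectors is $\vec 0$.

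The crux — and the step I expect to be most delicate — is to read this vanishing integer combination back as a failure of \textsf{(Scott}$[t]$\textsf{)}. Form the sequence $A^{\ast}$ consisting of $\gamma_\omega$ copies of each $\{\omega\}$, $\alpha_i$ copies of each weak-constraint set $X_i$, and $\beta_j$ copies of each strict-constraint singleton $\{\omega_j\}$; and $B^{\ast}$ consisting of $\gamma_\omega$ copies of $\emptyset$, $\alpha_i$ copies of $\{\omega_i\}$, and $\beta_j$ copies of $X_j$. Using that $\{\omega\}\succ_\sigma\emptyset$ holds for every $\omega$ (from \textsf{(S1)}--\textsf{(S2)}, which give $\sigma(\{\omega\})=\{\omega\}$), each pair $(A^{\ast}_i,B^{\ast}_i)$ is $\succcurlyeq^{\ast}_{\sigma}$-comparable and carries a canonical strict/weak label inherited from the constraint that produced it; a coordinatewise check then shows the Motzkin identity is \emph{exactly} the $t$-balancedness equation for $(A^{\ast},B^{\ast})$ under that labeling, since each occurrence on the dominating side of a strict (resp.\ weak) pair contributing the adjusted weight $q$ (resp.\ $p$) is precisely what the scaling $q\mathds{1}_{\{\cdot\}}$ versus $p\mathds{1}_{X}$ encodes. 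Placing a strict pair (one exists because $\beta\neq\vec 0$) in the last position, \textsf{(Scott}$[t]$\textsf{)} demands $A^{\ast}_{n}\not\succ_\sigma B^{\ast}_{n}$, contradicting that this last pair is strict. Hence $L_\sigma$ is solvable; normalising a solution as in Theorem \ref{strongrep} gives a regular measure $\mu^{\ast}$, and Propositions \ref{prop:stab1}--\ref{prop:stab2} then give $\sigma(A)=\tau(\mu^{\ast}_A)$ for all $A$, i.e.\ $\sigma=\sigma_{\mu^{\ast},t}$. The remaining obstacle is essentially bookkeeping: pinning down the correspondence between the integer certificate and the adjusted occurrence counts of Definition \ref{TBalanced}, and in particular handling degenerate pairs where the ``set'' side happens to be a singleton — which is resolved by fixing, once and for all, the strict/weak label coming from the $\sigma$-constraint that generated the pair.
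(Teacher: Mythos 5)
Your overall strategy is the paper's own: soundness of \textsf{(Scott$[t]$)} by summing the $t$-balancedness identity $L(\omega)=R(\omega)$ against the weights $\mu(\omega)$, and sufficiency by applying Motzkin transposition to the $p,q$-scaled characteristic vectors and reading the infeasibility certificate back as a $t$-balanced, Pareto-dominating pair of sequences contradicting \textsf{(Scott$[t]$)}. Two small points on the soundness half: the weak relation $A_i\succeq_{\sigma}B_i$ (with $B_i$ an atom) yields $\mu(A_i)\geq \frac{q}{p}\mu(B_i)$, not $\mu(A_i)\geq\frac{p}{q}\mu(B_i)$ as you write; the weaker, correct inequality is exactly what makes each weak summand $p\,\mu(A_i)-q\,\mu(B_i)\geq 0$, so your conclusion stands. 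Your extra block $q\mathbf{I}_n$ is also harmless but redundant: since (\textsf{S1})--(\textsf{S2}) give $\sigma(\{\omega\})=\{\omega\}$, the strict rows already contain $\omega\succ_{\sigma}\emptyset$, i.e. $q\,\ind{\{\omega\}}$, for every state, which is how the paper secures positivity.

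The step that does not work as you describe it is your treatment of the ``degenerate'' weak rows. You place in $L_{\sigma}$ the constraint $\mu(\omega)\leq\frac{p}{q}\mu(X)$ for \emph{every} pair with $\omega\not\succ_{\sigma}X$, including pairs where $X=\{x\}$ is an atom with $x\succ_{\sigma}\omega$, and you propose to give each certificate pair the strict/weak label ``inherited from the constraint that produced it''. But Definition \ref{TBalanced} leaves no freedom of labelling: the adjusted counts are computed from the actual relations $\succ_{\sigma}$, $\succeq_{\sigma}$, and for such a pair $(\{x\},\{\omega\})$ the strict relation holds, so the definition weights the occurrence of $x$ by $q$ and that of $\omega$ by $p$, whereas your Motzkin row $p\ind{\{x\}}-q\ind{\{\omega\}}$ weights them the other way. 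A certificate using such rows therefore need not yield sequences that are $\equiv^{t}_{0}$ in the sense of the definition, and no violation of \textsf{(Scott$[t]$)} follows. The repair---which is what the paper's proof does---is to exclude these constraints from the system at the outset: take as weak rows only the pairs with $A_i\succeq_{\sigma}B_i$ \emph{and} $A_i\not\succ_{\sigma}B_i$, and observe that the omitted weak constraints are implied by the corresponding strict rows precisely because $p/q=\frac{t}{1-t}\geq 1$; this is the one place where $t\geq 1/2$ is used in the sufficiency direction. With that adjustment (and the reduction via Proposition \ref{prop:stab2} and Section \ref{SUBSECSEL}, which you invoke correctly), your argument coincides with the paper's.
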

\begin{proof}
The soundness of axioms $\textsf{S1-S4}$ was already verified independently of the value of the threshold. To show the necessity of these axioms, it only remains to show the soundness of \textsf{(Scott$[t]$)} for any rational threshold $t\geq1/2$. We can write any such rational threshold as $t=p/(p+q)$ for positive integers $p\geq q$. 

Let $\sigma$ be the strongest stable set operator for the threshold $t$, generated by some measure $\mu$ (i.e., $\sigma=\sigma_{\mu,t}$). Suppose, towards a contradiction, that we have sequences $(A_{i})_{i\leq n}\equiv^{t}_{0}(B_{i})_{i\leq n}$ with $A_{i}\succcurlyeq^{\ast}_{\sigma}B_{i}$ for all $i< n$, and $A_n \succ_{\sigma} B_{n}$. Given that $t=p/(p+q)$, we have $\omega \succ_{\sigma} X$ if and only if $\mu(\omega) > \frac{p}{q} \mu(X)$ or, equivalently, $q\cdot \mu(\omega) > p \cdot\sum_{\omega_{i}\in X} \mu(\omega_{i})$. Then we know that the measure $\mu$ satisfies the following system of $n$ inequalities, corresponding to each statement $A_{i}\succcurlyeq^{\ast}_{\sigma} B_{i}$:
\begin{align*}
&\text{(I) For each }A_{i} \succ_{\sigma} B_{i}\, (\text{where }A_{i}=\{a\}), 
&q\cdot \mu(a) > p\cdot \sum_{\omega\in B_{i}} \mu(\omega) \\
&\text{(II) For each }A_{i} \succeq_{\sigma} B_{i} \text{ with } A_{i} \not\succ_{\sigma} B_{i} \,(\text{where }B_{i}=\{b\}), 
 &p \cdot\sum_{\omega\in A_{i}} \mu(\omega)\geq q\cdot \mu(b)
\end{align*}
Since $A_{n}\succ_{\sigma}B_{n}$, we know that at least one inequality of the form (I) obtains. This means that the sum $\sum_{i\leq n} L_{n}$ of the terms on the left-hand-side, across all these inequalities, is greater than the sum $\sum_{i\leq n} R_{n}$ of the terms of the right hand side: 
\begin{equation}\label{LEFTRIGHTSUM}
    \sum_{i\leq n} L_{n} > \sum_{i\leq n} R_{n}
\end{equation}
Writing out the sum of terms of the left-hand-side $\sum_{i\leq n} L_{n}$ as a sum of terms $\mu(\omega)$, each $\mu(\omega)$ occurs in that sum exactly $L(\omega)$ times (using the $L(\omega)$ notation introduced just below Definition \ref{TBalanced}). So $\sum_{i\leq n} L_{n}=\sum_{\omega\in\Omega} \mu(\omega)L(\omega)$ where $L(\omega)=q\cdot\sum_{i\leq n} \mathds{1}^{\succ}_{A_{i}}(\omega) + p\cdot\sum_{i\leq n} \mathds{1}^{\succeq}_{A_{i}} (\omega)$. By the same reasoning, the sum of terms on the right hand side is $\sum_{\omega\in\Omega} \mu(\omega)R(\omega)$ where $R(\omega)=p\cdot\sum_{i\leq n} \mathds{1}^{\succ}_{B_{i}}(\omega) + q\cdot\sum_{i\leq n} \mathds{1}^{\succeq}_{B_{i}} (\omega)$. Finally, the sequence being $t$-balanced tells us that $L(\omega)=R(\omega)$ for every $\omega$, hence $\sum_{\omega\in\Omega}\mu(\omega)L(\omega) = \sum_{\omega\in\Omega}\mu(\omega)R(\omega)$, which means $\sum_{i\leq n} L_{n} = \sum_{i\leq n} R_{n}$, contradicting \eqref{LEFTRIGHTSUM}. Thus we cannot have $A_{n}\succ_{\sigma} B_{n}$. 
 
For sufficiency, we proceed similarly to the earlier proof for $t=1/2$. Let $\Omega=\{\omega_{1},\dots,\omega_{n}\}$. Take a selection function $\sigma$ on $\mathcal{P}(\Omega)$ satisfying all the above axioms. List all weak inequalities and all strict inequalities generated by $\sigma$: we have $k$ weak inequalities $\{A_{i} \succeq_{\sigma} B_{i}\text{ and }A_i\not\succ_{\sigma}B_i \}_{i\leq k}$ and $m$ strict inequalities $\{A_{i} \succ_{\sigma} B_{i}\}_{k<i\leq k+m}$. The selection function $\sigma$ is representable as a strongest stable set operator for threshold  $t=p/(p+q)$ if and only if there exists a measure $\mu$ such that, for any $\omega\in\Omega$ and $X\subseteq \Omega$, we have $\mu(\omega)> \frac{p}{q}\mu(X)$ if and only if $\omega\succ_{\sigma}X$. This amounts to the solvability of the system 
\begin{equation}\label{MATRIX2}
\begin{split}
\mathbf{M}_{\succeq}\cdot\vec{\mu} \geq \vec{0} \\
\mathbf{M}_{\succ}\cdot\vec{\mu} > \vec{0}
\end{split}
\end{equation}
Where
$$
\mathbf{M}_{\succeq}=
\begin{pmatrix}
(p\ind{A_{1}}- q\ind{B_{1}})^{T}\\
\vdots  \\
(p\ind{A_{k}}- q\ind{B_{k}} )^{T} 
\end{pmatrix}
\,\,\,\,\text{ and }\,\,\,\,
\mathbf{M}_{\succ}=
\begin{pmatrix}
(q\ind{A_{k+1}}- p\ind{B_{k+1}})^{T}\\
\vdots  \\
(q\ind{A_{k+m}}- p\ind{B_{k+m}} )^{T} 
\end{pmatrix}
$$
In this system, each row of the matrix $\mathbf{M}_{\succeq}$ captures (the constraint imposed by) an inequality of the form $[A_{i}\succeq_{\sigma} B_{i}$ and $A_i\not\succ_{\sigma}B_i]$: i.e., the $i$-th row of $\mathbf{M}_{\succeq}\cdot\vec{\mu}$ captures the inequality
\begin{align*}(p\ind{A_{i}}- q\ind{B_{i}})^{T} \vec{\mu} \,\,= \,\, p\cdot \sum_{\omega_{j}\in A_{i}} \mu_j - q\cdot \mu_{i} &\geq 0 \\
\text{or, equivalently, } \frac{p}{q}\cdot \sum_{\omega_{j}\in A_{i}} \mu_j &\geq \mu_{i} 
\end{align*}
where $B_{i}=\{\omega_{i}\}$, and similarly each row of $\mathbf{M}_{\succ}$ captures $q \cdot \mu_i - \sum_{\omega_{j}\in B} p\cdot \mu_{j}>0$  where $A_{i}=\{\omega_{i}\}$, corresponding to the constraint $ \mu_i > \frac{p}{q} \sum_{\omega_j\in B}\mu_j$. 
A solution vector $\vec{\mu}= (\mu_{1},...,\mu_{n})\in\mathbb{R}^{n} $ to the above system represents a (non-normalised) mass function which, after normalising by $||\vec{\mu}||$, yields the desired probability measure representing $\sigma$, where $\mu(\omega_{i}) := \mu_{i}/\sum_{j\leq n} \mu_j$.\footnote{This should be clear from the construction, but it is worth flagging one subtler case in the verification which depends on the fact that $t\geq 1/2$. First, for any $\omega\in\Omega$ and $X\subseteq\Omega$ with $\omega\not\in X$, if $\omega\succ_{\sigma} X$, then the matrix $\mathbf{M}_{\succ}$ evidently captures the constraint $\mu(\omega)> \frac{p}{q}\mu(X)$, which means $\mu(\omega)>\frac{t}{1-t}\mu(X)$. If, on the other hand, $\omega\not\succ_{\sigma} X$, we have $X\succeq_{\sigma} \omega$. Now there are two cases to consider: the first case is that $X\not\succ_{\sigma}\{\omega\}$, in which case $\mathbf{M}_{\succeq}$ directly captures the constraint $\mu(\omega)\leq \frac{p}{q}\mu(X)$, which means $\mu(\omega)\leq \frac{t}{1-t}\mu(X)$, as required. In the remaining case, we have $X\succ_{\sigma}\{\omega\}$, which means that $X=\{x\}$ for some $x\in\Omega$, in which case $\mathbf{M}_{\succ}$ already contains the constraint $\mu(\{x\})> \frac{p}{q}\mu(\omega)$. Now, because $t=\frac{p}{p+q}\geq 1/2$, we have that $\frac{p}{q}\geq 1$, and that ensures that  $\mu(X)> \frac{p}{q}\mu(\omega)$ already entails  $\frac{p}{q}\mu(X)\geq  \mu(\omega)$, i.e. $\frac{t}{1-t}\mu(X)\geq  \mu(\omega)$. In other words: because $t\geq 1/2$, the constraint $X\succ_{\sigma}\omega$, capturing the inequality $\mu(X)>\frac{t}{1-t}\mu(\omega)$, already entails the constraint $X\succeq_{\sigma}\omega$, corresponding to $\frac{t}{1-t}\mu(X)\geq \mu(\omega)$. But this entailment does not hold in general for $t<1/2$.}

Suppose towards a contradiction that the system \eqref{MATRIX2} does not admit a solution. The matrices $\mathbf{M}_{\succeq}$ and $\mathbf{M}_{\succ}$ are rational-valued, since they only contain entries in $\{-q,-p,0,p,q\}$. Then by Motzkin's Transposition Theorem, there exists infeasibility certificates $\alpha\in \mathbb{Q}^{k}$, $\beta\in \mathbb{Q}^{m}$ with non-negative rational coordinates such that
\begin{equation}\label{transpGEN}
\alpha^{T}\mathbf{M}_{\succeq} + \beta^{T}\mathbf{M}_{\succ}=\vec{0}
\end{equation}
and $\beta$ has at least one positive coordinate. Now, we can without loss of generality assume those are vectors in $\mathbb{N}$, by multiplying the entries by a common denominator. In other words, the linear combination of the rows of both matrices, where we multiply the $i$-th row of $M_{\succeq}$ (respectively, $M_{\succ}$)  by the natural number $\alpha_{i}$ (respectively, $\beta_{i}$) yields the $\vec{0}$ vector. 

The unfeasibility certificate gives us a $t$-balanced sequence of sets, which witnesses the failure of the \textsf{Scott}[$t$] scheme. As before, we let $mA$ to denote the sequence of sets consisting of the set $A$ repeated $m$ times. Now we claim that the following sequences or sets are $t$-balanced:
\begin{equation}\label{GENBAL}
(\alpha_{1}A_{1},\dots, \alpha_{k}A_{k}, \beta_{1}A_{k+1},\dots \beta_{m}A_{k+m})\equiv^{t}_{0} (\alpha_{1}B_{1},\dots, \alpha_{k}B_{k}, \beta_{1}B_{k+1},\dots \beta_{m}B_{k+m}).
\end{equation}
To see this, observe that each $\omega$ occurs on the dominant side of $\sum_{i\leq k}\alpha_i\ind{A_{i}}(\omega)$ many weak inequalities in this sequence. For each occurrence in a weak inequality $(i\leq k)$, its adjusted occurrence count is $p$. Similarly, $\omega$ occurs on the dominant side of $\sum^{k+p}_{i=k+1}\beta_i\ind{A_{i}}(\omega)$ many strict inequalities in this sequence. Each of these contributes $q$ towards its adjusted occurrence count. So its total adjusted left occurrence count is:
\begin{align*}
L(\omega) &= p\cdot \sum_{i\leq k}\alpha_i\ind{A_{i}}(\omega) + q\cdot \sum^{k+m}_{i=k+1}\beta_i\ind{A_{i}}(\omega) 
\\
&= \sum_{i\leq k}\alpha_i (p\cdot\ind{A_{i}}(\omega)) + \sum^{k+m}_{i=k+1}\beta_i (q\cdot\ind{A_{i}}(\omega))
\end{align*}
By the same reasoning, $R(\omega)=\sum_{i\leq k}\alpha_i (q\cdot\ind{B_{i}}(\omega)) + \sum^{k+m}_{i=k+1}\beta_i (p\cdot\ind{B_{i}}(\omega))$. Now, note that the $i$-th coordinate of $\alpha^{T}\mathbf{M}_{\succeq} + \beta^{T}\mathbf{M}_{\succ}$ is simply $L(\omega_{i})-R(\omega_{i})$. The $j$-th row of $\mathbf{M}_{\succeq}$ is multiplied by $\alpha_j$, and the $j$-th row of $\mathbf{M}_{\succ}$ by $\beta_j$. The $i$-th coordinate of $\alpha^{T}\mathbf{M}_{\succeq} + \beta^{T}\mathbf{M}_{\succ}$ then consists of the sum
\begin{align*}
& \sum_{\{j:\, \omega_{i}\in A_j \text{ and }A_{j}\succeq_{\sigma} B_{j}\}} \alpha_j p \phantom{-}+ 
\sum_{\{j:\, \omega_{i}\in A_j \text{ and }A_{j}\succ_{\sigma} B_{j}\}} \beta_{j}q \phantom{-}+
\sum_{\{j:\, \omega_{i}\in B_j \text{ and }A_{j}\succeq_{\sigma} B_{j}\}} -q \alpha_{j}  \phantom{-}+ 
\sum_{\{j:\, \omega_{i}\in B_j \text{ and }A_{j}\succ_{\sigma} B_{j}\}} -p \beta_{j} \\
&=\qquad\,\,\underbrace{\sum_{j\leq k}\alpha_j (p\cdot\ind{A_{j}}(\omega_{i}))  \phantom{-}+  
\sum^{k+m}_{j=k+1}\beta_j (q\cdot\ind{A_{j}}(\omega_i))}_{L(\omega_i)} \phantom{-}-
\underbrace{\big(\sum_{j\leq k}\alpha_j (q\cdot\ind{B_{j}}(\omega_i)) \phantom{-}+
\sum^{k+m}_{j=k+1}\beta_j (p\cdot\ind{B_{j}}(\omega))\big)}_{R(\omega_i)} 
\end{align*}
And so the $i$-th coordinate of $\alpha^{T}\mathbf{M}_{\succeq} + \beta^{T}\mathbf{M}_{\succ}$ is simply $L(\omega_i)-R(\omega_i)$. From \eqref{transpGEN}, since $\alpha^{T}\mathbf{M}_{\succeq} + \beta^{T}\mathbf{M}_{\succ}=\vec{0}$, we thus have that $L(\omega_{i})=R(\omega_{i})$ for every $\omega_{i}$. So the sequence in \eqref{GENBAL} is $t$-balanced.
By design of the matrices, we have that $A_{i}\succcurlyeq^{\ast}_{\sigma} B_{i}$ for all $A_i$, $B_i$ occurring in these sequences. Now we know our system contains at least some strict inequalities of the form $\omega_{i}\succ \emptyset$, so that $\mathbf{M}_{\succ}$ is non-empty. It follows from Motzkin's theorem that $\beta$ is not identically null: say, $\beta_{i}>0$. But this means that the $i$-th inequality $A_{i}\succ_{\sigma} B_{i}$ is strict. Without loss of generality, we can assume that $i=n$ (note that the notion of $t$-balancedness is not sensitive to the order of indices of pairs $(A_{i}, B_{i})$)\footnote{As long as the indexing of the $A_i$'s and the indexing of the $B_i$'s are not permuted independently of one another.}. So the sequences of sets in \eqref{GENBAL} constitutes a failure of the \textsf{Scott}[$t$] axiom. We thus conclude that system \eqref{MATRIX2} is solvable after all, which ensures the existence of a probability measure representing $\sigma$ for threshold $t$.
\end{proof}

The representation theorem is formulated for rational thresholds. But note that representability with respect to rational thresholds already captures representability \emph{simpliciter}.

\begin{prop}\label{QREP}
    Every representable selection function on a finite algebra is $t$-representable for some $t\in\mathbb{Q}$.
\end{prop}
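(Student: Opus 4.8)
The plan is to keep the representing measure fixed and merely perturb the threshold to a nearby \emph{rational} value. Suppose $\sigma$ is represented by a measure $\mu$ for some threshold $t\in[1/2,1)$, and put $k:=\frac{t}{1-t}\ge 1$. By Proposition \ref{prop:stab1}, the operator $\sigma_{\mu,t}$ — and hence $\sigma$ — is completely determined by the data of which comparisons $\mu(\omega)>k\cdot\mu(X)$ hold, ranging over all pairs $(\omega,X)$ with $\omega\notin X$: conditions (i) and (ii) of that proposition only ever compare a weight $\mu(\omega)$ to $k\cdot\mu(Y)$ for sets $Y\subseteq\Omega\setminus\{\omega\}$, and the null sets of $\mu$ do not change when we vary the threshold. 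So it suffices to find a rational $k'\ge 1$ such that, for every pair $(\omega,X)$ with $\omega\notin X$, one has $\mu(\omega)>k'\cdot\mu(X)$ if and only if $\mu(\omega)>k\cdot\mu(X)$; setting $t':=\frac{k'}{1+k'}\in[1/2,1)\cap\mathbb{Q}$ then gives $\sigma_{\mu,t'}=\sigma_{\mu,t}=\sigma$, which is what we want.

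To produce such a $k'$, I would isolate the finitely many ratios that actually matter. Let
$$
b:=\min\Big\{\tfrac{\mu(\omega)}{\mu(X)}\ \Big|\ \omega\notin X,\ \mu(X)>0,\ \mu(\omega)>k\cdot\mu(X)\Big\}\in(k,+\infty],
$$
$$
a:=\max\Big(\{1\}\cup\Big\{\tfrac{\mu(\omega)}{\mu(X)}\ \Big|\ \omega\notin X,\ \mu(X)>0,\ \mu(\omega)\le k\cdot\mu(X)\Big\}\Big)\in[1,k].
$$
Since $t\ge 1/2$ we have $k\ge 1$, and every ratio in the set defining $a$ is at most $k$, so $a\le k$; likewise every ratio in the set defining $b$ exceeds $k$, so $b>k$. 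Hence $a\le k<b$, so $(a,b)$ is a nonempty open interval, and I pick a rational $k'\in(a,b)$; note $k'>a\ge 1$, so indeed $k'\ge 1$ and $t'\ge 1/2$. (If the set defining $b$ is empty we read $b=+\infty$ and any rational exceeding $a$ works.) Then the comparison pattern is preserved: for a pair with $\mu(X)=0$, both $\mu(\omega)>k\mu(X)$ and $\mu(\omega)>k'\mu(X)$ reduce to $\mu(\omega)>0$; for a pair with $\mu(X)>0$ and $\mu(\omega)>k\mu(X)$, we get $\mu(\omega)/\mu(X)\ge b>k'$, so $\mu(\omega)>k'\mu(X)$; and for a pair with $\mu(X)>0$ and $\mu(\omega)\le k\mu(X)$, we get $\mu(\omega)/\mu(X)\le a<k'$, so $\mu(\omega)<k'\mu(X)$. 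By the reduction in the first paragraph this yields $\sigma_{\mu,t'}=\sigma$, so $\sigma$ is $t'$-representable with $t'\in\mathbb{Q}$.

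The argument has no genuine obstacle; the only content is the reduction of the first paragraph, which is immediate from Proposition \ref{prop:stab1}, and the rest is an elementary separation of two finite sets of rationals-or-reals by a rational point. The only care needed is bookkeeping at the boundary — ensuring $k'\ge 1$ (hence $t'\ge 1/2$), handling the case where no strict comparison has positive right-hand side, and noting that the finitely many relevant ratios, though possibly irrational themselves, always leave a gap $(a,b)$ around $k$ in which a rational $k'$ can be chosen.
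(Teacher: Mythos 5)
Your proof is correct and follows essentially the same route as the paper: keep the representing measure $\mu$ fixed, observe via Proposition \ref{prop:stab1} that $\sigma$ is determined by the pattern of comparisons $\mu(\omega)>k\cdot\mu(X)$, and pick a rational ratio in the nonempty interval of ratios that preserve this pattern (the paper writes this interval per state via $\ell_i,r_i$ and takes a max/min, which is equivalent to your $a$ and $b$). Your explicit bookkeeping to keep $k'\ge 1$ (hence $t'\ge 1/2$) and to handle empty strict-comparison sets is a small, welcome refinement but not a different argument.
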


\begin{proof}
Let $\Omega=\{\omega_{1},...,\omega_{n}\}$. Suppose $(\Omega, \mathcal{P}(\Omega), \sigma)$  is a selection structure that is representable for some threshold $t$. Let $\mu$ a measure such that $\sigma =\sigma_{\mu, t}$. Define $k:=\frac{t}{1-t}$. For every $\omega_{i}$, define $\ell_{i}:= \max\{\mu(X)\,|\, \omega_i\succ_{\sigma} X \}$ and $r_i := \min \{\mu(X)\,|\, X\succeq_{\sigma} \omega_i \}$. To say that $\mu$ represents $\sigma$ for threshold $q$ is equivalent to the statement that, for all $i\leq n$, we have $\ell_i < \frac{\mu(\omega_i)}{k} \leq r_i$, or equivalently $\frac{\mu(\omega_i)}{r_i} \leq k < \frac{\mu(\omega_i)}{\ell_i}$. In particular, $\max_{i\leq n}\frac{\mu(\omega_i)}{r_i} \leq k < \min_{i\leq n}\frac{\mu(\omega_i)}{\ell_i}$. So the interval $I=\big[\max_{i\leq n}\frac{\mu(\omega_i)}{r_i}, \min_{i\leq n}\frac{\mu(\omega_i)}{\ell_i}\big)$ is nonempty. Pick any $q\in I\cap\mathbb{Q}$. Then we have, $\ell_i < \frac{\mu(\omega_i)}{q} \leq r_i$ for all $i\leq n$. But this just means that the same measure $\mu$ also represents $\sigma$ with the rational threshold $t:=\frac{q}{q+1}$.
\end{proof}
This last proposition entails that no probabilistically stable belief revision policy can force the representing threshold to be irrational: any such belief revision plan is consistent with a rational stability threshold. In that sense, every probabilistically stable revision operator is captured by our representation theorem. Further, from the argument above we can immediately extract the following fact:

\begin{observation}
 Given a measure $\mu$ and a selection function $\sigma$ that $\mu$ represents, the range of thresholds that generate the exact same belief revision policy as $\sigma$ is given exactly by the interval $[\alpha, \beta)$ where 
$$
\alpha=\max_{\omega\in \Omega} \frac{\mu(\omega)}{\min \{\mu(X)\,|\, X\succeq_{\sigma} \omega_i \}} \qquad\text{ and }\qquad
\beta =\min_{\omega\in \Omega} \frac{\mu(\omega)}{\max \{\mu(X)\,|\,  \omega_i\succ_{\sigma} X \}}
$$
\end{observation}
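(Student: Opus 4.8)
The plan is to extract explicit bounds from the analysis already carried out in the proof of Proposition~\ref{QREP}. The starting point is Observation~\ref{GEOMETRY} together with the discussion in \S\ref{SUBSECSEL}: for a regular measure $\mu$ and a threshold $t'$, writing $k' := t'/(1-t')$, the revision plan $\sigma_{\mu,t'}$ is completely determined by the family of pairwise comparisons --- for each pair $(\omega,X)$ with $\omega\notin X$, whether $\mu(\omega) > k'\mu(X)$ or $\mu(\omega)\le k'\mu(X)$ --- via Proposition~\ref{prop:stab1}. Consequently, $\sigma_{\mu,t'}=\sigma$ if and only if, for every pair $(\omega,X)$ with $\omega\notin X$, both (a) $\mu(\omega)>k'\mu(X)$ whenever $\omega\succ_{\sigma}X$, and (b) $\mu(\omega)\le k'\mu(X)$ whenever $X\succeq_{\sigma}\omega$. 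One also uses here that $\mu$ is regular, so that the clause of ``representation'' concerning $\mu$-null events only concerns $\emptyset$, which is taken care of by \textsf{(S1)}.

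Next I would rearrange these constraints, keeping $\sigma$ (hence the relation $\succ_{\sigma}$) fixed. Constraint (a), ranging over all $X$ with $\omega\succ_{\sigma}X$, is equivalent to $k' < \mu(\omega)/\mu(X)$ for all such $X$ with $\mu(X)>0$ --- the pair $X=\emptyset$, which always satisfies $\omega\succ_{\sigma}\emptyset$ by \textsf{(A0)}, imposes nothing --- i.e.\ to $k' < \mu(\omega)/\max\{\mu(X)\,:\,\omega\succ_{\sigma}X\}$, with the convention that the constraint is vacuous when that maximum is $0$. Constraint (b) is likewise equivalent to $k'\ge \mu(\omega)/\mu(X)$ for all $X$ with $X\succeq_{\sigma}\omega$, i.e.\ to $k'\ge \mu(\omega)/\min\{\mu(X)\,:\,X\succeq_{\sigma}\omega\}$, vacuous if there is no such $X$. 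Writing $\ell_{\omega}:=\max\{\mu(X)\,:\,\omega\succ_{\sigma}X\}$ and $r_{\omega}:=\min\{\mu(X)\,:\,X\succeq_{\sigma}\omega\}$, the admissible values of $k'$ for a single state $\omega$ thus form the half-open interval $[\mu(\omega)/r_{\omega},\ \mu(\omega)/\ell_{\omega})$, closed on the left because (b) is non-strict and open on the right because (a) is strict.

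Then I would intersect over all $\omega\in\Omega$: $\sigma_{\mu,t'}=\sigma$ holds if and only if $k'$ lies in $\bigcap_{\omega\in\Omega}[\mu(\omega)/r_{\omega},\mu(\omega)/\ell_{\omega}) = [\alpha,\beta)$, with $\alpha=\max_{\omega}\mu(\omega)/r_{\omega}$ and $\beta=\min_{\omega}\mu(\omega)/\ell_{\omega}$ exactly as in the statement. This interval is nonempty, since $k_{t}:=t/(1-t)$ itself lies in it by the hypothesis that $\mu$ represents $\sigma$ for $t$. For the ``only if'' direction (that no $k'$ outside $[\alpha,\beta)$ yields the same policy) one reads the chain of equivalences backwards: a value $k' < \alpha$ or $k'\ge \beta$ violates (b) or (a) respectively for some witnessing pair $(\omega,X)$, so $\sigma_{\mu,t'}$ and $\sigma$ already disagree on $X\cup\{\omega\}$. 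Finally, since $t'\mapsto t'/(1-t')$ is a strictly increasing bijection from $[1/2,1)$ onto $[1,\infty)$, the set of thresholds $t'$ generating the same policy is the preimage of $[\alpha,\beta)$; identifying a threshold with its odds-ratio $t'/(1-t')$, as is natural here, this is the interval $[\alpha,\beta)$.

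There is no deep obstacle: the essential work was done in Proposition~\ref{QREP}, and the present statement just makes the bounds explicit without intersecting with $\mathbb{Q}$. The only care needed is bookkeeping at the degenerate endpoints --- empty index sets for the $\max$ and $\min$ (so that some $\omega$ simply does not constrain $\alpha$ or $\beta$), and the role of $\mu(\emptyset)=0$ --- and getting the open/closed endpoints right, which hinges on the strict-versus-weak asymmetry between (a) and (b), itself traceable to the strict inequality ``$>t$'' in the definition of stability (Definition~\ref{Stability}).
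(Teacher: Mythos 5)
Your argument is correct and is essentially the paper's own: the paper obtains this observation by reading off the equivalence established in the proof of Proposition~\ref{QREP} (with $\ell_\omega$ and $r_\omega$ defined exactly as you define them, representation at odds ratio $k'$ holds iff $\mu(\omega)/r_\omega \le k' < \mu(\omega)/\ell_\omega$ for every $\omega$), which is precisely the chain of equivalences you spell out before intersecting over states. Your additional bookkeeping about vacuous maxima/minima, the role of regularity and $\mu(\emptyset)=0$, and the identification of thresholds with their odds ratios matches (indeed slightly exceeds) the level of detail the paper itself provides.
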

This concludes our presentation of the general representation theorem. We now turn to some further applications.

\section{Applications and other directions}

\subsection{The logic of `at least $k$ times more likely than'} 

Consider a comparative probability structure consisting of a finite algebra of events $\mathfrak{A}$ and a binary relation $\succ$ on $\mathfrak{A}$. Given a rational number $k$, define an order $\succ$ to be \emph{$k$-representable} if and only if there exists a probability measure $\mu$ on $\mathfrak{A}$ such that for every $A,B\in\mathfrak{A}$, we have 
$$
A\succ B\qquad\text{if and only if} \qquad \mu(A)> k \cdot \mu(B) 
$$
It is a straightforward exercise to adapt the proof strategy from Theorem \ref{GENREPSEL} to prove a representation theorem for the $k$-representable comparative probability structures. We leave the details to another occasion, as this is not the main object of this paper: but it should be clear that the key axiom for representation is given by the obvious variant of the \textsf{Scott} scheme:
\begin{center}
\textsf{(Scott$[k]$)}\qquad\qquad If $(A_{i})_{i\leq n}\equiv^{k}_{0}(B_{i})_{i\leq n}$ and $A_{i}\succcurlyeq^{\ast}B_{i}$ for all $i<n$, then $A_{n} \not\succ B_{n}$.
\end{center}
where $A_{i}\succcurlyeq^{\ast}B_{i}$ means that either $A_{i}\succ B_{i}$ or $B_{i}\not\succ A_{i}$, and $(A_{i})_{i\leq n}\equiv^{k}_{0}(B_{i})_{i\leq n}$ denotes (abusing notation somewhat for better legibility) that the sequences are \emph{balanced for ratio $k$}, which means that they are balanced for threshold $t=k/(k+1)$ in the sense of Definition \ref{TBalanced}. In other words, it means that the corresponding balancedness equation $q \cdot  [\succ, A] + p \cdot [\succeq, A] =  p \cdot [\succ, B]+ q \cdot [\succeq, B]$ always holds for $p,q$ where $k=p/q$.\footnote{Here of course $\succeq$ is understood as $\not\prec$. For the case where $0<k\leq 1$, we need to adjust the formulation of balancedness slightly: the count $[\succeq, A]$ now counts occurrences in $A_i$ for all weak inequalities $B_i\not\succ A_i$, while $[\succ, A]$ counts only occurrences in $A_i$ in those cases where $A_{i}\succ B_{i}$ and $B_{i}\succ A_{i}$ both hold. This is because, when $k<1$, the constraint $\mu(A)> k\cdot \mu(B)$ no longer entails the constraint $k\cdot\mu(A)\geq \mu(B)$, but the converse entailment holds instead.}

This allows to spell out the exact conditions under which a primitive relation between events may or may not be interpreted as a ``$k$ times more likely than'' relation.\footnote{If we do not assume regularity (that event non-contradictory event has positive probability), we need to resort to a formulation which strengthens the antecedent to the  condition $(A_{i})_{i\leq n}\leq^{k}_{0}(B_{i})_{i\leq n}$, which only requires that each state's adjusted count is at least as high for the $B$-sequence of events than for the $A$-sequence. It should also be clear that, with that replacement, we obtain a stronger result: namely, a threshold-dependent version of Theorem \ref{strongrep}, which gives conditions for a pair of relations $(\succ, \succeq)$ to be jointly representable as :
\begin{align*}
   \text{if }A\succ B\qquad\text{then} \qquad \mu(A)> k \cdot \mu(B)  \\
   \text{if }A \succeq B\qquad \text{then} \qquad k\cdot \mu(A)\geq \mu(B)  
\end{align*}} This specific notion of representability is not the object of much discussion in the measurement theory literature. But it is a natural one: it can fruitfully be compared to previous work by \cite{FISH} which studies the `\emph{sufficiently more likely than}' relation understood in an additive threshold sense as $\mu(A)>\mu(B)+\epsilon$. The above can be seen as capturing the logic of `sufficiently more likely than' understood in a \emph{ratio} threshold sense $\mu(A)>k\cdot \mu(B)$. Lastly, since one can define having unconditional probability above a rational threshold $t$ in terms of a ratio threshold as $\mu(A)> \frac{t}{1-t} \mu(A^c)$, this approach allows to straighforwardly answer questions involving the axiomatisation of probabilistic threshold-based neighbourhood models, such as those studied by \cite{vanEijck}, as well as modal logics with modalities for `truth in at least $p\%$ of successors' in a Kripke frame.

From our present results we can extract more information about the notion of $k$-representability for various values of the ratio parameter $k$. Proposition \ref{IncreasingThresholds} established that whenever $t<q$, there exists a $t$-representable selection structure that is not $q$-representable. The reasoning deployed there can immediately be used to establish the following corollary about ratio representations in comparative probability orders. 
\begin{cor}
    Whenever $1\leq k <q$, there exists a comparative probability structure that is $k$-representable but not $q$-representable.  
\end{cor}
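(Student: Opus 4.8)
The plan is to mirror, at the level of comparative probability structures, the construction used in the proof of Proposition \ref{IncreasingThresholds}, exploiting the fact that a ratio-threshold $k$ corresponds to the stability threshold $t = k/(k+1)$ under the correspondence $k_t = t/(1-t)$. Recall that in Proposition \ref{IncreasingThresholds} we built, for chosen $m,n\in\mathbb{N}$, a selection function on $\Omega=\{b_1,\dots,b_m,c_1,\dots,c_n\}$ subject to the constraints \eqref{sigmaconstraintA}--\eqref{sigmaconstraintC}, which force the underlying order to satisfy $b_1\succ b_2\succ\cdots\succ b_m\succ c_i$ and $b_1\preceq\{c_1,\dots,c_n\}$; by Observation \ref{OBSTHRESHOLDREP}, such an operator is $t$-representable if and only if $k_t^{\,m} - n k_t < 0$. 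For the corollary, I would instead directly write down the comparative probability structure $(\mathfrak{A},\succ)$ on the same ground set, with $\succ$ the strict order generated by the relations $\{b_i\}\succ\{b_{i+1}\}$ for $i<m$, $\{b_m\}\succ\{c_i\}$ for all $i$, together with the negative constraint $\{b_1\}\not\succ\{c_1,\dots,c_n\}$ (and taking the reflexive-transitive-closure-of-consistency completion of $\succ$ to a full structure, or simply leaving $\succ$ partial, since $k$-representability as defined only requires the stated equivalences). The key computation, carried over verbatim from the earlier argument, is that any measure $\mu$ with $\mu(A)>k\cdot\mu(B)\iff A\succ B$ must satisfy $\mu(b_1) > k^m\cdot C$ (from iterating the strict comparisons, where $C=\max_i\mu(c_i)$) and $\mu(b_1)\le k n C$ (from the negated comparison $\{b_1\}\not\succ\{c_1,\dots,c_n\}$, since $\mu(\{c_1,\dots,c_n\})\le nC$). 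Hence such a $\mu$ exists iff $k^m < nk$, i.e. iff $k^m - nk < 0$; this is exactly the $k$-analogue of Observation \ref{OBSTHRESHOLDREP}.

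Given that equivalence, the remainder is a density argument that is already in hand. By the Lemma in the proof of Proposition \ref{IncreasingThresholds}, the set of positive roots of $x^m - nx = 0$ (equivalently, the points $n^{1/(m-1)}$ for $m\ge 2$, $n>1$) is dense in $[1,\infty)$. So, given $1\le k < q$, I pick $m\ge 2$ and $n>1$ with $k < n^{1/(m-1)} < q$; since the polynomial $x\mapsto x^m - nx$ is negative on $(1,x_0)$ and positive on $(x_0,\infty)$ where $x_0 = n^{1/(m-1)}$, we get $k^m - nk < 0$ and $q^m - nq > 0$. By the $k$-analogue of Observation \ref{OBSTHRESHOLDREP}, the comparative probability structure built above from these $m,n$ is $k$-representable but not $q$-representable. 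That completes the proof.

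I do not expect any real obstacle here: the only point requiring a sentence of care is that the notion of $k$-representability in the corollary is the "biconditional" version ($A\succ B \iff \mu(A) > k\mu(B)$), so I should be careful to specify the structure $(\mathfrak{A},\succ)$ so that the \emph{non-comparabilities} are genuinely witnessed — in particular I must make sure $\succ$ does not inadvertently contain $\{b_1\}\succ\{c_1,\dots,c_n\}$ as a forced consequence of transitivity plus the positive constraints (it does not, since the positive constraints only compare singletons, while $\{c_1,\dots,c_n\}$ is a non-singleton whose $\mu$-mass can be inflated). One clean way to sidestep this bookkeeping entirely is to phrase the argument in terms of the selection function $\sigma$ already constructed in Proposition \ref{IncreasingThresholds} and transport it to the comparative setting via the dominance relation $\triangleright$ of Definition \ref{DOMDEF}, but writing the small structure out by hand is more transparent. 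Either way, the corollary follows immediately from the machinery already developed, which is why the excerpt states it can be read off "immediately" from the reasoning deployed for Proposition \ref{IncreasingThresholds}.
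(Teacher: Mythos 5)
Your proposal is correct and follows essentially the same route as the paper, whose own proof simply reuses the construction and density lemma from Proposition \ref{IncreasingThresholds} with the constraints restated as comparative-probability constraints $\{b_i\}\succ\{b_{i+1}\}$, $\{b_m\}\succ\{c_i\}$, $\{b_1\}\not\succ\{c_1,\dots,c_n\}$. The only caution: since $k$-representability is a biconditional over \emph{all} pairs, the option of ``simply leaving $\succ$ partial'' with only the listed comparisons does not work (the representing measure will force further strict comparisons), so the witnessing structure should be taken, as you yourself suggest, to be the full relation induced by the constructed measure.
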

\begin{proof}
    As in the previous proof. For each choice of $n,m\in\mathbb{N}$, take $\Omega=\{b_{1},...,b_{m},c_{1},...,c_{n}\}$, and consider the following constraints: 
    \begin{align}
        \{ b_{i}\}   &\succ  \{b_{i+1}\}  \text{ for all }i\in\{1,...,m-1\}  \label{sigmaconstraintA2} \\
        \{b_{m-1}\}  &\succ \{c_{i}\}\text{ for all }i\in\{1,...,n\} \label{sigmaconstraintB2}  \\
           \{b_{1}\}  &\not\succ \{c_{1},...,c_{n}\}\label{sigmaconstraintC2}  
    \end{align}
    The same argument as in Proposition \ref{IncreasingThresholds} establishes the result.
\end{proof}

One last worry one may have about the notion of $t$-balancedness at the heart of this axiomatisation is that its very formulation looks inherently `quantitative', since it involves \emph{counting} occurrences of states differently depending on whether they occur in a strict or weak comparison between events. In particular, one might worry that this axiom outstrips the expressive capabilities of the usual modal logics of qualitative probability as those studied since at least \cite{SEG} and \cite{GAR}. The plain (modal) language of comparative probability usually includes no more than Boolean combinations of atomic propositions of the form $\alpha \succ \beta$, where $\alpha \succeq \beta$, where $\alpha, \beta$ are propositional formulas interpreted as events in a Boolean algebra. A crucial fact when axiomatising these logics of comparative probability is that the usual condition of two sequences of events being balanced, $(A_1,\dots, A_n)\equiv_0 (B_1,\dots, B_n)$, is in fact expressible as a purely Boolean statement, a fact independently observed by \cite{DOM} and \cite{SEG}. This is what allows for the original Scott axiom\textemdash the Finite Cancellation axiom scheme \text{{Q4} from \ref{Scott's Theorem}}\textemdash to be captured in the corresponding modal language. It is perhaps less clear whether the generalised Scott axiom \textsf{(Scott$[t]$)}, which relies on the condition of $t$-balancedness, can be expressed in such a language. 

A moment's reflection shows, however, that it can. Set $t=p/(p+q)$ the irreducible fraction representation of $t$. Say that, in a sequence of comparisons $(A_i \bowtie_i B_i)_{i\leq n}$ between events where $\bowtie_{i}\in\{\succ, \not\prec\}$, the $A$ sequence \emph{Pareto-dominates} the $B$ sequence if every coordinate $A_i$ dominates its $B_i$ counterpart (either $A_i \succ B_i$ or $A_i \not\prec B_i$) and at least one strictly dominates its counterpart ($A_i \succ B_i$). The function of the \textsf{(Scott$[t]$)} axiom is simply to rule out sequences of comparisons where (a) one side Pareto-dominates the other while (b) the sequences are $t$-balanced (the count of each occurrence of a state in $A_i$ or $B_i$ being appropriately multiplied by either $q$ or $p$, depending on whether $\bowtie_i$ is strict or weak). 

This can be expressed simply by having an axiom ruling out $t$-balancedness for every pair of sequences of events where one Pareto-dominates the other. The trick is that expressing the $t$-balancedness for a sequence of comparisons $(A_i \bowtie_i B_i)_{i\leq n}$ requires a different Boolean formulation depending on the weak or strict nature of each $\bowtie_i$ relation. A convenient way to specify this in the meta-language is to fix a function $f:\{1,\dots,n\}\to\{0,1\}$ which specifies whether the $i$-th inequality $A_{i}\bowtie_i B_i$ is strict $(f(i)=1)$ or weak $(f(i)=0)$. Appealing again to the notation $mA$ to represent a sequence of $m$ consecutive copies of the event $A$, define the $t$-balancedness statement for a configuration determined by $f$ to be:
\begin{align*}
\mathbb{B}_{t}(f,A_1,\dots, A_{n},B_{1},\dots, B_{n})&:= 
\big( l_{f(i)} A_{i}\big) \equiv_0  \big( r_{f(i)}  B_{i}\big) 
\end{align*}
where $$l_i := \begin{cases}
 q &\qquad\text{if } i=1\\ 
 p &\qquad \text{if } i=0
\end{cases}
\qquad \text{ and } \qquad r_i := \begin{cases}
 p &\qquad\text{if } i=1\\ 
 q &\qquad \text{if } i=0
\end{cases}
$$
Writing the functions $l_i$ and $r_i$ more explicitly, we can write
$$
\mathbb{B}_{t}(f,A_1,\dots, A_{n},B_{1},\dots, B_{n}):=\big( (qf_i + (1-f_i)p ) A_{i}\big) \equiv_0  \big( (pf_i + (1-f_i)q ) B_{i}\big) 
$$
Each instance of this scheme is a formula in the plain language of comparative probability. Given some fixed function $f$, integers $p$ and $q$, and two concrete sequences of formulas $(A_i)_{i\leq n}$ and $(B_{i})_{i\leq n}$ representing events, the expression on the right hand side is simply a (purely Boolean) statement expressing that certain two finite sequences of events are balanced in the ordinary sense (keeping in mind, for instance, that the expressions of the form $((pf_i + (1-f_i)q)A_{i})_{i\leq n}$ are meta-linguistic abbreviations for a concrete sequence of $A_i$'s). It can straightforwardly be verified that this is equivalent to the $t$-balancedness of a sequence of inequalities encoded by $f$.

We can then express the generalised Scott axiom for threshold $t$ by the following scheme: 

\begin{center}
$\textsf{(S[}t\textsf{])}_{n}\qquad \displaystyle\bigwedge_{\substack{f:\{1,\dots,n\}\to\{0,1\}\\ f(n)=1 }}  \Big(\big(\bigwedge_{i\leq n}(A_i\bowtie_{f(i)}B_i)\big)\to \neg \mathbb{B}_{t}(f,A_1,\dots, A_{n},B_{1},\dots, B_{n}) \Big)$
\end{center}
where
$$
\bowtie_i := \begin{cases}
 \succ &\qquad\text{if } i=1\\ 
 \not\prec &\qquad \text{if } i=0
\end{cases}
$$
Since each function $f:\{1,\dots,n\}\to\{0,1\}$ specifies whether 
the $i$-th inequality is strict $(f(i)=1)$ or weak $(f(i)=0)$, by taking the conjunction over all functions $f:\{1,\dots,n\}\to\{0,1\}$ with $f(n)=1$\textemdash which force at least one strict comparison \textemdash we are in effect quantifying over all sequences where the $A_i$'s Pareto-dominate the $B_i$'s. This axiom scheme thus states exactly what is needed to capture the generalised Scott axiom: for every pair of finite sequences $A=(A_i)$, $B=(B_i)$, if $A$ Pareto-dominates $B$, then the two are not $t$-balanced. This formulation opens a direct route to a complete axiomatisation of the corresponding modal logic of ``at least $k$ times more likely''.

\subsection{Logics of probabilistic stability} \label{Logics of probabilistic stability}

The representation results proved here can form the basis for a completeness theorem for a logic of probabilistic stability. To capture the logic of probabilistic stability in a formal language, there are various salient choices of language beyond the plain language of flat conditionals of the form $\varphi\nc\psi$ discussed above. One option is to consider a language $\mathcal{L}_{\mathsf{KBS}}$ of the form:
$$
\varphi,\,\psi ::= p  \,\,|\, \,  \varphi\wedge\psi \,\,|\, \,  \neg\varphi \,\,|\,\,\mathsf{B}^{\varphi}\psi \,\,|\,\,\mathsf{K}\varphi \,\,|\,\, \mathsf{S}^{\varphi}\psi
$$
Where $\mathsf{K}$, $\mathsf{B}$ and $\mathsf{S}$ are interpreted as operators for \emph{knowledge}, \emph{conditional belief}, and \emph{conditional strongest belief}, respectively. The models for the logic are given by representable selection structures $(\Omega,\mathcal{P}(\Omega), \sigma)$ characterised in Theorem \ref{GENREPSEL}. We define a valuation $\e{\cdot}:\mathcal{L}_{\mathsf{KBS}}\to \mathcal{P}(\Omega)$ as usual, with the key recursive clauses being
\begin{alignat*}{3}
&\mathfrak{M},\omega \vDash  \mathsf{B}^{\varphi}\psi \hspace{2em}&\Leftrightarrow  \hspace{2em}&\sigma(\e{\varphi})\subseteq\e{\psi}\\
&\mathfrak{M},\omega \vDash  \mathsf{S}^{\varphi}\psi  \hspace{2em}&\Leftrightarrow  \hspace{2em}&\sigma(\e{\varphi})=\e{\psi}
\end{alignat*}
In this language, some non-trivial validities include, for example: 
\begin{align*}
(\textsf{RM}) \qquad &(\mathsf{B}^{\varphi}\theta\wedge \neg\mathsf{B}^{\varphi}\neg\psi)\rightarrow \mathsf{B}^{\varphi\wedge\psi}\theta \\
(\textsf{S4}_{n}) \qquad &\Big(\bigwedge_{i\leq n} \mathsf{S}^{{\varphi\vee \psi_{i}}}\psi_{i} \Big)\rightarrow \mathsf{B}^{(\vee_{i\leq n} \psi_{i})\vee\varphi}\bigvee_{i\leq n} \psi_{i} \\
(\textsf{SM}) \qquad &\big(\sel^{\varphi\wedge\gamma}\psi \wedge \bel^{\varphi}\psi \big)\rightarrow \sel^{\varphi}\psi 
\end{align*}
where the first two capture the properties  $\textsf{(S3)}$ and $\textsf{(S4}_{n})$ from Theorem \ref{GENREPSEL}.

Another attractive option is to consider a more expressive logic with a \emph{typicality} operator $\nabla$ (see \cite{BOO}), which we can interpret on a selection function model $\mathfrak{M}=(\Omega,\mathcal{P}(\Omega),\sigma,\e{\cdot})$ as capturing exactly the selected states (the strongest stable event conditional on the input). We set
$$
\mathfrak{M},\omega\vDash\nabla\varphi \text{ if and only if }\omega\in \sigma(\e{\varphi})
$$
Then the $\bel$ and $\sel$ operators are definable: we can write $\bel^{\varphi}{\psi}\leftrightarrow \mathsf{K}(\nabla\varphi\rightarrow\psi)$ and $\sel^{\varphi}{\psi}\leftrightarrow \mathsf{K}(\nabla\varphi\leftrightarrow\psi)$. This is an expressive language that can capture various interesting properties over Leitgeb models. For instance, consider the formula
$$
\kn(\nabla\varphi \leftrightarrow \nabla\psi)
$$
which states that $\sigma(\e{\varphi}) = \sigma(\e{\psi})$: any hypothesis that is accepted after learning $\varphi$ is also accepted after learning $\psi$. Or, in other words: $\varphi$ and $\psi$ have the same non-monotonic consequences. 

Another useful fact about typicality operators\textemdash and one that may render the axiomatisation problem quite interesting\textemdash is that they can directly express \emph{iterations} of the selection function. As a consequence, we can describe, in quite some detail, various fine features of the probability measure generating the selection function $\sigma$. For example: given $A\in\mathfrak{A}$, we define the $\sigma$-depth of $A$ as $d(A):=\min\{n\in\mathbb{N}\,|\, \sigma^{n}(A)=A\}$. This gives us an approximate way to asses how concentrated the underlying probability measure is in $A$. Very roughly, an event of low $\sigma$-depth is one on which the measure is spread rather uniformly: a set of high $\sigma$-depth is one on which the measure is closer to being `big-stepped' (to use terminology from \cite{BEN}), i.e. with large probability gaps between individual atoms. Typicality operators can express the fact that an event $\e{\varphi}$ has depth $n$, through the formula
$$
\neg\kn(\nabla^{n-1}\varphi\rightarrow\nabla^{n}\varphi) \wedge \kn(\nabla^{n}\varphi\rightarrow\nabla^{n+1}\varphi) 
$$
What is the typicality logic of probabilistically stable revision (equivalently, of strongest-stable-set operators)? We leave this as a task for another occasion. More generally, we note that typicality logics have been chiefly studied for selection functions that are representable as order-minimisation operators: there remain many open questions about axiomatising more general classes of selection functions, such as the strongest stable set operators, which are not "trackable" by order-minimisation (for instance, they do not validate the idempotence axiom $\nabla\varphi\rightarrow\nabla\nabla\varphi$).

\subsection{Connection with simple voting games} 

\emph{Simple voting games} are (simple) structures with various interesting properties, studied in game theory and combinatorics \citep{SLI, TAY}. A simple voting game is a pair $(P, \mathcal{W})$, where $P$ usually represents a finite set of voters, and $\mathcal{W}\subseteq 2^{P}$ the set of winning coalitions, required to be closed under taking supersets. The game admits a quota representation whenever there is a quota $q\in\mathbb{R}$ and a weight function $m:P\rightarrow\mathbb{R}^{+}$ such that $$A\in\mathcal{W}\Leftrightarrow \sum_{a\in A}m(a)\geq q,$$
which means that there is a way to assign a weight to each player, in such a way that a coalition is winning exactly when the collective weight of the players in the coalition reaches (or surpasses) the required quota $q$.

Our representation problem bears a close connection to the theory of simple games. Consider a selection structure $(\Omega, \mathfrak{A})$ with  $\Omega=\{\omega_{1},...,\omega_{n}\}$. Let $\mathcal{D}_{i}:=\{X\subseteq\Omega\,|\,\omega_{i}\succ_{\sigma} X \}$, the collection of sets dominated by $\omega_{i}$. Each state $\omega_{i}$ generates a simple voting game $G_{i}=(\Omega, {\mathcal{D}_{i}}^{c})$, where the closure-under-superset condition is satisfied thanks to the (M1) property (see page \pageref{SLISTb}): the property ensures that if $X\in{\mathcal{D}_{i}}^{c}$, $X\subseteq Y$, then $Y\in {\mathcal{D}_{i}}^{c}$). Now suppose $\mu$ is a representation for the induced order $\succcurlyeq^{\ast}_{\sigma}$. Then we have, for each $\omega_{i}$, that 
\begin{center}
$X\in {\mathcal{D}_{i}}^{c}$ iff $\mu(X)\geq \mu(\omega_{i})$.
\end{center}
In other words, if $\mu$ is a probabilistic representation for $\succcurlyeq^{\ast}_{\sigma}$, it is also weight representation of each game $G_{i}$. More specifically, $\mu$ \emph{simultaneously} represents all games $G_{i}$, where the quota for each $G_{i}$ is $\mu(\omega_{i})$. Conversely, each such simultaneous quota representation of the associated system of games  $\{G_{i}\}_{i\leq n}$ gives rise to a probability distribution representing the order $\succcurlyeq^{\ast}_{\sigma}$ (it suffices to normalise each weight by the weight of the grand coalition $\Omega$). Thus, finding necessary and sufficient conditions for $\succcurlyeq^{\ast}_{\sigma}$ to be representable is equivalent to finding the exact conditions for the collection of games $\{G_{i}\}_{i\leq n}$ to be simultaneously representable in this sense (note that an important aspect of simultaneous representations is that the quotas themselves depend on the weight function).

We can give this problem a possible game-theoretic interpretation in terms of what we could call `coordinated blocking games'. We first identify each state $\omega_{i}\in\Omega$ with a player. Each game $G_{i}$ tells us which coalitions can block any decision that player $\omega_{i}$ supports. If the collection of games $\{G_{i}\}_{i\leq n}$ is simultaneously representable as in the above, we can say that the voting system admits a coherent weight representation: one can attribute weights to all players in a uniform manner, in such a way that a coalition is winning in $G_i$ exactly if its cumulative weight is above the weight of the $i$-th player\footnote{We can illustrate this with the following scenario. Imagine that you are a historian researching the voting protocols of an ancient civilization. You know that each province sent a delegate to vote in a national council, but you have no explicit information about how exactly the outcomes of the votes were determined. The only information available is records of some results of the votes which specify which coalitions were able to block which delegate (that is, you have a collection of games $\{G_{i}\}_{i\leq n}$ like the above). The working hypothesis is that the vote of each delegate was accorded a fixed \emph{weight}\textemdash proportional, say, to the population of the delegate's province. In order to check if this hypothesis is at least consistent with your data, you must check if  the games $\{G_{i}\}_{i\leq n}$ are simultaneously representable.}.

Recall two properties of the $\succ_{\sigma}$ order that we highlighted above (p. \pageref{SLISTb}):
\begin{itemize}
\item[(M1)] If $\omega\succ_{\sigma} X$ and $X\supseteq Y$, then $\omega\succ_{\sigma} Y$.
\item[(Sc)] If $(A_{i})_{i\leq m}$ and $(B_{i})_{i\leq m}$ are balanced sequences and $(\omega_{i})_{i\leq m}$ a sequence of states, then \\ ${\forall i\leq m,\, \omega_{i}\succ_{\sigma} A_{i}} \text{ entails }\exists i\leq m,\,\omega_{i}\succ_{\sigma} B_{i}$. 
\end{itemize}

In the context of a simple games, the (Sc) property entails that one cannot transform a collection $A_{1},....,A_{n}$ of winning coalitions into a collection of losing coalitions by a sequence of pairwise exchanges of players from one coalition to another. 
We know from classical results on simple games (see \cite{TAY}) that axioms (M1) and (Sc) suffice for each individual game $\{G_{i}\}_{i\leq n}$ to be weight-representable. This follows by a standard type of hyperplane separation argument characteristic to comparative probability: we identify each coalition with its characteristic function, and the (Sc) axiom guarantees that there is a hyperplane separating the winning from the losing coalitions. The normal vector to the separating hyperplane determines the desired weight function. In the case of representable selection functions, the full (\textsf{Scott}) axiom on $\succcurlyeq^{\ast}_{\sigma}$ suffices to ensure that there is a way of constructing representations of all the games $G_{i}$ that are consistent with each other; e.g., that no separating hyperplane for $G_{i}$ lumps together some winning and losing coalitions from another game $G_{j}$.

We can also interpret strongest-stable-set operators in the context of voting games. Consider a weighted voting game $(P, W)$ with weight representation $m:P\rightarrow \mathbb{R}^{+}$. For each player $p\in P$, the \emph{projected} game $G_{p} = (P\setminus \{p\}, W_{p})$ is given by $W_{p}:= \{X\subseteq P\setminus \{p\}\,|\, \sum_{q\in X} m(q) \geq m(p) \}$: i.e., the winning coalitions in $G_p$ are exactly those that can outweight player $p$. The weight function $m$ gives rise to a selection function $\sigma:\mathcal{P}(P)\rightarrow\mathcal{P}(P)$ (via the probability function obtained by normalising $m$). Given any subset of players $A\subseteq P$, the function $\sigma$ outputs the minimal $X\subseteq A$ such that $A\setminus X$ is a losing coalition in every reduced game $G_{p}$ (with $p\in X$). The set $\sigma(A)$ represents the minimal coalition of players such that each individual player $p\in\sigma(A)$ can block (or `veto') the coalition $A\setminus\sigma(A)$. Suppose, for instance, that we play the following game. First, we restrict attention to a subset $A$ of players. Then we play a \emph{champion game} on $A$: a champion $p$ is picked at random from some pre-selected subset $X\subseteq A$.  The champion then votes against the entire opposition $A\setminus X$. Different choices of possible champion sets $X$ yield different chances of winning against the opposition. A \emph{decisive} team is a subset of players $X\subseteq A$ such that, no matter who is chosen from $X$ as a champion, the opposition $A\setminus X$ loses against the champion. We can think of $\sigma(A)$ as the minimal decisive team in the champion game on $A$. 

We can also give such games, and their associated selection functions, a slightly different interpretation. One way to identify the concentration of power in a voting system is to ask for the minimal coalition that can force the outcome of any vote that its members take part in. Suppose that a group of agents $P$ engages in voting by forming various committees: some issues get decided by a plenary vote involving the `grand committee' $P$, while others get decided by a vote among members of a restricted committee $Y\subset P$. Say that a coalition of players $C$ is a \emph{stably decisive coalition} for a set of players $X$ if the members of $C$ can force the decision in every committee $Y\subseteq X$ that contains some members of $C$ (for instance: $C$ holds the majority in every committee that includes at least some members of $C$). That is: if we restrict voting to players in $X$, not only can $C$ decide the outcome of any `plenary' vote involving all players in $X$, but they also decide the outcome of any subcommittee vote, provided the subcommittee overlaps with at least some members of $C$. 

Suppose committees vote by (super)majority: for each committee $X\subseteq P$, each vote requires at least $100t\%$ of the committee's votes to pass, for some fixed threshold $t\in[1/2,1)$ (where $t=1/2$ corresponds to plain majority voting). The $t$-representable selection functions are then exactly those that, given a subset of players $X\subseteq P$, return the \emph{smallest} stably decisive coalition for $X$. These are exactly the functions captured by our representation theorem.

\begin{ex}
The College Council features one representative from each of its departments. Some decisions are settled by vote of the entire Council; other decisions are decided by vote among subcommittees\textemdash such as specific Schools, or ad-hoc committees created for decisions which involve particular collections of departments. Every vote, whether involving the entire Council or merely some sub-committee of its representatives, is decided by simple majority voting. Each representative's voting weight corresponds to the size of their department: their vote counts for $n$ votes, where $n$ is the number of faculty members in their department. The players and their voting weights are given by the following table:

\begin{center}
\normalfont
\begin{tabular}{|cc|c|}
\hline
 \emph{Department} &  & \emph{Size (voting weight)} \\
\hline
 Computer Science &  \textsf{c} & 30\\
  Mathematics  &  \textsf{m} & 25\\
   English &  \textsf{e} &  10\\
  History  &  \textsf{h} &  4\\
  Anthropology &  \textsf{a} &  3\\
  Drama &  \textsf{d} &  2\\
  Fine Arts &  \textsf{f} &  2\\
 \hline
\end{tabular}
\end{center}

The College Council players are $P:=\{\textsf{c},\textsf{m},\textsf{e},\textsf{h},\textsf{a},\textsf{d},\textsf{f}\}$. Let $\sigma$ the selection function which returns, given any group of players $X\subseteq P$, the smallest stably decisive coalition for $X$ under simple majority voting. For simplicity, we will call this set $\sigma(X)$ the \emph{core} of $X$. Here are the cores associated with some groups of players: 
\begin{center}
\normalfont
\begin{tabular}{|c|l| c|}
\hline
\emph{Group} $X$ & \emph{Members}  & \emph{Smallest stably decisive coalition} $\sigma(X)$  \\
\hline
Full College Council & $P:=\{\textsf{c},\textsf{m},\textsf{e},\textsf{h},\textsf{a},\textsf{d},\textsf{f}\}$ & $\{\textsf{c},\textsf{m}\}$\\
School of Formal Sciences  & $F:=\{\textsf{c}, \textsf{m}\}$ & \{\textsf{c}\} \\ 
School of Arts   & $A:=\{\textsf{d}, \textsf{f}\}$ & \{\textsf{d}, \textsf{f}\} \\ 
School of Humanities & $H:=\{\textsf{e}, \textsf{h}, \textsf{a}\}$ & \{\textsf{e}\} \\ 
Arts \& Humanities Council  & $A\cup H=\{\textsf{e}, \textsf{h}, \textsf{a},\textsf{d}, \textsf{f}\}$ & $\{\textsf{e}, \textsf{h}, \textsf{a},\textsf{d}, \textsf{f}\} $\\ 
\hline
\end{tabular}
\end{center}

The School of Formal Sciences\textemdash the coalition of Computer Science and Mathematics\textemdash forms the core of the College Council: any of their representatives represents a majority in every committee that includes them. Note that Mathematics is part of the core for the whole Council although  they are not part of the core among the School of Formal Sciences. The Arts \& Humanities Council has no non-trivial stably decisive coalition beyond the grand coalition (the whole Arts \& Humanities Council itself).  It is also worth observing that the core for the Arts \& Humanities Council is \emph{not} the union of the cores for the School of Arts and the School of Humanities considered separately.\footnote{This structure of course reflects the failure of the \textsf{Or}-rule for probabilistically stable revision.}  
\end{ex}

In this example, the selection function $\sigma:2^P\to 2^P$ is of course numerically representable, since it was generated by a particular choice of numerical weights: as such, it meets the conditions of Theorem \ref{REPSEL}. A more interesting case where Theorems \ref{REPSEL} and \ref{GENREPSEL} can be deployed is going in the opposite direction, when asking about the numerical representability of a given selection function. Given a selection function $\sigma$ capturing all the stably decisive coalitions in some collective decision procedure, and some particular threshold $t$, can we represent this power structure as emerging from $t$-supermajority voting? That is, does there exist any assignment of weights that would realise the same arrangement of decisive coalitions under $t$-supermajority voting? 

\begin{ex}
Consider the selection function $\sigma$ capturing the stably decisive coalitions in the College Council vote. Could this exact same power structure as above be recreated under a more demanding voting rule? Suppose, for example, that each vote requires a 2/3 supermajority to pass. Is there some assignment of weights to the various representatives that will generate the same selection function? Theorem \ref{GENREPSEL} supplies the requisite conditions to answer that question. The answer is negative: this can be seen by noting that the following is a pair of $2/3$-balanced sequences that violates the \textsf{(Scott$[t]$)} axiom for $t=2/3$:

$$
\begin{array}{cccc}
\big(\mathsf{h}, & \mathsf{a}, & \mathsf{a}, & \{\mathsf{d, f}\} \big)\\
\,\sdom & \sdom& \sdom &  \sdomeq  \\
\big(\mathsf{a}, & \mathsf{d}, & \mathsf{f}, & \phantom{\mathsf{a, }} \mathsf{h}\phantom{\mathsf{a}} \big)\\
\end{array}
$$
A brief inspection reveals that every player in $P$ has the same adjusted count in both sequences, which means that the two sequences are indeed balanced for threshold $2/3$.\footnote{
Recall that 2/3-balancedness requires counting an occurrence of an element as \emph{two} occurrences, whenever it occurs on the dominated side of a strict inequality $\omega \succ_{\sigma} X$, or on the dominating side of a weak inequality $X \succeq_{\sigma}\omega$.} As a result, the selection function $\sigma$ above cannot arise for a threshold of $t=2/3$: for this threshold, there is no assignment of voting weights that would generate the same selection function. The structure of stably decisive coalitions of the College Council cannot arise under a 2/3 supermajority voting system. 
\end{ex}

\subsection{Stable revision as the simultaneous representability of Lockean belief sets}

The above analysis in terms of simultaneous representations of multiple games makes another connection quite salient: one that is quite obvious, but nonetheless worthy of note. Given a representable selection function $\sigma$ and a state $\omega_{i}$, consider the set $D^{c}_{i}$ as defined above, i.e., the set of sets that are not dominated by $\omega_i$. In formal epistemology parlance, we can say that each set $D^{c}_{i}$ constitutes a \emph{Lockean} set: a set of propositions that are above a fixed threshold, where the threshold is determined by the weight of $\omega_{i}$. Our numerical representability conditions amount to the condition that not only each $D^{c}_{i}$ is represented as a Lockean set for some probability function, but also that all the Lockean sets $D^{c}_{i}$ are \emph{simultaneously} represented by a single probability function, with each weight being given by the probability of the corresponding state $\omega_{i}$.  

\subsection{Stacking interpretation} 

Given a list of items with masses, we want to stack them safely: we want to avoid any item being damaged\textemdash or, to fix our terminology, \emph{squished}\textemdash by the mass of all items above it. The higher an object's mass $\mu(\omega)$, the more resistant it is to being squished. A stacking is safe only if each item $\omega_{i}$ has higher mass than the sum of all masses of objects above it. When stacking objects, we can imagine that adding layers is cost-free, but places in each layer are not. We thus want the smallest possible safe bottom layer. A representable selection function $\sigma$ picks, given a collection of items $X$, the first layer $\sigma(X)$ in the finest safe stacking of the objects in $X$ (the lowermost one).

\subsection{Choice functions and revealed preference} 

There is a direct formal correspondence between, on the one hand, selection function semantics from conditional and non-monotonic logics and, on the other, the theory of social choice functions and revealed preference theory \citep{Chernoff1954, Sen1971, AizermanMalishevski}. This is a correspondence which was noted and investigated by a number of authors \citep{Lindstrom1991, Rott1993, ROT, Lehmann2001, Bonanno2009, Collins2}. Making the connection requires only a slight perspective switch on our formalism. First interpret the set $\Omega$ as consisting of some basic alternatives, which constitute the objects of choice (some goods, or actions, or candidates, etc.). Then note that each selection function $\sigma:\mathcal{P}(\Omega)\rightarrow\mathcal{P}(\Omega)$ constitutes what is usually called a \emph{choice function} in revealed preference theory: a function which satisfies $\sigma(X)\subseteq X$ and $\sigma(X)\neq\emptyset$ for $X\neq\emptyset$.\footnote{Some authors call this a `choice correspondence', reserving the term `choice function' for singleton-valued functions.} The usual interpretation of the choice function is that $\sigma(X)$ captures the alternatives that an agent or a group of agents deem \emph{acceptable}, given a choice between the elements in $X$. According to one interpretation, the objects of choice/comparison are really the alternatives themselves: the agent(s) is then essentially indifferent between any two acceptable alternatives. Another interpretation sees $\sigma(X)$ as a set of alternatives chosen by an agent from menu $X$, where the object of choice is the \emph{set itself} \citep{BrandtHarrenstein, PetersProtopapas}.

The choice-theoretic approach to revealed preference theory begins with questions about the relationship between the choice function of an agent (or group thereof) and their preferences. What can we infer about agents' preferences from knowing their choice behaviour, as captured by a choice function? Which choice functions can be \emph{rationalized} by a preference ordering over alternatives, or over bundles thereof?

Suppose we now interpret strongest-stable-set operators $\sigma_{\mu,t}$ as choice functions, rather than probabilistically stable revision plans. 
What choice-theoretic properties do they have? It is worthwile to examine which common desiderata for choice functions are satisfied by the representable selection functions. Some of the most prominent properties of choice functions are the following:
\begin{align*}
   (\alpha) \qquad  &\text{if $X\subseteq Y$ then $X\cap \sigma(Y)\subseteq \sigma(X)$} \\
    (\beta) \qquad  &\text{if $X\subseteq Y$ and $x,y\in \sigma(X)$, then $x\in\sigma(Y)$ entails $y\in\sigma(Y)$} \\
    (\gamma) \qquad & \sigma (\cap_{i\in I}X_{i}) \subseteq \sigma (\cup_{i\in I}X_{i}) \\
      (\text{Aizerman})  \qquad &\text{If $Y\subseteq X$ and $\sigma(X)\subseteq Y$ then $\sigma(Y)\subseteq \sigma(X)$}
\end{align*}
We can easily observe the following: 
\begin{observation}
The $(\beta)$ and Aizerman axioms are valid for representable selection functions. Neither $(\alpha)$ nor $(\gamma)$ are valid for representable selection functions (for any threshold value). \label{PROPalphabetaAizerman}
\end{observation}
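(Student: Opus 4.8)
\textbf{Proof plan for Observation \ref{PROPalphabetaAizerman}.}

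The plan is to verify each of the four claims separately, working throughout with a representable selection function $\sigma = \sigma_{\mu,t}$ for a regular measure $\mu$ and fixed threshold $t$, so that $\sigma(X) = \tau_t(\mu_X)$ for all nonempty $X$, and exploiting the explicit characterisation of $\tau(\mu_A)$ from Proposition \ref{prop:stab1}: $\omega \in \sigma(A)$ iff $\mu(\omega) > \frac{t}{1-t}\mu(A\setminus\sigma(A))$, and $\sigma(A)$ is the $\subseteq$-least set with the stability property. For the \emph{validity} of the Aizerman axiom, suppose $Y\subseteq X$ and $\sigma(X)\subseteq Y$. First I would observe $\sigma(X) \cap Y = \sigma(X) \neq \emptyset$ (by \textsf{(S1)}), so that \textsf{(S3)} applies with $A = X$, $B = Y$, giving $\sigma(X\cap Y) = \sigma(Y)\subseteq \sigma(X)\cap Y \subseteq \sigma(X)$. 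Thus Aizerman is nothing more than a direct instance of \textsf{(S3)}, which is sound. For the \emph{validity} of $(\beta)$: suppose $X\subseteq Y$, $x,y\in\sigma(X)$, and $x\in\sigma(Y)$. Since $x,y\in\sigma(X)=\tau(\mu_X)$, stability of $\sigma(X)$ forces (essentially by Proposition \ref{prop:stab1}(i)) that $\mu(x)$ and $\mu(y)$ are both strictly larger than $\frac{t}{1-t}\mu(X\setminus\sigma(X))$; in particular $\mu(y) > \frac{t}{1-t}\mu(X\setminus\sigma(X))$, and since $\sigma(X)$ is $\subseteq$-minimal stable while $\{x\}$ and $\{y\}$ play symmetric roles, I would argue that $\mu(x)$ and $\mu(y)$ must in fact be equal --- more precisely, that for any two atoms in $\sigma(X)$ neither can strictly dominate the other relative to the remainder, which (using minimality and an exchange argument) yields $\mu(x)=\mu(y)$. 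Then, since $x\in\sigma(Y)=\tau(\mu_Y)$ gives $\mu(x) > \frac{t}{1-t}\mu(Y\setminus\sigma(Y))$, and $y\notin\sigma(Y)$ would give $\mu(y)\leq\frac{t}{1-t}\mu(Y\setminus(\sigma(Y)\cup\{y\})) \leq \frac{t}{1-t}\mu(Y\setminus\sigma(Y))$ when $y\notin\sigma(Y)$, we would get $\mu(y) < \mu(x)$, contradicting $\mu(x)=\mu(y)$; hence $y\in\sigma(Y)$.

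For the \emph{failure} of $(\alpha)$ and $(\gamma)$, I would reuse the counterexample already in hand from Observation \ref{failure of Or} (or Example \ref{ExampleofNon-rationalizability}), since $(\gamma)$ --- in the two-set form $\sigma(X_1\cap X_2)\subseteq\sigma(X_1\cup X_2)$, contrapositive flavour --- is closely tied to the \textsf{(Or)} rule, whose failure was already established. Concretely, with $t=1/2$, $\Omega=\{\omega_1,\omega_2,\omega_3\}$ and $\mu$ uniform-ish as in the urn example ($\mu(\omega_1)=0.4$, $\mu(\omega_2)=0.35$, $\mu(\omega_3)=0.25$), take $X=\{\omega_1,\omega_2\}$ and $Y=\Omega$. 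Then $X\subseteq Y$, $\sigma(Y)=Y$ so $X\cap\sigma(Y)=X$, but $\sigma(X)=\{\omega_1\}\subsetneq X$, violating $(\alpha)$. For $(\gamma)$, I would take $X_1=\{\omega_1,\omega_2\}$, $X_2=\{\omega_1,\omega_3\}$ (or the exact sets witnessing the \textsf{Or}-failure in Observation \ref{failure of Or}): then $X_1\cap X_2=\{\omega_1\}$, so $\sigma(X_1\cap X_2)=\{\omega_1\}$, while $\sigma(X_1\cup X_2)=\sigma(\Omega)$, and one checks $\omega_1\notin\sigma(\Omega)$ for a suitable choice of $\mu$ (e.g. the measure in Observation \ref{failure of Or}, for which no proper subset of $\Omega$ is stable), so $(\gamma)$ fails. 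To get failure \emph{for any threshold} $t\in[1/2,1)$, I would appeal to the remark following Observation \ref{failure of Or}: there is an open neighbourhood of measures failing the \textsf{(Or)} rule for each $t$, and the same witnessing measures break $(\alpha)$ and $(\gamma)$ by the argument just given (one simply rescales the urn example so that the relevant fixed-odds inequalities point the same way --- e.g. pick $\mu(\omega_1)$ strictly between $\frac{t}{1-t}\cdot\frac{1-\mu(\omega_1)}{2}$-type bounds so that $\sigma(\{\omega_1,\omega_j\})=\{\omega_1\}$ but $\sigma(\Omega)=\Omega$).

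The main obstacle I anticipate is the $(\beta)$ argument: the subtle point is establishing that distinct atoms in $\sigma(X)$ carry \emph{equal} $\mu$-mass. This is not quite immediate from Proposition \ref{prop:stab1}(i) alone, which only gives a common lower bound; one needs the minimality clause (ii) together with an exchange/swapping argument to rule out a strict inequality $\mu(x) > \mu(y)$ between two members of $\sigma(X)$ (if $\mu(y)$ were strictly smallest, one asks whether $\sigma(X)\setminus\{y\}$ could still be stable). I expect this to work but it requires care: one shows that if $\mu(x)>\mu(y)$ for $x,y\in\sigma(X)$, then removing $y$ leaves $\sigma(X)\setminus\{y\}$ still stable (because removing $y$ only \emph{increases} the conditioning set $X\setminus(\sigma(X)\setminus\{y\})$ by the single atom $y$ of small mass, and each remaining $\omega$ still clears the slightly larger threshold), contradicting $\subseteq$-minimality --- so all masses in $\sigma(X)$ coincide. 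Once that lemma is in place, the rest of $(\beta)$ is the short inequality chase sketched above. All the other cases are routine.
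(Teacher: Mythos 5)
Your Aizerman argument is fine (indeed, deriving it as a direct instance of \textsf{(S3)} is slicker than the paper's direct probabilistic verification), but the $(\beta)$ part rests on a false lemma. Distinct atoms of $\sigma(X)$ need \emph{not} have equal mass: with $t=1/2$ and $\mu=(0.4,0.35,0.25)$ on $\{\omega_1,\omega_2,\omega_3\}$ we have $\sigma(\Omega)=\{\omega_1,\omega_2\}$, whose members have different masses. Your proposed exchange argument is backwards: removing $y$ from $\sigma(X)$ \emph{adds} $\mu(y)$ to the conditioning complement, so the remaining atoms must clear the strictly larger bound $\frac{t}{1-t}\mu\big((X\setminus\sigma(X))\cup\{y\}\big)$, and they typically do not (in the example, $\{\omega_1\}$ is not stable even though $\mu(\omega_1)>\mu(\omega_2)$) --- this is precisely why minimal stable sets contain atoms of unequal mass. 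Also, the defeater-style inequality you invoke for $y\notin\sigma(Y)$, namely $\mu(y)\leq\frac{t}{1-t}\mu(Y\setminus(\sigma(Y)\cup\{y\}))$, is not valid for arbitrary non-selected states. The correct route (the paper's) avoids any mass comparison between atoms: since every $\omega\in\sigma(Y)$ satisfies $\mu(\omega)>\frac{t}{1-t}\mu(Y\setminus\sigma(Y))\geq\frac{t}{1-t}\mu(X\setminus\sigma(Y))$ when $X\subseteq Y$, the set $\sigma(Y)\cap X$ (nonempty because it contains $x$) is $(\mu_X,t)$-stable, hence $\sigma(X)\subseteq\sigma(Y)\cap X$, and $y\in\sigma(Y)$ follows at once.

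The counterexamples also need repair. For $(\alpha)$ your conclusion survives despite a factual slip ($\sigma(\Omega)=\{\omega_1,\omega_2\}$, not $\Omega$, for $\mu=(0.4,0.35,0.25)$; but $X\cap\sigma(\Omega)=X\not\subseteq\sigma(X)=\{\omega_1\}$ still holds), yet your extension to arbitrary thresholds is only gestured at, whereas the statement requires a construction for every $t$ (the paper gives explicit weights $m(a)=1$, $m(b)=k_t+\epsilon$, $m(c)=k_t(k_t+\epsilon)+\epsilon$ that work uniformly). The $(\gamma)$ counterexample, however, is structurally broken: if, as you stipulate, no proper subset of $\Omega$ is stable, then $\sigma(X_1\cup X_2)=\sigma(\Omega)=\Omega$ and the instance of $(\gamma)$ holds trivially; and more generally your intersection $X_1\cap X_2=\{\omega_1\}$ is the \emph{most probable} state, which always belongs to any nonempty strongest stable set (every member of a stable set has strictly greater mass than every non-member), so $\omega_1\in\sigma(\Omega)$ whenever $\sigma(\Omega)\neq\emptyset$. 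A working counterexample must make the common element a \emph{low}-mass state that gets excluded, as in the paper's construction: with the weights above, $X=\{a,b\}$, $Y=\{a,c\}$ give $\sigma(X\cap Y)=\{a\}\not\subseteq\sigma(X\cup Y)=\{b,c\}$, and this works for every threshold $t\in[1/2,1)$.
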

\begin{proof}
Suppose $\sigma$ is a representable selection function with $\sigma=\sigma_{\mu,t}$ for some probability measure $\mu$ and threshold $t\geq 1/2$. We show that both $(\beta)$ and (Aizerman) hold.

For the validity of $(\beta)$, let $X\subseteq Y$ and $x,y\in \sigma(X)$. Now assume $x\in \sigma(Y)$. We show $y\in \sigma(Y)$. Since $x\in \sigma(Y)\cap X$, the latter set is non-empty. We show that $\sigma(Y)\cap X$ is in fact $(\mu_X,t)$-stable. By the $(\mu_Y, t)$- stability of  $\sigma(Y)$, for every $\omega\in \sigma(Y)$, we have $\mu(\omega)>\frac{t}{1-t}\mu(Y\setminus \sigma (Y))$. So for every $\omega\in \sigma(Y)\cap X$, we have $\mu(\omega)>\frac{t}{1-t}\mu(Y\setminus \sigma (Y))\geq \frac{t}{1-t}\mu(X\setminus \sigma (Y))$, where the latter inequality follows from the fact that $X\subseteq Y$. But this simply means that $\mu(\omega\,|\, (X\setminus (\sigma(Y)\cap X))\cup\{\omega\})>t$, so that $\sigma(Y)\cap X$ is in fact $(\mu_X,t)$-stable. Since $\sigma(X)$ is the strongest $(\mu_X,t)$-stable proposition, we therefore have that $\sigma(X)\subseteq \sigma(Y)\cap X$. Since we already know $y\in \sigma(X)$, we conclude $y\in\sigma(Y)$, as desired.

For the validity of the Aizerman axiom,  suppose $Y\subseteq X$ and $\sigma(X)\subseteq Y$. This means that $\mu(\omega\,|\, (Y\setminus \sigma(X))\cup \{\omega\}  )>t$ for all $\omega\in \sigma(X)$. Since $\sigma(X)\subseteq Y$, this also means $\mu_{Y}(\omega\,|\, (Y\setminus \sigma(X))\cup \{\omega\}  )>t$: both sides of the conditional probability expression are included in $Y$, and conditioning preserves all ratios of states consistent with the second argument. So $\sigma(X)$ is also $(\mu_{Y},t)$-stable, which automatically entails that it includes the strongest $(\mu_{Y},t)$-stable set: that is, $\sigma(Y)\subseteq \sigma(X)$.

For the failure of $(\alpha)$, fix an arbitrary threshold $t\geq 1/2$. Let $\Omega =\{a,b,c\}$. Let $k_t:=t/(1-t)$ and pick some $\epsilon $ with $0<\epsilon <k_t$ (without loss of generality, we can take $\epsilon=1/2$ since $k_t\geq 1$). Define the weight function $m:\Omega\to \mathbb{R}$ by  $m(a) = 1$, $m(b) = k_t+\epsilon$  and $m(c) = k_t(k_t+\epsilon)+\epsilon$. Note that we have 
 \begin{align}
     m(b) &= k_t \cdot m(a) + \epsilon \label{eq1} \\ 
      m(c) &= k_t \cdot m(b) + \epsilon \label{eq2} \\ 
      m(c) &\leq k_t \cdot [m(a)+m(b)]  \label{eq3} 
 \end{align}
Then the normalised measure $\mu(x):= m(x)/\sum_{y\in\Omega}m(y)$ is a probability measure such that $\sigma_{\mu,t}$ fails ($\alpha$). By \eqref{eq1} and \eqref{eq2}, the set $\{b,c\}$ is $(\mu,t)$-stable, because $\mu(a),\mu(b)>k_t \mu(c)$. By \eqref{eq3}, the set $\{c\}$ is not $(\mu,t)$-stable, and a fortiori neither if $\{b\}$ (since $\mu(c)>\mu(b)$). So we have $\sigma(\Omega)=\{b,c\}$. Note also that $\sigma(\{b,c\})=\{c\}$ because of \eqref{eq2}. So, if we let $X:=\{b,c\}$, we have $X\subseteq \Omega$, while $\{b,c\}=X\cap \sigma(\Omega)\not\subseteq \,\sigma (X)=\{c\}$.

The failure of $(\gamma)$ is also shown by this construction. Let $\Omega$ and $\mu$ as above. Let $X=\{a,b\}$ and $Y=\{a,c\}$. Then $\sigma(X\cap Y) = \sigma (\{a\})=\{a\}\not\subseteq \sigma(X\cup Y) = \sigma(\Omega)=\{b,c\}$. 

\end{proof}

The observations above allow for a helfpful comparison between belief change operators. \cite{Collins2} observes that, for finite state spaces, AGM revision operators can be characterised semantically as those that satisfy the axioms $(\alpha)$ and $(\beta)$, while non-probabilistic \emph{imaging} (introduced by \cite{Lewis1976}) is characterised by $(\alpha)$ and the Aizerman axiom. Probabilistically stable revision shares some features with AGM revision\textemdash notably, it satisfies $(\beta)$ and the property of rational montonicity \textemdash but it also shares with imaging the satisfation of the Aizerman axiom. Yet it rejects the $(\alpha)$ property shared by both AGM revision and imaging.

One crucial reason why these constraints on choice functions are significant is because they allow to identify the so-called  \emph{rationalisable} choice functions. Say that a choice function is \emph{rationalisable} whenever there exists some complete and transitive binary relation $R$ on $\Omega$ such that, for every $X\in\mathcal{P}(\Omega)$, we have $\sigma(X)=\{x\in X\,|\, \forall y\in X, xRy\}$: i.e., given a menu $X$ to select from, $\sigma$ picks out the weakly dominating options in $X$. If one interprets completeness and transitivity as rationality constraints on a preference relation, the rationalisable choice functions are those that select the `most preferred' options for some undelrying preference relation. \cite{Sen1971} proved that a selection function is rationalisable exactly if it satisfies $(\alpha)$ and $(\gamma)$. One can of course read the previously mentioned result by van Benthem (Proposition \ref{MINIMOPPROP}) along the same lines: it already follows from that result that representable selection functions cannot be represented as selecting, from each set, its undominated elements for some underlying asymmetric binary relation on options.

Further, \cite{AizermanMalishevski} establish that the functions which satisfy $(\alpha)$ and the Aizerman axiom are exactly those are rationalisable for a \emph{set} of preference orders (which are each required to be complete, transitive, and have a unique maximum for each set): the selection function then collects all the maximal elements for each preference order. Given that they fail $(\alpha)$, the choice functions corresponding to strongest stable set operators cannot be interpreted in this way either. 

Is there any other sense in which we may characterise the strongest stable set operators as being rationalisable via a preference relation? More recent literature on revealed preference studies the selection functions that are representable not by a preference relation on the basic options, but in terms of a preference relation $R\subseteq \mathcal{P}(\Omega)\times \mathcal{P}(\Omega)$ on \emph{sets} of options themselves.

\begin{mydef}[Set-rationalisability \citep{BrandtHarrenstein}]

A selection function $\sigma:\mathcal{P}(\Omega)\to \mathcal{P}(\Omega)$ is \emph{set-rationalisable}  if there exists a binary relation $R$ on $\mathcal{P}(\Omega)$ such that for all $X,Z$:
$$
\sigma(Z)=X \qquad \text{if and only if} \qquad \text{there is no $Y\subseteq Z$ such that }[YRX \text { and not } XRY]
$$
\end{mydef}
Given a relation $R$ defined on sets of alternatives, we say that $Y$ dominates $X$ whenever $Y R  X$ but not $X R Y$. A selection function is \emph{set-rationalizable} if there exists a binary relations on sets such that the function selects, from each set $Z$, exactly the (unique) undominated subset among those in $Z$. \cite{PetersProtopapas} identify the following properties of choice functions as the most relevant in connection with set-rationalizability, where one takes a set of options to be the object of preferences and choice: 

\begin{mydef}[Further properties of choice functions]
\phantom{}
\begin{itemize}
    \item \emph{\textsf{Weak Axiom of Revealed Preference} \textsf{(WARP)}}: \\
    The revealed set-preference relation $R_{\sigma}$ is asymmetric, \\
    with $R_{\sigma}\subseteq \mathcal{P}(\Omega)\times \mathcal{P}(\Omega)$ defined as: $X R_{\sigma}Y$ if and only if $\exists Z\in\mathcal{P}(\Omega)$ such that $Y\subseteq Z$ and $\sigma(Z)=X$.

   \item \emph{\textsf{Weak Irrelevance of Independent Alternatives} \textsf{(WIIA)}}: \\
       If $X\subseteq Y$ and $\sigma(Y)\cap X \neq \emptyset$ then $\sigma(X)\subseteq \sigma(Y)$
     \item \emph{\textsf{Restricted Irrelevance of Independent Alternatives} \textsf{(RIIA)}}: \\
     If $X\subseteq Y$ and $\sigma(Y)\subseteq X$ then $\sigma(X)=\sigma(Y)$
\end{itemize}
\end{mydef}

The first and last property characterise set-rationalisable choice functions:

\begin{theorem}[\cite{PetersProtopapas, BrandtHarrenstein}]
Given a selection function $\sigma:\mathcal{P}(\Omega)\to \mathcal{P}(\Omega)$, the following are equivalent: 
\begin{itemize}
\item $\sigma$ satisfies \textsf{(WARP)};
\item $\sigma$ satisfies  \textsf{(RIIA)}:
\item $\sigma$ is \emph{set-rationalizable}.
\end{itemize}
\end{theorem}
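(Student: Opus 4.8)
The plan is to prove the three-way equivalence by establishing the cycle \textsf{(WARP)} $\Rightarrow$ \textsf{(RIIA)} $\Rightarrow$ set-rationalisable $\Rightarrow$ \textsf{(WARP)}. Throughout, $R_\sigma$ denotes the revealed set-preference relation defined in the statement of \textsf{(WARP)}, i.e. $X R_\sigma Y$ iff $\sigma(Z) = X$ for some $Z \supseteq Y$.

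First I would prove \textsf{(WARP)} $\Rightarrow$ \textsf{(RIIA)}. Suppose $R_\sigma$ is asymmetric, and let $X \subseteq Y$ with $\sigma(Y) \subseteq X$. Since $X \subseteq Y$ we have $\sigma(Y) R_\sigma X$ (taking the witness set $Z := Y$). On the other hand, since $\sigma(Y) \subseteq X \subseteq Y$, we also have $\sigma(Y) \subseteq Y$, so $\sigma(Y)$ is itself a subset available when evaluating the choice from $X$: I claim $\sigma(X) R_\sigma \sigma(Y)$, because $\sigma(Y) \subseteq X$ gives $\sigma(X) R_\sigma \sigma(Y)$ with witness $Z := X$. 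Now apply asymmetry: we have $\sigma(Y) R_\sigma X$ and $\sigma(X) R_\sigma \sigma(Y)$; chaining these via the fact that the choice sets are the relevant objects, asymmetry forces $\sigma(X) = \sigma(Y)$ --- the point being that $\sigma(Y) R_\sigma X$ together with $\sigma(X) \subseteq X$ and $\sigma(X) R_\sigma \sigma(Y)$ cannot both hold with $\sigma(X) \neq \sigma(Y)$ without producing a pair $(A,B)$ with $A R_\sigma B$ and $B R_\sigma A$. (This is the step that requires the most care: one must track exactly which witness sets are used and verify no symmetric pair is generated; I would write out the two cases $\sigma(X) \subsetneq \sigma(Y)$ and $\sigma(X) \not\subseteq \sigma(Y)$ explicitly.)

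Next, \textsf{(RIIA)} $\Rightarrow$ set-rationalisable. Given \textsf{(RIIA)}, define $R$ on $\mathcal{P}(\Omega)$ by $A R B$ iff there is some $Z$ with $B \subseteq Z$ and $\sigma(Z) = A$, together with reflexivity (so $R_\sigma$ augmented with the diagonal). I must verify: for all $X, Z$, $\sigma(Z) = X$ iff there is no $Y \subseteq Z$ with $Y R X$ and not $X R Y$. For the forward direction, suppose $\sigma(Z) = X$ and some $Y \subseteq Z$ has $Y R X$ and not $X R Y$. From $\sigma(Z) = X$ and $Y \subseteq Z$ we get $X R Y$, a contradiction. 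For the converse, suppose $\sigma(Z) = X'$ with $X' \neq X$, yet no $Y \subseteq Z$ dominates $X$; take $Y := X' = \sigma(Z) \subseteq Z$. Then $Y R X$ (witness $Z$), and I must show not $X R Y$: if $X R Y = \sigma(Z)$, there is $W \supseteq \sigma(Z)$ with $\sigma(W) = X$; using \textsf{(RIIA)} on the pair $\sigma(Z) \subseteq Z$ with $\sigma(Z) \subseteq X$... here one deduces $\sigma(X) = \sigma(Z)$ or similar, eventually forcing $X = \sigma(Z) = X'$, contradiction --- this is where \textsf{(RIIA)} does its work, and I would spell out the chain of inclusions carefully.

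Finally, set-rationalisable $\Rightarrow$ \textsf{(WARP)}. Suppose $\sigma$ is set-rationalised by $R$. To show $R_\sigma$ is asymmetric, assume $X R_\sigma Y$ and $Y R_\sigma X$, i.e. $\sigma(Z_1) = X$ with $Y \subseteq Z_1$ and $\sigma(Z_2) = Y$ with $X \subseteq Z_2$. Applying the set-rationalisability characterisation to $\sigma(Z_1) = X$: no $W \subseteq Z_1$ has $W R X$ and not $X R W$; in particular $Y \subseteq Z_1$ gives that not both $Y R X$ and not-$X R Y$, so either not $Y R X$ or $X R Y$. Symmetrically from $\sigma(Z_2) = Y$: either not $X R Y$ or $Y R X$. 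Combining, one shows $X R Y$ and $Y R X$ both hold, and then $X, Y$ are each the undominated subset of $Z_1 \cap Z_2$ (or of a suitable common subset), forcing $X = Y$. The main obstacle I anticipate is the middle implication \textsf{(RIIA)} $\Rightarrow$ set-rationalisability: one must produce the witnessing relation $R$ and then verify the biconditional in \emph{both} directions, and the delicate point is showing that a spurious domination $X R Y$ with not $Y R X$ cannot occur inside the chosen set $Z$ when $\sigma(Z) \neq X$ --- this is exactly where \textsf{(RIIA)}, rather than a weaker contraction-consistency property, is needed, and I would isolate this as a lemma.
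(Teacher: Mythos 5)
The paper itself does not prove this theorem\textemdash it is imported from \cite{BrandtHarrenstein} and \cite{PetersProtopapas}\textemdash so your argument has to stand on its own, and as written it has one genuine gap: the middle implication \textsf{(RIIA)} $\Rightarrow$ set-rationalisability, which you correctly flag as the crux but then leave unfinished (``one deduces $\sigma(X)=\sigma(Z)$ or similar, eventually forcing\dots''). The missing content is the following. Take $R:=R_{\sigma}$; the forward direction of the biconditional is immediate (any $Y\subseteq Z$ with $Y R_{\sigma} X$ is answered by $X R_{\sigma} Y$ with witness $Z$). The backward direction should be done contrapositively: if $X\subseteq Z$ and $X\neq X':=\sigma(Z)$, then $X' R_{\sigma} X$ with witness $Z$; and if one also had $X R_{\sigma} X'$, i.e.\ some $W\supseteq X'$ with $\sigma(W)=X$, then \textsf{(RIIA)} applied to $X\cup X'\subseteq W$ gives $\sigma(X\cup X')=X$, while \textsf{(RIIA)} applied to $X\cup X'\subseteq Z$ gives $\sigma(X\cup X')=X'$, a contradiction; hence $X'$ strictly dominates $X$ inside $Z$, as required. (The same union trick proves \textsf{(RIIA)}$\Rightarrow$\textsf{(WARP)} directly.) Note also that this verification only works if the candidates $X$ in the definition of set-rationalisability are read as subsets of the menu $Z$, as in the cited sources: under the unrestricted reading printed in the paper no relation $R$ can satisfy the biconditional even for the identity choice function on $\Omega=\{a,b\}$ (the menu $\{a\}$ with candidate $\{b\}$ forces $\{a\}$ to strictly dominate $\{b\}$, and symmetrically for $\{b\}$ versus $\{a\}$), so you must make this reading explicit before constructing $R$.

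Two smaller repairs. In \textsf{(WARP)}$\Rightarrow$\textsf{(RIIA)} you invoke asymmetry on the pair $\sigma(Y)\mathrel{R_{\sigma}}X$ and $\sigma(X)\mathrel{R_{\sigma}}\sigma(Y)$, which are not converses of one another, and ``chaining'' is unavailable since $R_{\sigma}$ need not be transitive; the correct pair is $\sigma(X)\mathrel{R_{\sigma}}\sigma(Y)$ (witness $X$, since $\sigma(Y)\subseteq X$) and $\sigma(Y)\mathrel{R_{\sigma}}\sigma(X)$ (witness $Y$, since $\sigma(X)\subseteq X\subseteq Y$), to which the axiom applies at once\textemdash and ``asymmetric'' must be read as antisymmetry ($X R_{\sigma} Y$ and $Y R_{\sigma} X$ imply $X=Y$), because $\sigma(Z)\mathrel{R_{\sigma}}\sigma(Z)$ holds for every $Z$, so literal asymmetry is unsatisfiable. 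In set-rationalisable$\Rightarrow$\textsf{(WARP)}, your claim that ``$XRY$ and $YRX$ both hold'' does not follow (the two non-domination facts only yield ``both or neither''), but your fallback idea is the right one and needs no such claim: since $\sigma(Z_i)\subseteq Z_i$ one has $X,Y\subseteq Z_1\cap Z_2$; non-domination of $X$ within $Z_1$ transfers to the smaller menu $Z_1\cap Z_2$, so the biconditional for $Z_1\cap Z_2$ gives $\sigma(Z_1\cap Z_2)=X$ and, symmetrically, $=Y$, whence $X=Y$. With these repairs your three-step cycle goes through; a slightly shorter closure is available via the two-line implication set-rationalisable$\Rightarrow$\textsf{(RIIA)} (a dominator inside $X\subseteq Y$ is already a dominator inside $Y$).
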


It is straightforward to see that representable selection functions do no meet these conditions. 

\begin{prop}
Representable selection functions satisfy \textsf{(WIIA)}. The axiom \textsf{(RIIA)} (equivalently, \textsf{(WARP)}) is not valid for representable selection functions.
\end{prop}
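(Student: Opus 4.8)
The plan is to prove the two halves separately, as they have quite different flavours. For the validity of \textsf{(WIIA)} for representable selection functions, I would essentially repeat the argument already used for the $(\beta)$ axiom in the proof of Observation \ref{PROPalphabetaAizerman}, since \textsf{(WIIA)} is a natural strengthening of $(\beta)$ that drops the hypothesis $x,y\in\sigma(X)$. Concretely: suppose $\sigma=\sigma_{\mu,t}$ for a (regular) probability measure $\mu$ and threshold $t\geq 1/2$, and assume $X\subseteq Y$ with $\sigma(Y)\cap X\neq\emptyset$. The key step is to show that $\sigma(Y)\cap X$ is $(\mu_X,t)$-stable: for any atom $\omega\in\sigma(Y)\cap X$, the $(\mu_Y,t)$-stability of $\sigma(Y)$ gives $\mu(\omega)>\frac{t}{1-t}\mu(Y\setminus\sigma(Y))\geq\frac{t}{1-t}\mu(X\setminus\sigma(Y))$, using $X\subseteq Y$, which is exactly the condition that $\mu_X(\omega\mid (X\setminus(\sigma(Y)\cap X))\cup\{\omega\})>t$. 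Hence $\sigma(Y)\cap X$ is $(\mu_X,t)$-stable, and since $\sigma(X)=\tau_t(\mu_X)$ is the $\subseteq$-least stable set we conclude $\sigma(X)\subseteq\sigma(Y)\cap X\subseteq\sigma(Y)$, which is \textsf{(WIIA)}.

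For the failure of \textsf{(RIIA)}, the cleanest route is to reuse the very same counterexample already constructed for $(\alpha)$ in the proof of Observation \ref{PROPalphabetaAizerman}. There we have $\Omega=\{a,b,c\}$ with weights $m(a)=1$, $m(b)=k_t+\epsilon$, $m(c)=k_t(k_t+\epsilon)+\epsilon$ (with $0<\epsilon<k_t$), yielding $\sigma(\Omega)=\{b,c\}$ and $\sigma(\{b,c\})=\{c\}$. Now take $Y:=\Omega$ and $X:=\{b,c\}$: then $X\subseteq Y$ and $\sigma(Y)=\{b,c\}\subseteq X$, so the antecedent of \textsf{(RIIA)} is met, but $\sigma(X)=\{c\}\neq\{b,c\}=\sigma(Y)$, so \textsf{(RIIA)} fails. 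Since the cited theorem of \cite{PetersProtopapas, BrandtHarrenstein} establishes the equivalence of \textsf{(RIIA)}, \textsf{(WARP)}, and set-rationalisability, this simultaneously shows \textsf{(WARP)} fails and that representable selection functions are not set-rationalisable. I would remark that this construction works for every threshold $t\in[1/2,1)$, since $k_t=t/(1-t)\geq 1$ in that range guarantees the existence of a suitable $\epsilon$.

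There is no real obstacle here — the statement is essentially a corollary of the work already done. The only point requiring a moment's care is the atom-versus-singleton bookkeeping in the \textsf{(WIIA)} argument (one should phrase the stability computation for an arbitrary atom of $\mathfrak{A}$ lying in $\sigma(Y)\cap X$, not literally a singleton, when $\mathfrak{A}\neq\mathcal{P}(\Omega)$), but this is exactly the same harmless abuse of notation adopted throughout the paper. I would also note explicitly that \textsf{(WIIA)} is strictly weaker than the combination of $(\alpha)$ and the Aizerman axiom that would give set-rationalisability — indeed it follows from Aizerman plus $(\beta)$ in the presence of $\sigma(Y)\cap X\neq\emptyset$ — which is consistent with the fact, already recorded, that representable selection functions satisfy $(\beta)$ and Aizerman but not $(\alpha)$.
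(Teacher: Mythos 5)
Your proposal is correct and follows essentially the same route as the paper: the paper likewise observes that the stability argument given for $(\beta)$ already establishes \textsf{(WIIA)} (i.e., $X\subseteq Y$ and $\sigma(Y)\cap X\neq\emptyset$ entail $\sigma(X)\subseteq\sigma(Y)\cap X$), and it refutes \textsf{(RIIA)} by reusing the very same counterexample constructed for $(\alpha)$, taking $Y=\Omega$, $X=\{b,c\}$ with $\sigma(\Omega)=\{b,c\}$ but $\sigma(\{b,c\})=\{c\}$. Your added remarks (threshold-uniformity of the counterexample, atom-versus-singleton bookkeeping) are accurate but not needed beyond what the paper records.
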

\begin{proof}
We have already shown the validity of \textsf{(WIIA)} in our argument for $(\beta)$. Suppose $\sigma$ is a representable selection function with $\sigma=\sigma_{\mu,t}$ for some probability measure $\mu$ and threshold $t\geq 1/2$. We showed in our proof of $(\beta)$ that $X\subseteq Y$ and $\sigma(Y)\cap X\neq \emptyset$ entail $\sigma(X)\subseteq \sigma(Y)\cap X$, hence $\sigma(X)\subseteq \sigma(Y)$. For the failure of \textsf{(RIIA)}, let $t\geq1/2$ arbitrary threshold. The counterexample for $(\alpha)$ in Observation \ref{PROPalphabetaAizerman} is also a counterexample to \textsf{(RIIA)}. Take again the very same probability space $(\Omega, \mathcal{P}(\Omega), \mu)$ as in the counterexample for $(\alpha)$. Then $\sigma(\Omega)=\{b,c\}$ but $\sigma(\sigma(\Omega))=\sigma(\{b,c\})=\{c\}$. So we have $X\subseteq \Omega$ and $\sigma(\Omega)\subseteq X$ but $\sigma(X)\neq \sigma(\Omega)$. 
\end{proof}

As a consequence, representable selection functions are not set-rationalizable. As the counterexample illustrates, a quick way to see that \textsf{(RIIA)} and \textsf{(WARP)} fail is simply to observe that strongest-stable-set operators are not idempotent: we may well have $\sigma(\sigma(X))\neq \sigma(X)$. Indeed, highlighting the failure of idempotence in explaining the failure of \textsf{(WARP)} is apt: \cite{PetersProtopapas} show that, for choice functions satisfying \textsf{(WIIA)}, idempotence is equivalent to \textsf{(WARP)}.\footnote{On the belief-revision side, the failure of idempotence means that, even after a learning a proposition that was already believed, the agent's doxastic state changes: this illustrates how stable revision distinguishes propositions that that are \emph{received as evidence and learned with certainty}, as opposed to those that are merely \emph{defeasibly believed, given the evidence}.} 

Thus strongest stable set operators do not admit a revealed preference interpretation in the usual sense of choosing the most preferred options, either in terms of preferences over alternatives or over subsets thereof. This is perhaps a further point in favour of the view that these operators\textemdash and thus, probabilistically stable revision\textemdash are less `qualitative'. 

On the other hand, strongest stable set operators do satisfy a weak version of independence of irrelevant alternatives, and one may hope that they may nonetheless admit some interpretation in terms of revealed preference. And this is the case: however, that interpretation crucially relies on a qualitative notion of stability, rather than picking the most preferred alternatives. It is clear that we can recover a selection function in terms of a preference relation: indeed, the leading idea of our main representation theorem is that we can uniquely recover a probabilistically stable revision plan from the dominance relation $\omega\succ_{\sigma}X$ generated by a selection function $\sigma$. Now, we can generalise this dominance relation slightly (along the lines already sketched in Section \ref{FISHPAR}). Say that $A$ is strongly preferred to $B$, written $A\gg_{\sigma}B$, if $\sigma(A\cup B)\subseteq A\setminus B$: i.e., given the choice among elements in $A\cup B$, the selected options are all in $A$ and none are in $B$.\footnote{This evidently generalises the $\succ_{\sigma}$ relation: $\omega\succ_{\sigma}X$ is equivalent to $\{\omega\}\gg_{\sigma} X$, but $A\gg_{\sigma}B$ can also hold when $A$ is not a singleton.} Now, say $S\subseteq X$ is \emph{dominance-stable with respect to $X$} if $A\gg_{\sigma}B$ holds for every $A\subseteq S$ and $B\subseteq X\setminus S$. That is: after restricting attention to objects in the set $X$, every subset of $S$ is strongly preferred to any subset outside of $S$. Then representable selection functions are exactly those that satisfy:

\begin{itemize}
    \item[] (Dominance-Stability) For every $X$, $\sigma(X)$ is the smallest dominance-stable set with respect to $X$.\footnote{A very similar notion of stability emerges in the microeconomics literature: in their study of set-rationalisable choice functions,  \cite{BrandtHarrenstein} investigate a notion of stability which is similar to Dominance-Stability. Brandt and Harrenstein call a set of alternatives $X$ \emph{$\sigma$-stable} in $A$ if it is \emph{internally stable}, meaning that $X$ is a fixpoint of $\sigma$, and \emph{externally stable}, meaning that $\omega\not\in \sigma(X\cup \{\omega\})$ for all $\omega\in A\setminus X$. Note that, for representable selection functions, although internal stability may fail, every set $\sigma(A)$ is always $\sigma$-stable in $A$: this follows from Dominance-Stability.}
 \item[] (Representability) There exists a numerical utility assignment $u:\Omega\to \mathbb{R}^{\geq 0}$ such that the $\gg_\sigma$ relation corresponds to the cumulative utility of $A$ being sufficiently larger than the cumulative utility of $B$. Here \emph{sufficiently larger than} depends on the threshold parameter $t$, meaning: $$A\gg_{\sigma}B \qquad \text{if and only if}\qquad\sum_{a\in A} \mu(a)> \frac{t}{1-t} \sum_{b\in B}\mu(b)$$
\end{itemize}

In a sense, then, the stability-representable selection functions (probabilistically stable revision plans) arise from the conjunction of two components: the requirement of Dominance-Stability, entirely qualitative in spirit and expressible purely in terms of the strong preference relation $\gg_{\sigma}$; and the more `quantitative' requirement that the strong preference relation be numerically representable in terms of (additive and non-negative) utilities. 

If we interpret representable selection functions as capturing the choice behaviour of an agent (rather than interpreting them as belief revision plans), what kind of choice behaviour do they capture? The property of dominance stability captures a fairly intuitive, if strong, constraint on choice: given a range of options, \emph{any} nonempty subset of the selected options is strongly preferred to \emph{any} subset of non-selected options. 
One can conceive of it as a \emph{trade-invariance} property: the choosing agent would reject any offer to trade a collection of selected options against any collection of non-selected ones. In particular, they would refuse to trade any selected alternative for any collection of non-selected alternatives: for any selected alternative $\omega\in\sigma(X)$ and collection of rejected options $Y\subseteq X\setminus \sigma(X)$, we have $\sigma(Y\cup \{\omega\})=\{\omega\}$, so the agent would refuse trading $\{\omega\}$ for $Y$. The representable selection functions thus capture the choice behaviour of agents who (i) have additive utilities over sets of options, (ii) always select a dominance-stable subset of options, and (iii) always select the smallest dominance-stable subset that is available. 

Are there many decision-theoretic scenarios or situations of economic interest in which such a choice procedure arises naturally? We will not weigh on that question here. But it is of interest to note what our investigation of probabilistically stable revision plans, reinterpreted as choice functions, reveals: representable selection functions admit a certain type of revealed-preference interpretation, which captures the choice behaviour of a cautious agent who selects minimal dominance-stable sets of alternatives: and from that perspective, our representation theorem furnishes the structural conditions on choice functions that characterise exactly the choices of such agents. 

\subsection{Revision policies and imprecise probability?}

There is another bridging role that strongest-stable-set operators can play in the representation of uncertain inference. Each set of probability measures corresponding to a given strongest-stable-set operator\textemdash or, equivalently, a probabilistically stable revision operator, encoding a `belief state with contingency plan'\textemdash is a convex set of probability distributions (though not closed in general). Consider the family of gambles $\{f_{(\omega, X)}\,|\, \omega\not\in X\}$, where 
$$f_{(\omega,X)}(x)=\begin{cases}
1 &\text{ if } x=\omega \\
-t/(1-t)&\text{ if } x\in X \\
0 &\text{ otherwise} 
\end{cases}$$
Given a probabilistically stable revision operator $\sigma$, its set of representing measures $\Delta(\sigma):=\{\mu\in\Delta\,|\, \sigma_{\mu}=\sigma\}$ contains exactly those measures $\mu$ which satisfy $\mathbb{E}_{\mu}[f_{(\omega,X)}]>0$ iff $\omega\succ_{\sigma}X$. By contrast, under the stability rule, regions on the probability simplex corresponding to the distributions that agree only on \emph{unconditional} beliefs are not necessarily convex. Since Bayesian authors often advocate the convexity requirement for credal sets \citep{LEV}, this may indicate that such a Bayesian would be more inclined to see the \emph{full} conditional belief structure generated by the stability rule\textemdash as given by a representable selection function $\sigma$ or a non-monotonic consequence relation \textemdash as a legitimate `qualitative' representation of an agent's belief state (rather than simply taking the raw belief set of the agent). This is not to suggest that there is any compelling imprecise probability semantics for probabilistically stable belief, or that there are any deeper connections between imprecise probability and Leitgeb's account of belief: after all, linear probability constraints giving rise to a convex set are easy to come by, and representors for probabilistically stable revision operators are only a very special case. But if one takes the convexity requirement seriously, Leitgeb's stability rule and its associated revision operator may be good point to start a conversation between (imprecise) Bayesians and 'qualitative' reasoners. 

\subsection{Qualitative revision and the definitional complexity of probabilistic acceptance rules}

We have just explored some alternative settings, beyond the context of belief revision, where strongest-stable-set operators may arise, and where our representation theorem thereby finds an application. As we conclude our investigation, let us return to the subject of belief revision and belief-credence principles from which these results arose. We close this paper on some general considerations that our results raise about the nature of the tracking problem and the relationship between `qualitative' and probabilistic reasoning.

The tracking problem is the problem of providing a belief revision method that tracks Bayesian conditioning. One could worry that a `solution' to the tracking problem may come too cheap. Recall Lin and Kelly's warning about acceptance-induced belief revision plans:

\begin{quote}
It is easy to achieve perfect tracking: just define [\emph{the revised belief set to be $\alpha(\mu_{E})$, the set induced by applying the acceptance rule to one's posterior}]. To avoid triviality, one must specify what would count as a propositional approach to belief revision that does not essentially peek at probabilities to decide what to do. \cite[p. 963]{LIK}
\end{quote}

Lin and Kelly's worry can perhaps be understood as follows. By definition of the tracking problem, every tracking belief revision plan will be of the form $E\mapsto \alpha(\mu_{E})$ for some acceptance rule $\alpha$. In a sense, then, every solution to the tracking problem will amount to specifying an acceptance-induced belief revision policy. But characterising a revision operator explicitly through a probabilistic acceptance rule\textemdash through its openly probabilistic semantics, as it were\textemdash is not a satisfactory solution to the problem: indeed, that would trivialise the tracking problem altogether. We cannot accept any old specification of a tracking belief revision policy. What makes the tracking problem non-trivial is that we look for a way to track probabilistic reasoning with a `qualitative' belief revision policy. Accordingly, for a satisfactory solution to the tracking problem, we must require not only  that the acceptance rule itself be independently motivated as a principle for rational acceptance but also, crucially, that the resulting revision operator admit a self-standing, `qualitative' characterization that does not amount to `peeking at the probabilities'. 

The latter criterion is as intuitive as it is unclear. What makes a revision policy qualitative? What constitutes an illicit `peak at the probabilities' when characterising a belief revision plan? Our study of probabilistically stable revision illustrates how elusive a sharp answer to that question is. On the one hand, our characterisation of stable revision plans in Theorem \ref{GENREPSEL} is undoubtedly at least as `qualitative' as canonical axiomatisations of qualitative probability orders from \cite{KPS} and \cite{SCO}. On the other, the fact that stable revision cannot be characterised in preferential terms (in terms of taking the preferred elements of a plausibility order on basic outcomes) suggests that it is intuitively less qualitative than AGM belief revision and preferential systems of non-monotonic logic, which admit canonical order-based representations. 

There are two distinct questions we can pursue when asking about the extent to which a belief revision plan is qualitative. One is simply whether it admits a `probability-free' characterisation: in essence, whether it admits non-probabilistic semantics. This question of course suffers from the difficulty that it is unclear what makes a semantics sufficiently non-probabilistic to count as qualitative. There certainly is a sense in which, say, the partial order semantics for Lin and Kelly's Shoham-driven belief revision plans\footnote{As discussed in Section \ref{The Ramsey test and tau-models}: recall that the proposal was to declare $\omega$ as at least as plausibe as $v$ whenever the odds ratio $\mu(\omega)/\mu(v)$ reaches a certain threshold.} can be considered purely qualitative: preferential reasoning is arguably a paradigmatic case of qualitative non-monotonic reasoning.  By contrast, our axiomatisation of probabilistically stable revision involves Scott-style `cancellation' axioms of a very probabilistic flavour. The conditional belief structure given by a probabilistically stable revision plan thus admits combinatorial patterns of reasoning\textemdash such as those involving balancedness\textemdash that are characteristic of probabilistic inference. One can be tempted to conclude that inference involving such conditional beliefs encodes patterns of reasoning that ``cry for'' a probabilistic interpretation: they are plausibly read as implicitly encoding numerical probabilistic constraints. That is perhaps one one sense in which these revision operators ``need'' to peak at the probabilities. 

Yet it isn't clear whether the distinction at play \textemdash between those revision operators that require a probabilistic interpretation and those that do not\textemdash can be sharpened and substantiated. A general criterion for qualitativeness is not forthcoming, in part due to the inherent fuzziness of the target. This reflects a broader conceptual difficulty in isolating the nature of `qualitative' reasoning, one which has been observed in the study of probability logics: it is hard to say what it means, for a belief revision operator or any other inference system, to \emph{require} a probabilistic semantics. When does one \emph{need} to peak at probabilities? To illustrate the difficulty: as has been observed by \cite{APAL}, even paradigmatically quantitative probability logics involving explicit addition of probability terms can be given arguably `qualitative' semantics in terms of concatenation of stings, and thus may not \emph{require} a probabilistic interpretation, however natural such an interpretation may be.\footnote{Suppose we have a probability logic in which we allow Boolean reasoning about comparisons of probability terms, where the terms can involve explicit addition of probabilities. That is, suppose we allow boolean reasoning about judgments of the form $\sum_{i\in I} P(\alpha_i) < \sum_{i\in J} P(\beta_i)$ with $\alpha_i$, $\beta_i$ boolean formulas representing events in a (finite) probability space. One can prove a completeness result for this logic by interpreting each probability term as an unary string, with addition interpreted as concatenation, and the less-than relation $<$ as string containment: see \cite[p. 37]{APAL}. There, the authors propose that a more meaningful distinction in the context of probabilistic reasoning, which also tracks considerations of computational complexity, is not between qualitative and quantitative reasoning, but between purely additive probabilistic reasoning and reasoning implicitly involving \emph{multiplicative} constraints.} By these standards, probabilistically stable revision qualifies even more easily for `qualitative' status. It follows from our discussion that probabilistically stable revision can be characterised using only first-order quantification over events and a `dominance' relation $\omega \succ X$ (more on this below). Our axioms indeed ensure that the domination relation \emph{can} be interpreted as `sufficiently more probable than' (for a given threshold), but one could just as well treat the axioms for $\succ$ as primitive constraints on qualitative plausibility judgments, much in the spirit of the measurement-theoretic approach to probability. The mere numerical representability of the selection function/dominance order does not force a probabilistic interpretation thereof, any more than the possibility to represent a partial order on states as an odds-ratio order forces a probabilistic interpretation of Shoham-driven revision.\footnote{Indeed, note that the dominance order $\succ$ involved in probabilistic stability captures essentially the same kind of probability judgments than those involved in the belief revision policies of \cite{LIK}. The only difference is that, while the latter relies on odds-ratio $\mu(\omega)/\mu(v)$ between basic outcomes only, the former additionally keeps track of odds ratios of the form $\mu(\omega)/\mu(X)$ between a basic outcome and a general event (we have $\omega\succ X$ whenever $\mu(\omega)/\mu(X)$ exceeds a threshold $\frac{t}{1-t}$, for threshold $t$).}  All this highlights the special status of probabilistically stable revision as a halfway station, sharing features of both qualitative belief revision and reasoning with comparative probability orders. 

This leads to another, more tractable question one can ask about the dependence of a belief revision policy on probabilities. Rather than asking whether a belief revision policy admits a `qualitative' semantics at all (whatever that might ultimately mean), we can ask how much probabilistic structure one's belief revision policy is sensitive to \emph{under its given acceptance rule semantics}. 

Our modest proposal is that progress can be made on that question by classifying acceptance rules and their associated revision operators by their definitional complexity in a sufficiently expressive probability logic. A rough outline of how this can be done is as follows. Suppose we have a first-order language to talk about probability spaces, in which we can quantify over events in the algebra, and with enough resources to contain a modest amount of real arithmetic. For concreteness, take the first-order language of boolean algebras $\mathcal{L}_{BA}$ with a function symbol $\mathsf{P}$ to be interpreted as a probability measure on the algebra. Let the signature for $\mathcal{L}_{BA}$ contain the boolean operations of complement ($\cdot\,^{\bot}$), meet ($\sqcap$) and join ($\sqcup$), and a constant $\bot$ for the bottom element of the algebra (for notational simplicity, include also the boolean inclusion relation $\sqsubseteq$).  Additionally, we take function symbols for basic arithmetical operations to be interpreted over the ordered real field $(\mathbb{R}, \cdot, +, \leq, 0,1)$. Aside from ordinary $\mathcal{L}_{BA}$ formulas, we also allow formulas that are recursively built from basic expressions of the form $p_{1}\big(\mathsf{P}(\beta_{1}),\dots,\mathsf{P}(\beta_{n})\big) \leq p_{2}\big(\mathsf{P}(\beta_{i}),\dots,\mathsf{P}(\beta_{k})\big) $, where $p_{i}$'s are polynomial expressions in the signature $(\cdot, +, 0,1)$ (with, say, constants for rationals) and the $\beta_i$'s are terms in  $\mathcal{L}_{BA}$ standing for events (this is similar to the systems investigated by \cite{FHM} and \cite{SPE}). Call the resulting language $\mathcal{L}$.

We can interpret formulas in $\mathcal{L}$ over structures of the form $(\mathbb{B}, \mu)$ where $\mathbb{B}$ is a (finite) boolean algebra and $\mu$ a probability measure on it. Formulas in $\mathcal{L}_{BA}$ are evaluated in $\mathbb{B}$, and expressions of the form $p_{1}\big(\mathsf{P}(\beta_{1}),\dots,\mathsf{P}(\beta_{n})\big) \leq p_{2}\big(\mathsf{P}(\beta_{i}),\dots,\mathsf{P}(\beta_{k})\big) $ hold if the resulting inequalities are true once each $\mathsf{P}(\beta)$ has been replaced by its intepretation $\mu(\e{\beta})$ (here $\e{\beta}\in\mathbb{B}$ is the interpretation of the Boolean term $\beta$). As an example:  $(\mathbb{B}, \mu)\vDash \forall x \big((x\neq \bot)\rightarrow \mathsf{P}(x)>\mathsf{0}\big) $ means that the measure $\mathsf{P}$ is regular: for all $A\in\mathbb{B}$, if $A$ is not the bottom element $\bot_{\mathbb{B}}$ in $\mathbb{B}$, then $\mu(A)>0$.
 
Acceptance rules can then be classified by the definitional complexity of the set of accepted elements in the underlying algebra. Here are some examples.

\begin{ex}[The Lockean rule]
The Lockean rule $\lambda$ can be uniformly captured by an atomic formula:
$$
B^{\lambda}(x) := \mathsf{P}(x)\geq \mathsf{q}
$$
where we take the parameter $\mathsf{q}$ as a constant symbol for $q\in\mathbb{Q}$.
\end{ex}

\begin{ex}[The stability rule]
The stability rule $\tau$ can be captured by the following formula:
$$
B^{\tau}(x) :=\exists y ( y \sqsubseteq x \wedge \textsf{S}(y))
$$
where $\textsf{S}(y)$ is a stability predicate which can be defined as
$$
\textsf{S}(y) := \forall z (y\wedge z \neq\bot \to \mathsf{P}(y\wedge z)> \mathsf{q}\cdot \mathsf{P}(z)) 
$$
or, equivalently, as $\textsf{S}(y)\Leftrightarrow\forall z \big((z\neq \bot \wedge z\sqsubseteq y) \rightarrow (1-\mathsf{q}) \mathsf{P}(z)> \mathsf{q}(1-\mathsf{P}(y)\big)$. The property of being an accepted proposition is a $\exists\forall$-condition. Similarly, the property of being the strongest stable proposition is given by 
\begin{align*}
\textsf{SS}(x) &:=\textsf{S} (x) \wedge \forall y \big(\textsf{S} (y) \rightarrow x\sqsubseteq  y\big) \\
& \Leftrightarrow \forall w \forall y \exists z\Big( \psi(x,w)\wedge  \big(\psi(y,z) \to x\sqsubseteq y \big)\Big).
\end{align*}
where $\psi(y,z)=\big(y\wedge z \neq \bot \to \mathsf{P}(y\wedge z)> \mathsf{q}\cdot \mathsf{P}(z)\big)$. This amounts to a  $\forall\exists$-condition.
\end{ex}

\begin{ex}[Lehrer's rule]
\cite{LEH} proposes the following acceptance rule: accept all propositions that are highly probable and whose probability cannot be lowered by learning an even more probable proposition. We can capture this acceptance rule by the following formula: 
$$
B^L(x):= \mathsf{P}(x)\geq \mathsf{q} \,\wedge \neg \exists y \big( \mathsf{P}(x)\leq \mathsf{P}(y) \wedge \mathsf{P}(x\wedge y)< \mathsf{P}(x)\cdot \mathsf{P}(y)\big)
$$
\end{ex}

We can proceed in the same way to analyze induced belief revision policies. Given an acceptance rule $\alpha$, say that a formula $\varphi(x,y)$ defines the belief revision policy induced by $\alpha$ if and only if, in every model, it captures exactly the function $A\mapsto \alpha(\mu_{A})$, understood as the set of pairs $(A,B)$ such that $B$ is the strongest accepted proposition upon learning $A$. That is: for every model $(\mathbb{B}, \mu)$, and any $A,B\in \mathbb{B}$ we have 
\begin{center}
$(\mathbb{B}, \mu)\models \varphi(A,B)$ if and only if 
$\alpha(\mu_{A})=B $
\end{center}
For instance, we can give an explicit definition of probabilistically stable revision. 
\begin{ex}[Probabilistically stable revision]
Consider the following $\mathcal{L}$-formulas:
\begin{align*}
 \mathsf{S}(x,y)&:=  x\sqsubseteq y \wedge \forall z \big((z\neq \bot\wedge z\sqsubseteq x)\to (1-\mathsf{q})\cdot\mathsf{P}(z)> \mathsf{q}\cdot \mathsf{P}(y\sqcap x^{\bot})\big)\\
 \varphi(x,y) &:=  \mathsf{S}(x,y) \wedge \forall z (\mathsf{S}(z,y)\to x\sqsubseteq z)  \\
  B^{\tau}(x,y) &:=  \exists z ( \mathsf{S}(z,y)\wedge z\sqsubseteq x)
\end{align*}
The formula $ \mathsf{S}(x,y)$ defines the relation ``$x$ is probabilistically stable conditional on $y$'', while $ \varphi(x,y)$ defines the belief revision policy induced by the stability rule, i.e. probabilistically stable revision for threshold $q$. The formula  $B^{\tau}(x,y)$ defines the relation ``$x$ is believed conditional on $y$'' under the stability rule.
\end{ex}

Classifying acceptance rules and revision policies in terms of their definitional complexity allows to bring to the fore important logical differences between acceptance rules. First, it helps identify two sources of complexity for an acceptance rule or revision policy: one is the quantificational structure of its defining condition, while the other is the algebraic complexity of probabilistic condition involved (captured by quantifier-free formulas).

For example, in the cases listed above, the Lockean rule is captured by a atomic formula, while the stability rule is given by a $\Sigma_2$ condition ($\exists\forall$). This is one sense in which these explicit definitions syntactically capture the `higher-order' nature of the stability rule: as opposed to several other acceptance rules suggested in the literature, the acceptance of a hypothesis $H$ does not merely depend on a `\emph{local}' property\textemdash e.g. the mere value of $\mu(H)$\textemdash but takes into account a more global property of the probability space that $H$ lives in, that of admitting a probabilistically stable subset. 

We can also see that the stability rule nonetheless only requires quantification over formulas that do not involve multiplication of probability terms of the form $\mathsf{P}(\beta_1)\cdot \mathsf{P}(\beta_2)$: in particular, we can say that stable belief is captured by a $\Sigma_2(\mathcal{L}_{\text{linear}})$ formula: an $\exists\forall$ condition whose quantifier-free part encodes \emph{linear} constraints on probability terms. And indeed, it can be seen from the defining formulas above that \emph{model-checking} for any instance of $\mathsf{S}(x)$ is a linear problem: given a fixed probability space (model) $(\mathbb{B}, \mu)$, checking whether the stability property $\mathsf{S}(A)$ holds of any particular element $A\in \mathbb{B}$  amounts to verifying a conjunction of \emph{linear} inequalities $\bigwedge_{\bot\neq B\sqsubseteq A} \,\mu(B)> \frac{q}{1-q} \mu (A^c)$.\footnote{It is of course sufficient to check these inequalities for algebraic \emph{atoms} below $B$.} Similarly,  given a fixed probability space, determining the probabilistically stable revision plan amounts to solving a linear problem. By contrast, in the case of Lehrer's rule, the belief condition $B^L(x)$ is given by a universal quantifier followed by \emph{multiplicative} constraints between probability terms. There, the conditions $\mathsf{P}(x\wedge y)< \mathsf{P}(x)\cdot \mathsf{P}(y)$ cannot be captured by inequalities that are linear in probability terms. We may say that Lehrer's rule is captured by a $\Pi_{1}(\mathcal{L}_{\text{conf}})$ condition, where $\mathcal{L}_\text{conf}$ consists of the quantifier-free formulas specifying \emph{confirmation constraints} $\mathsf{P}(x\,|\,y)\geq P(x)$. 

These examples illustrate how an explicitly logical approach to classifying acceptance rules offers a framework within which to investigate the question of how much probabilistic structure an acceptance rule or revision policy depends on. How much information about the entire measure do we need to have access to in order to determine, for instance, whether a given hypothesis is accepted? Given an acceptance rule, what kind of constraints on probability distributions can one encode by (conditional) belief statements (are they linear, semi-algebraic, exponential)? Further, these considerations go hand-in-hand with considerations of complexity: how complex is inference involving the corresponding notions of acceptance? One can of course relate the definitional complexity of an acceptance rule or belief revision policy to the computational complexity of checking whether the condition obtains in a given model. Further still, once one knows in which fragment of a probability logic one can define statements of (conditional) belief, one can leverage results about the complexity of decision problems for various probability logics to address questions of computational complexity for the corresponding logics of (conditional) belief. In this manner we might, in some cases, establish a ready connection between the definitional complexity of an acceptance rule and the complexity of decision problems for the resulting doxastic logic or logic of belief revision.

Much recent work in probability logic is relevant to these questions. \cite{APAL} investigate the boundary between qualitative and quantitative reasoning in the setting of probability logics, establishing a robust complexity boundary between additive probabilistic reasoning (corresponding to linear inequality constraints) and multiplicative reasoning (corresponding to more general polynomial constraints). Further, \cite{MOS} establishes that even the simple logic of probabilistic confirmation, which makes an appearance in the definition of Lehrer's rule above, has an $\exists\mathbb{R}$-complete decision problem (which corresponds to the existential theory of the reals; it is as hard as the `quantitative' logic of inequalities between polynomial probability terms). \cite{SPE, SPE2}, on the other hand, develops in more depth the theory of probability logics with quantifiers over events, focusing on results about expressivity and complexity that may well be fruitfully employed for our purposes.

It should of course be mentioned that, for more sophisticated rules, we may need a language more complex than comparisons of polynomial expressions over probability terms (and quantification over a boolean algebra). This is so if, for example, one relies on a notion of acceptance that relies on information-theoretic entropy. Here is one family of acceptance rules where this might occur:

\begin{ex}[Voronoi rules: acceptance rules induced by statistical distance measures]
To each consistent proposition $X$ in an algebra of propositions $\mathfrak{A}$ assign a \emph{representative measure} $\rho^{X}$ where $X$ is the strongest accepted proposition: let $\alpha(\rho^X)=X$. Next, fix a favourite statistical distance metric or divergence $D:\Delta_{\mathfrak{A}}\times \Delta_{\mathfrak{A}} \to [0,\infty]$ between probability measures: e.g., symmetrised Kullback-Leibler divergence, total variation distance, or Heilinger distance. Define the acceptance zones $[X]:=\{\mu\in\Delta_{\mathfrak{A}}\,|\,\alpha(\mu)=X\}$ as the cells of a Voronoi diagram centered around the representative probability measures $\{\rho^{X}\,|\, \emptyset\neq X\in\mathfrak{A}\}$. In other words: given a measure $\mu$, determine which of the canonical representatives $\rho^X$ ($X\in\mathfrak{A}$) the measure $\mu$ is closest to, in terms of the distance $D$; then $X$ is the agent's strongest accepted proposition. 

One instance of this construction is the following. Define the canonical representative of $X$ as the maximum entropy measure $\mu\in\Delta_{\mathfrak{A}}$ such that $\mu(X)\geq t$, where $t$ is a fixed threshold: 
$$
\rho^{X}:= \text{\normalfont {argmax}}_{\{\mu\in\Delta_{\mathfrak{A}} \,|\,\mu(X)\geq t\}}\, H(\mu) 
$$
where $H(\mu)= -\sum_{\omega\in \Omega}\mu(\omega)\log\mu(\omega)$ is the information-theoretic \emph{entropy} of $\mu$. The measure $\rho^X$ can be thought of as the most uninformative or `equivocal' measure among all those that assign measure at least $t$ to $X$. Then the acceptance rule is given by:
$$
\alpha(\mu)= \text{\normalfont{argmin}}_{X\in\mathfrak{A}} \,D(\rho^X,\mu) 
$$
where $D$ is some particular measure of statistical distance or divergence. A natural choice is to let $D$ denote the Kullback-Leibler divergence (equivalently, relative entropy):$$
D(\mu, \nu) = \sum_{\omega\in \Omega} \mu(\omega) \log \frac{\mu(\omega}{\nu(\omega)}. 
$$
The agent's belief set is then $X$ exactly if the representative measure $\rho^X$ is the closest one to their credence function under Kullback-Leibler divergence (among all the representative measures).\footnote{Kullback-Leibler divergence is a Bregman divergence, so Voronoi diagrams induced by it are an instance of what is known as Bregman-Voronoi diagrams studied in computational geometry: see \cite{VOR, VOR2}. When one uses a non-symmetric divergence like Kullback-Leibler, it makes a difference to the Voronoi regions whether one defines the nearest representative as the one minimising $D(\rho^X,\mu)$ or the one minimising $D(\mu,\rho^X)$. Also, when $\mu$ is at an equal distance between representative points\textemdash lies on \emph{bisector} between Voronoi regions\textemdash a tie-breaking rule might need to be added. We will not dwell on different ways to achieve this, since our aim is less to motivate this particular rule than to raise a logical point about definability for the family of rules induced by statistical distance.}

In general, depending on how complex the definition of canonical measures $\rho^X$ and the statistical distance $D$ are, the definition of the acceptance rule can outstrip the expressive capabilities of a simple logic of polynomial probability terms. For acceptance rules contructied in this way, the strongest accepted proposition could be defined by a formula of the form
$$
B^{\alpha}(x):= \forall y \,\varphi(x,y)
$$
where $\varphi(x,y)$ captures the property that the given measure $\mu$ is at least as close to the representative measure for $x$ as it is to the representative measure for $y$. For the latter property to be definable, there must be a formula $\varphi(x,y)$ such that for any model $(\mathbb{B}, \mu)$, we have $(\mathbb{B}, \mu)\models  \varphi(A,B)$ for $A,B\in\mathbb{B}$ if and only if $D(\mu, \rho^{A})\geq D(\mu, \rho^B)$. For non-linear distance/divergence measures this will in general not be definable by quantification over events and polynomial probability terms: a more expressive probability logic will then be needed, such as one allowing exponentiation in the construction of terms and/or quantification over real numbers. 
\end{ex}

Some rules proposed in the literature are given definitions which  use quantification over sets of events: thus another way in which capturing an acceptance principle may require more expressivity is that it may require second-order quantification.

\begin{ex}[Douven's rule]

 \cite{DOU2} proposes the following rule: accept $H$ if and only if $\mu(H)>t$ and $H$ does not belong to any \emph{probabilistically self-undermining} set of propositions (for some fixed threshold $t\in(0.5,1]$). A set of propositions $\mathcal{G}\subseteq \mathfrak{A}$ is probabilistically self-undermining for a threshold $t$ if for all $X\in\mathcal{G}$, $\mu(X)>t$ and there is some $n\in\mathbb{N}$ such that $\mu(X|\bigcap^{k}_{i=1} Y_{i})\leq t$ for any collection $\{Y_{i}\}_{i\leq k}$ of at least $n$ propositions in $\mathcal{G}$.
\end{ex}

The upshot of the examples discussed here is simply that one can get more insight into the behaviour of an acceptance rule\textemdash which features of a probability space it is sensitive to, how complex a logic of belief (revision) it generates, and so on\textemdash by looking at its explicit definitional complexity in a probability logic. On the philosophical side, such results might even shed light on the vexed question of whether, and to what extent, all-or-nothing beliefs help simplify reasoning \citep{STAFF}.\footnote{One would be right to retain some skepticism about a flat-footed appeal to worst-case complexity to make a point about cognitive economy; nonetheless, a logical approach like the one sketched here can offer some suggesting results in that direction.} At any rate, there is good reason to expect the framework of probability logics to provide a fruitful perspective on the acceptance rules and belief revision policies proposed in the literature. We leave further work in that direction for another occasion. 

\section{Conclusion} 

A solution to the tracking problem based on Leitgeb's stability rule for belief requires a qualitative charactarisation of probabilistically stable revision: one must axiomatise the belief revision plans induced by Leitgeb's stability rule for belief. Our main theorem (Theorem \ref{GENREPSEL}) does just this: it gives necessary and sufficient conditions for a selection function to be representable as a probabilistically stable revision operator. As such, it yields selection function semantics for the logic of probabilistically stable belief revision. 

The key to this representation theorem is a connection to the theory of comparative probability orders. Probabilistically stable revision turns out to be essentially characterised by Rational Monotonicity, a very weak form of the \textsf{Or} rule (case reasoning), and a Scott-style cancellation axiom in the style of comparative probability theory. 

Beyond solving the characterisation problem for probabilistically stable revision, we have obtained some results on the way that are of independent interest for the theory of comparative probability orders. First, we gave a representation result (Proposition \ref{strongrep}) giving necessary and sufficient conditions for the \emph{joint weak representability} of two relations as strict and, respectively, non-strict probability comparisons\textemdash which also helped settle a question about the most general conditions for \emph{strong agreement} of a probability measure with a comparative probability order. We also investigated how the characterisation of probabilistically stable revision changes as a function of the stability threshold. Our main representation theorem gives a parametric characterisation of probabilistically stable revision for \emph{any} choice of rational threshold. The essential component there is the generalisation of the Scott-style finite cancellation axiom based on a more general, threshold-dependent notion of balancedness for sequences of events. Importantly, this result directly leads to a general axiomatization of the `$q$ times more probable' relation for any rational number $q$.  

We close with the hope that, whether or not one agrees with Leitgeb on the significance of the stability rule specifically for the analysis of \emph{belief}, the present work demonstrates that the notion of probabilistic stability is worthy of independent logical interest. First, as we observed, probabilistically stable revision plans represent the agent's belief state at an interesting `medium' level of granularity, exhibiting both features that are typical of paradigmatically `qualitative' belief revision operators as well as validating reasoning patterns of a distinctly probabilistic flavour. The study of probabilistically stable revision thus serves as a fertile testbed for sharpening intuitions as to what constitutes a qualitative belief revision method, and for exploring the borderlands between logical belief revision and probabilistic reasoning. Second, probabilistic stability is a concept as fruitful as it is elementary and, beyond its attested philosophical applications (e.g. in \cite{SKY} and \cite{LEIBOOK}), we have seen that it emerges in a variety of other contexts, such as voting games and the theory of revealed preference, where our representation theorem finds natural applications. If nothing else, our investigation of probabilistically stable revision, the results we have obtained in the process, and the conceptual questions they raise, serve to illustrate the productive interplay between logical and measurement-theoretic methods in the study of belief dynamics.

\bibliographystyle{plainnat}
\addcontentsline{toc}{section}{Bibliography} 
\bibliography{bib2}

\begin{thebibliography}{70}
\providecommand{\natexlab}[1]{#1}
\providecommand{\url}[1]{\texttt{#1}}
\expandafter\ifx\csname urlstyle\endcsname\relax
  \providecommand{\doi}[1]{doi: #1}\else
  \providecommand{\doi}{doi: \begingroup \urlstyle{rm}\Url}\fi

\bibitem[Adams and Levine(1975)]{ADA}
E.~W. Adams and H.~P. Levine.
\newblock On the uncertainties transmitted from premisses to conclusions in deductive inferences.
\newblock \emph{Synthese}, 30:\penalty0 429--460, 1975.

\bibitem[Aizerman and Malishevski(1981)]{AizermanMalishevski}
M.~Aizerman and A.~Malishevski.
\newblock General theory of best variants choice: Some aspects.
\newblock \emph{IEEE Transactions on Automatic Control}, 26\penalty0 (5):\penalty0 1030--1040, 1981.

\bibitem[Alchourr\'{o}n et~al.(1985)Alchourr\'{o}n, G\"{a}rdenfors, and Makinson]{AGM}
C.~Alchourr\'{o}n, P~G\"{a}rdenfors, and D.~Makinson.
\newblock On the logic of theory change: partial meet contraction and revision functions.
\newblock \emph{Journal of Symbolic Logic}, 50\penalty0 (2):\penalty0 510--530, 1985.

\bibitem[Baltag et~al.(1998)Baltag, Moss, and Solecki]{BMS}
A.~Baltag, L.~Moss, and S.~Solecki.
\newblock The logic of public announcements, common knowledge, and private suspicions.
\newblock In I.~Gilboa, editor, \emph{TARK VII: {P}roceedings of the 7th Conference on {T}heoretical {A}spects of {R}ationality and {K}nowledge}, pages 43--56. Evanston, Illinois, 1998.

\bibitem[Benferhat(1999)]{BEN}
S.~Benferhat.
\newblock Possibilistic and standard probabilistic semantics of conditional knowledge bases.
\newblock \emph{Journal of Logic and Computation}, 9\penalty0 (6):\penalty0 873--895, 1999.

\bibitem[Boissonnat et~al.(2007)Boissonnat, Nielsen, and Nock]{VOR2}
J.~D. Boissonnat, F.~Nielsen, and R.~Nock.
\newblock On bregman-voronoi diagrams.
\newblock \emph{ODA ’07: Proceedings of the eigh- teenth annual ACM-SIAM symposium on Discrete algorithms}, page 746–755, 2007.

\bibitem[Boissonnat et~al.(2010)Boissonnat, Nielsen, and Nock]{VOR}
J.~D. Boissonnat, F.~Nielsen, and R.~Nock.
\newblock Bregman-{V}oronoi {D}iagrams.
\newblock \emph{Discrete and Computational Geometry}, 44\penalty0 (2):\penalty0 281--307, 2010.

\bibitem[Bonanno(2009)]{Bonanno2009}
G.~Bonanno.
\newblock Rational choice and {AGM} belief revision.
\newblock \emph{Artificial Intelligence}, 173\penalty0 (12-13):\penalty0 1194--1203, 2009.

\bibitem[Booth et~al.(2012)Booth, Meyer, and Varzinczak]{BOO}
R.~Booth, T.~Meyer, and I.~Varzinczak.
\newblock {PTL}: A {P}ropositional {T}ypicality {L}ogic.
\newblock In \emph{Logics in Artificial Intelligence (Lecture Notes in Computer Science 7519)}, pages 107--119. Springer-Verlag, Berlin Heidelberg, 2012.

\bibitem[Brandt and Harrenstein(2011)]{BrandtHarrenstein}
F.~Brandt and P.~Harrenstein.
\newblock Set-rationalizable choice and self-stability.
\newblock \emph{Journal of Economic Theory}, 146\penalty0 (4):\penalty0 1721--1731, 2011.

\bibitem[Chernoff(1954)]{Chernoff1954}
H.~Chernoff.
\newblock Rational selection of decision functions.
\newblock \emph{Econometrica}, 22:\penalty0 423–443, 1954.

\bibitem[Chevyrev et~al.(2013)Chevyrev, Searles, and Sliinko]{SLI}
I.~Chevyrev, D.~Searles, and A.~Sliinko.
\newblock On the number of facets of polytopes representing comparative probability orders.
\newblock \emph{Order}, 30\penalty0 (2):\penalty0 749--761, 2013.

\bibitem[Collins(2022)]{Collins2}
J.~J. Collins.
\newblock Belief change: Qualitative approaches.
\newblock \emph{PIKSI-Logic 2022 Lecture Slides, URL=\href{http://fitelson.org/piksi/piksi_22/collins_handout.pdf}{http://fitelson.org/piksi/piksi\_22/collins\_handout.pdf}}, 2022.

\bibitem[de~Finetti(1937)]{DEF1}
B.~de~Finetti.
\newblock La pr\'{e}vision: ses lois logiques, ses sources subjectives.
\newblock \emph{Annales de l'Institut Henri Poincar\'{e}}, 7\penalty0 (1):\penalty0 1--68, 1937.

\bibitem[de~Finetti(1972)]{DEF}
B.~de~Finetti.
\newblock \emph{Probability, Induction and Statistics}.
\newblock Wiley, London, 1972.

\bibitem[Domotor(1969)]{DOM}
Z.~Domotor.
\newblock Probabilistic relational structures and their applications.
\newblock \emph{Technical report 144. Institute for Mathematical Studies in the Social Sciences, Stanford University}, 1969.

\bibitem[Douven(2002)]{DOU2}
I.~Douven.
\newblock A {N}ew {S}olution to the {P}aradoxes of {R}ational {A}cceptability.
\newblock \emph{British Journal for the Philosophy of Science}, 53:\penalty0 391--410, 2002.

\bibitem[Fagin et~al.(1990)Fagin, Halpern, and Meggido]{FHM}
R.~Fagin, J.~Halpern, and N.~Meggido.
\newblock A logic for reasoning about probabilities.
\newblock \emph{Information and Computation}, 87:\penalty0 78--128, 1990.

\bibitem[Fishburn(1969)]{FISH}
P.~Fishburn.
\newblock Weak qualitative probability on finite sets.
\newblock \emph{The Annals of Mathematical Statistics}, 40\penalty0 (6):\penalty0 2118--2126, 1969.

\bibitem[G\"{a}rdenfors(1975)]{GAR}
P.~G\"{a}rdenfors.
\newblock Qualitative probability as an intensional logic.
\newblock \emph{Journal of Philosophical Logic}, 4\penalty0 (2):\penalty0 171–185, 1975.

\bibitem[Grove(1988)]{GRO}
A.~Grove.
\newblock Two modellings for theory change.
\newblock \emph{Journal of Philosophical Logic}, 17:\penalty0 157--170, 1988.

\bibitem[Hawthorne(1996)]{HAW}
J.~Hawthorne.
\newblock On the logic of nonmonotonic conditionals and conditional probabilities.
\newblock \emph{Journal of Philosophical Logic}, 25\penalty0 (2):\penalty0 185--218, 1996.

\bibitem[Howson(2008)]{HOW}
C.~Howson.
\newblock De {F}inetti, {C}ountable {A}dditivity, {C}onsistency and {C}oherence.
\newblock \emph{British Journal for the Philosophy of Science}, 59:\penalty0 1--23, 2008.

\bibitem[Ibeling et~al.(2024)Ibeling, Icard, Mierzewski, and Moss\'{e}]{APAL}
D.~Ibeling, T.~Icard, M.~Mierzewski, and M.~Moss\'{e}.
\newblock Probing the quantitative–qualitative divide in probabilistic reasoning.
\newblock \emph{Annals of Pure and Applied Logic}, 175\penalty0 (9), 2024.

\bibitem[Kelly and Lin(2012{\natexlab{a}})]{GEO}
K.~Kelly and H.~Lin.
\newblock A {G}eo-logical soution to the {L}ottery {P}aradox, with applications to conditional logic.
\newblock \emph{Synthese}, 186\penalty0 (2):\penalty0 531--575, 2012{\natexlab{a}}.

\bibitem[Kelly and Lin(2012{\natexlab{b}})]{LIK}
K.~Kelly and H.~Lin.
\newblock Propositional reasoning that tracks probabilistic reasoning.
\newblock \emph{Journal of Philosophical Logic}, 41\penalty0 (6):\penalty0 957--981, 2012{\natexlab{b}}.

\bibitem[Konek(2019)]{KON}
J.~Konek.
\newblock Comparative {P}robabilities.
\newblock In R.~Pettigrew and J.~Weisberg, editors, \emph{The Open Handbook of Formal Epistemology}, pages 267--348. 2019.

\bibitem[Kraft et~al.(1959)Kraft, Pratt, and Seidenberg]{KPS}
C.~Kraft, J.~Pratt, and A.~Seidenberg.
\newblock Intuitive probability on finite sets.
\newblock \emph{The Annals of Mathematical Statistics}, 30\penalty0 (2):\penalty0 408--419, 1959.

\bibitem[Kraus et~al.(1990)Kraus, Lehmann, and Magidor]{KLM}
S.~Kraus, D.~Lehmann, and M.~Magidor.
\newblock Nonmonotonic reasoning, preferential models and cumulative logics.
\newblock \emph{Artificial Intelligence}, 44\penalty0 (1-2):\penalty0 67--207, 1990.

\bibitem[Kyburg(1970)]{KYB}
H.~Kyburg.
\newblock \emph{Probability and Inductive Logic}.
\newblock Macmillan, Toronto, 1970.

\bibitem[Lehmann(2001)]{Lehmann2001}
D.~Lehmann.
\newblock Non-monotonic logic and semantics.
\newblock \emph{Journal of Logic and Computation}, 11\penalty0 (2):\penalty0 229–256, 2001.

\bibitem[Lehrer(1974)]{LEH}
K.~Lehrer.
\newblock \emph{Knowledge}.
\newblock Clarendon Press, Oxford, 1974.

\bibitem[Leitgeb(2012)]{LEI2}
H.~Leitgeb.
\newblock A joint theory of belief and probability: Comparing the theories.
\newblock \emph{talk given at the \emph{MCMP Round Table on Acceptance} on February 3rd, 2012. Slides available at \href{http://www.mcmp.philosophie.uni-muenchen.de/events/archive/2012/add\_act/r\_t\_accept/round\_table/comp\_kevin\_hanti\_fin.pdf}{http://www.mcmp.philosophie.uni-muenchen.de/events/archive/2012/add\_act/r\_t\_accept/round\_table/comp\_kevin\_hanti\_fin.pdf}}, 2012.

\bibitem[Leitgeb(2013)]{LEI}
H.~Leitgeb.
\newblock Reducing belief simpliciter to degrees of belief.
\newblock \emph{Annals of Pure and Applied Logic}, 164:\penalty0 1338--1389, 2013.

\bibitem[Leitgeb(2014)]{LEI3}
H.~Leitgeb.
\newblock Belief as a simplification of probability, and what it entails.
\newblock In A.~Baltag and S.~Smets, editors, \emph{Johan van {B}enthem on {L}ogic and {I}nformation {D}ynamics}, pages 405--417. Springer-Verlag, 2014.

\bibitem[Leitgeb(2017)]{LEIBOOK}
H.~Leitgeb.
\newblock \emph{The Stability of Belief}.
\newblock Oxford University Press, 2017.

\bibitem[Levi(1980)]{LEV}
I.~Levi.
\newblock \emph{The Enterprise of Knowledge: An Essay on Knowledge, Credal Probability, and Chances}.
\newblock MIT Press, Cambridge (MA), 1980.

\bibitem[Levi(1996)]{Levi1996}
I.~Levi.
\newblock \emph{For the sake of the argument: Ramsey test conditionals, inductive inference and non-monotonic reasoning}.
\newblock Cambridge University Press, Cambridge, 1996.

\bibitem[Lewis(1976)]{Lewis1976}
D.~Lewis.
\newblock Probabilities of conditionals and conditional probabilities.
\newblock \emph{The Philosophical Review}, 85\penalty0 (3):\penalty0 297--315, 1976.

\bibitem[Lin(2013)]{LIN}
H.~Lin.
\newblock Foundations of everyday practical reasoning.
\newblock \emph{Journal of Philosophical Logic}, 42:\penalty0 831--862, 2013.

\bibitem[Lindstr\"{o}m(1991)]{Lindstrom1991}
S.~Lindstr\"{o}m.
\newblock A semantic approach to nonmonotonic reasoning: Inference and choice.
\newblock \emph{University of Uppsala, Republished in: \emph{Theoria}, 88 (3), 2022 pp. 481-720}, 1991.

\bibitem[Makinson(1988)]{MAKREI}
D.~Makinson.
\newblock General theory of cumulative inference.
\newblock In M.~Reinfrank, J.~de~Kleer, M.~Ginsberg, and E.~Sandewall, editors, \emph{Proceedings of the Second International Workshop on Non-Monotonic Reasoning (Lecture Notes in Artificial Intelligence 346)}, pages 1--18. Springer-Verlag, 1988.

\bibitem[Makinson and Hawthorne(2007)]{HAWMAK}
D.~Makinson and J.~Hawthorne.
\newblock The quantitative/qualitative watershed for rules of uncertain inference.
\newblock \emph{Journal of Philosophical Logic}, 86\penalty0 (2):\penalty0 247--297, 2007.

\bibitem[McCarthy(1980)]{MAC}
J.~McCarthy.
\newblock Circumscription, a form of non monotonic reasoning.
\newblock \emph{Artificial Intelligence}, 13:\penalty0 27--39, 1980.

\bibitem[Mierzewski(2022)]{MIE}
K.~Mierzewski.
\newblock Probabilistic stability, {AGM} {R}evision {O}perators and {M}aximum {E}ntropy.
\newblock \emph{The Review of Symbolic Logic}, 15\penalty0 (3):\penalty0 553--590, 2022.

\bibitem[Moss\'{e}(2025)]{MOS}
M.~Moss\'{e}.
\newblock Reasoning about confirmation.
\newblock \emph{Manuscript, University of California, Berkeley}, 2025.

\bibitem[Motzkin(1951)]{MOT}
T.~S. Motzkin.
\newblock Two consequences of the transposition theorem on linear inequalities.
\newblock \emph{Econometrica}, 19\penalty0 (2):\penalty0 184--185, 1951.

\bibitem[Paris and Simmonds(2009)]{PARSIM}
J.~Paris and R.~Simmonds.
\newblock O is not enough.
\newblock \emph{Review of Symbolic Logic}, 2\penalty0 (2):\penalty0 298--309, 2009.

\bibitem[Peters and Protopapas(2021)]{PetersProtopapas}
H.~Peters and P.~Protopapas.
\newblock Set and revealed preference axioms for multi-valued choice.
\newblock \emph{Theory and Decision}, 90:\penalty0 11--29, 2021.

\bibitem[Plaza(2007)]{PLA}
J.~Plaza.
\newblock Logics of public communications.
\newblock \emph{Synthese}, 158\penalty0 (2):\penalty0 165--179, 2007.

\bibitem[Regioli(1998)]{REG}
G.~Regioli.
\newblock Comparative probability orderings.
\newblock \emph{Available at \href{https://sipta.org/documentation/comparative_prob/regoli.pdf}{https://sipta.org/documentation/comparative\_prob/regoli.pdf}}, 1998.

\bibitem[Rott(1993)]{Rott1993}
H.~Rott.
\newblock Belief contraction in the context of the general theory of rational choice.
\newblock \emph{The Journal of Symbolic Logic}, 58\penalty0 (4):\penalty0 1426--1450, 1993.

\bibitem[Rott(2001)]{ROT}
H.~Rott.
\newblock \emph{Change, Choice and Inference}.
\newblock Oxford Logic Guides, Oxford University Press, Oxford, 2001.

\bibitem[Savage(1954)]{SAV}
L.~J. Savage.
\newblock \emph{The Foundations of Statistics}.
\newblock John Wiley \& Sons, New York, 1954.

\bibitem[Schlechta(1997)]{SCHL}
K.~Schlechta.
\newblock \emph{Nonmonotonic Logics: Basic Concepts, Results, and Techniques}.
\newblock Lecture Notes in Artificial Intelligence, 1187, Springer-Verlag, Berlin Heidelberg, 1997.

\bibitem[Schrijver(2013)]{SCH}
A.~Schrijver.
\newblock \emph{A Course in Combinatorial Optimization}.
\newblock Universiteit van Amsterdam, Available at \href{http://homepages.cwi.nl/~lex/files/dict.pdf}{http://homepages.cwi.nl/~lex/files/dict.pdf}, 2013.

\bibitem[Scott(1964)]{SCO}
D.~Scott.
\newblock Measurement structures and linear inequalities.
\newblock \emph{Journal of Mathematical Psychology}, 1\penalty0 (2):\penalty0 233--247, 1964.

\bibitem[Segerberg(1971)]{SEG}
K.~Segerberg.
\newblock Qualitative probability in a modal setting.
\newblock In J.~E. Fenstad, editor, \emph{Proceedings of the {S}econd {S}candinavian {L}ogic {S}ymposium, {S}tudies in {L}ogic and the {F}oundations of {M}athematics, vol. 63}, pages 341--352. Elsevier, 1971.

\bibitem[Seidenfeld et~al.(1998)Seidenfeld, Schervish, and Kadane]{SEI}
T.~Seidenfeld, M.J. Schervish, and J.B. Kadane.
\newblock Non-conglomerability for finite-valued finitely additive probability.
\newblock \emph{The Indian Journal of Statistics}, 60\penalty0 (3):\penalty0 476--491, 1998.

\bibitem[Seidenfeld et~al.(2017)Seidenfeld, Schervish, and Kadane]{SEI2}
T.~Seidenfeld, M.J. Schervish, and J.B. Kadane.
\newblock Non-conglomerability for countably additive measures that are not $\kappa$-additive.
\newblock \emph{The Review of Symbolic Logic}, 10\penalty0 (2):\penalty0 284--300, 2017.

\bibitem[Sen(1971)]{Sen1971}
A.~Sen.
\newblock Choice functions and revealed preference.
\newblock \emph{Review of Economic Studies}, 38:\penalty0 307–317, 1971.

\bibitem[Skyrms(1977)]{SKY}
B.~Skyrms.
\newblock Resiliency, propensities, and causal necessity.
\newblock \emph{The Journal of Philosophy}, 74\penalty0 (11):\penalty0 704--711, 1977.

\bibitem[Speranski(2017)]{SPE}
S.~Speranski.
\newblock Quantifying over events in probability logic: an introduction.
\newblock \emph{Mathematical Structures in Computer Science}, 27:\penalty0 1581--1600, 2017.

\bibitem[Speranski(2025)]{SPE2}
S.~Speranski.
\newblock Sharpening complexity results in quantified probability logic.
\newblock \emph{Logic Journal of the IGPL}, 33\penalty0 (3), 2025.

\bibitem[Staffel(2019)]{STAFF}
J.~Staffel.
\newblock How do beliefs simplify reasoning?
\newblock \emph{Noûs}, 53\penalty0 (4):\penalty0 937--962, 2019.

\bibitem[Stalnaker(1968)]{STA}
R.~Stalnaker.
\newblock A {T}heory of {C}onditionals.
\newblock In N.~Rescher, editor, \emph{Studies in Logical Theory}, pages 98--112. Blackwell, Oxford, 1968.

\bibitem[Stoer and Witzgall(1970)]{SW70}
J.~Stoer and C.~Witzgall.
\newblock \emph{Convexity and optimization in finite dimensions I}.
\newblock \emph{Grundlehren der mathematischen Wissenschaften in Einzeldarstellungen mit besonderer Ber\"{u}sichtigung der Anwendungsgebiete} 163, Springer-Verlag, Berlin, 1970.

\bibitem[Taylor and Zwicker(1999)]{TAY}
A.~Taylor and W.~Zwicker.
\newblock \emph{Simple Games: Desirability Relations, Tradings, Pseudoweightings}.
\newblock Princeton University Press, 1999.

\bibitem[van Benthem(1989)]{vBE3}
J.~van Benthem.
\newblock Semantic parallels in natural language and computation.
\newblock \emph{Logic Colloquium '87, Studies in Logic and the Foundations of Mathematics}, 129:\penalty0 331--375, 1989.

\bibitem[van Eijck and Renne(2014)]{vanEijck}
J.~van Eijck and B.~Renne.
\newblock Belief as willingness to bet.
\newblock \emph{arXiv preprint}, arXiv:1412.5090v1, 2014.

\end{thebibliography}

\end{document}